\newcommand{\addresseshere}{%
  \enddoc@text\let\enddoc@text\relax
}
\newenvironment{customthm}[1]
  {\innercustomthm}
  {\endinnercustomthm}
\newenvironment{customcnj}[1]
  {\innercustomcnj}
  {\endinnercustomcnj}
\author{Modjtaba Shokrian Zini & Zhenghan Wang}
\title{Conformal Field Theories as Scaling Limit of Anyonic Chains}
\newcommand\Author{Modjtaba Shokrian Zini \& Zhenghan Wang}
\let\Title\@title
\def\ps@mystyle{%
      \let\@oddfoot\@empty\let\@evenfoot\@empty
      \def\@evenhead{\makebox[0pt][l]{\thepage}\hfill\Author\hfill}%
      \def\@oddhead{\hfill\Title\hfill\makebox[0pt][l]{\thepage}}%
      \let\@mkboth\markboth}
\g@addto@macro{\endabstract}{\@setabstract}
\newcommand{\authorfootnotes}{\renewcommand\thefootnote{\@fnsymbol\c@footnote}}%
\renewcommand{\maketitle} 
{ \begingroup \vskip 10pt \begin{center} \large {\bf \@title}
	\vskip 10pt \large \@author \hskip 20pt \@date \end{center}
  \vskip 10pt \endgroup \setcounter{footnote}{0} }
\newcommand{\abs}[1]{\left| #1 \right|} 
\newcommand{\ket}[1]{\left| #1 \right>} 
\let\baraccent=\= 
\renewcommand{\=}[1]{\stackrel{#1}{=}} 
\newtheorem{thm}{Theorem}[section]
\newtheorem{lem}[thm]{Lemma}
\newtheorem{cor}[thm]{Corollary}
\newtheorem{cnj}[thm]{Conjecture}
\theoremstyle{definition}
\newtheorem{dfn}{Definition}
\theoremstyle{remark}
\newtheorem{rmk}{Remark}
\newtheorem{fct}{Fact}
\DeclareRobustCommand{\rchi}{{\mathpalette\irchi\relax}}
\newcommand{\irchi}[2]{\raisebox{\depth}{$#1\chi$}}
\title{\LARGE{\bf
\textsc{Conformal Field Theories as Scaling Limit of Anyonic Chains
}}}
\begin{document}
\begin{center}
  \LARGE 
   \maketitle \par \bigskip

  \normalsize
  \authorfootnotes
  Modjtaba Shokrian Zini \footnote{shokrian@math.ucsb.edu}\textsuperscript{,\hyperref[1]{1}}, Zhenghan Wang \footnote{zhenghwa@microsoft.com;
  zhenghwa@math.ucsb.edu}\textsuperscript{,\hyperref[2]{2}} \par 
  \bigskip
\end{center}

\begin{abstract}

We provide a mathematical definition of a low energy scaling limit of a sequence of general non-relativistic quantum theories in any dimension, and apply our formalism to anyonic chains.  We formulate \hyperref[cnj4.3]{\textbf{Conjecture 4.3}} on conditions when a chiral unitary rational (1+1)-conformal field theory would arise as such a limit and verify the conjecture for the Ising minimal model $M(4,3)$ using Ising anyonic chains.  Part of the conjecture is a precise relation between Temperley-Lieb generators $\{e_i\}$ and some finite stage operators of the Virasoro generators $\{L_m+L_{-m}\}$ and $\{i(L_m-L_{-m})\}$ for unitary minimal models $M(k+2,k+1)$ in \hyperref[cnj5.5]{\textbf{Conjecture 5.5}}.  A similar earlier relation is known as the Koo-Saleur formula in the physics literature \cite{koo1994representations}. Assuming \hyperref[cnj4.3]{\textbf{Conjecture 4.3}}, most of our main results for the Ising minimal model $M(4,3)$ hold for unitary minimal models $M(k+2,k+1), k\geq 3$ as well. Our approach is inspired by an eventual application to an efficient simulation of conformal field theories by quantum computers, and supported by extensive numerical simulation and physical proofs in the physics literature.

\end{abstract}

\tableofcontents

\section{Introduction}\label{0}

Quantum field theory (QFT) is arguably the best experimentally tested model of Nature, yet its mathematically rigorous formulation is far from clear as exemplified by the Yang-Mills existence and mass gap millennium problem (recently Seiberg articulated that QFT is not even mature physically \cite{seibergQFT}).  Besides the intrinsic beauty of a mathematical formulation, a mathematical definition will provide the missing foundation for proving the conjecture that all theoretically physical QFTs can be efficiently simulated by quantum computers.
The circuit model of quantum computing is based on quantum mechanics, but it was stated explicitly as a conjecture in \cite{tqc03} that QFTs would not provide extra computational power beyond quantum mechanics as suggested by the efficient simulation of (2+1)-topological quantum field theories (TQFTs) \cite{simulation02}.  Quantum estimate of scattering probabilities in massive scalar quantum field theories also supports such an extended quantum polynomial Church thesis \cite{jordan12} if the convergence of lattice models to the continuum is addressed mathematically.  Our approach to a potential quantum simulation of rational (1+1)-conformal field theories (CFTs) would follow closely the efficient quantum simulation of TQFTs in \cite{simulation02} using the functorial definitions of TQFTs and CFTs.  An important difference between TQFTs and CFTs is that, while TQFTs being realized as gapped quantum systems, CFTs represent universality classes of gapless critical phases.  Our program seems to be a first attempt towards a quantum simulation of gapless QFTs mathematically.

While the efficient simulation of CFTs is not addressed in detail in this paper, the approach to CFTs here is inspired by this eventual application as outlined in the last section.  How to represent CFTs on quantum computer is already a challenging problem.  The main issues that we are addressing in this paper are the algorithmic convergence of finite lattice theories to the continuum limit with an emphasis on the convergence of observables as an algebra, and hidden localities of CFTs with respect to both space and energy as required by a quantum simulation.  Another closely related motivation is to study CFTs based on the significant new insights of TQFTs from their applications to topological phases of matter. From this angle, our paper is a second possible answer to the question how to recover a chiral CFT or vertex operator algebra (VOA) from a TQFT or modular tensor category (MTC) (another one in \cite{tener16}).  Our anyonic chain approach to chiral CFTs should encompass all other formulations such as VOAs and local conformal nets (LCNs) as anyonic chains (ACs) are finite versions of microscopic models of CFTs, and we succeed in proving that the algebra of local observables in VOAs and LCNs for unitary minimal models (assuming \hyperref[cnj4.3]{\textbf{Conjecture 4.3}} beyond Ising) can be recovered from our formalism.

CFTs and TQFTs are exemplars of QFTs that both have rigorous mathematical formulations and important mathematical and physical applications such as in the monster Moonshine conjecture, critical statistical mechanics models and topological phases of matter, respectively.  CFTs and TQFTs are closely related to each other first by the bulk-edge correspondence as in the fractional quantum Hall effect, and secondly their algebraic data, VOAs and MTCs respectively, are conjectured to be Tannaka-Krein dual to each other (see \cite{tener16} and the references therein).  The bulk-edge correspondence and Tannaka-Krein duality suggest the possibility that VOAs can be reconstructed from MTCs as a generalization of the reconstruction of compact groups from symmetric fusion categories, though the correspondence is far from one-to-one. 

Locality is a salient feature for any physical QFT.  Since TQFTs are low energy effective theories, their locality is not intrinsic and usually hidden.  For example, in the Witten-Chern-Simons (WCS) modeling of the fractional quantum Hall liquids, the WCS theory is an effective description for the emergent anyons, it follows that locality of WCS TQFTs should be derived from that of the underlying electron systems.  The simulation of TQFTs in \cite{simulation02} uses a hidden locality given by pairs of pants decomposition of the space surfaces.  Similarly, there are no intrinsic infinite degrees of freedom for a CFT to define locality.  

Anyons are modelled by simple objects in unitary MTCs.  
ACs are the anyonic analogues of quantum Heisenberg spin chains investigated purely as an academic curiosity \cite{feiguin2007interacting}.  ACs' conceptual origin can be traced back at least to Jones' Baxterization of braid group representations and his idea of generalized spin chains regarding "spins" as something each with a large algebra of observables at sites and being tensored together with generalized tensor products such as Connes fusion (see section 4 of \cite{jones03}.)   In the scaling limit, ACs are exactly solvable but not known to be rigorously solvable mathematically \cite{feiguin2007interacting,saleur90}.  We reverse the logic in this paper to regard ACs as localization of CFTs, thus provide a space locality for VOAs.  Our philosophy, as inspired by algorithmic discrete mathematics, is that instead of using ACs to approximate VOAs, VOAs serve as good approximations of sufficiently large finite ACs in their low energy spectrum.

Our limit of a sequence of quantum theories $\{(\mathcal{W}_n,H_n)\}$ will be dictated by both space and energy localities.  The Hilbert space $\mathcal{W}$ of a quantum theory has two important bases: the basis encoding the spacial locality, and the basis of energy eigenstates of $H$. The two bases will be referred to as space basis and energy basis, respectively.  Operators can be local with respect to one of the two bases, but there is a tension of locality with respect to both bases.  To define a limit of the sequence of quantum theories $\{(\mathcal{W}_n,H_n)\}$, embedding the Hilbert space $\mathcal{W}_n$ into $\mathcal{W}_{n+1}$ is the first step.  Which locality of space and energy is preserved by the embedding leads to different notions of limit.  We will construct the scaling limit of a sequence of quantum theories $\{(\mathcal{W}_n,H_n)\}$ from their low energy behaviors when the lattice sizes go to zero, therefore we preserve energy locality.  Preservation of space locality will lead to the thermodynamic limit. 

Besides the Hilbert space and Hamiltonian, another essential feature of any quantum theory is the algebra of observables. Since our quantum theories are non-relativistic, time needs to be addressed separately.  The algebra structure of observables encodes consecutive measurements as multiplication, hence somewhat reflects time in the limit.  As noted above, our formulation of scaling limit will have \textbf{everything} that can be computed using some limit of physical objects. Compared to other well-established formulations of chiral CFTs such as VOAs following Wightman's axioms,  and LCNs, our scaling limit results in a much bigger set of observables. In fact, we will show, in the case of Ising anyonic chain, the resulting observables contain a subset corresponding to smeared fields (or Wightman's) observables $\phi(f)$, a subset corresponding to bounded observables of LCN and a subset corresponding to observables in the VOA $M(4,3)$. We conjecture the same holds for all unitary minimal models $M(k+2,k+1)$ for $k\geq 3$.

An important desideratum of our scaling limit is finitely complete and accessible in the sense that any sequence that should have a limit indeed has one in the scaling limit, and anything in the scaling limit is a limit of some sequence. So the theory in the limit should be completely describable by the sequence of finite theories and there should be no \textit{extra} object that is not some limit of finite objects.  Our scaling limits VOA $\mathcal{V}=\oplus_{n=0}^{\infty} \mathcal{V}_n$ should be regarded as computable using the AC approximations.  Philosophically, such VOAs $\mathcal{V}=\oplus_{n=0}^{\infty} \mathcal{V}_n$ from ACs categorify computable integral sequences such that each vector space $\mathcal{V}_i$ serves as a categorification of the integer $\textrm{dim} \mathcal{V}_i$. 

\section{Preliminaries and outline of main results}\label{1.}

First we recall the notion of a VOA, which we regard as the mathematical definition of a \textit{chiral} CFT ($\chi$CFT), along with Wightman's observables, local conformal nets (LCNs), and finally, anyonic chains (ACs). Then we outline our results.  While the results on Ising ACs are interesting, our most important contribution of the paper is a framework for addressing the reconstruction of CFTs from MTCs, and the potential simulation of CFTs by quantum computers. Note that full CFTs can be constructed from a nice $\chi$CFT with a choice of an indecomposible module category over its representation category \cite{frs06}.  In the following, CFTs will mean $\chi$CFTs, but there are a few cases where a CFT can be interpreted either as a chiral or full one.

A VOA is mathematical axiomatization of the chiral algebra of a CFT.  The vertex operator $Y(a,z)$ implements the state-operator correspondence of CFTs.  They are the field operators which insert the state $a$ at a space-time point $z=0$ with a small neighborhood locally parameterized by $z$.  VOA is our preferred framework for our discussion on CFTs. Other frameworks for CFTs such as LCNs will also be discussed at times. 

The notations and definitions for VOA follow closely those of \cite{konig2017matrix}.

\subsection{Vertex operator algebra}\label{1.1}

\hfill \break
Let $\mathbb{N}_0$ be the set of non-negative integers.  Consider an $\mathbb{N}_0$-graded $\mathbb{C}$-vector space $\mathcal{V}=\oplus_{n=0}^{\infty} \mathcal{V}_n$, where the \textit{weight} spaces $\mathcal{V}_n$ satisfy $\dim \mathcal{V}_n<\infty$, equipped with a linear map called the \textit{vertex operator}, 
$$Y(\cdot,z):\mathcal{V} \to \text{End}(\mathcal{V})[[z,z^{-1}]], \ \ Y(v,z)=\sum_{n\in \mathbb{Z}} v_nz^{-n-1},$$
where  $v_n \in \text{End}(\mathcal{V)}$ are called the \textit{mode} operators of $v$. The mode operators satisfy
$$v_nu=0, \text{ for all } v,u \in \mathcal{V} \text{ and } n \text{ sufficiently large.}$$
As a different notation, which will be motivated later, for a \textit{homogeneous} vector $v$ in some weight space with weight $\text{wt }v$, we can shift the index to obtain  
$$Y(v,z)=\sum_{n\in \mathbb{Z}} \textbf{y}(v)_nz^{-n-\text{wt }v},$$ 
where $\textbf{y}(v)_n=v_{n+ \text{wt }v-1}$.

Further, there are two distinguished vectors, the \textit{vacuum} $1 \in \mathcal{V}_0$ and the \textit{conformal} or \textit{Virasoro} vector $\omega \in \mathcal{V}_2$. 

The vacuum vector satisfies $Y(1,z)=\text{id}_\mathcal{V}$ and the \textit{creation} property holds
$$Y(v,z)1=v+\ldots \in \mathcal{V}[[z]],$$
giving the \textit{operator-state} or \textit{field-state correspondence} $\lim_{z \to 0} Y(v,z)1=v$ when we replace the indeterminate $z$ with a complex number. That is why we may sometimes use the expression ``conformal field'' which is the field associated to the conformal vector $\omega$ and we may also sometimes use the word field while actually meaning the vector (this will be clear from the context). 

On the other hand, the Virasoro vector $\omega$ gives us the modes and field
$$\omega_{n+1}=\textbf{y}(\omega)_n=L_n, \ \ Y(\omega,z)=\sum_{n \in \mathbb{Z}} L_nz^{-n-2},$$
where the $L_n$s generate the \textit{Virasoro (lie) algebra} with relations
$$[L_n,L_m]=(n-m)L_{n+m}+\frac{c}{12}n(n^2-1)\delta_{n+m,0}\cdot \text{id}_\mathcal{V}, \ \forall m,n \in \mathbb{Z},$$
where the constant $c$ is called the central charge (also called rank $\mathcal{V}$). The grading of $\mathcal{V}$ is the spectral decomposition of $L_0$, so $L_0v=n v$ for any homogeneous $v \in \mathcal{V}_n$. A homogeneous vector $v$ is \textit{quasi-primary} if $L_1v=0$ and it is \textit{primary} if $L_nv=0, \forall n>0$. We also have the \textit{translation} property
$$\frac{\text{d}}{\text{d}z}Y(v,z)=Y(L_{-1}v,z),$$
where the left side is the formal derivative of a Laurent series. 

Finally, for all $a,b \in \mathcal{V}$, there exists $k \in \mathbb{N}_0$ such that
$$(z_1-z_2)^k[Y(a,z_1),Y(b,z_2)]=0 \ \ \text{(\textit{locality} condition)}.$$
Evidently, we are defining products of vertex operators using formal series. This finishes the description of vertex operator algebra.
\begin{dfn} \label{dfn1}
The tuple $(\mathcal{V},Y,1,\omega)$ with the above properties is called a \textit{vertex operator algebra} (VOA). 
\end{dfn}

There are some immediate implications of the above axioms. For any two homogeneous vectors $u,v \in \mathcal{V}$,
$$\text{wt }(\textbf{y}(v)_nu)=\text{wt u}-n.$$

The locality axiom implies (the \textit{Jacobi} or) the \textit{Borcherds identity}, 
which can be formulated as 
\begin{gather*}
\scalebox{0.9}{$\text{Res}_{z_1-z_2}(Y(Y(a,z_1-z_2)b,z_2)(z_1-z_2)^p \iota_{z_2,z_1-z_2}(z_2+(z_1-z_2))^q)=$}
\end{gather*}
\begin{gather*}
\scalebox{0.9}{$\text{Res}_{z_1}(Y(a,z_1)Y(b,z_2)\iota_{z_1,z_2}(z_1-z_2)^pz_1^q)-\text{Res}_{z_1}(Y(b,z_2)Y(a,z_1)\iota_{z_2,z_1}(z_1-z_2)^pz_1^q),$}
\end{gather*}
for all $p,q\in \mathbb{Z}$. In the above expression, $\text{Res}_zf(z)$ is the coefficient of $z^{-1}$ in $f(z)$. $\iota_{z_1,z_2}f(z_1,z_2)$ is the series expansion of $f(z_1,z_2)$ in the domain $|z_1|>|z_2|$. As an equivalent formulation:
$$\sum_{j=0}^\infty \binom{p}{j} (a_{q+j}b)_{p+k-j}c=\sum_{j=0}^{\infty}(-1)^j\binom{q}{j}a_{p+q-j}b_{k+j}c$$
$$-\sum_{j=0}^{\infty}(-1)^{j+q} \binom{q}{j}b_{q+k-j}a_{p+j}c, \ \ \ a,b,c \in \mathcal{V}, p,q,k \in \mathbb{Z}.$$
 
The next objects to discuss are the \textit{admissible modules} of a VOA. A module has a structure similar to that of a VOA and some compatibility properties with the VOA. We will focus on \textit{irreducible} modules.

An \textit{irreducible admissible} module $(A,Y_A)$ for a VOA $(\mathcal{V},Y,1,\omega)$, is an $\mathbb{N}_0$-graded vector space $A$ with a linear map
$$Y_A(\cdot,z): \mathcal{V} \to \text{End}(A)[[z,z^{-1}]], \ \ Y_A(v,z)=\sum_{n \in \mathbb{Z}}v_n^Az^{-n-1},$$
where $v_n^A$ are the mode operators of $v$ and $\omega_{1}^A=L_{A,0}$ gives the \textit{weights} in the module, which are eigenvalues of $L_{A,0}$, with the following difference 
$$\forall a \in A_n, \ \ L_{A,0}a=(\alpha+n)a, \  \text{for some unique }\alpha \text{ depending on } A.$$
The unique \textit{conformal} or \textit{highest} weight $\alpha$ gives the grading 
$$A=\bigoplus_{n \in \mathbb{N}_0} A_{n}.$$
$A_0$ is called the \textit{top-level} and $A_n$ the $n-$th level of module $A$.  Lastly, there is an analogous notation of $\textbf{y}(v)_n$ for a homogeneous vector $v \in \mathcal{V}$,
$$Y_A(v,z)=\sum_{n\in \mathbb{Z}} \textbf{y}_A(v)_nz^{-n-\text{wt }v},$$ 
where $\textbf{y}_A(v)_n=v_{n+ \text{wt }v-1}$ and for any two homogeneous vectors $u,v \in \mathcal{V}$,  
$$\text{wt }(\textbf{y}_A(v)_nu)=\text{wt u}-n.$$

The vertex operator and the modes $L_{A,n}$ of $Y_A(\omega,z)$ satisfy all the axioms of a VOA ($A$ should be seen as a \textit{representation} of a VOA), except the creativity property. Locality holds and more importantly for us, Borcherds identity also holds in this case with the obvious necessary change $c \in A$. The sub(/super)script $A$ will be dropped from the operators involved as it will be clear from the context.

In this paper, the weight spaces need to be finite dimensional. Moreover, the condition $C_2-$\textit{co-finiteness} is imposed on the VOAs. This means the space $C_2=\text{span} \{u_{-2}v| \ u,v \in \mathcal{V}\}$ has finite co-dimension $C_\mathcal{V}=\dim \mathcal{V}/C_{2}$. Then, a result \cite[(27)]{konig2017matrix} on the growth of the dimension of the weight spaces of an irreducible module $A$ follows
$$\dim A_n \le (\dim A_0) \cdot e^{2\pi\sqrt{\frac{C_\mathcal{V}n}{6}}},$$
where $C_\mathcal{V}=\dim \mathcal{V}/C_2$. This at most exponential growth is necessary if an approach to simulation requires a truncation of energy up to some $N$, where one can not afford more than polynomially many qubits to be used to simulate the vector space.

Finally, the character for a module $A$ is defined as
$$\text{char}(A)=\text{Tr}_A(q^{L_0-\frac{c}{24}})=\sum_{n \in \mathbb{N}_0} \dim(A_{n})\  q^{n+h-\frac{c}{24}}.$$
An important class of VOAs consists of the unitary minimal models (UMMs) introduced in the next section. A UMM $\mathcal{V}$ satisfy many properties such as being \textit{CFT-type}, i.e. $V_0=\mathbb{C}1$, or in other words only the vacuum has energy zero. Also, $\mathcal{V}$ is \textit{rational}, i.e. every admissible $\mathcal{V}$-module is a direct sum of irreducible $\mathcal{V}$-module. Last but not least, $\mathcal{V}$ is unitary (as explained below). As a convention, the expression \textit{CFT} or \textit{chiral CFT} or \textit{full CFT}, will be referring to a VOA with the described properties.

For us, a unitary VOA has some a positive definite hermitian form $(\cdot,\cdot)_\mathcal{V}:\mathcal{V} \times \mathcal{V} \to \mathbb{C}$ with respect to which one can define adjoint of mode operators. Specifically, $L_n^\dagger=L_{-n}$. One can similarly define \textit{unitary modules}: a positive definite $(\cdot,\cdot)_A: A\times A \to \mathbb{C}$ with respect to which $L_n^\dagger=L_{-n}$. Using the hermitian form, one can define a norm in the obvious way and get the completion of a graded unitary module (which includes $\mathcal{V}$ itself). They will be represented by $\overline{A}$. For a complete definition of a unitary VOA, see \cite{konig2017matrix}.

Next, to describe a \textit{full} CFT, we will focus only on full diagonal CFTs. The description here will not be completely elaborate as full CFTs are not discussed much in our work. Only the essential concepts will be mentioned in simple terms. Consider a chiral CFT with the restrictions imposed earlier. The idea is to take any irreducible module coupled with the contragredient module (which, assuming e.g. unitarity, is isomorphic to the module itself) and consider the Hilbert space it gives after completion
$$\mathcal{H}=\bigoplus_{\text{irreducible modules}} \mathcal{A}_i\otimes\mathcal{A}_i',$$
where $\mathcal{A}_i'$ is the contragredient module of $\mathcal{A}_i$. The contragredient module $\mathcal{V}'$ is defined as the linear functionals that vanish except on finitely many of the weight spaces, in other words
$$\mathcal{V}'=\bigoplus_{n \in \mathbb{N}_0} \mathcal{V}_n',$$
and it can be given a $\mathcal{V}$-module structure. In the case of minimal modules, this means another isomorphic (as $\mathcal{V}$-module) copy of the module itself. 

The vacuum vector $1=1_L\otimes1_R$ and the conformal vector $\omega=\omega_L\otimes1_R+1_L\otimes\omega_R$ are defined in the obvious way using the left (and right) vacuum $1_L$($1_R$) and left (and right) conformal vector $\omega_L$($\omega_R$). The Virasoro mode operators are defined accordingly as
$$\mathbb{L}_n=L_n+\overline{L}_n,$$
where the first term is the $n$-th Virasoro mode for the chiral copy and the second, for the antichiral copy. Primary fields are accordingly defined as those $a \in \mathcal{H}$ that satisfy 
$$\mathbb{L}_n a=0, \  \forall n>0.$$
Introducing the analog of vertex operator $Y(\cdot,\cdot)$ for the full CFT requires us to describe what intertwiners are, but we will only need to introduce the conformal field, which is
$$\mathbb{Y}(\omega,(z,\overline{z}))=\sum_{n \in \mathbb{Z}} L_nz^{-n-2}+\overline{L}_n\overline{z}^{-n-2}.$$

\subsection{Unitary minimal models and Ising CFT}\label{1.2}

\hfill \break 
A special class of VOAs are the highest weight representations of the Virasoro algebra with central charge $c<1$ that are unitary. These highest weight representations can be completely characterized by their central charge,  which form a \textit{discrete series} $c=1-\frac{6}{(k+1)(k+2)}$ for $k\geq 2$.   

These VOAs $M(k+2,k+1)$, called the \textit{unitary minimal models} (UMMs), can be constructed as cosets $\frac{SU(2)_k\times SU(2)_1}{SU(2)_{k+1}}$ and have central charge $c=1-\frac{6}{(k+1)(k+2)}$ for $k\ge 2$. We will refer to $M(k+2,k+1)$ as the UMM at level $k$ (of $SU(2)_k$). They have finitely many irreducible modules determined by their conformal weights
$$h_{r,s}=\frac{((k+1)r-(k+2)s)^2-1}{4(k+1)(k+2)}, \ \ 1\le r\le k+1, 1\le s \le k.$$
Hence, due to the symmetry $h_{k+2-r,k+1-s}=h_{r,s}$, there are $\frac{k(k+1)}{2}$ many irreducible modules. In particular, the \textit{Ising CFT} has central charge $\frac{1}{2}$ corresponding to level $k=2$ (see \cite{francesco2012conformal} for more on minimal models, and \cite{francesco2012conformal} and particularly the notes \cite{ginsparg1988applied} for the Ising CFT).

The chiral Ising CFT has $3$ irreducible modules with conformal weights $h_{1,1}=0$ (the VOA $\rchi_0$ itself with the vacuum field $1$), $h_{2,1}=\frac{1}{2}$ (the module $\rchi_\frac{1}{2}$ corresponding to the free Fermionic field $\psi$), $h_{3,1}=\frac{1}{16}$ (the module $\rchi_\frac{1}{16}$ corresponding to the spin field $\sigma$). 

In the Ising CFT, the \textit{fusion} rules are as follows:
$$\rchi_\frac{1}{2} \otimes \rchi_\frac{1}{2} = \rchi_0$$
$$\rchi_\frac{1}{16} \otimes \rchi_\frac{1}{16} = \rchi_\frac{1}{2} \oplus \rchi_0$$
$$\rchi_\frac{1}{2} \otimes \rchi_\frac{1}{16} = \rchi_\frac{1}{16} $$

and of course anything fused with $\rchi_0$ becomes itself. 

The Fermionic algebra is used to generate the Hilbert spaces $\rchi_i$. The Hilbert spaces $\rchi_0,\rchi_\frac{1}{2}$ are generated by the Fermionic modes $\{\Psi_{n-\frac{1}{2}}\}_{n \in \mathbb{Z}}$ satisfying the anticommutative canonical relations (ACR)
$$\{\Psi_k,\Psi_{k'}\}=\delta_{k+k',0},$$
and their actions satisfy the conjugacy relation
$$\Psi_k=\Psi_{-k}^\dagger.$$
The third Hilbert space $\rchi_\frac{1}{16}$ is generated by $\{\Psi_{n}\}_{n\in \mathbb{Z}}$ which are also another version of the Fermionic algebra where, this time, the modes are indexed by integers and they satisfy the same properties:
$$\{\Psi_k,\Psi_{k'}\}=\delta_{k+k',0} \ \ , \ \ \Psi_k=\Psi_{-k}^\dagger.$$
The first algebra generates $\rchi_0$ and $\rchi_\frac{1}{2}$ by acting on the vacuum $1$. Indeed, the vectors 
$$\{\Psi_{-k_r}\ldots\Psi_{-k_1}1|\ k_1<\ldots<k_r, \ \ k_i \in \mathbb{N}-\frac{1}{2}\},$$
with corresponding weight $\sum k_i$, give an orthonormal basis for $\rchi_0 \oplus \rchi_\frac{1}{2}$, hence giving the character
$$q^{-\frac{c}{24}}\prod_{n=1}^\infty (1+q^{n-\frac{1}{2}}).$$
As a matter of convenience, the factor $q^{-\frac{c}{24}}=q^{-\frac{1}{48}}$ will sometimes get dropped. Obviously, the part of which has powers of $q$ in $\mathbb{N}-\frac{1}{2}$ corresponds to $\rchi_\frac{1}{2}$ and the rest with powers of $q$ in $\mathbb{N}_0$ corresponds to $\rchi_0$. 
The second algebra $\{\Psi_{n}\}_{n\in \mathbb{Z}}$ generates $\rchi_\frac{1}{16}$ in a similar way: the orthonormal basis
$$\{\Psi_{-k_r}\ldots\Psi_{-k_1}\ket{\frac{1}{16}}|\ 0<k_1<\ldots<k_r, \ \ k_i \in \mathbb{N}\},$$
where $\ket{\frac{1}{16}}$ is the highest weight vector, or the vector at the top level satisfying $L_0\ket{\frac{1}{16}}=\frac{1}{16}\ket{\frac{1}{16}}$. The corresponding weight is naturally $\sum k_i$. Notice that  $\ket{\frac{1}{16}}$ is sent to a scalar multiple of itself by $\Psi_0$. The character is
$$\text{char}(\rchi_\frac{1}{16})=q^{\frac{1}{16}-\frac{1}{48}}\prod_{n=0}^\infty(1+q^n).$$
Although not mentioned, but from the above description, it is clear what the hermitian form should be. The formulae for $L_n$s are well-known \cite{ginsparg1988applied} and will be derived in the \hyperref[A.1]{appendix}.

As a final note, the Ising full CFT is
$$\mathcal{H}=\rchi_0\overline{\rchi}_0 +\rchi_\frac{1}{2}\overline{\rchi}_\frac{1}{2} +\rchi_\frac{1}{16}\overline{\rchi}_\frac{1}{16},$$
with the corresponding operators $\mathbb{L}_n$ which will be derived using the formulae for $L_n$s.

\subsection{Wightman's observables and local conformal nets}\label{1.3}

\hfill \break
In addition to VOA, we will also work with observables coming from LCNs and Wightman's axioms. One of the objectives of this work is to obtain the fields in the scaling limit and prove that products of fields are also in the scaling limit, hence obtaining a ``scaling limit of algebras''. Only the \textbf{observables} (or fields) in each framework will be defined. As we shall see, observables are related to the fields $Y(a,z)$ we have been using so far. For this section, the definitions and facts follow those of \cite{carpi2015vertex}.

So far, the observables or fields that are \textit{point-like} have been described; the insertion of the field is exactly at a point. Other types of observables that can be derived formally from these are called \textit{smeared} field operators or \textit{Wightman's} observables. Taking some function $f \in C^\infty(S^1)$, define formally
$$Y(a,f):=\oint Y(a,z)f(z)z^{\text{wt }a} \ \frac{\text{d}z}{2\pi i z}=\sum_{n \in \mathbb{Z}} \hat{f}_n\textbf{y}(a)_n,$$
where $\hat{f}_n$s are the Fourier coefficients of $f$. As $f$ is smooth, it is known that its Fourier coefficients will be rapidly decreasing:
$$\forall k , \exists N_k \text{  such that   } \forall |n| \ge N_k \implies |\hat{f}_n| \le \frac{1}{n^k}.$$

In order to have truly a linear operator defined on $\mathcal{V}$ (before taking its completion), an \textit{energy bound} on the mode operators is needed
$$||\textbf{y}(a)_nb|| \le C_a(|n|+1)^{r_a}||(L_0+\textbf{1})^{s_a}b||, \  \forall b \in \mathcal{V}$$
where the constants $C_a,r_a,s_a>0$ dependent on $a$, and the norm is given by the unitary structure. If the above inequality holds, then one easily observes that 
$$||Y(a,f)b||\le C_a||f||_{r_a}||(L_0+\textbf{1})^{s_a}b||,$$
where the $r_a$\textit{-norm} of $f$ is defined as
$$||f||_{r_a}=\sum_n |\hat{f}_n|(|n|+1)^{r_a}.$$
As it will be observed in \hyperref[4]{section 4}, the \textit{correlation function} $F^0$ using the smeared formalism can be defined as
\begin{gather*}
\scalebox{0.95}{$
F^0((a_1,f^{(1)}),\ldots,(a_k,f^{(k)}),u,v):=(u,Y(a_1,f^{(1)})Y(a_k,f^{(k)})v), \ \ f^{(i)} \in C^\infty(S^1)$}
\end{gather*}
where the fields $a_i$ satisfy an energy bound and $u,v \in \mathcal{V}$. The correlation function for smeared fields on a full CFT is defined similarly, but we will defer that to \hyperref[4]{section 4} since this will be used in a very restricted case.

For the UMMs, all $\textbf{y}(a)_n$ are energy bounded. The most important field for us is the conformal smeared field $Y(\omega,f)$ which is denoted by 
$$L(f)=\sum_{n \in \mathbb{Z}} \hat{f}_nL_n.$$

For all UMMs, it can be shown that $L_n$ satisfies the energy bound
$$||L_nb|| \le \sqrt{\frac{c}{2}}(|n|+1)^\frac{3}{2}||(L_0+\textbf{1})b||,$$
giving as a result
$$||L(f)b||\le \sqrt{\frac{c}{2}} ||f||_{\frac{3}{2}}||(L_0+\textbf{1})b||.$$

The next observables are the ones coming from the LCN picture of CFT. Only the most relevant features of this framework shall be discussed. Denote by $\mathcal{I}$ the family of proper intervals
of $S^1$. A net $\mathcal{A}$ of von Neumann algebras on $S^1$ is a map that associates a von Neumann algebra $\mathcal{A}(I) \subset \mathcal{B}(\mathcal{H})$ for some fixed Hilbert space $\mathcal{H}$. These nets should be \textit{local} in the sense that for $I_1,I_2 \in \mathcal{I}$ with $I_1 \cap I_2=\emptyset$,
$$[\mathcal{A}(I_1),\mathcal{A}(I_2)]=\{0\},$$
and they satisfy \textit{isotony}, 
$$I_1 \subset I_2 \implies \mathcal{A}(I_1) \subset \mathcal{A}(I_2).$$
In the case of UMMs, and more generally \textit{unitary Virasoro VOAs}, taking the Hilbert space to be any irreducible module of conformal weight $h$,
$$\mathcal{A}(I)=\{e^{iL(f)}| \ f \in C^\infty(S^1), \text{supp}(f) \subset I\}'' \ \ (\text{see \cite{kawahigashi2004classification}}).$$
So $\mathcal{A}(I)$ is the double-commutant of the algebra generated by the unitary operators $e^{iL(f)}$ associated to functions with support inside $I$. The \textit{double-commutant} theorem implies that the strong (or weak limit) of the algebra generated by $e^{iL(f)}$s is also $\mathcal{A}(I)$, a fact that will be used later in \hyperref[4.2]{section 4.2}.

\subsection{Anyonic chains}\label{1.4}

\hfill \break
Though ACs are closely related to and inspired by spin chains, there are some fundamental differences between them.  The most salient difference touches on the trade-off between explicit locality and unitarity in QFTs.  Spin chains implement locality explicitly by attaching local state spaces to each site, while the Hilbert spaces of ACs do not have such explicit tensor product decomposition.  In general, it is harder to obtain unitary interacting exactly solvable spin chains with CFT scaling limits, while such examples of ACs are ubiquitous \cite{feiguin2007interacting}.  This phenomenon is related to the localization of braid group representations, where finite order unitary $R$-matrices are very rare \cite{RW12}.

This section follows the exposition of anyonic chains (ACs) in \cite{gils2013anyonic,feiguin2007interacting}. An AC is a periodic or open (with boundary condition) chain, along which pairwise interactions occur between quasi-particles (the anyons). e.g. the generalized spin $j$ anyons of $SU(2)_k$. The chain is usually presented along a straight path if it is non periodic and as a loop if it is periodic. We will also put the nonperiodic chain along the upper half-circle (\hyperref[fig1]{Figure 1}) as this picture will be used in \hyperref[4.2]{section 4.2} to relate the AC to LCN.
\begin{figure}\label{fig1}
\centering
\begin{tikzpicture}[scale=1]
\draw (0,0) -- (6,0);
\foreach \i in {1,...,6}
{
\draw ({\i*6/7},{0}) -- ({\i*6/7},{0.5}) node[anchor=south] {$j$};
}
\node at ({3/7},{-0.2}) {\Small$x_1$};
\node at ({9/7},{-0.2}) {\Small$x_2$};
\node at ({3},{-0.2}) {$\ldots$};
\node at ({6-9/7},{-0.2}) {\Small$x_{L-1}$};
\node at ({6-3/7},{-0.2}) {\Small$x_{L}$};
\end{tikzpicture}
\hspace{0.5cm}
\begin{tikzpicture}[scale=1]
\draw (2,0) arc (0:180:2);
\foreach \i in {1,...,6}
{
\draw ({2*cos(\i*180/7)},{2*sin(\i*180/7)}) -- ({2*cos(\i*180/7)*5/6},{2*sin(\i*180/7)*5/6});
}
\foreach \i in {1,...,6}
{
\node at ({2*cos(\i*180/7)*4.3/6},{2*sin(\i*180/7)*4.3/6}) {$j$};
}
\node at ({2*cos(180/14)*7/6},{2*sin(180/14)*7/6}) {\Small$x_1$};
\node at ({2*cos(3*180/14)*7/6},{2*sin(3*180/14)*7/6}) {\Small$x_2$};
\node at ({2*cos(90)*7/6},{2*sin(90)*7/6}) {$\ldots$};
\node at ({-2*cos(3*180/14)*7/6},{2*sin(3*180/14)*7/6}) {\Small$x_{L-1}$};
\node at ({-2*cos(180/14)*7.3/6},{2*sin(180/14)*7.3/6}) {\Small $x_{L}$};
\end{tikzpicture}
\caption{Anyonic chain on a straight path and on a half-circle}
\end{figure}
The channel between each two anyons provides the means for fusion. A \textit{boundary condition} $(a,b)$ means $x_1=a$ and $x_L=b$. Each \textit{admissible} fusion path has to satisfy the fusion rules of $SU(2)_k$:
$$j_1 \otimes j_2=|j_1-j_2| \oplus (|j_1-j_2|+1) \oplus \ldots \oplus \min\{j_1+j_2,k-j_1-j_2\}.$$
All admissible fusion paths form an orthogonal basis of the Hilbert space $\text{Hom}(x_1\otimes x_L,j^{\otimes (L-1)})$ where the inner-product comes from the diagram calculus of the unitary MTC $SU(2)_k$. An important observation is that generally for all $j$ and $k$, the resulting Hilbert spaces do not have a tensorial structure, though the case of Ising AC does have one.
\begin{figure}\label{fig2}
\centering
\begin{tikzpicture}[scale=0.75]
\draw (0,0) -- (0,{-0.5*sqrt(2)}) node[anchor= north] {\small$x_{i+1}$};
\draw (0,0) -- ({-1},{1}) node[anchor= south] {\small$x_{i-1}$};
\draw (-0.5,0.5) -- (0,1) node[anchor= south] {\small$j$};
\draw (0,0) -- ({1},{1}) node[anchor= south] {\small $j$};
\node at ({-0.25-0.2},{0.25-0.2}) {\small $x_{i}$};
\node at ({3},{0}) {$=\sum\limits_{\widetilde{x}_i}\Big(F_{x_{i-1}jj}^{x_{i+1}}\Big)_{x_i}^{\widetilde{x}_i}$};
\draw (6+0,0) -- (6+0,{-0.5*sqrt(2)}) node[anchor= north] {\small$x_{i+1}$};
\draw (6+0,0) -- ({6-1},{1}) node[anchor= south] {\small$x_{i-1}$};
\draw (6+0.5,0.5) -- (6+0,1) node[anchor= south] {\small$j$};
\draw (6+0,0) -- ({6+1},{1}) node[anchor= south] {\small $j$};
\node at ({6+0.25+0.2},{0.25-0.2}) {\small $\widetilde{x}_{i}$};
\end{tikzpicture}
\caption{The $F$-move applied on the anyonic chain}
\end{figure}
We specialize to the case $j=\frac{1}{2}$, where the spin-$\frac{1}{2}$ chain is given a Hamiltonian. The motivation of all these settings could be seen as a generalization of the Heisenberg model \cite{gils2013anyonic}. In the Heisenberg model, there exist a spin-spin nearest neighbor interaction given by the term 
$$\vec{S}_i.\vec{S}_{i+1}=P_i^1-\frac{3}{4}I_i=-P_i^0+\frac{1}{4}I_i,$$
where $P_i^s$ is the projection onto the total spin $s$ channel of two spins $\vec{S}_i$ and $\vec{S}_{i+1}$. This leads to the following Hamiltonian 
$$H=J\sum_j P_j^0,$$

where $J$ determines if the chain is antiferromagnetic ($J=-1$) or ferromagnetic ($J=1$). In order to generalize this, we first need to define the projection onto the total spin using the so called $F$-move in \hyperref[fig2]{Fig. 2}.

The next step would be to project onto the desired fusion which is $0$---the vacuum- and go back to the previous basis of fusion path by applying the inverse of the $F$-move (\cite[Fig. 2]{gils2013anyonic},\cite[Fig. 1(c)]{feiguin2007interacting}):
$$H=\sum_{i=1}^{L-1}F_i^{-1}P_i^0F_i,$$
where the antiferromagnetic coupling $J=-1$ has been chosen in order to obtain UMMs in the scaling limit. In the case of spin-$\frac{1}{2}$ chain, letting $d=2\cos(\frac{\pi}{k+2})$, the quantum dimension of $\frac{1}{2}$, 
$$F_i^{-1}P_i^0F_i=-\frac{1}{d}X_i \implies H=-\frac{1}{d}\sum_{i=1}^{L-1} X_i.$$
The operators $X_i$ satisfy the following relations \cite[eq. (3)]{feiguin2007interacting}:
$$X_i^2=dX_i, \ \ X_iX_{i\pm1}X_i=X_{i}, \ \ [X_i,X_j]=0, \text{ for } |i-j|>1.$$
These are the same operators $e_i$ of the Temperley-Lieb (TL) algebra. Thus,
$$H=-\frac{1}{d}\sum_{i=1}^{L-1} e_i.$$
For the special case of Ising anyonic chain, if non-periodic, there are several possibilities $(a,b)$ for the boundaries as $a,b \in \{0,\frac{1}{2},1\}$. For example, the chain $(\frac{1}{2},\frac{1}{2})$ has odd length $L=2n+1$ due to the fusion rules and the Hamiltonian is $H=\frac{-1}{\sqrt{2}}\sum_{i=1}^{2n-1} e_i$. However, the periodic chain has always even length $2n$.

Back to the general case, the operator $e_i=e[i]$ acts non-trivially on the $i$-th particle according to its neighbor particles
$$e_i\ket{j_{i-1}j_ij_{i+1}}=\sum_{j_i'}(e[i]_{j_{i-1}}^{j_{i+1}})_{j_i}^{j_i'}\ket{j_{i-1}j_i'j_{i+1}},$$
where $(e[i]_{j_{i-1}}^{j_{i+1}})_{j_i}^{j_i'}$ is determined by the $S$-matrix entries of the MTC
$$(e[i]_{j_{i-1}}^{j_{i+1}})_{j_i}^{j_i'}=\delta_{j_{i-1},j_{i+1}}\sqrt{\frac{S_{j_i}^0S_{j_i'}^0}{S_{j_{i-1}}^0S_{j_{i+1}}^0}}, \ \ S_j^{j'}=\sqrt{\frac{2}{k+2}}\sin\Big(\frac{\pi(2j+1)(2j'+1)}{k+2}\Big).$$
In brief, from the MTC point of view, one can think of the (open) AC as a diagram inside $\text{Hom}(x_1 \otimes x_L, (\frac{1}{2})^{\otimes (L-1)})$ on which $e_j$ with the above entries act. Diagrammatically, the TL algebra acts by annihilating (cap) and then creating (cup) an adjacent pair of spin-$\frac{1}{2}$ particles. Then $H$ can be defined as above. 

Numerical experiments suggest that the scaling limit of the ACs of $SU(2)_k$ give chiral or full CFT (for open boundary condition or periodic chains, respectively). These results are outlined in \cite{gils2013anyonic,feiguin2007interacting}, and show that depending on the boundary condition, we obtain different chiral CFTs, i.e. different irreducible modules of UMMs with central charge $c=1-\frac{6}{(k+1)(k+2)}$. As emphasized before, this happens for the antiferromagnetic chain, and it is expected that one obtains the parafermion CFT with central charge $c=\frac{2(k-1)}{k+1}$ for the ferromagnetic chain.

Exact diagonalization numerically solves the anyonic chain model by finding the excitation spectra. Conformal dimensions of the predicted CFT limit are extracted from the energy levels for a length $L$ chain given by 
$$ E=E_1L+\frac{2\pi v}{L}(-\frac{c}{12}+h_L+h_R)+O(\frac{1}{L^2}).$$

The scaling limit CFT is stable under symmetry-preserving perturbation. More precisely, by symmetry, we mean the \textit{topological symmetry} that the periodic chain has. One can imagine a loop \cite[Fig. 3]{gils2013anyonic} inside the chain and repeatedly use the $F$-move until it gets removed. As demonstrated in \cite{gils2013anyonic,feiguin2007interacting}, this provides a topological symmetry and any perturbation preserving such symmetry will not change the scaling limit.

As a final note, an important connection between AC model and the RSOS lattice model provides a physical proof that the scaling limits of ACs are CFTs. One can show that the Hamiltonian derived from the logarithmic derivative of the transfer matrix, coincides with the AC Hamiltonian \cite{feiguin2007interacting}. This lattice model has been studied for a long time and the literature has similar numerical results for this model (see \cite{baxter2016exactly}, \cite{jones03}, and the references in \cite{feiguin2007interacting}).  While there is no doubt that the two approaches are equivalent in the end, mathematically it seems easier to obtain CFTs as scaling limits in the AC approach.  As comparison, we recall the recovery of Ising CFT in the 2d classical Ising model \cite{smirnov07}.  One aspect of the difference is manifested in the order of enforcing Jones-Wenzl projectors in $SU(2)_k$:  in the ACs, the Jones-Wenzl projector $p_{k+1}$ is implemented first, whereas in spin chains the projector, which is non-local, will be implemented at a later time.

\subsection{Outline of main results}\label{1.5}

\hfill \break
In this paper, we provide a mathematical definition of a low energy scaling limit of a sequence of general non-relativistic quantum theories in any dimension, and apply our formalism to ACs.  Similar ideas for defining related scaling limits for lattice models and spin chains appeared earlier in the physics and mathematics literature (see the next section).  Of utmost importance to our future applications are the rate of convergence to the scaling limit, and the recovery of all algebras of local observables in the scaling limit.  We emphasize those points for the scaling limits of the Ising ACs. 

We formulate \hyperref[cnj4.3rep]{\textbf{Conjecture 4.3}} reproduced below on conditions when a chiral unitary CFT would arise as such a limit and verify the conjecture for the Ising minimal model $M(4,3)$ using Ising ACs.  Part of the conjecture is a precise relation between Temperley-Lieb generators $\{e_i\}$ and finite versions of Virasoro generators $\{L_m+L_{-m}\}$ and $\{i(L_m-L_{-m})\}$ for UMMs $M(k+2,k+1)$.  Our approach is supported by extensive numerical simulation and physical proofs in the physics literature.

In \hyperref[2.]{section 2}, we define the low energy (strong) scaling limit of quantum \textit{theories} --- Hilbert spaces $\mathcal{W}_n$ with Hamiltonians $H_n$ and algebras of observables $\mathcal{A}_n$--- as a Hilbert space ${\mathcal{V}}$ with Hamiltonian $H$, and address the issues that come up with our definition. We define the scaling limit of observables $O_n \xrightarrow{SL} O$ when $O_n$'s low energy behavior converges to that of $O$.  Those $O$'s defined on $\mathcal{V}$ generate the \textit{vector space} $\mathcal{A}$ of observables on $\mathcal{V}$.

On an important related issue, we propose a definition of locality with respect to both space and energy. As an example, intuitively, local energy observables are those that do not shift the energy by more than a constant. We explore the (space and energy) local operators in \hyperref[4]{section 4} in greater details (\hyperref[thm4.1]{\textbf{Theorem 4.1}} and \hyperref[thm4.2]{\textbf{Theorem 4.2}}).

In \hyperref[3]{section 3}, we obtain the scaling limits of Ising ACs with all kinds of boundary conditions. Proving the limits is a rather computationally involved procedure, where the same technique (\cite{fendley2014free}) is applied to each case. While the variation of details from case to case is small, the details of the proofs are necessary for our future discussion. Part of the proofs for some cases has been done in the physics literature with different or similar approaches (see \cite{koo1994representations} as an example), but we could not locate a mathematically rigorous proof for all Ising ACs in the literature using one consistent method, and with explicit estimate of the convergence rate for the limits. Furthermore, we need to set up notations and use some details of the proofs in later sections.  In the end, we find enough reasons to provide a thorough and detailed mathematical proof of the scaling limit of all Ising ACs in one place, estimate the rate of convergence, and show how one can obtain the Virasoro algebra generators in the scaling limit. The final result is a long \hyperref[thm3.1]{\textbf{Theorem 3.1}} which motivates the conjecture below:

\begin{customcnj}{4.3}\label{cnj4.3rep}
For any unitary minimal model VOA $\mathcal{V}=\mathcal{V}_{c,0}$ and a chiral representation $\mathcal{V}_{c,h}$, there is a sequence of quantum theories $(\mathcal{W}_n, H_n, \mathcal{A}_n)$ with \textbf{strong} scaling limit $(\mathcal{V}_{c,h},L_0)$ such that for each Virasoro generator $L_m$, we have a sequence $\widetilde{L}_m(n) \in \mathcal{A}_n$ with the following properties:
\begin{itemize}
    \item $\widetilde{L}_m(n)$ is a space local observable such that the hermitian operators $a\widetilde{L}_m+\overline{a}\widetilde{L}_{-m} \in \mathcal{A}_n^H$ for any complex number $a$, where $\mathcal{A}_n^H$ is the generating set for $\mathcal{A}_n$ consisting of hermitian observables.
    \item  $\widetilde{L}_m(n)$ shifts the energy no more than $|m|$ for each $n$.
    \item There are positive constants $d_\omega,g_\omega,e_\omega$ such that when $\widetilde{L}_m(n)$ restricted to energy at most $n^{d_\omega}$, it has the following approximation by $L_m|_{n^{
    d_\omega}}$ with remainder $R_{n}^m$:
    $$\widetilde{L}_m=L_m|_{n^{d_\omega}}+O(\frac{1}{n^{g_\omega}})+R_{n}^m,$$
    and the operator norm of the remainder $R_{n}^m$ is bounded by $O(n^{e_\omega})$.
\end{itemize}
\end{customcnj}

In \hyperref[4]{section 4}, using the results in \hyperref[3]{section 3}, we obtain the observables of different types in the scaling limit. 

Obtaining the action of the VOA on its modules is the next step. As the VOA is generated by $L_n$s applied to the primary fields, the ability to obtain the Virasoro generators as scaling limits is of utmost importance. In other words, realizing the smeared conformal field $Y(\omega,f)$ should be the top priority. Then, we would need to get the smeared primary fields. For UMMs, the VOAs are generated from only the conformal vector $\omega$ applied on the vacuum (the \textit{only} primary field). Thus, obtaining $Y(\omega,f)$ is close to obtaining all the operators $Y(a,f)$ for all fields $a$ and that is due to the Borcherds identity. Assuming the above conjecture for UMMs, most theorems in \hyperref[4]{section 4} notably \hyperref[thm4.8]{\textbf{Theorem 4.8}} mentioned below, hold for all UMMs as well (the exact results are mentioned in \hyperref[rmk8]{\textit{Remark 8}} and \hyperref[rmk9]{\textit{Remark 9}}). So although our discussion is for the Ising ACs, everything below about the conformal field and $Y(a,f)$s is conjectured to hold for higher level UMMs as well.

Consider a non periodic Ising chain placed on the upper half-circle $S^1_+$, what is the ``finite version'' $\widetilde{\omega}$ of $\omega$? Informally, the answer is $\widetilde{\omega}=e$ (the TL operator) where subtleties are explained below. For example, we will show that for any function $f$ with a Fourier series where coefficients of $\sin(n\theta)$s are zero, we have (informally)
$$\int_{S^1_+} f(e^{i\frac{\pi j}{n}})e_j \xrightarrow{SL} Y(\omega,f)=\oint Y(\omega,z)f(z)z^2 \frac{\text{d}z}{2\pi i z},$$
where the integral on the left is an integral over a ``finite'' space, in other words, a summation. Hence, as $\omega$ can be regarded either as a vector or a field, so does $e_j$, which can be seen as a diagram or as an operator (stacking up diagrams). Here $Y$ (the vertex operator) is the analog of the stacking at infinity.

For the opposite situation, i.e. the functions with a Fourier series where coefficients of $\cos(n\theta)$s are zero, we could consider this as a \textit{derivative} of a series with \textit{nonzero} coefficients for only $\cos(n\theta)$. In this case, we are basically looking at the derivative of $Y(\omega,z)$ or in other words $\frac{\text{d}}{\text{d}z}Y(\omega,z)=Y(L_{-1}\omega,z)$. So it is necessary to find the corresponding operator for the finite version of $L_{-1}\omega$ which should be the \textit{derivative} (as it is the interpretation of $L_{-1}$) of $e_j$. The first candidate that comes to mind is $[e_{j},e_{j+1}]$ and informally, we have
$$i\int_{S^1_+} f(e^{i\frac{\pi j}{n}})[e_j,e_{j+1}] \xrightarrow{SL} Y(\omega,f)=\oint Y(\omega,z)f(z)z^2 \frac{\text{d}z}{2\pi i z}.$$
For general functions $f$, we need a linear combination of $e_j$ and $[e_j,e_{j+1}]$ to get $Y(\omega,f)$. Once this is achieved, an application of the Borcherds identity gives all $Y(a,f)$s and the action of the VOA on its module can be recovered as an algebra in the scaling limit (SL-algebra):
\begin{customthm}{4.8}
The set of operators $\{Y(a,f) | \ a \in \mathcal{V}, f \in C^\infty(S^1)\} \subset \mathcal{A}$ generate an SL-algebra. 
\end{customthm}

The importance of this result lies in the required recovery of operators as an algebra in the scaling limit, not just as isolated operators generating a vector space as explained in the beginning. 

Next goal is the algebra of observables in LCNs. We define finite versions of local conformal nets in the obvious way by thinking of intervals in the interval set $\mathcal{I}_+$ on the upper-half circle and the algebra of observables on them. Then, we show that there is a net of bounded observables $\mathcal{A}_b$ obtained in the scaling limit \textit{exactly} matching the one in an LCN, i.e. $\mathcal{A}_{lcn}$, at least on those intervals touching the boundary of the upper half-circle:

\begin{customthm}{4.13}
We have $\mathcal{A}_b(I)=\mathcal{A}_{lcn}(I \cup j(I))$ for $I \in \mathcal{I}_+$, with $j(I)$ being $I$'s reflection in the lower half-circle and $|I \cap \partial S^1_+|=1$
\end{customthm}

We also propose a method to recover the point-like fields (\hyperref[thm4.14]{\textbf{Theorem 4.14}}). In that case, we are unable to show that they form an algebra in the scaling limit. In fact, the sequence of operators that we are proposing to identify them in the scaling limit is probably not the suitable one. We will discuss these more in the relevant section.

In the final section, conjectures and problems that needs to be addressed to fully recover all the structures of CFTs in the scaling limit are listed. Lastly, we make an attempt to formulate the problem of simulating CFTs using quantum computers. 

\subsection{Previous works}\label{1.6}

\hfill \break
We discuss briefly some prior works in the literature on the mathematically rigorous definition of a scaling limit in the quantum mechanics approach, and the recovery of algebras of observables. As previously mentioned, there is a vast literature on the subject of scaling limits in statistical mechanics \cite{baxter2016exactly}, and substantial progress has been made in the case of Ising model proving the correlation functions in the limit are conformal invariant (see \cite{smirnov07,chelkak2012conformal} and the references therein). Statistical mechanics approach could also provide techniques with which one could compute the conformal weights present at the scaling limit without actually diagonalizing the Hamiltonian \cite{aasen2016topological}. A recent program to construct CFTs from subfactors is in \cite{jones17, jones14, jones16}, where the inductive limit of Hilbert spaces is clearly discussed based on planar algebras, which have the same Hilbert spaces of states as ACs (spin chains in these papers are better interpreted as generalized spin chains as in \cite{jones03}). Our work focuses on the quantum mechanics approach to scaling limits of ACs enriching the inductive limits \cite{jones14} with explicit Hamiltonians and algebras of local observables.

A scaling limit of spin chains close to our Ising AC was analyzed earlier in \cite{koo1994representations} starting with the idea of how to take the scaling limit of the Hamiltonians of the chains and also obtain the Virasoro modes $L_n$ from Fourier transforms of the TL generators $e_i$. More recently, in the first paper of the series \cite{gainutdinov2013continuum,gainutdinov2013bimodule,gainutdinov2016associative} on the $\mathfrak{gl}(1|1)$ (free) model, the authors proposed a potentially rigorous definition for the scaling limit \cite[section 4.3]{gainutdinov2013continuum}, obtained operators like our $\widetilde{L}_m$s and computed their commutators to check their convergence to the commutators of the Virasoro modes. Such computations are commonly pursued after one obtains some operators $\widetilde{L}_m \xrightarrow{SL} L_m$ and have been done in different models both rigorously and numerically (\cite{gainutdinov2014lattice},\cite[p.19 and references therein]{gainutdinov2013logarithmic}). We go beyond the convergence of commutators and further pin down the conditions necessary (\hyperref[cnj4.3]{\textbf{Conjecture 4.3}}) to prove the same theorems for higher level UMMs. 

In the third paper of the series (\cite{gainutdinov2016associative}), the authors gave a rigorous definition of scaling limit based on their previous ideas while working on the scaling limit of JTL algebra (with $d=0$) as it acts on a $\mathfrak{gl}(1|1)$ periodic spin-chain model (the scaling limit is the $c=-2$ Logarithmic CFT---symplectic fermions theory).  Even though the context and the type of model (on Logarithmic CFTs) are quite different from ours (unitary CFTs), our definitions closely mirror theirs.  But there are some differences due to our different motivation, emphasis and applications.

As defined in \cite[Appendix C]{gainutdinov2016associative}, our scaling limit is also dictated by the low energy behavior of Hamiltonians, Hilbert spaces, and observables.  In \cite{gainutdinov2016associative}, the primary focus is on the \textit{algebraic} scaling limit of $JTL_N$'s action ($\pi_{\mathfrak{gl}}(JTL_N)$). However, we focus on the analytic side of scaling limits motivated by our goal of simulating CFTs as we need to know how computations in the finite stages converge. Especially, the unitary evolution and correlation functions involve unbounded operators for which we desire a clear description on how they are obtained in the scaling limit. In fact, even when restricted to the bounded observables, not all bounded operators can be obtained through the algebraic approach (for example the unitary operators $e^{iL(f)}$). Related to this, the analytic approach provides a more direct picture at how the LCNs emerge (\hyperref[4.2]{section 4.2}) since we still keep the JTL operators $e_i$ as our operators of interests and mostly, do not switch to fermionic fields. This enables us to obtain theorems with proofs general enough for higher UMMs assuming \hyperref[cnj4.3]{\textbf{Conjecture 4.3}}.

The algebraic approach, and algebraic-numerical techniques \cite{read2007enlarged,gainutdinov2015periodic}, has been used to obtain more information about the algebraic structure of the Hilbert space and the algebra of observables in the scaling but to our knowledge, a mathematically rigorous procedure has been applied mainly for free models like $\mathfrak{gl}(1|1)$. 

Recently, emergence of conformal symmetry has been numerically investigated using the Koo-Saleur generators (KSGs) \cite{milsted2017extraction,koo1994representations}.  To compare our version of KSGs with those of \cite{milsted2017extraction},  first recall our notation
$\mathbb{L}_n=L_n+\overline{L}_n.$
Our counterparts of the KSGs are operators $\widetilde{\mathbb{L}}_n\pm \widetilde{\mathbb{L}}_{-n}$ on the ACs that give us $\mathbb{L}_n\pm \mathbb{L}_{-n}$ in the scaling limit. On the other hand, using a different diagonalization of the Hamiltonian in \cite{milsted2017extraction} (same as that in \cite{koo1994representations}), the authors found their KSG operators, different from ours, in the AC notation to be  $\widetilde{H}_n=\frac{-N}{2\pi}\sum_{j=1}^{2N} e^{\frac{2n(j+1)\pi}{2N}}e_j$, which converge to $L_n+\overline{L}_{-n}$. Taking the sum and difference of $\widetilde{H}_n$ and $\widetilde{H}_{-n}$ respectively, we obtain $\mathbb{L}_n+ \mathbb{L}_{-n}$ from the sum and $(L_n-L_{-n}) -(\overline{L}_n-\overline{L}_{-n})$ from the difference, which does not have a counterpart in our version.

The difference stems from different diagonalizations of the same Hamiltonian, which illustrates the potential importance of connecting maps in our definition of scaling limits. In \cite{koo1994representations}, the diagonalization of the Hamiltonian is accomplished by constructing creation and annihilation operators from the \textit{usual} Fourier transforms of the Majorana operators. While in our version, the creation and annihilation operators are obtained as $\sin()$ and $\cos()$ transforms for the left and right moving sectors, which implies  that going from one diagonalization to the other requires a mixing of the right and left moving sectors of the full CFT. 

It follows that the scaling limit of $\widetilde{H}_n$ from our diagonalization will have an \textit{interchiral} part which mixes left and right moving sectors which is clearly different from $H_n=L_n+\overline{L}_{-n}$. The method in \cite{milsted2017extraction} works well numerically, and for the NS sector  $\rchi_0\overline{\rchi}_0+\rchi_\frac{1}{2}\overline{\rchi}_\frac{1}{2}$, the resulting scaling limit (see e.g. \cite{koo1994representations} for a proof) gives rise to a full CFT isomorphic to ours by a not necessarily local isomorphism that connects the two different sets of creation and annihilation operators.

Finally, while not directly related, the paper \cite{konig2017matrix} serves as a conceptual inspiration for our work and the techniques introduced there address analytic problems of similar nature to ours.

\section{Scaling limit of quantum theories}\label{2.}

It is commonly believed that QFTs are low energy effective theories such as WCS TQFTs are the low energy effective theories for two dimensional fractional quantum Hall liquids.  In this section, we define mathematically a low energy limit of a sequence of quantum theories.  Our formalism is closely related to the definition of topological phases in \cite{RW17} and ideas in \cite{gainutdinov2013continuum}.  

We start with the definition of quantum theories by imagining quantum theories that describe a collection of interacting quantum particles. The theories considered have a discrete energy spectrum in the scaling limit like all CFTs. Notice this is different than the energy spectrum given by the primary fields. In the context of CFTs, there are non unitary Virasoro representations with continuous spectrum of primary fields, while still having a discrete energy spectrum in each sector. The definition below is for a finite dimensional theory, with the next sections defining what the scaling limit (infinite dimensional) is. 
\begin{dfn}
A quantum theory is $(\mathcal{W}, H, \mathcal{A})$ where
\begin{itemize}
    \item $\mathcal{W}$ is the Hilbert space of states,
    \item $H$ is the Hamiltonian and hermitian,
    \item $\mathcal{A}$ is the algebra of observables,
\end{itemize} 
\end{dfn}
\label{dfn2}
\begin{rmk}\label{rmk1}
We can also add a number of notions to the definition above. For example, The space information of the system can be thought of a graph $G$, which is usually the $1$-skeleton of a triangulation of the space. In the following text, $G$ is always a chain. There are also different notions of locality based on the basis we choose. As an example, considering the space information given the graph $G$, the Hamiltonian $H$ is $r$-local for some constant $r>0$ if $H=\sum_{i=1}^p H_i$ such that each local hermitian term $H_i$ is trivial outside the ball $B_r(v_i)$ of distance $r$ at some vertex $v_i$ of $G$. If $p=1$, then $H$ is ultra $r$-local. There will also be a notion of \textit{energy}-local operators as discussed later in this section.
\end{rmk}
\subsection{Low energy limit of quantum theories}\label{2.1}

\hfill \break 
The first part of a limit theory is a Hilbert space and a Hamiltonian, which are constructed from the low energy spectra of a sequence of quantum theories $(\mathcal{W}_n,H_n)$ with strictly increasing dimensions.

Assume a sequence of quantum theories $(\mathcal{W}_n,H_n)$ with $H_n$'s eigenvalues being ordered as $\lambda_1^{(n)} \le \ldots \le \lambda_{d(n)}^{(n)}$, where $d(n)=\text{dim}(\mathcal{W}_n)$. The Hilbert spaces $\mathcal{W}_n$ decompose into the corresponding one-dimensional eigenspaces
$$\mathcal{W}_n=E_{\lambda_1^{(n)}} \oplus \cdots \oplus E_{\lambda_{d(n)}^{(n)}}.$$ 
Denote by $\mathcal{W}_n^M$ the Hilbert space $\mathcal{W}_n$ restricted to energies at most $M$, i.e. $\mathcal{W}_n^M=\bigoplus_{\lambda_i^{(n)}\leq M}E_{\lambda_i^{(n)}}$. Assume the following set of properties (\textbf{P}\label{P})
\begin{itemize}
    \item $\lambda_i=\lim\limits_{n \to \infty} \lambda_i^{(n)}$ exists for all $i \in \mathbb{N}$ with the convention $\lambda_i^{(n)}=0$ for $i>d(n)$, and $\lim\limits_{i\to \infty}\lambda_i=\infty$,
    \item (\textit{connecting maps}) for all $M > \lambda_1$ where $M \neq \lambda_j$ for all $j$, there exist \textit{connecting} unitary maps $\phi_n^M: \mathcal{W}_n^M \to \mathcal{W}_{n+1}^M$ for all $n>N_M$ for some $N_M$ depending on $M$,
    \item (\textit{extension}) $\phi_n^M$ is an extension of $\phi_n^{M'}$ when $M \ge M'$, i.e. $\phi_n^M|_{\mathcal{W}_n^{M'}}=\phi_n^{M'}$.
\end{itemize}
Consider the sequence $(\mathcal{W}_n^M,\phi_n^M)$ with $M > \lambda_1$ and $M\neq \lambda_j$. We note that this sequence eventually \textit{stabilizes} due to the existence of \textit{unitary} maps for large enough $n$.

The reason for $M \neq \lambda_j$ for all $j$ in the first property is that energies oscillating around their limit points would make the stabilization of the low-energy spectrum impossible for a cut-off $M=\lambda_j$. From now on, any cut-off will be implicitly assumed to be not equal to any $\lambda_j$.

Taking the colimit of the sequence $(\mathcal{W}_n^M,\phi_n^M)$ gives a finite dimensional vector space, called $\mathcal{V}^M$, along with the unitary maps $\rho_n^M : \mathcal{W}_n^M \to \mathcal{V}^M$. It follows that  $\mathcal{V}^M$ has a \textit{natural} Hilbert space structure. Further, due to the first property, one can easily see that for all $M\in (\lambda_j,\lambda_{j+1})$, the space $\mathcal{V}^M$ is the same as $\mathcal{W}_n^M$ will have the same dimension (for large enough $n$). The space $\mathcal{V}^M$ can be also conveniently called $\mathcal{V}^{\lambda_j}$. So there are only countably many different $\mathcal{V}^M$s. Next we add the following property to \textbf{P} on the convergence of $H_n^M$, the restriction of $H_n$ to $\mathcal{W}_n^M$: 
\begin{itemize}
\item (\textit{convergence}) The push-forward of $H_n^M$ on $\mathcal{V}^M$ given by $\rho_n^M$ converges to some operator $H^M$:
$$\rho_n^MH_n^M(\rho_n^M)^{-1} \to H^M.$$
\end{itemize}

Obviously, $H^M$ will be hermitian. Furthermore, the above property is equivalent to the following diagram ``commuting up to $\epsilon_{n}^M$ in the norm operator'', which goes to zero as $n \to \infty$:
\[ 
\begin{tikzcd}
\mathcal{W}_n^M \arrow{r}{\rho_n^M} \arrow[swap]{d}{H_n^M} & \mathcal{V}^M \arrow{d}{H^M} \\
\mathcal{W}_n^M \arrow{r}{\rho_n^M}& \mathcal{V}^M
\end{tikzcd}.
\]
  
The construction of the \textbf{scaling limit} $(\mathcal{V},H)$ of the sequence is not hard from here. We spell out the formal details although future constructions in the case of Ising will be much more straightforward.

Properties of the colimit imply that the set $\{(\mathcal{V}^M,H^M)\}_{M > \lambda_1}$ is unique up to unique isomorphism. We would like to construct these spaces in such a way that $\{(\mathcal{V}^M,H^M)\}$ are restrictions of a single Hilbert space and its Hamiltonian $(\mathcal{V},H)$. 

Due to the \textit{extension} property, $N_{M} \ge N_{M'}$ for $M\ge M'$. The existence of connecting maps for $n>N_M$ ensures that one can build $\mathcal{V}^M$ using the orthonormal basis
\begin{gather*}
\scalebox{0.93}{$
\{(v_i^{(N_M+1)},\phi_{N_M+1}^M(v_i^{(N_M+1)}),\phi_{N_M+2}^M(\phi_{N_M+1}^M(v_i^{(N_M+1)})),\ldots)\}_{i=1}^{\dim \mathcal{V}^M}$}
\end{gather*}
where $v_i^{(N_M+1)}$ is an orthonormal basis of $\mathcal{W}_{N_M+1}^M$. Each sequence represents an actual vector $\textbf{v}_i^M$ and addition is component-wise and inner product is given by the inner product on any component, which due to isometry of connecting maps gives the same number. To be really precise, we will have to take $\textbf{v}_i^M$ not exactly as that sequence, but as the colimit of that sequence and every sequence that is a truncation of the sequence from the left. This choice will soon become clear. Now consider making the above construction (inductively) for the sequence $M=\lambda_1',\lambda_2',\ldots$ where $\lambda_j'\in (\lambda_{j},\lambda_{j+1})$ in such a way that for $M=\lambda_j'$ for $j>1$, the orthonormal basis used in the previous case is \textit{extended}. As an \textit{example}, assuming $\lambda_1<\lambda_2<\lambda_3$, for $M=\lambda_2'$, we take the orthonormal basis provided by $M'=\lambda_1'$  on $\mathcal{W}_{N_{M}+1}^{M'}$ which is 
$$\{ \phi_{N_M}^{M'}(\phi_{N_M-1}^{M'}( \ldots (\phi_{N_{M'}+1}^{M'}(v_i^{(N_{M'}+1)})) \ldots )) \}_{i=1}^{\dim \mathcal{V}^{M'}}$$
and extend it to an orthonormal basis for $\mathcal{W}_{N_{M}+1}^{M}$. Notice that the colimit will be having the vectors $\{\textbf{v}_i^{M'}\}_{i=1}^{\dim \mathcal{V}^{M'}}$ which are the colimit of the sequences starting by the vectors of basis of $\mathcal{W}_{N_{M}+1}^{M'}$ mentioned above; this is why we had to consider truncations from the left as the sequence does not start with basis of $\mathcal{W}_{N_{M'}+1}^{M'}$. It is formal diagram chasing, using the extension property of the connecting maps, that this construction gives a well-defined colimit $\mathcal{V}^M$. By the choice of $\lambda_j'$s, these $\mathcal{V}^M$s are $\{\mathcal{V}^{\lambda_j}\}_j$ which means we have obtained $\mathcal{V}^M$ for \textbf{all} possible cut-off $M$. This is also compatible with the action of $H^M$ as defined in the convergence property. This means for the set $\{(\mathcal{V}^M,H^M)\}_{M>\lambda_1}$, the embedding $\mathcal{V}^{M'} \rightarrow \mathcal{V}^{M}$ is by identity for $M' \le M$ and the following diagram commutes
\[ 
\begin{tikzcd}
\mathcal{V}^{M'} \arrow{r}{} \arrow[swap]{d}{H^{M'}} & \mathcal{V}^M \arrow{d}{H^M} \\
\mathcal{V}^{M'} \arrow{r}{}& \mathcal{V}^M
\end{tikzcd}.
\]
Taking a second colimit of the set $\{(\mathcal{V}^M,H^M)\}_{M > \lambda_1}$ leads to the desired scaling limit $(\mathcal{V},H)$ where $\{(\mathcal{V}^M,H^M)\}$ are restrictions of $(\mathcal{V},H)$. To see this, the (unique) operator $H$, which would make the diagram below commute \textbf{for all} $M$, is the desired Hamiltonian. Notice the embedding $\mathcal{V}^M \rightarrow \mathcal{V}$ is by identity since $\mathcal{V}$ is a union of all $\mathcal{V}^M$s as they have a nested structure.
    \[ 
    \begin{tikzcd}
    \mathcal{V}^{M} \arrow{r}{} \arrow[swap]{d}{H^{M}} & \mathcal{V} \arrow{d}{H} \\
    \mathcal{V}^{M} \arrow{r}{}& \mathcal{V}
    \end{tikzcd}.
    \]

Since two colimits are taken to obtain the scaling limit (similar to the construction in \cite[section 4.3]{gainutdinov2013continuum}), the above process is called the \textit{double colimit construction}, allowing the following definition
\begin{dfn}\label{dfn3}
Given a sequence of quantum theories $(\mathcal{W}_n,H_n)$ with given connecting maps $\phi_n^M$ satisfying properties \hyperref[P]{(\textbf{P})}, the \textit{scaling limit} $(\mathcal{V},H)$ is the result of the double colimit construction. This limit will be written as $(\mathcal{W}_n,H_n) \xrightarrow{SL} (\mathcal{V},H)$.
\end{dfn}

We emphasize that as long as the connecting maps are specified the scaling limit process is unique up to unique isomorphism due to the nature of colimit. From now on, whenever a sequence of quantum theories is given with a scaling limit, implicitly, there is a given set of connecting maps. We do not discuss the issue of uniqueness any further and for a relevant example, we refer to the previous discussion in \hyperref[1.6]{section 1.6} on different diagonalization in the case of the Ising full CFT.

Notice that $\mathcal{V}$ is separable but not complete, so not yet a Hilbert space. The completion of $\mathcal{V}$ will be $\overline{\mathcal{V}}$. For notational easiness, The scaling limit will be written as $(\mathcal{V},H)$ with the understanding that one needs to take a completion whenever the context requires so.

We would like to think of the scaling limit as the result of \textit{stacking up} the low energy spectra of $H_n$s, and the double colimit construction indeed fulfills this expectation. Let  $\textbf{E}_{\lambda_1}$ be the eigenspace of the limit Hamiltonian $H$ corresponding to $\lambda_1$ and $\lambda_k$ be some larger eigenvalue of $H$ for some $k$. Choose some $M$ such that $\lambda_1<M<\lambda_k$, then the above construction builds $\textbf{E}_{\lambda_1}$ from the spaces $\mathcal{W}_n^M$ with large enough $n$, which contains all the vectors whose energy converges to $\lambda_1$ in the limit. The same holds for the other eigenspaces. 

Although our definition does not assume an embedding of the whole space $\mathcal{W}_n$ into $\mathcal{W}_{n+1}$, we expect this to be the case for all physical models. Indeed, scaling limit should be after all a physical process in which a whole system is embedded into another one when some new particles are added.

Our discussions in \hyperref[4]{section 4} will be based on this assumption, hence the need for a more refined definition:
\begin{dfn}\label{dfn4}
$(\mathcal{W}_n,H_n)$ have a \textit{strong} scaling limit $(\mathcal{V},H)$ if in addition to properties \textbf{P}, for all $n$ and $M$, the connecting maps $\phi_n^M$ are the restriction up to energy $M$ of an isometry
$$\phi_n : \mathcal{W}_n \hookrightarrow \mathcal{W}_{n+1},$$
for large enough $n$.
\end{dfn}

Given the above, the colimit of the sequence of embeddings $\mathcal{W}_n \hookrightarrow \mathcal{W}_{n+1}$ gives $\mathcal{V}$ directly.

Usually, the chosen basis for $\mathcal{W}_n$ closely relates to a notion of space, and locality in this space basis is supposed to represent locality in space. Finding the embedding $\phi_n$, though an isometry, is not  trivial based on this basis. In the scaling limit, the \textbf{space} embedding is not the ``trivial'' embedding, in contrast to the thermodynamical limit \cite[Appendix A]{fiedler2017jones}. In the scaling limit, the \textbf{energy} embedding is the trivial one as shown in the definition. As a result of this trivial energy embedding, the space local operators in $\mathcal{W}_n$, like $e_i$s in ACs, are generally space non-local when their actions are push-forwarded. This will become clearer in next few sections.

Finding the ``energy basis'' requires an understanding of the energy local degrees of freedom (EL-DOFs), which comes from an exact diagonalization of the Hamiltonian. Even numerical exact diagonalization is very limited for interacting models. In the Ising AC case, exact diagonalization analytically gives us the creation and annihilation operators, which are the EL-DOFs.  This, in turn, provides us the energy basis, which allows us to construct the scaling limit at each energy eigenspace. For all the models with known CFT limits, only free theories have mathematical descriptions of their EL-DOFs so far (see \cite{fendley2014free, gainutdinov2016associative, gainutdinov2014lattice} for some recent examples).

Another difficulty with a scaling limit is the description of observables in the limit. In the scaling limit, a ``space'' description of the operators in the limit is hard to find. For example, if we look at any observable in a CFT, the description which allows us to \textit{compute} with, is in terms of mode operators, which are more naturally described as energy shifting operators while their space action is obscure. Indeed, the $Y(a,z)$s are considered to be space local observables, yet their description is a Fourier series of mode operators $\sum a_nz^{-n-1}$. This, along with the fact that the chosen basis for $\mathcal{W}_n$ is closely related to the notion of space and not energy, complicates the process of finding a description of observables in the scaling limit. In the case of ACs, the $e_i$s are space local operators. Therefore, having a general definition of Fourier transform on the $e_i$s is essential, especially one that relates to the mode operators $L_n$. Alternatively, one will have to find and work with some space description of $Y(a,z)$.

\subsection{Scaling limit of observables}\label{2.2}

\hfill \break
Given a sequence of quantum theories $\{(\mathcal{W}_n,H_n,\mathcal{A}_n\}_{n=1}^{\infty}$ with the scaling limit $(\mathcal{V},H)$, by definition, Hilbert spaces $\mathcal{W}_n$ have strictly increasing dimensions,  Hamiltonians $H_n$, and algebras of observables $\mathcal{A}_n$. Recall that the algebra $\mathcal{A}_n$ is generated by an underlying real vector space of hermitian observables called $\mathcal{A}_n^H$, and $H_n \in \mathcal{A}_n^H$. In the examples of ACs, the space $\mathcal{A}_n^H$ is spanned by $\{e_j,i[e_j,e_{j+1}]\}$. This choice of generating set is motivated on one hand from including the local terms of interaction of the system, and on the other hand to recover the Virasoro algebra in the scaling limit; see \hyperref[thm4.2]{\textbf{Theorem 4.2}} and \hyperref[rmk12]{\textit{Remark 12}}. 

To build the observables of $\mathcal{V}$ from the observables in $\mathcal{W}_n$, the low energy behavior of the observables has to be taken into account.
\begin{dfn}\label{dfn5}
Let $O_n \in \mathcal{A}_n$ be \textit{any} sequence of observables. For a given $M$ and $u,v \in \mathcal{V}^M$, denote by $u_n,v_n \in \mathcal{W}_n^M$ the vectors $(\rho_n^M)^{-1}u,(\rho_n^M)^{-1}v$, which are defined for sufficiently large $n$. The scaling limit of $O_n$ is a partially-defined (defined on a subset of $\mathcal{V} \times \mathcal{V}$) sesquilinear form $O(\cdot,\cdot)$, where $O(u,v)$ is defined as $\lim\limits_{n \to \infty}(u_n,O_nv_n)$ when it exists. We will denote the scaling limit by $O_n \xrightarrow{SL} O$.
\end{dfn}
The idea is that the operator $O$ is constructed to exactly store the information in the expectation values of $O_n$.
\begin{dfn}\label{dfn6}
Consider the set of sesquilinear forms (which will also be called operators) in \hyperref[dfn5]{\textbf{Definition 5}}. Define
\begin{itemize}
    \item $\overline{\mathcal{A}}$: the set of observables in \hyperref[dfn5]{\textbf{Definition 5}},
    \item $\mathcal{A}$: the vector space of sesquilinear forms that are scaling limit of observables in $\mathcal{A}_n$ and defined on $\mathcal{V}\times\mathcal{V}$,
    \item $\mathcal{A}^H$: the real vector space consisting of all hermitian operators defined on $\mathcal{V}$ which are scaling limits of hermitian observables in $\mathcal{A}_n^H$.
\end{itemize}
\end{dfn}

\begin{rmk}\label{rmk2}
One can ask whether $\mathcal{A}$ \textit{generates} $\overline{\mathcal{A}}$?  This is true for Ising and any other model with what would be called an \textit{algebraic} scaling limit (defined after the next remark). The algebraic scaling limit gives a \textit{copy} of each observable of $\mathcal{A}_n$ inside $\mathcal{A}$. Therefore, any operator $O_n \xrightarrow{SL} O \in \overline{\mathcal{A}}$ can be seen as an operator obtained as scaling limit of the copies of $O_n$ inside $\mathcal{A}$, implying that $\mathcal{A}$ in a sense generates $\overline{\mathcal{A}}$.
\end{rmk}
\begin{rmk}\label{rmk3}
Another simple observation is that (again by some standard diagonal argument) $\mathcal{A}$ is closed under the obvious  ``weak limit''. In fact, we can consider the semi-norms $||\cdot||_n$ on $\mathcal{A}$ which is defined by $||O||_n=||P^nOP^n||$ where $P^n$ is the restriction up to energy $\lambda_n$, and $||P^nOP^n||$ is the usual norm of a \textit{linear} operator (linear, as it is nonzero only on a finite-dimensional space). Then, it is not hard to see that $\mathcal{A}$ is a Fr\'echet space with respect to these (separated) countably many semi-norms providing the \textit{scaling limit metric} $d_{\text{SL}}$. Indeed, to show completeness of $\mathcal{A}$, assuming operators $O^{(n)}\in \mathcal{A}$ forming a Cauchy sequence, one can easily construct their limit $O$. But to prove this limit is in $\mathcal{A}$, we need a sequence $O_{i}^{(n_i)} \in \mathcal{A}_i$ having scaling limit $O$. This sequence is constructed by a standard diagonal argument from the sequences $O^{(n)}_i \in \mathcal{A}_i$ giving $O^{(n)}$s. 
\end{rmk}
Adding to the above remark, in the case of a strong scaling limit, assume there also exist embeddings $\tau_n: \mathcal{A}_n \hookrightarrow \mathcal{A}_{n+1}$ compatible with the embeddings $\phi_n$, i.e. $\phi_n \circ O_n = \tau_n(O_{n})|_{\phi_n(\mathcal{W}_n)}, \ \forall O_n \in  \mathcal{A}_n$. Then scaling limit becomes convergence in the metric $d_{\text{SL}}$. In fact, the closure with respect to $d_{\text{SL}}$, of the colimit of the sequence of embeddings $\mathcal{A}_n \hookrightarrow \mathcal{A}_{n+1}$ is precisely $\mathcal{A}$, which we could call \textit{analytic} scaling limit. The colimit can also be called the \textit{algebraic} scaling limit and it contains a copy of each $O_n$ by the sequence $O_n,\tau_n(O_n),\tau_{n+1}(\tau_n(O_n)),\ldots$ which by algebraic construction (or scaling limit as $\tau_n$ is compatible with $\phi_n$) gives a copy of $O_n \in \mathcal{A}$ defined on $\mathcal{V}$. By going through the definitions, we have the same picture presented in \hyperref[rmk2]{\textit{Remark 2}}. The embeddings $\tau_n$ exist in the case of study in \cite[Theorem 4.4]{gainutdinov2016associative}. A similar theorem can be established for the Ising ACs as both algebras are simply the even algebra generated by Dirac operators. We conjecture that it holds for higher level anyonic chains. 

When one looks at the different set of observables in the different frameworks for unitary CFTs, there is always an underlying set of hermitian observables generating the whole set. Indeed, as proved in \cite{carpi2015vertex}, the \textit{hermitian fields} (more strongly, hermitian quasi-primary fields) generate the VOAs. As for LCNs, since the algebra corresponding to an interval $I$ is a Von Neumann algebra, it is trivially true that it can be generated by hermitian observables. But does $\mathcal{A}^H$ generate $\mathcal{A}$ in any way? We do not know the general answer.

We wish to identify some subsets of $\overline{\mathcal{A}}$ that may be algebras. Since some operators are not linear, it is not clear how one can have an algebraic structure. Note that by definition, there might not be a linear operator which gives the sesquilinear form $O$. But if such an operator exists, it will be called $O$ as well.

In some cases, these operators can be \textit{almost linear}. Consider for example the case of $Y(a,z)$ in VOAs. Their expectation values are defined, while none of them is actually defined on the VOA. They are \textit{almost} linear operators since there is a grading of the VOA. For an almost linear observable, one can formally set $Ov=\sum_i v_i$ where $v_i$ are the well-defined degree $\lambda_i$ component of $Ov$, and for any $u \in \mathcal{V}_i$, we have $O(v,u)=(v_i,u)$. This motivates us to call all operators inside $\mathcal{A}$ \textit{almost linear} operators. If the formal sum is always finite, $O$ is a linear operator.

The definition for the product of such operators is exactly in the same spirit of the correlation function $$(u,Y(a_1,z_1)\ldots Y(a_k,z_k)v).$$
\begin{dfn}\label{dfn7}
Given almost linear observables $O^{(1)},\ldots,O^{(k)} \in \mathcal{A}$, we define their product as a partially-defined sesquilinear form $F$ by using the formal sum interpretation. If the result is \textbf{absolutely} convergent for some $u,v \in \mathcal{V}$,
$$F(u,v):=(u,O^{(1)}\ldots O^{(k)}v),$$
then the above is considered to be well-defined.
\end{dfn}
We discuss one basic obstacle to get an algebraic structure by an example; observables $O_n$ that have a \textit{significant mix} of the low and high energy states. For example, the two sequences below where $v^{(i)}_n \in E_{\lambda_i^{(n)}}$ are pull-back of some $v^{(i)} \in E_{\lambda_i}$:

\begin{itemize}
    \item $O_{n,1}=v_n^{(1)}(v^{d(n)}_n)^{\dagger}+v^{d(n)}_n(v_n^{(1)})^{\dagger}$,
    \item $O_{n,2}=0$.
\end{itemize} 

Both sequences converge to zero while being quite different. The significant (non-decaying) mix of low-high energy states in the $O_{n,1}$s  manifests itself not in the expectation values of the observables at low energies, but the higher powers of the observables. 

Looking at the expectation values of powers, $O_{n,2}^k \xrightarrow{SL} O^k=0$ while  $O_{n,1}^2 \xrightarrow{SL} v^{(1)}(v^{(1)})^{\dagger} \neq 0$. Next example shows that just the decay of this low-high energy mix is not enough:
\begin{itemize}
    \item $O_{n,1}=v_n^{(1)}(v_n^{(1)})^{\dagger}+2^{d(n)}v^{d(n)}_n(v^{d(n)}_n)^{\dagger},$
    \item $O_{n,2}=v_n^{(1)}(v_n^{(1)})^{\dagger}+ \sum_{i=1}^{d(n)}\frac{1}{i^2}\Big(  v_n^{(i)}(v_n^{(1)})^{\dagger}+v_n^{(1)}(v_n^{(i)})^{\dagger}\Big)$.
\end{itemize}
It is not hard to check that 
$$O_{n,1}^k \xrightarrow{SL} O_1^k, \ \ \text{where} \ \ O_1=v^{(1)}(v^{(1)})^{\dagger},$$ 
and
$$O_{n,2}^k \xrightarrow{SL} O_2^k, \ \ \text{where} \ O_2=v^{(1)}(v^{(1)})^{\dagger}+ \sum_{i=1}^{\infty}\frac{1}{i^2}\Big(  v^{(i)}(v^{(1)})^{\dagger}+v^{(1)}(v^{(i)})^{\dagger}\Big).$$
Both sequences would be regarded as well-behaved but the first one has a significant high-high energy mix while the second one has a decaying low-high energy mix. One can check that $||O_{n,1}O_{n,2}v_n^{(1)}|| \not \rightarrow ||O_1O_2 v^{(1)}||$, i.e.
$$(v_n^{(1)},O_{n,2}O_{n,1}O_{n,1}O_{n,2}v_n^{(1)}) \not \rightarrow (v^{(1)},O_2O_1O_1O_2v^{(1)}).$$
So $O_{n,2}O_{n,1}O_{n,1}O_{n,2}$ does not have $O_2O_1O_1O_2$ as a scaling limit. The reason behind this is an \textit{imbalance} between the low-high energy mix decay rate and the rate of high-high energy mix. We note that it is possible to have a collection of observables with high-high energy mix, which is even increasing, and yet have an algebra, as will be shown in the case of Virasoro operators $\widetilde{L}_n \xrightarrow{SL} L_n$. For the discussion of algebra structures in scaling limit, a natural definition is 
\begin{dfn} \label{dfn8}
Given a set of almost linear observables $\{O^{(i)}\}_{i \in I}$, and the algebra of operators generated by this set. If this algebra is inside $\mathcal{A}$, we call the resulting algebra a scaling limit algebra (SL-algebra).
\end{dfn}
There could be many overlapping and yet different and maximal sets of observables forming an SL-algebra. Some of these are special in the sense that each observable has a \textit{nice} sequence associated to:
\begin{dfn}\label{dfn9}
Given an SL-algebra as in \hyperref[dfn8]{\textbf{Definition 8}}, assume each $O^{(i)}$ is associated a sequence $O^{(i)}_n$  such that for any $i_1,i_2,\ldots,i_k$,
$$\lim\limits_{n\to\infty}(u,O^{(i_1)}_n\cdots O^{(i_k)}_nv)=(u,O^{(i_1)}\cdots O^{(i_k)}v), \ \ \forall u,v \in \mathcal{V}.$$
Then the algebra generated by $\{O^{(i)}\}_{i \in I}$ is called a \textbf{strong} SL-algebra.
\end{dfn}

For example, in the case of the Ising model, $\{L(f)| f \in C^{\infty}(S^1)\}$ gives a strong SL-algebra. The above definition assumes a strong property which is sometimes not easy to show; in \hyperref[4]{section 4}, it is shown that $\{Y(a,f)| f \in C^{\infty}(S^1)\}$ gives only an SL-algebra.

\subsection{Locality in scaling limit}\label{2.3}

\hfill \break
First we review the terminology when it comes to the meaning of \textit{local} observables. 

In LCNs (or more generally for QFTs in Haag Kastler's axioms), a local \textbf{net} $A$ of Von Neumann algebras refers to the locality axiom: If $I_1$ and $I_2$ are spacelike separated, then elements in $A(I_1)$ and $A(I_2)$ commute. So local is used for the \textbf{net} when it satisfies the locality axiom. But also elements inside the \textbf{local observables} algebra $\mathcal{A}(I)$ are called local observables \cite{haag2012local,carpi2015vertex}.

For the VOA or more generally for Wightman's axioms in QFT, observables are \textit{(primary) fields} or \textit{distribution of operators} $\Phi$ and \textit{limits of observables localized at a point $x$,} $\Phi(x)$ \cite{fredenhagen1981local}. In addition, there are \textit{local} smeared fields $\Phi(f)$ with functions $f$ having support in some region $O$ \cite[II.4.1]{haag2012local} (if $f$ is a test function, then $\Phi(f)$ is ``almost local''). We also have a similar locality axiom: Let $\Phi_1$ and $\Phi_2$ be two observables and functions $f_1$ and $f_2$ be space-like separated in their supports, then $[\Phi_1(f_1), \Phi_2(f_2)]= 0.$ 

The conclusion is that there is a notion of locality in all frameworks as an axiom and the elements of the sets satisfying those axioms are called local observables.

Our definition of locality (in space and in energy) turns out to be more restrictive.

\subsubsection{Energy-local observables}

\hfill \break 
One goal in this work is to find out the constraints on observables in the scaling limit that will force them to be a specific type of observables (Wightman's, bounded as in LCN or point-like fields). Locality is one of these fundamental constraints.

We propose a definition of energy local operators without using any explicit knowledge of the EL-DOFs. Therefore, it might not be the most refined definition. Still, our notion of energy locality, which is intrinsic, together with space locality put enough constraint on operators so that they are easier to work with (see \hyperref[thm4.1]{\textbf{Theorem 4.1}}).

All smeared operators $Y(a,f)$ where $f$ has finite Fourier series do not shift the energy of any eigenvector by more than a constant. This is a motivation for the definition of \textit{energy locality} and to analyze energy local observables in general.

\begin{dfn}\label{dfn10}
The sequence $(O_n)_n$ is $\Lambda$-energy local for $\Lambda \in \mathbb{N}$, if for any $n$ and for all $u\in E_{\lambda_i},v \in E_{\lambda_j}$ with $|i-j|>\Lambda$, and any sequences $u_n \in \textbf{E}_{\lambda_i^{(n)}},v_n \in \textbf{E}_{\lambda_j^{(n)}}$ with $u_n \xrightarrow{SL} u, v_n \xrightarrow{SL} v$,
$$(u_n,O_nv_n)=0.$$
Any observable $O \in \mathcal{A}$ which is the scaling limit of such a sequence is also called a $\Lambda$-energy local observable.
\end{dfn}

It turns out that any almost linear observable which is $\Lambda$-energy local is a linear operator, as the formal sum $Ov$ is a finite sum with no more than $2\Lambda$ terms. The important observation is
\begin{thm}\label{thm2.1}
The set of all $\Lambda$-energy local observables for all $\Lambda$ forms a strong SL-algebra.
\end{thm}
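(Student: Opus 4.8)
The plan is to show that the union $\mathcal{E}=\bigcup_{\Lambda}\mathcal{E}_\Lambda$ of all $\Lambda$-energy local observables is closed under sums, scalar multiples and products inside $\mathcal{A}$, and that the natural choice of associated sequences (products of the sequences attached to the factors) satisfies the convergence requirement of \hyperref[dfn9]{\textbf{Definition 9}}. As observed just before the statement, a $\Lambda$-energy local observable is an honest linear operator whose matrix in the energy basis $\{\mathbf{v}_i\}$ of $\mathcal{V}$ is banded with half-width $\le\Lambda$ (it sends $\mathbf{v}_j$ into the span of $\{\mathbf{v}_k:|k-j|\le\Lambda\}$), and it is the scaling limit of a sequence $O_n\in\mathcal{A}_n$ whose matrix in the finite-stage energy basis $\{v_k^{(n)}\}$ is, for large $n$, banded with the same half-width. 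I will treat these banded matrices as the basic objects.

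Closure under addition and scalar multiplication is immediate: if $O_n\xrightarrow{SL}O$ and $O_n'\xrightarrow{SL}O'$ are $\Lambda_1$- and $\Lambda_2$-energy local, then $O_n+O_n'\in\mathcal{A}_n$ is $\max(\Lambda_1,\Lambda_2)$-energy local with scaling limit $O+O'$, by linearity of the defining sesquilinear limit. The real content is the product, for which I would prove the key lemma: if $(O_n)$ is $\Lambda_1$-energy local with $O_n\xrightarrow{SL}O$ and $(O_n')$ is $\Lambda_2$-energy local with $O_n'\xrightarrow{SL}O'$, then $(O_nO_n')$ is $(\Lambda_1+\Lambda_2)$-energy local and $O_nO_n'\xrightarrow{SL}OO'$, the composition of the linear operators. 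Since $\mathcal{A}_n$ is an algebra, $O_nO_n'\in\mathcal{A}_n$, so the lemma places $OO'$ in $\mathcal{A}$ and yields closure, hence the SL-algebra structure of \hyperref[dfn8]{\textbf{Definition 8}}.

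To prove the lemma I would take basis vectors $u=\mathbf{v}_a$, $v=\mathbf{v}_b$ with pullbacks $u_n,v_n$ and insert the finite-stage resolution of identity $\sum_k P_k^{(n)}$, the $P_k^{(n)}$ projecting onto the one-dimensional $\mathbf{E}_{\lambda_k^{(n)}}$:
$$(u_n,O_nO_n'v_n)=\sum_k (u_n,O_nv_k^{(n)})\,(v_k^{(n)},O_n'v_n).$$
The crucial point is that bandedness forces $(u_n,O_nv_k^{(n)})=0$ unless $|a-k|\le\Lambda_1$ and $(v_k^{(n)},O_n'v_n)=0$ unless $|k-b|\le\Lambda_2$, so at most $2\min(\Lambda_1,\Lambda_2)+1$ terms, a bound independent of $n$, survive. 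On each surviving term $v_k^{(n)}\xrightarrow{SL}\mathbf{v}_k$, so $(u_n,O_nv_k^{(n)})\to(u,O\mathbf{v}_k)$ and $(v_k^{(n)},O_n'v_n)\to(\mathbf{v}_k,O'v)$ by the definitions of the two scaling limits. Passing to the limit through this finite, uniformly bounded sum gives $\sum_k (u,O\mathbf{v}_k)(\mathbf{v}_k,O'v)=(u,OO'v)$, the last equality holding because $O'v$ is a finite combination of basis vectors. General $u,v$ follow by sesquilinearity, and the bandwidth count shows $(O_nO_n')$ has half-width $\Lambda_1+\Lambda_2$.

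This interchange of the limit with the intermediate sum is exactly the step that fails for the pathological sequences discussed before the theorem: there the resolution of identity ranges over an unboundedly growing set of energies (a genuine high--high energy mix), so the limit cannot be commuted with the sum. Energy locality removes this obstacle by making the intermediate range finite and $n$-independent; this is the main and essentially the only difficulty, the remainder being bookkeeping. Finally, iterating the same insertion for a $k$-fold product $O_n^{(i_1)}\cdots O_n^{(i_k)}$ produces a $(k-1)$-fold sum over intermediate indices $j_1,\ldots,j_{k-1}$ constrained by $|a-j_1|\le\Lambda_{i_1}$, $|j_1-j_2|\le\Lambda_{i_2},\ldots,|j_{k-1}-b|\le\Lambda_{i_k}$, again with a number of nonzero terms bounded uniformly in $n$; passing to the limit termwise and collapsing the telescoping sums by completeness of $\{\mathbf{v}_k\}$ gives $(u,O_n^{(i_1)}\cdots O_n^{(i_k)}v)\to(u,O^{(i_1)}\cdots O^{(i_k)}v)$ for all $u,v\in\mathcal{V}$, which is precisely the strong SL-algebra condition of \hyperref[dfn9]{\textbf{Definition 9}} for the chosen associated sequences. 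Hence $\mathcal{E}$ is a strong SL-algebra.
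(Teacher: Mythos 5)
Your proof is correct and follows essentially the same route as the paper: both arguments exploit the fact that energy locality confines the entire computation to a finite, $n$-independent window of energy levels, within which the weak (matrix-element) convergence supplied by the scaling limit suffices to pass limits through products. The only difference is packaging --- the paper restricts all operators to the finite-dimensional window $\mathcal{L}=\bigoplus_{i=t-\sum\Lambda_i}^{t+\sum\Lambda_i}E_{\lambda_i}$ via projections and invokes the equivalence of weak and norm convergence on a finite-dimensional space, whereas you carry out the same reduction concretely by inserting resolutions of identity and taking termwise limits of the finitely many surviving matrix elements.
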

\begin{proof}
Consider $\Lambda_i$-energy local observables $O^{(i)},1\le i\le k$ and corresponding sequences $(O_n^{(i)})_n$. Note $(u,O^{(k)}\ldots O^{(1)}v)$ is well-defined; indeed, if $v \in E_{\lambda_t}$ for some $t$, then every multiplication by some $O^{(i)}$ makes a vector in a space enlarged by adding $\Lambda_i$ to the energy level. This means taking projections onto $\mathcal{L}=\bigoplus_{i=t-\sum \Lambda_i}^{t+\sum \Lambda_i}E_{\lambda_i}$ called $P_{\mathcal{L}}$, all operators $O^{(i)}$ in the product can be replaced with the linear operator $P_{\mathcal{L}}O^{(i)}P_{\mathcal{L}}$ without changing the result.

Similarly for the corresponding expectation values $(u_n,O^{(k)}_n\ldots O^{(1)}_nv_n)$, everything is also happening in a finite dimensional Hilbert space. In fact, the limit can be taken with restriction to $\mathcal{W}_n^M \backslash \mathcal{W}_n^{M'}$, with $\lambda_{t+\sum \Lambda_j}<M<\lambda_{1+t+\sum \Lambda_j}$ and $\lambda_{t-1-\sum \Lambda_j}<M'<\lambda_{t-\sum \Lambda_j}$, which is a finite dimensional Hilbert space stabilizing for large enough $n$ and becoming isometric to $\mathcal{L}$. This means for large enough $n$, we might as well assume that all operators $O^{(i)}_n$ are acting on $\mathcal{L}$, by using the connecting maps followed by the projection $P_\mathcal{L}$ like the previous case. In this setting, we have a sequence of operators weakly convergent, but all acting on a finite dimensional Hilbert space. This implies norm convergence and the convergence of their product as a $(\sum \Lambda_j)$-energy local operator.
\end{proof}

This is our first example of an algebraic structure which is preserved under the scaling limit. One can ask whether it is truly necessary for a \textbf{constant} $\Lambda$ to be present in order to define energy locality. One might think of the possibility to enlarge the set of all $\Lambda$-energy local observables to include those operators that are scaling limits of $\Lambda(n)$-energy local observables where $\Lambda(n)$ is a function of $n$. 

The motivation for this modification again comes from the smeared operators $Y(a,f)$ where $f$ has \textit{infinite} Fourier series. Any product of these operators is defined on the VOA (\hyperref[thm4.8]{\textbf{Theorem 4.8}}), so it is possible that they form a strong SL-algebra. They are not energy local by themselves, but it is clear that the higher shift of energies happen with ever smaller magnitude which depends on the Fourier coefficients $|\hat{f}_n|$, a rapidly decaying sequence. 

There is also another motivation. In quantum computation, a space local operator is defined to be a sum of operators, each acting on no more than $O(\log(n))$ particles for a system with $n$ particles. But this is a discrete way of characterizing locality and equivalently, one could define space locality as an action that has exponential decay when one gets away from a specific particle. A similar picture exists for $Y(a,f)$s. The extent to which an operator can be called \textit{energy local} could therefore be more than just shifting the energy by a constant. But we need to keep in mind that no matter how one extends this definition, the algebraic structure has to be preserved under the scaling limit. This issue will be explored further for the Ising AC.

\subsubsection{Space-local observables}

\hfill \break 
Another property of the smeared operators is that they are considered to be space-local. In order to have a notion of space, some notion of adjacency for particles in $\mathcal{W}_n$ is needed. In the case of anyonic chains, the notion of space locality is clear.
\begin{dfn}\label{dfn11}
The $r$-\textit{space local} operators in ACs are a sequence of operators $O_n \in \mathcal{A}_n$ that are the sum of $r$-\textit{ultra space local} operators. An $r$-\textit{ultra space local} operator acts on $r$ many of adjacent particles. 
\end{dfn}
A typical example is the $3$-ultra space local operator $e_i$. Notice the difference between space-locality in our sense and locality in quantum computation. In quantum computation, a sequence of observables like $O_n=e_1e_{\lfloor \frac{n}{2} \rfloor} \in \mathcal{A}_n$ is considered to be local, while it is clearly not space-local. Therefore, space locality is a stronger locality than the one in quantum computation.

Still, the picture we hope to obtain for $Y(a,f)$ in finite settings is that of a quantum system with a large number of equidistant particles, and some \textit{ultra space local} operator $\widetilde{a}$, which is supposed to be the \textit{finite} version of $a$, applied with weight $f$ on each particle and constantly many of its close neighbors. Informally,
$$\sum_{j=1}^n f(e^{i\frac{2\pi j}{n}})\widetilde{a}_j,$$
will have the scaling limit $Y(a,f)$. This will be explored in \hyperref[4]{section 4}.

\section{Scaling limit of Ising anyonic chains}\label{3}

The main theorem of the section will be written in its entirety as a reference for the next sections. The proof will be given in the \hyperref[A.1]{appendix}. We shall use the notations in \hyperref[1.4]{section 1.4}, especially $(a,b)$ which will be used to denote the Hilbert space given by the anyonic chain with the two ends of the chain being $a$ and $b$.

\begin{thm}\label{thm3.1}
1- The following \textbf{strong} scaling limits hold, \textbf{up to some scalings} of the Hamiltonians (explained below)
\begin{enumerate}
    \item $\mathcal{W}_{n}=(\frac{1}{2},\frac{1}{2})$, $H_{n}=-\sum_{j=1}^{2n-1}e_j$. Then $(\mathcal{W}_{n},H_{n}) \xrightarrow{SL} (\rchi_0+\rchi_{\frac{1}{2}},L_0)$. 
    \item $\mathcal{W}_{n}=(0,0)$ or $(1,1)$, $H_{n}=-\sum_{j=2}^{2n-2}e_j$. Then $(\mathcal{W}_{n},H_{n})\xrightarrow{SL}(\rchi_{0},L_0)$. 
    \item $\mathcal{W}_{n}=(0,1)$ or $(1,0)$, $H_{n}=-\sum_{j=2}^{2n-2}e_j$. Then $(\mathcal{W}_{n},H_{n})\xrightarrow{SL}(\rchi_{\frac{1}{2}},L_0)$. 
    \item $\mathcal{W}_{n}=(\frac{1}{2},1)$ or $(\frac{1}{2},0)$, $H_{n}=-\sum_{j=1}^{2n-2}e_j$. Then $(\mathcal{W}_{n},H_{n})\xrightarrow{SL}(\rchi_{\frac{1}{16}},L_0)$. 
    \item $\mathcal{W}_{n}$ be the periodic chain of size $2n$, and $H_{n}=-\sum_{j=1}^{2n}e_j$. Then $$(\mathcal{W}_{n},H_{n})\xrightarrow{SL}(\rchi_0\overline{\rchi}_0+\rchi_{\frac{1}{2}}\overline{\rchi}_{\frac{1}{2}}+\rchi_{\frac{1}{16}}\overline{\rchi}_{\frac{1}{16}},L_0+\overline{L}_0)$$ if $n$ is even. 
\end{enumerate}
Furthermore, the rate of convergence of each scaling limit is $O(\frac{1}{n})$ while we have restriction of energies up to $O(\sqrt[3]{n})$. 

2- For the corresponding higher Virasoro generators action, with the same rate of convergence as above, given a fixed $m \neq 0$, we have (up to some scalings)
\begin{enumerate}
    \item $-\sum_{j=1}^{2n-1}\cos(\frac{m(j+\frac{1}{2})\pi}{2n+1})e_j\xrightarrow{SL}L_m+L_{-m}$, \\
    $i\sum_{j=1}^{2n-2}\sin(\frac{m(j+1)\pi}{2n+1})[e_j,e_{j+1}]\xrightarrow{SL}i(L_m-L_{-m})$
    
    \item $-\sum_{j=2}^{2n-2}\cos(\frac{m(j+\frac{1}{2})\pi}{2n-1})e_j\xrightarrow{SL}L_m+L_{-m}$, \\
    $i\sum_{j=2}^{2n-3}\sin(\frac{m(j+1)\pi}{2n-1})[e_j,e_{j+1}]\xrightarrow{SL}i(L_m-L_{-m})$
    
    \item $-\sum_{j=2}^{2n-2}\cos(\frac{m(j+\frac{1}{2})\pi}{2n-1})e_j\xrightarrow{SL}L_m+L_{-m}$, \\
    $i\sum_{j=2}^{2n-3}\sin(\frac{m(j+1)\pi}{2n-1})[e_j,e_{j+1}]\xrightarrow{SL}i(L_m-L_{-m})$
    
    \item $-\sum_{j=1}^{2n-2}\cos(\frac{m(j+\frac{1}{2})\pi}{2n})e_j\xrightarrow{SL}L_m+L_{-m}$,\\
    $i\sum_{j=1}^{2n-3}\sin(\frac{m(j+1)\pi}{2n})[e_j,e_{j+1}]\xrightarrow{SL}i(L_m-L_{-m})$
    
    \item    $-\sum_{j=1}^{2n}\cos(\frac{2m(j+\frac{1}{2})\pi}{2n})e_j\xrightarrow{SL}\mathbb{L}_m+\mathbb{L}_{-m}$ \\
    $i\sum_{j=1}^{2n}\sin(\frac{2m(j+1)\pi}{2n})[e_j,e_{j+1}]\xrightarrow{SL}i(\mathbb{L}_m-\mathbb{L}_{-m})$   

\end{enumerate}
If $m \le \sqrt[4]{n}$, we have a rate of convergence of $O(\frac{1}{n})$ for energies up to $\sqrt[4]{n}$.
\end{thm}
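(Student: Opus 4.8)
The plan is to reduce each of the five anyonic chains to a free-fermion system by a Jordan--Wigner-type transformation, following the method of \cite{fendley2014free}, and then to read off both the spectral scaling limit of Part 1 and the Koo--Saleur-type formulas of Part 2 from the resulting Majorana mode structure. First I would recall that at the Ising point $d=\sqrt{2}$ the Temperley--Lieb generators admit a representation as bilinears in Majorana fermions $\gamma_a$, so that $H_n=-\sum_j e_j$ becomes, up to an additive constant and the overall scaling alluded to in the statement, a quadratic Majorana Hamiltonian $H_n=\tfrac{i}{2}\sum_{a,b}A_{ab}\gamma_a\gamma_b$ with a tridiagonal, essentially translation-invariant matrix $A$. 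The boundary condition $(a,b)$ fixes the terminal couplings and hence whether the fermions are effectively Neveu--Schwarz (antiperiodic, half-integer moded) or Ramond (periodic, integer moded); this is exactly what distinguishes the five cases and selects the module $\rchi_0+\rchi_{\frac12}$, $\rchi_0$, $\rchi_{\frac12}$, $\rchi_{\frac{1}{16}}$, or the periodic full-CFT sum. I would then diagonalize $A$ explicitly: the single-particle energies take the form $\epsilon_k\propto\sin\theta_k$ for quantization angles $\theta_k$ determined by the boundary condition, and the associated Bogoliubov modes $b_k$ are the energy-local degrees of freedom.

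Second, I would extract the CFT data from this diagonalization. The many-body spectrum is generated by filling the $b_k$; the ground-state energy expands as $E_0=e_0(2n)-\tfrac{\pi v}{24n}c+O(n^{-2})$, reproducing $c=\tfrac12$, and the low-lying excitation energies $\sum_k\epsilon_k n_k$ linearize near $\theta_k\to 0$ to give energies proportional to the half-integers (resp.\ integers) that label the fermionic modes $k_i\in\mathbb{N}-\tfrac12$ (resp.\ $\mathbb{N}$) generating $\rchi_0\oplus\rchi_{\frac12}$ (resp.\ $\rchi_{\frac{1}{16}}$) as recalled in \hyperref[1.2]{Section 1.2}. This verifies property \hyperref[P]{(\textbf{P})}: the rescaled eigenvalues converge, the connecting maps $\phi_n$ are the inclusions of Majorana Fock spaces (trivial in the fermion basis, highly nontrivial in the $e_j$ basis), and the push-forward of $\tfrac{2n}{2\pi v}H_n$ converges to $L_0$ (resp.\ $L_0+\overline{L}_0$ in the periodic case). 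The precise identification $b_k\xrightarrow{SL}\Psi_{-k_i}$ then upgrades this to the \emph{strong} scaling limit of the Hilbert space together with its Hamiltonian.

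Third, for Part 2 I would compute the discrete transforms $-\sum_j\cos(\cdots)e_j$ and $i\sum_j\sin(\cdots)[e_j,e_{j+1}]$ directly in the Majorana bilinear form. Writing $e_j$ and $[e_j,e_{j+1}]$ as quadratics in the $\gamma_a$ and substituting the mode expansion $\gamma_a=\sum_k(\cdots)b_k$, the weighted sums become, after restriction to low energy, discrete Riemann sums for the standard normal-ordered free-field expression $L_m=\tfrac12\sum_k(\cdots){:}\Psi_{-k}\Psi_{k+m}{:}$ derived in the \hyperref[A.1]{appendix} (and its antichiral partner $\overline{L}_m$ in the periodic case). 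The choice of a $\cos$ versus a $\sin$ kernel, and the half-integer shifts $j+\tfrac12$ and $j+1$, are dictated by the requirement that the transform land on $L_m+L_{-m}$, which is even under $m\mapsto-m$, versus $i(L_m-L_{-m})$, which is odd; this parity is precisely why the commutator $[e_j,e_{j+1}]$ --- the lattice ``derivative'' of $e_j$ anticipated in \hyperref[1.5]{Section 1.5} --- is required for the $\sin$ sector.

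The main obstacle is the uniform error analysis underlying the stated rates, which must be carried out in parallel for all five boundary conditions. Two discretization errors have to be controlled simultaneously: the replacement of the exact dispersion $\epsilon_k\propto\sin\theta_k$ by its linearization, whose cubic correction to the rescaled energy at level $k$ scales as $O(k^3/n^2)$ and therefore forces both the rate $O(1/n)$ and the cutoff $O(\sqrt[3]{n})$ of Part 1; and the Riemann-sum error in the Fourier transforms of Part 2, where the Bogoliubov overlap coefficients deviate from pure plane waves near the band center, producing the remainder $R_n^m$ whose operator norm must be bounded as in \hyperref[cnj4.3rep]{\textbf{Conjecture 4.3}}. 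Keeping both errors at $O(1/n)$ while allowing $m$ to grow up to $\sqrt[4]{n}$ --- the more restrictive cutoff reflecting the extra $m$-dependence entering through the mode indices in the transform --- requires careful tracking of how the implied constants depend on $m$ and on the energy level. This bookkeeping, rather than any single conceptual step, is the technically demanding core of the proof.
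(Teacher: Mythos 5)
Your proposal follows essentially the same route as the paper's appendix: the Fendley-style reduction of each chain to a quadratic Majorana Hamiltonian, explicit diagonalization of the tridiagonal single-particle matrix with boundary-condition-dependent mode quantization (selecting half-integer versus integer modes and hence the modules), Fock-space connecting maps giving the strong scaling limit, Taylor-expansion error analysis yielding the $O(1/n)$ rate with the $\sqrt[3]{n}$ (resp.\ $\sqrt[4]{n}$) energy cutoffs, and identification of the $\cos$/$\sin$ transforms with Dirac-bilinear forms of $L_m+L_{-m}$ and $i(L_m-L_{-m})$. The only organizational difference is that the paper pins down the Part-2 operators through commutator identities $[O,\Psi_k]$ with the Dirac modes --- which force the kernel $\cos(\frac{m(j+\frac{1}{2})\pi}{2n+1})$ and yield an exact finite-$n$ operator identity $O=O'$ --- rather than by direct substitution of the mode expansion into the weighted sums, but the computational content is equivalent.
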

\textit{Notation and $\widetilde{L}_m^{c,s}$ identities.}\label{identities} For the Hamiltonians, assuming an $n$ which will always be obvious from the context, we choose the notation $\widetilde{L}_0^c$ as a \textit{scaling} of it, which has scaling limit $L_0$. The notations and scalings for the case 1(a), i.e. $\rchi_0+\rchi_\frac{1}{2}$, are
$$\widetilde{L}_0^c=\alpha_n^cH_n + \beta_n^{0,c}\textbf{1} \xrightarrow{SL} L_0,$$
where $\alpha_n^c=\frac{(2n+1)\sqrt{2}}{8\pi}$ and $\beta_n^{0,c} \in \mathbb{R}$. For the higher Virasoro generators, the first observable is $O_n^c$ (superscript $c$ because of $\cos$), and the second $O_n^s$ with the following similar notation and identities for 2(a)
$$\frac{\widetilde{L}_m^c+\widetilde{L}_{-m}^c}{2}=\alpha_n^cO_n^c+\beta_n^{m,c}\textbf{1} \xrightarrow{SL} \frac{L_m+L_{-m}}{2},$$
$$ \frac{i(\widetilde{L}_m^s-\widetilde{L}_{-m}^s)}{2}=\alpha_n^sO_n^s+\beta_n^{m,s}\textbf{1} \xrightarrow{SL} \frac{i(L_m-L_{-m})}{2},$$  
where $\alpha_n^s=\frac{(n+\frac{1}{2})(\sqrt{2})^2}{8\pi}$ and the scalars $\beta_n^{m,c},\beta_n^{m,s} \in \mathbb{R}$. Similarly for the full CFT, $\widetilde{\mathbb{L}}_m^c+\widetilde{\mathbb{L}}_{-m}^c$ and $i(\widetilde{\mathbb{L}}_m^s-\widetilde{\mathbb{L}}_{-m}^s)$ can be defined. It will turn out that such a splitting is possible so that $\widetilde{L}_{\pm m}^c$ and $\widetilde{L}_{\pm m}^s$, has scaling limit $L_{\pm m}$. The proof of the above theorem is provided in the \hyperref[A.1]{appendix} and one can easily recover the scaling factors by following the proof. We will only need the rate of growth of these scaling factors which will be at best $O(n^2)$ and $\alpha_n^s,\alpha_n^c$ do not depend on $m$ while $\beta_n^{m,c}$ and $\beta_n^{m,c}$ do.

\section{Scaling limit algebras in \texorpdfstring{$\overline{\mathcal{A}}$}{TEXT}}\label{4}

We would like to obtain the observables of each of these three types and prove they form an SL-algebra:
\begin{enumerate}
    \item Wightman's observables or smeared fields $Y(a,f)$,
    \item LCN observables $O \in \mathcal{A}(I)$.
    \item VOA observables or fields $Y(a,z)$,
\end{enumerate}
It is not hard to show that they are all in $\mathcal{A}$ as a vector space, i.e. all in a single framework. This fact tells us two things known before. First, that they are all physical as they describe some computable convergent sequence. And second, although they are all related and each one is believed to store all the information of the CFT by itself, by definition of scaling limit, they have to be in our set of observables simultaneously.

We will first obtain (a), as a result, recover the observables of (b), and lastly, some comments will be made on (c). The nonperiodic chains or in other words the chiral cases will be handled first. Due to its simplicity, only the case $\mathcal{V}=\rchi_0+\rchi_\frac{1}{2}$ will be analyzed, but all theorems can be similarly stated for the other chiral cases. At the end, there will be some comments on similar results for the full CFT.

\subsection{Wightman's observables}\label{4.1}

\subsubsection{Smeared vertex operator $Y(a,f)$}

\hfill \break
We will try to identify when hermitian observables of the form 
$$\eta\textbf{1}+\sum_j t_je_j \ \ \& \ \ \eta\textbf{1}+i\sum_j t_j [e_j,e_{j+1}]$$
that are already space-local, are also energy-local. A trigonometric interpolation of the $t_j$s with $\cos(\frac{m(j+\frac{1}{2})\pi}{2n+1})$ or $\sin(\frac{m(j+1)\pi}{2n+1})$ is performed. Afterwards, previous results can be used to write down the observable in terms of $(\widetilde{L}_m^c+\widetilde{L}_{-m}^c)$ or $i(\widetilde{L}_{m}^s-\widetilde{L}_{-m}^s)$, where $\widetilde{L}_{\pm m}^{c},\widetilde{L}_{\pm m}^{s}$ are the operators with scaling limit $L_{\pm m}$.

For the observable $O_n=\eta_n\textbf{1}+\sum_{j=1}^{2n-1} t_je_j$, a trigonometric interpolation using $\cos(\frac{m(j+\frac{1}{2})\pi}{2n+1})$ for $0 \le m \le 2n-2$ gives
$$t_j= \alpha_n^c\sum_{m=0}^{2n-2} a_m\cos(\frac{m(j+\frac{1}{2})\pi}{2n+1})$$         
$$\implies O_n=\gamma_n\textbf{1}+a_0\widetilde{L}_0^c+\sum_{m=1}^{2n-2} a_m\frac{\widetilde{L}_m^c+\widetilde{L}_{-m}^c}{2},$$
where $\gamma_n$ is some multiple of identity. Next, suppose $O_n$ does not shift the energy more than some given $\Lambda$.

An analysis of $\widetilde{L}_m^c$ formula given in \hyperref[eq32]{(32)} and \hyperref[eq33]{(33)}, shows two distinct parts
\begin{gather*}
\scalebox{0.99}{$\Big(\sum\limits_{k+m \le 2n} \cos\Big(\frac{(k+\frac{m}{2})\pi}{2n+1}\Big) \Psi_{k+m}\Psi_{k}^\dagger-\sum\limits_{k+m>2n} \cos\Big(\frac{(k+\frac{m}{2})\pi}{2n+1}\Big)\Psi_{2(2n+1)-k-m}\Psi_{k}^\dagger \Big).$}
\end{gather*}
The first part provides an energy shift of exactly $-m$. The second part provides an energy shift of $2(k-(2n+1))+m$ when $k+m > 2n+1$ (if $k+m=2n+1$, since $\Psi_{2n+1}=0$, that term is irrelevant). This energy shift is between $(-m,m)$ and it has the same parity as $m$. The same holds for $\widetilde{L}_{-m}$.

After the appropriate relabelling $\Psi_k \rightarrow \Psi_{\frac{n}{2}+1-k}$ (explained after \hyperref[eq17]{(17)}), the term $\Psi_{-(n+\frac{1}{2})}\Psi_{-(n+\frac{3}{2})}$ provides an energy shift of $-(2n-2)$ and it is only in $\widetilde{L}_{2n-2}^c$ due to the observation in the previous paragraph. Since $O_n$ is energy local $(1,O_n\Psi_{n-\frac{1}{2}}\Psi_{n-\frac{3}{2}}1) = 0$ implying $a_{2n-2}=0$.

It is easy to see how inductively each $a_m$ is zero; for $a_{2n-3}$, taking the term $\Psi_{-(n-\frac{1}{2})}\Psi_{-(n-\frac{5}{2})}$ leading us to the similar conclusion $a_{2n-3}=0$ and so on.

The case $\eta\textbf{1}+i\sum t_j[e_j,e_{j+1}]$ can also be done by using the trigonometric interpolation
$$t_j=\alpha_n^s \sum_{m=0}^{2n-3} b_m\sin(\frac{m(j+1)\pi}{2n+1}).$$
By mixing both cases $(\widetilde{L}_m^c+\widetilde{L}_{-m}^c)$ and $i(\widetilde{L}_{m}^s-\widetilde{L}_{-m}^s)$, we get
\begin{thm}\label{thm4.1}
$O_n$ is a $\Lambda$-energy local observable made from a linear combination of $e_j$ and $[e_j,e_{j+1}]$s and the identity if and only if it is of the form
$$O_n=\gamma_n \textbf{1}+ a_0\widetilde{L}_0+\sum_{m=1}^{\Lambda} \Big(a_m\widetilde{L}_m^c+ib_m\widetilde{L}_m^s\Big)+\sum_{m=1}^{\Lambda}\Big(a_m\widetilde{L}_{-m}^c-ib_m\widetilde{L}_{-m}^s\Big),$$
where $a_m,b_m \in \mathbb{R}$.
\end{thm}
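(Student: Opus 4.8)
The plan is to read the claim as a single change of basis followed by the removal of high modes using energy locality; the engine is the trigonometric interpolation together with the $\widetilde{L}^{c,s}$-identities recorded after \hyperref[thm3.1]{\textbf{Theorem 3.1}}. For the (easier) sufficiency direction, suppose $O_n$ has the stated form. Pairing the $\pm m$ terms rewrites it as
$$O_n=\gamma_n\textbf{1}+a_0\widetilde{L}_0+\sum_{m=1}^{\Lambda}a_m(\widetilde{L}_m^c+\widetilde{L}_{-m}^c)+\sum_{m=1}^{\Lambda}ib_m(\widetilde{L}_m^s-\widetilde{L}_{-m}^s).$$
By the identities $\tfrac{1}{2}(\widetilde{L}_m^c+\widetilde{L}_{-m}^c)=\alpha_n^cO_n^c+\beta_n^{m,c}\textbf{1}$ and $\tfrac{i}{2}(\widetilde{L}_m^s-\widetilde{L}_{-m}^s)=\alpha_n^sO_n^s+\beta_n^{m,s}\textbf{1}$, each such pair is, up to a multiple of the identity, a cosine-weighted sum of the $e_j$ (resp.\ a sine-weighted sum of the $[e_j,e_{j+1}]$), so $O_n$ is a linear combination of $e_j$, $[e_j,e_{j+1}]$ and $\textbf{1}$. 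Energy locality with bound $\Lambda$ is then immediate from the structural fact recalled just before the theorem: each $\widetilde{L}_{\pm m}^{c}$ and $\widetilde{L}_{\pm m}^{s}$ with $|m|\le\Lambda$ shifts the energy by at most $|m|\le\Lambda$.

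For the necessity direction, I would start from a general $O_n=\eta_n\textbf{1}+\sum_{j=1}^{2n-1}t_je_j+i\sum_j s_j[e_j,e_{j+1}]$ and first expand the coefficient sequences in the discrete cosine and sine systems, $t_j=\alpha_n^c\sum_{m=0}^{2n-2}a_m\cos(\tfrac{m(j+1/2)\pi}{2n+1})$ and $s_j=\alpha_n^s\sum_{m=1}^{2n-3}b_m\sin(\tfrac{m(j+1)\pi}{2n+1})$. These are genuine invertible expansions because the relevant cosine (resp.\ sine) vectors are orthogonal at the nodes in question and their number matches the number of $t_j$ (resp.\ $s_j$). Substituting and applying the identities above turns $O_n$ into
$$O_n=\gamma_n\textbf{1}+a_0\widetilde{L}_0+\sum_{m=1}^{2n-2}a_m\frac{\widetilde{L}_m^c+\widetilde{L}_{-m}^c}{2}+\sum_{m=1}^{2n-3}ib_m\frac{\widetilde{L}_m^s-\widetilde{L}_{-m}^s}{2},$$
so that everything reduces to proving $a_m=b_m=0$ for $m>\Lambda$.

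The vanishing is driven by a downward induction using the mode expansion \hyperref[eq32]{(32)}--\hyperref[eq33]{(33)}. Its content is that each $\widetilde{L}_{\pm m}^c$ splits into a straight fermion-bilinear part producing an energy jump of exactly $\mp m$ and a folded part whose jump lies in $(-m,m)$ with the parity of $m$; crucially, after the relabeling $\Psi_k\to\Psi_{n/2+1-k}$ the single bilinear realizing the maximal jump, namely $\Psi_{-(n+1/2)}\Psi_{-(n+3/2)}$ with jump $-(2n-2)$, appears only in the top mode $\widetilde{L}_{2n-2}^c$. Hence the matrix element $(1,O_n\Psi_{n-1/2}\Psi_{n-3/2}1)$ isolates $a_{2n-2}$, and $\Lambda$-energy locality forces $a_{2n-2}=0$. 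Since a jump of a given magnitude can only come from modes of index at least that large, the term $\Psi_{-(n-1/2)}\Psi_{-(n-5/2)}$ detecting the jump $2n-3$ then isolates $a_{2n-3}$, and iterating downward kills every $a_m$ with $m>\Lambda$. Running the identical argument in the sine system kills every $b_m$ with $m>\Lambda$, and re-bundling the surviving $|m|\le\Lambda$ terms into $a_m\widetilde{L}_m^c+ib_m\widetilde{L}_m^s$ and $a_m\widetilde{L}_{-m}^c-ib_m\widetilde{L}_{-m}^s$ reproduces the asserted form; the coefficients are real because $O_n$ is hermitian.

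The main obstacle is the uniqueness-of-the-top-mode claim underlying the induction. One must verify, from the explicit expansion \hyperref[eq32]{(32)}--\hyperref[eq33]{(33)}, that the extreme energy jump of each magnitude is carried by exactly one Virasoro mode; it is precisely the parity bookkeeping of the folded parts that guarantees no lower mode contaminates the detecting matrix element, and one must also confirm that the chosen test vectors have the stated energy gaps after relabeling. A secondary technical point is checking that the cosine and sine interpolations are honest basis changes over the exact index ranges, so that no coefficient is lost or double counted; this rests on the discrete orthogonality relations at the specified nodes.
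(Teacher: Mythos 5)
Your proposal follows essentially the same route as the paper's own proof: the same cosine/sine trigonometric interpolation of the coefficients, the same rewriting via the $\widetilde{L}_m^{c,s}$ identities, and the same downward induction based on the fermion-bilinear structure of \hyperref[eq32]{(32)}--\hyperref[eq33]{(33)} (the maximal energy jump carried by exactly one top mode, with the folded parts confined to $(-m,m)$ and sharing the parity of $m$) to force $a_m=b_m=0$ for $m>\Lambda$. The only difference is cosmetic: you run the argument on a single combined expression in $e_j$ and $[e_j,e_{j+1}]$, whereas the paper performs the two interpolations separately and then ``mixes both cases,'' so the content is identical.
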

\begin{rmk}\label{rmk4}
An operator $\widetilde{L}_m$ is desired which has scaling limit $L_m$ so that expressions like $\sum \hat{f}_m\widetilde{L}_m  \xrightarrow{SL} \sum \hat{f}_mL_m$ can be used where $\hat{f}_m=a_m+ib_m \in \mathbb{C}$. Dealing with $a_m\widetilde{L}_m^c+ib_m\widetilde{L}_{m}^s$ every time can become inefficient and the choice below resolves this issue
$$\widetilde{L}_m:=\Big(\frac{\widetilde{L}_m^c+\widetilde{L}_m^s}{2}+\frac{\widetilde{L}_{-m}^c-\widetilde{L}_{-m}^s}{2}\Big) \ \ \forall m \neq 0, \ \ \widetilde{L}_0=\widetilde{L}_0^c.$$
The above is a definition for an operator for which $\widetilde{L}_m \xrightarrow{SL} L_m$ and it satisfies the properties for convergence as it inherits those from the two operators. Indeed, $\frac{\widetilde{L}_{-m}^c-\widetilde{L}_{-m}^s}{2}$ when restricted to $\sqrt[4]{n}$ energy, will be an operator with a norm at most $O(\frac{1}{n})$ and so will become part of the error of the approximation. The rest of the operator acting on energy higher than $\sqrt[4]{n}$ will join that of $\frac{\widetilde{L}_m^c+\widetilde{L}_m^s}{2}$.  
\end{rmk}
\textit{Notation}. $O|_E$ denotes the restriction to energy at most $E$, i.e. $OP^E$, and $O|_{>E}:=O(\textbf{1}-P^E)$.

\textit{Notation}. From now on, $n$ will not be used for the virasoro mode operators, but for the sequence index which will be related to the size of the chain $2n+1$. For example
$$\widetilde{L}_m=L_m|_{\sqrt[4]{n}}+O(\frac{1}{m})+R_n^m,$$
where $R_m^n=\widetilde{L}_m|_{>\sqrt[4]{n}}$.

We can now state our first result for the scaling limit of observables.
\begin{thm}\label{thm4.2}
The energy local scaling limit of the sequence of hermitian observables $\mathcal{A}_n^H$ spanned by $e_j,i[e_j,e_{j+1}]$ and the identity as a real vector space is 
$$\{L(f)+\gamma\textbf{1} \ | \ f \ \text{has finite Fourier series},\ \gamma\in \mathbb{R} \}.$$
\end{thm}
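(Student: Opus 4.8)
The plan is to obtain this theorem as essentially a repackaging of Theorem 4.1 together with the individual scaling limits $\widetilde{L}_{\pm m}^{c}\xrightarrow{SL}L_{\pm m}$ and $\widetilde{L}_{\pm m}^{s}\xrightarrow{SL}L_{\pm m}$ recorded in the notation following Theorem 3.1. I would prove the two inclusions separately. For the inclusion ``$\subseteq$'' I start from a $\Lambda$-energy local sequence $(O_n)$ with each $O_n\in\mathcal{A}_n^H$ and $O_n\xrightarrow{SL}O$. By Theorem 4.1 every such $O_n$ is forced into the normal form
$$O_n=\gamma_n\textbf{1}+a_0^{(n)}\widetilde{L}_0+\sum_{m=1}^{\Lambda}\bigl(a_m^{(n)}\widetilde{L}_m^c+ib_m^{(n)}\widetilde{L}_m^s\bigr)+\sum_{m=1}^{\Lambda}\bigl(a_m^{(n)}\widetilde{L}_{-m}^c-ib_m^{(n)}\widetilde{L}_{-m}^s\bigr),$$
with real coefficients a priori depending on $n$. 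Since the scaling limit of \hyperref[dfn5]{\textbf{Definition 5}} is taken matrix-element-wise and is linear in the observable, and since each of $\widetilde{L}_{\pm m}^{c},\widetilde{L}_{\pm m}^{s}$ individually has scaling limit $L_{\pm m}$, the quantities $(u_n,O_nv_n)$ are finite real-linear combinations of the convergent numbers $(u_n,\widetilde{L}_{\pm m}^{c,s}v_n)\to(u,L_{\pm m}v)$ and $(u_n,v_n)\to(u,v)$.

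The one point requiring care is that the coefficients $a_m^{(n)},b_m^{(n)},\gamma_n$ genuinely converge, so that the limit is again a finite combination of the $L_m$. For this I would fix a finite energy window carrying levels $0$ through $\Lambda+1$ and use that the operators $\{L_m\}_{|m|\le\Lambda}$, $L_0$, and $\textbf{1}$, restricted to this window, are linearly independent: each $L_m$ shifts the $L_0$-grading by exactly $-m$, so distinct modes act between distinct pairs of graded pieces, while $L_0$ and $\textbf{1}$ are distinguished on any window carrying two eigenvalues. Testing against a basis of this window then recovers each coefficient as a convergent scalar sequence with limits $a_m,b_m,\gamma$. Setting $\hat{f}_0=a_0\in\mathbb{R}$, $\hat{f}_m=a_m+ib_m$ and $\hat{f}_{-m}=a_m-ib_m=\overline{\hat{f}_m}$ for $1\le m\le\Lambda$, the limit collapses to
$$O=\gamma\textbf{1}+\sum_{|m|\le\Lambda}\hat{f}_mL_m=\gamma\textbf{1}+L(f),$$
and the reality relation $\hat{f}_{-m}=\overline{\hat{f}_m}$ is exactly the statement that the finite-Fourier-series symbol $f$ is real-valued.

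For the reverse inclusion I would run this computation backwards. Given $\gamma\in\mathbb{R}$ and real $f$ with finite Fourier support in $|m|\le\Lambda$, set $a_m=\operatorname{Re}\hat{f}_m$, $b_m=\operatorname{Im}\hat{f}_m$ for $m\ge 1$ and $a_0=\hat{f}_0$, and define $O_n$ by the normal form above with these ($n$-independent) coefficients. Theorem 4.1 guarantees that $O_n$ lies in $\mathcal{A}_n^H$ and is $\Lambda$-energy local, and the matrix-element computation of the first part shows $O_n\xrightarrow{SL}L(f)+\gamma\textbf{1}$; that the limit is a bona fide operator in $\mathcal{A}$ follows from \hyperref[thm2.1]{\textbf{Theorem 2.1}}, since everything sits inside the strong SL-algebra of energy-local observables. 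I expect the only genuine obstacle to be the coefficient-convergence/linear-independence step in the forward direction; the rest is bookkeeping built on Theorems 3.1 and 4.1 and additivity of the scaling limit, and one should double-check that the constraint $\hat{f}_{-m}=\overline{\hat{f}_m}$ records hermiticity of the generators of $\mathcal{A}_n^H$ rather than imposing any extra condition.
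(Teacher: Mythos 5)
Your overall architecture coincides with the paper's: invoke \textbf{Theorem 4.1} to put $O_n$ into the normal form $\gamma_n\textbf{1}+\sum_{|m|\le\Lambda}\hat{f}^n_m\widetilde{L}_m$ with $\hat{f}^n_{-m}=\overline{\hat{f}^n_m}$, prove the coefficients converge, and pass to the limit term by term; the reverse inclusion is, as you say, bookkeeping from \textbf{Theorem 3.1} and the $\widetilde{L}_m$ construction, and your reading of $\hat{f}_{-m}=\overline{\hat{f}_m}$ as reality of $f$ is correct. You also correctly flag the crux --- coefficient convergence. But your proposed resolution of that crux has a genuine gap.

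The linear-independence-plus-testing argument fails as stated because the approximants $\widetilde{L}_j$, unlike the exact modes $L_j$, are \emph{not} operators of pure degree $-j$: they only shift energy by \emph{at most} $|j|$, and the $O(\frac{1}{n})$ error in $\widetilde{L}_j=L_j|_M+O(\frac{1}{n})+R^j_n$ can shift energy by anything up to $|j|$. Hence a matrix element of $O_n$ between levels differing by $m$ is not $\hat{f}^n_m(u,L_mv)$ plus something negligible; it equals $\hat{f}^n_m(u,L_mv)$ plus contributions of size $O(\frac{1}{n})\,|\hat{f}^n_j|$ for \emph{every} $|j|\ge|m|$. Since boundedness of the $\hat{f}^n_j$ is part of what must be proved, these contaminations cannot be discarded, so testing a window does not ``recover each coefficient''; this is exactly the low/high-energy imbalance phenomenon the paper cautions about in section 2.2. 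The paper's proof circumvents it by peeling from the top mode down: pairing with $u=L_{-\Lambda}1$, $v=1$, the energy-shift constraint kills every error except those of $\widetilde{L}_{\pm\Lambda}$, and hermiticity $\hat{f}^n_\Lambda=\overline{\hat{f}^n_{-\Lambda}}$ ties the size of those surviving errors to the main term, so the matrix element equals $\hat{f}^n_{-\Lambda}\bigl(\|L_{-\Lambda}1\|^2+O(\frac{1}{n})\bigr)$ and its convergence forces $\hat{f}^n_{-\Lambda}$ to converge; one then subtracts $\hat{f}^n_\Lambda\widetilde{L}_\Lambda+\overline{\hat{f}^n_\Lambda}\widetilde{L}_{-\Lambda}$ and recurses downward, treating $\gamma_n\textbf{1}+\hat{f}^n_0\widetilde{L}_0$ at the last step. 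Your window idea can be repaired without this ordering --- normalize by $M_n=\max(|\gamma_n|,\max_m|\hat{f}^n_m|)$, extract convergent subsequences of the normalized coefficients, and contradict linear independence of $\{\textbf{1},L_m|_W\}$ if $M_n\to\infty$ --- but that compactness step (or the paper's top-down induction) is the missing idea, not a detail of bookkeeping.
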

\begin{proof}
One can remove the space local condition as all observables in $\mathcal{A}_n^H$ are space-local. Assume a sequence of $\Lambda$-energy local operators
$$O_n=\gamma_n\textbf{1}+\sum_{j=-\Lambda}^{\Lambda} \hat{f}_j^n\widetilde{L}_j,$$
where $\hat{f}_{-j}^n=\overline{\hat{f}_j^n}$ and $O_n \xrightarrow{SL} O$. To show that $O=L(f)+\gamma\textbf{1}$ for some function $f$ with finite Fourier series, restrict $O_n$ to some energy $M>2\Lambda$, 
$$O_n|_M=\gamma_n\textbf{1}+\sum_{j=-\Lambda}^{\Lambda}\hat{f}_j^n\widetilde{L}_j|_M.$$
According to the properties of $\widetilde{L}_j$s, for large enough $n$, 
$$O_n|_M=\gamma_n\textbf{1}+\sum_{j=-\Lambda}^{\Lambda}\hat{f}_j^nL_j|_M+\hat{f}_j^nO(\frac{1}{n}).$$
Since $O_n|_M$ has a limit in the operator norm to $O|_M$, $\hat{f}_j^n$s must have a limit. To prove that, we compute the inner product below for the vacuum $1$:
$$(L_{-\Lambda}|_M1,O_n|_M1)=f_{-\Lambda}^n\ ||L_{-\Lambda}1||+ (L_{-\Lambda}1,(\sum_j \hat{f}_j^nO(\frac{1}{n}))1)  \to (L_{-\Lambda}|_M1,O1),$$
where $|_M$ is dropped as it is no longer needed. Notice all the errors $O(\frac{1}{n})$ corresponding to $\widetilde{L}_j$ give at most $|j|$ energy shift. This mean only the errors corresponding to $\widetilde{L}_{\pm \Lambda}$ have to be handled
$$f_{-\Lambda}^n\ ||L_{-\Lambda}1||+(L_{-\Lambda}1,(f_\Lambda^nO(\frac{1}{n})+f_{-\Lambda}^nO(\frac{1}{n}))1).$$
$f_{-\Lambda}^n\ ||L_{-\Lambda}1||$ can be exactly computed and is of order  $f_{-\Lambda}^n \Lambda^\frac{3}{2}$. The rest can have norm at most $O(\frac{1}{n})|f_{-\Lambda}^n|$ as $f_\Lambda^n=\overline{f_{-\Lambda}^n}$. It is easy to see from here that in order for the above to have some limit,  $f_{-\Lambda}^n$ must have some limit $f_{-\Lambda}$.  

Next step is to subtract $f_{\Lambda}^n\widetilde{L}_\Lambda+\overline{f_{\Lambda}^n}\widetilde{L}_{-\Lambda}$ from $O_n$ and repeat the procedure. For the special case of $j=0$,  $\gamma_n\textbf{1}+f_0^n\widetilde{L}_0$ can be seen to give the same conclusion. Denoting $\lim_{n\to \infty} \hat{f}_j^n= \hat{f}_j, \lim_{n\to \infty} \gamma_n=\gamma$,
$$O=\gamma\textbf{1}+\sum_{j=-\Lambda}^\Lambda \hat{f}_jL_j.$$
\end{proof}
\begin{rmk}\label{rmk5}
By \hyperref[thm2.1]{\textbf{Theorem 2.1}}, we have a strong SL-algebra.
\end{rmk}
We would like to have our theorems as general as possible. For UMMs, higher level ACs \cite{gils2013anyonic} is conjectured to give the same results as in \hyperref[thm3.1]{\textbf{Theorem 3.1}}, implying the above theorem for UMMs. But a relaxed version of that theorem for UMMs would still give us the results in this section:
\begin{cnj}\label{cnj4.3}
For any UMM VOA $\mathcal{V}=\mathcal{V}_{c,0}$ and chiral representation $\mathcal{V}_{c,h}$, there is a sequence of quantum theories with \textbf{strong} scaling limit $(\mathcal{V}_{c,h},L_0)$ such that for each $L_m$, we have a sequence $\widetilde{L}_m \in \mathcal{A}_n$ with the following properties:
\begin{itemize}
    \item It is a space local observable with hermitian operators $a\widetilde{L}_m+\overline{a}\widetilde{L}_{-m} \in \mathcal{A}_n^H$.
    \item It shifts the energy no more than $|m|$.
    \item Restricted to energy at most $n^{d_\omega}$ it has the following approximation by $L_m|_{n^{
    d_\omega}}$ with the rest being $R_{n}^m$:
    $$\widetilde{L}_m=L_m|_{n^{d_\omega}}+O(\frac{1}{n^{g_\omega}})+R_{n}^m,$$
    where $d_\omega,g_\omega$ are positive constants.
    \item Its norm is bounded by $O(n^{e_\omega})$ for some constant $e_\omega$.
\end{itemize}
\end{cnj}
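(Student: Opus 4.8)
The plan is to establish the conjecture in the Ising case $k=2$, where it becomes a theorem assembled from \hyperref[thm3.1]{\textbf{Theorem 3.1}} together with the construction of $\widetilde L_m$ in \hyperref[rmk4]{\textit{Remark 4}}; for $k\ge 3$ the statement must remain conjectural for the reason given at the end. First I would take as the sequence of quantum theories the Ising anyonic chains whose boundary condition $(a,b)$ selects the desired module $\rchi_h$ according to the dictionary of \hyperref[thm3.1]{\textbf{Theorem 3.1}} (for instance $(\frac12,\frac12)$ for $\mathcal V=\rchi_0+\rchi_\frac{1}{2}$, and $(\frac12,1)$ for $\rchi_\frac{1}{16}$), with $\mathcal A_n^H$ spanned by $\{e_j,\, i[e_j,e_{j+1}]\}$ and Hamiltonian the scaled $\widetilde L_0^c$, so that the strong scaling limit is $(\mathcal V_{c,h},L_0)$ by part 1 of the theorem. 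The candidate operators are the $\widetilde L_m$ defined in \hyperref[rmk4]{\textit{Remark 4}} out of the cosine/sine building blocks $\widetilde L_{\pm m}^c,\widetilde L_{\pm m}^s$, and the constants will come out as $d_\omega=\tfrac14$, $g_\omega=1$, and $e_\omega=2$.

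Next I would verify the four bullets one at a time. Space locality and the hermitian generating condition are immediate: $\widetilde L_m^c$ is a cosine-weighted sum of the $3$-ultra-local operators $e_j$ and $\widetilde L_m^s$ a sine-weighted sum of the $[e_j,e_{j+1}]$, and \hyperref[thm4.1]{\textbf{Theorem 4.1}} shows precisely that the hermitian combination $a\widetilde L_m+\overline a\widetilde L_{-m}$ lies in $\mathcal A_n^H$. The energy-shift bound is read off from the mode expansion preceding \hyperref[thm4.1]{\textbf{Theorem 4.1}}: the two-part formula for $\widetilde L_m^c$ yields an exact shift of $-m$ from its first part and a shift strictly inside $(-m,m)$ of the same parity from its second part, so $\widetilde L_m$ shifts energy by at most $|m|$. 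The approximation bullet is exactly part 2 of \hyperref[thm3.1]{\textbf{Theorem 3.1}}: the limits $O_n^c\xrightarrow{SL}L_m+L_{-m}$ and $iO_n^s\xrightarrow{SL}i(L_m-L_{-m})$ converge at rate $O(\frac1n)$ for energies up to $\sqrt[4]{n}$, and the splitting of \hyperref[rmk4]{\textit{Remark 4}} inherits this, since the discarded piece $\tfrac12(\widetilde L_{-m}^c-\widetilde L_{-m}^s)$ has norm $O(\frac1n)$ below $\sqrt[4]{n}$ and otherwise folds into $R_n^m$, giving $\widetilde L_m=L_m|_{\sqrt[4]{n}}+O(\frac1n)+R_n^m$. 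Finally the norm bound follows because the scaling factors grow at most like $O(n^2)$ and $\sum_j\cos(\cdots)e_j$ has norm $O(n)$, so $\|\widetilde L_m\|$, and hence $\|R_n^m\|$, is $O(n^2)$.

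The genuinely hard part, and the reason the statement is posed as a conjecture rather than a theorem, is the passage to higher level $k\ge 3$. All of part 2 of \hyperref[thm3.1]{\textbf{Theorem 3.1}} rests on the free-fermion (Jordan--Wigner) diagonalization of the Ising chain via the technique of \cite{fendley2014free}; this is available only because the Ising Hilbert space carries a genuine tensor-product structure and the TL generators become bilinears in Majorana modes, furnishing explicit EL-DOFs. For $k\ge 3$ the anyonic-chain Hilbert space has no such tensor decomposition and no known mathematical description of its EL-DOFs, so one cannot produce the explicit Fourier-mode formula that underlies the energy-shift and $O(\frac1n)$-error estimates. A proof for general UMMs would therefore require either an exact solution of the level-$k$ TL Hamiltonian giving the finite-size corrections to the Koo--Saleur generators $\widetilde L_m$, or an abstract argument bounding the difference between the TL-built operators and the Virasoro modes on the low-energy subspace without diagonalizing. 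The decisive estimate in both routes is the simultaneous polynomial-in-$n$ control of the remainder $R_n^m$ and the $O(\frac1{n^{g_\omega}})$ convergence on energies up to $n^{d_\omega}$, which is exactly the finite-size scaling of the lattice approximation to Virasoro symmetry that stays out of reach without a free-fermion handle.
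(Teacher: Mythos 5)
Your proposal is correct and follows essentially the same route as the paper: the paper likewise verifies \hyperref[cnj4.3]{\textbf{Conjecture 4.3}} only for the Ising case, assembling it from \hyperref[thm3.1]{\textbf{Theorem 3.1}} (strong scaling limits and $O(\frac1n)$ convergence up to energy $\sqrt[4]{n}$), the splitting $\widetilde L_m=\frac{\widetilde L_m^c+\widetilde L_m^s}{2}+\frac{\widetilde L_{-m}^c-\widetilde L_{-m}^s}{2}$ of \hyperref[rmk4]{\textit{Remark 4}}, the energy-shift analysis preceding \hyperref[thm4.1]{\textbf{Theorem 4.1}}, and the $O(n^2)$ norm bound of \hyperref[rmk7]{\textit{Remark 7}}, with the same constants $d_\omega=\frac14$, $g_\omega=1$, $e_\omega=2$. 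Your diagnosis of why the statement stays conjectural for $k\ge 3$ --- the absence of a free-fermion diagonalization and of any mathematical description of the EL-DOFs for interacting anyonic chains --- is also exactly the paper's stated obstruction.
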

\begin{rmk}\label{rmk6}
It should be noted that the second and third item above have a meaning after the ``push-forward'' of the map $\widetilde{L}_m$ acting on $\mathcal{V}_{c,h}$ is assumed. This is done by the natural embedding $\rho_n: \mathcal{W}_n \hookrightarrow \mathcal{V}_{c,h}$ from the strong scaling limit and the map $\rho_n\widetilde{L}_m(\rho_n)^{-1}$ which acts on the copy of $\mathcal{W}_n$ inside $\mathcal{V}_{c,h}$ and extended by zero on the orthogonal complement. This ``push-forward'' will be implicitly assumed whenever it is necessary. Also, notice that this is not the ``natural'' embedding but it will work for our purposes (in Ising, the natural embedding, is described in the \hyperref[A.1]{appendix}).
\end{rmk}
\begin{rmk}\label{rmk7}
The last assumption is true for the Ising chain as $\widetilde{L}_m^{c}$ and $\widetilde{L}_m^s$ are after all a sum of $2n$ terms of $e_j$s which have norm order one. Taking the scaling factors $\alpha_n^c$ and $\beta_n^{m,c},\beta_n^{m,s}$ and their norm into account
$$||\widetilde{L}_m ||\le  O(n^2).$$
\end{rmk}
\begin{rmk}\label{rmk8}
Assuming the above conjecture, the \hyperref[thm4.2]{\textbf{Theorem 4.2}} is true for all UMMs with the exception that the statement should change to: the scaling limit of space energy local \textit{contains} the set $\{L(f)| \ f \ \text{finite Fourier series}\}$. Therefore, except for \hyperref[thm4.1]{\textbf{Theorem 4.1}}, \hyperref[thm4.2]{\textbf{Theorem 4.2}} and \hyperref[thm4.15]{\textbf{Theorem 4.15}}, all other theorems in sections \hyperref[4.1]{4.1} and \hyperref[4.3]{4.3} will hold the way they are stated for all UMMs. For all theorems in \hyperref[4.2]{section 4.2}, the stronger \hyperref[cnj5.5]{\textbf{Conjecture 5.5}} which tells us exactly how to recover the higher Virasoro modes for UMMS has to be assumed. 
The theorems below will be proved using the Ising AC, but by replacing some of the powers by appropriate constants ($d_\omega,$ etc), the results hold for UMMs assuming \hyperref[cnj4.3]{\textbf{Conjecture 4.3}}.
\end{rmk}
\begin{rmk}\label{rmk9}
It is conjectured that all the VOAs we care about (as described in \hyperref[1.1]{section 1.1}) satisfy energy boundedness \cite[Conjecture 8.18]{carpi2015vertex}. A generalization of the \hyperref[cnj4.3]{\textbf{Conjecture 4.3}} to all \textit{chiral} CFTs which satisfy energy boundedness is possible. Sequences in the same fashion of the Virasoro modes have to exist for all elements inside a minimal quasi-primary hermitian field generator set of the VOA. Then, all theorems in section \hyperref[4.1]{4.1} and \hyperref[4.3]{4.3} except \hyperref[thm4.1]{\textbf{Theorem 4.1}}, \hyperref[thm4.2]{\textbf{Theorem 4.2}} and \hyperref[thm4.15]{\textbf{Theorem 4.15}} can be recovered. In UMMs, the generator is only $\omega$ and in WZW models, the currents corresponding to the Lie algebra $\mathfrak{g}$ (see \cite{bondesan2015chiral} for a numerical demonstration and also for $W$-algebra currents see \cite{gainutdinov2014lattice}).
\end{rmk}

\textit{Notation}. Set $L(f)_{\le m}=\sum_{|j| \le m} \hat{f}_jL_j$ and similarly for $\widetilde{L}(f)$. Similarly define $L(f)_{> m}$ and $\widetilde{L}(f)_{> m}$. Also set
$$||f||_{s}^{\le E}=\sum_{|i|\le E} |\hat{f}_i|(|i|+1)^{s},$$
and 
$$|f|^{\le m}:=\sum_{|i| \le m}|\hat{f}_i|.$$
We wish to show that the choice of the ``natural'' sequence corresponding to $L(f)$ gives a strong SL-algebra. Some lemmas are needed.
\begin{lem}\label{lem4.4}
We have
$$\widetilde{L}(f):=\sum_{j=-\infty}^{\infty} \hat{f}_j\widetilde{L}_j \in \mathcal{A}_n^H, \ \ \text{for all } f \in C^\infty(S^1)$$
\end{lem}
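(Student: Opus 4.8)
The plan is to prove the claim separately for each fixed $n$ and to reduce it to an absolute-convergence statement in a finite-dimensional space. For a chain of the given finite size the generating set $\{e_j, i[e_j,e_{j+1}],\mathbf 1\}$ is finite, so $\mathcal A_n^H$ is a finite-dimensional real vector space; its complexification $\mathcal A_n$ is a finite-dimensional (hence complete) normed space in which all norms are equivalent. Since each $\widetilde L_j$ lies in $\mathcal A_n$ by construction (Remark 4, together with the identities in the notation after Theorem 3.1), it suffices to show that the series $\sum_j \hat f_j \widetilde L_j$ converges absolutely, i.e.\ that $\sum_{j\in\mathbb Z} |\hat f_j|\,\|\widetilde L_j\| < \infty$; the limit then automatically lies in the closed subspace $\mathcal A_n$, and I will recover membership in $\mathcal A_n^H$ at the end.

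First I would establish a bound on $\|\widetilde L_j\|$ that is uniform in the mode index $j$ (for fixed $n$). Each $\widetilde L_j$ is assembled from the split operators $\widetilde L_{\pm j}^{c}, \widetilde L_{\pm j}^{s}$, whose non-identity parts are the discrete cosine/sine sums $-\sum_{j'}\cos\!\big(\tfrac{j(j'+\frac12)\pi}{2n+1}\big)e_{j'}$ and $i\sum_{j'}\sin\!\big(\tfrac{j(j'+1)\pi}{2n+1}\big)[e_{j'},e_{j'+1}]$, scaled by $\alpha_n^{c},\alpha_n^{s}$, plus the identity offsets $\beta_n^{j,c}\mathbf 1,\beta_n^{j,s}\mathbf 1$. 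The key structural observation is \emph{aliasing}: on the finite index set $j'\in\{1,\dots,2n-1\}$ these trigonometric kernels are periodic (up to sign) in $j$ with period dividing $4(2n+1)$, so their contribution has norm bounded by $\alpha_n^{c}(2n-1)\max_{j'}\|e_{j'}\|$, uniformly in $j$ and of order $O(n^2)$ (cf.\ Remark 7). The offsets $\beta_n^{j,\cdot}$ arise as reference-state expectations of these same kernels and are therefore themselves trigonometric polynomials in $j$, hence bounded and periodic in $j$. Consequently the whole family $\{\widetilde L_j\}_{j\in\mathbb Z}$ takes only finitely many values (up to sign) in $\mathcal A_n$, giving $C(n):=\sup_j\|\widetilde L_j\|<\infty$.

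With this in hand the conclusion is immediate: since $f\in C^\infty(S^1)$, its Fourier coefficients decay faster than any power of $|j|$, so $\sum_j|\hat f_j|<\infty$ and hence $\sum_j|\hat f_j|\,\|\widetilde L_j\|\le C(n)\sum_j|\hat f_j|<\infty$. The series converges absolutely in $\mathcal A_n$, independently of the order of summation; equivalently, grouping the indices into residue classes modulo the period lets one rewrite $\widetilde L(f)$ as an explicit finite linear combination of the finitely many distinct $\widetilde L_r$, making membership in $\mathcal A_n$ transparent. Finally, for real-valued $f$ one has $\hat f_{-j}=\overline{\hat f_j}$, and since the split construction gives $\widetilde L_j^\dagger=\widetilde L_{-j}$ (mirroring $L_j^\dagger=L_{-j}$), the limit satisfies $\widetilde L(f)^\dagger=\widetilde L(f)$, so it lies in the real hermitian subspace $\mathcal A_n^H$.

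I expect the only real work to be the uniform bound on the offsets $\beta_n^{j,c},\beta_n^{j,s}$ as $j\to\infty$ at fixed $n$; everything else is a soft finite-dimensional argument. This is where I would invoke the explicit formulas from the proof of Theorem 3.1 in the appendix to confirm that the offsets are trigonometric polynomials in the mode index (equivalently periodic, hence bounded) rather than growing with $j$. Once that is checked, the periodicity/aliasing of the discrete kernels renders the family $\{\widetilde L_j\}$ finite and the lemma follows.
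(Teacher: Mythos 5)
Your proof is correct and takes essentially the same route as the paper: fix $n$, obtain a bound on $\|\widetilde{L}_j\|$ uniform in the mode index $j$, use the rapid decay of $\hat{f}_j$ to get absolute convergence in the (finite-dimensional, hence complete) space, and recover hermiticity from the pairing $\hat{f}_j\widetilde{L}_j+\hat{f}_{-j}\widetilde{L}_{-j}\in\mathcal{A}_n^H$ (equivalently $\widetilde{L}_j^\dagger=\widetilde{L}_{-j}$ with $\hat{f}_{-j}=\overline{\hat{f}_j}$). The only difference is cosmetic: where you invoke periodicity/aliasing in $j$ to make the family $\{\widetilde{L}_j\}$ essentially finite, the paper (via \emph{Remark 7}) gets the same uniform bound $\|\widetilde{L}_j\|\le O(n^2)$ directly, since the trigonometric coefficients are bounded by $1$ independently of $j$, each operator is a sum of $O(n)$ terms of norm $O(1)$, and the scaling factors $\alpha_n^{c},\alpha_n^{s},\beta_n^{m,c},\beta_n^{m,s}$ are at most $O(n^2)$ uniformly in the mode index.
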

\begin{proof}
Note that $\hat{f}_j$s are rapidly decreasing. Also, from \hyperref[rmk7]{\textit{Remark 7}},
\begin{align}\label{eq1}
||\widetilde{L}_j|| \le O(n^2).
\end{align}
The estimation does not depend on $j$. This gives an absolute convergence to an operator with norm bounded by $|f|O(n^2)$. On the other hand, for each $j$, we have $\hat{f}_j\widetilde{L}_j+\hat{f}_{-j}\widetilde{L}_{-j} \in \mathcal{A}_n^H$ implying $\widetilde{L}(f) \in \mathcal{A}_n^H$.
\end{proof}
The next step to establish a strong SL-algebra is to prove
\begin{lem}\label{lem4.5}
$\widetilde{L}(f) \xrightarrow{SL} L(f)$.
\end{lem}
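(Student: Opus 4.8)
The plan is to unwind Definition 5 and reduce the infinite mode sum to the finitely many modes that genuinely contribute. Fix an energy cutoff $M$ and vectors $u,v \in \mathcal{V}^M$, and set $u_n=(\rho_n^M)^{-1}u$, $v_n=(\rho_n^M)^{-1}v \in \mathcal{W}_n^M$ for $n$ large. By Lemma 4.4 the operator $\widetilde{L}(f)=\sum_j \hat{f}_j\widetilde{L}_j$ converges in norm, so after pushing forward by $\rho_n$ (Remark 6) I may work on $\mathcal{V}$ and write $(u_n,\widetilde{L}(f)_n v_n)=\sum_j \hat{f}_j\,(u,\rho_n\widetilde{L}_j\rho_n^{-1}v)$. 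On the target side, since the genuine Virasoro mode $L_j$ shifts energy by exactly $-j$ and $u,v$ live at energy $\le M$, the pairing $(u,L_j v)$ vanishes unless $|j|\le M$; hence $L(f)v=\sum_j \hat{f}_j L_j v$ is a well-defined vector (with $\|L(f)v\|\le\sqrt{c/2}\,\|f\|_{3/2}\|(L_0+\textbf{1})v\|$) and $(u,L(f)v)=\sum_{|j|\le M}\hat{f}_j(u,L_j v)$ is a \emph{finite} sum.

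First I would split the $j$-sum into three ranges. For the finitely many low modes $|j|\le M$ I invoke $\widetilde{L}_j\xrightarrow{SL}L_j$ (Theorem 3.1) term by term; with finitely many terms and bounded coefficients this piece contributes exactly $\sum_{|j|\le M}\hat{f}_j(u,L_j v)=(u,L(f)v)$ in the limit. For the intermediate modes $M<|j|\le\sqrt[4]{n}$ I use the approximation $\widetilde{L}_j=L_j|_{\sqrt[4]{n}}+O(\tfrac{1}{n})+R_n^j$. Because $v$ has energy $\le M<\sqrt[4]{n}$, one has $R_n^j v=\widetilde{L}_j(\textbf{1}-P^{\sqrt[4]{n}})v=0$ and $L_j|_{\sqrt[4]{n}}v=L_j v$, and the latter pairs to zero against $u$ since $|j|>M$; so only the $O(\tfrac1n)$ error survives. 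Summing $|\hat{f}_j|$ against this error, weighted by the (at most polynomial, $\sim(|j|+1)^{3/2}$) growth inherited from the energy bound, gives a contribution of order $\tfrac1n\,\|f\|_{3/2}\to 0$.

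For the high modes $|j|>\sqrt[4]{n}$ the approximation is unavailable, so I fall back on the crude uniform estimate $\|\widetilde{L}_j\|\le O(n^2)$ from equation (1)/Remark 7, giving $|(u,\rho_n\widetilde{L}_j\rho_n^{-1}v)|\le O(n^2)\,\|u\|\,\|v\|$. The tail $\sum_{|j|>\sqrt[4]{n}}|\hat{f}_j|$ is then controlled by rapid decay: choosing $k$ with $|\hat{f}_j|\le C_k|j|^{-k}$ for $k$ large enough makes $n^2\sum_{|j|>\sqrt[4]{n}}|\hat{f}_j|=O(n^{2-(k-1)/4})\to 0$. Combining the three ranges yields $(u_n,\widetilde{L}(f)_n v_n)\to(u,L(f)v)$, which is precisely the assertion $\widetilde{L}(f)\xrightarrow{SL}L(f)$.

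The main obstacle is the interchange of the limit $n\to\infty$ with the infinite mode sum, that is, the uniform tail control. Two points are delicate: (i) in the intermediate range the per-mode approximation error must grow at most polynomially in $|j|$, so that summation against the rapidly decreasing $\hat{f}_j$ (weighted by the $(|j|+1)^{3/2}$ factor) still collapses to $O(\tfrac1n)$; and (ii) in the high range the only uniform-in-$j$ bound available, $O(n^2)$, must be defeated by the super-polynomial decay of $\hat{f}_j$ beyond the cutoff $\sqrt[4]{n}$. Both hinge on balancing the $O(n^2)$ growth of the scaling factors against the smoothness of $f$, which is exactly why the hypothesis $f\in C^\infty(S^1)$ is essential.
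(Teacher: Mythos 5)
Your proof is correct, and it rests on the same three ingredients as the paper's: the uniform bound $\|\widetilde{L}_j\|\le O(n^2)$ killed by the rapid decay of $\hat f_j$ in the tail, the approximation $\widetilde{L}_j=L_j|_{\sqrt[4]{n}}+O(\tfrac1n)+R_n^j$ in the bulk, and the observation that $R_n^j$ annihilates finite-energy vectors. The difference is in what is actually established. You prove exactly the weak statement of the lemma, pairing against fixed $u,v\in\mathcal{V}^M$; this lets you exploit energy-shift orthogonality ($(u,L_jv)=0$ for $|j|>M$), so the target-side tail of $L(f)$ costs you nothing and you never need an energy-bound estimate on $\|L(f)_{>\text{cutoff}}v\|$. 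The paper instead fixes only $v\in\mathcal{V}$ and proves the strictly stronger vector-norm convergence $\|(\widetilde{L}(f)-L(f))v\|\to 0$, which forces it to control the tail $L(f)_{>\sqrt[10k]{n}}v$ via the energy bound (its estimate (3)), and it parametrizes the mode cutoff as $\sqrt[10k]{n}$ rather than your $\sqrt[4]{n}$. These choices are not cosmetic: the paper explicitly remarks that the norm-convergence form and the quantitative estimates (2)--(4) are what get reused in the proof of Theorem 4.6 (the strong SL-algebra statement), where one must control products $\prod_j\widetilde{L}(f^{(j)})1$ and a merely weak, matrix-element version of the lemma would not suffice. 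So your argument is a clean and slightly more economical proof of the lemma as literally stated, but it buys less downstream; if you intend to continue to Theorem 4.6 you would need to upgrade it to the paper's strong form (which your intermediate-range and tail estimates essentially already permit, since none of them really uses the pairing against $u$ except to dispose of the target-side tail).
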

\begin{proof}
The result shown here on the convergence behavior of $\widetilde{L}(f)$ will be useful in the next theorem. Take any $k \in \mathbb{N}$ and consider $N_{f,(10k)^3+1}$ for which $|\hat{f}_j|<\frac{1}{j^{(10k)^3+1}}$ for all $j>N_{f,(10k)^3+1}$. For $n$ large enough such that $\sqrt[10k]{n}>N_{f,(10k)^3+1}$, using \hyperref[eq1]{(1)},
\begin{align}\label{eq2}
||\widetilde{L}(f)_{>\sqrt[10k]{n}}||=||\sum_{|j|>\sqrt[10k]{n}} \hat{f}_j\widetilde{L}_j|| \le O(n^2)\sum_{|j|>\sqrt[10k]{n}} |\hat{f}_j|
\end{align}
$$\le O(n^2)\int_{\sqrt[10k]{n}}^\infty \frac{1}{x^{(10k)^3+1}} \text{d}x < O(n^2)\frac{(10k)^3+1}{n^{(10k)^2}}=O(n^{-(10k)^2+2}).$$
An same estimate for $L(f)_{>\sqrt[10k]{n}}$ via energy bounds is the next step:
$$||f||_\frac{3}{2}^{>\sqrt[10k]{n}}<2\sum_{j>\sqrt[10k]{n}}^{\infty} \frac{(j+1)^\frac{3}{2}}{j^{(10k)^3+1}}<\int_{\sqrt[10k]{n}}^\infty \frac{1}{x^{(10k)^3-10k+1}}=O(n^{-(10k)^2+1}),$$
therefore
\begin{align}\label{eq3}
||L(f)_{>\sqrt[10k]{n}}v|| <O(n^{-(10k)^2+1})||(L_0+\textbf{1})v||.
\end{align}
Next, given a vector $v \in \mathcal{V}$ and the embedding $\mathcal{W}_n \hookrightarrow \mathcal{V}$,
$$(\widetilde{L}(f)-L(f))v= (\widetilde{L}(f)_{\le \sqrt[10k]{n}}-L(f)_{\le \sqrt[10k]{n}}) v + (\widetilde{L}(f)_{>\sqrt[10k]{n}}-L(f)_{>\sqrt[10k]{n}}) v$$
The two estimations above imply that the second part vanishes. For the first part,
\begin{align}\label{eq4}
\widetilde{L}(f)_{\le \sqrt[10k]{n}}=L(f)_{\le \sqrt[10k]{n}}|_{\sqrt[4]{n}}+O\Big(\frac{|f|_{j \le \sqrt[10k]{n}}}{n} \Big)+R(f),
\end{align}
where $R(f)=\widetilde{L}(f)_{\le \sqrt[10k]{n}}|_{>\sqrt[4]{n}}$. Since $v$ has finite energy, for large enough $n$, $R(f)v=0$ and $L(f)_{\le \sqrt[10k]{n}}|_{\sqrt[4]{n}}v=L(f)_{\le \sqrt[10k]{n}}v$. This implies $||(\widetilde{L}(f)-L(f))v||\to 0,$ which is indeed a stronger result than $\widetilde{L}(f) \xrightarrow{SL} L(f)$.
\end{proof}
\begin{thm}\label{thm4.6}
The set $\{L(f) \ | \ f \in C^{\infty}(S^1)\}$ gives a strong SL-algebra with corresponding sequence $\widetilde{L}(f)$ to each $L(f)$.
\end{thm}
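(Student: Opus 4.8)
The plan is to reduce the statement to a single product-convergence assertion and then prove it by a three-stage truncation argument. Note first that $\widetilde{L}(f)\in\mathcal{A}_n^H\subset\mathcal{A}_n$ by Lemma 4.4, and each $\mathcal{A}_n$ is an algebra, so every product $\widetilde{L}(f^{(1)})\cdots\widetilde{L}(f^{(k)})$ already lies in $\mathcal{A}_n$. Hence membership of the generated algebra in $\mathcal{A}$ (the SL-algebra condition of Definition 8) and the strong condition of Definition 9 both follow once I establish
$$\lim_{n\to\infty}\big(u,\widetilde{L}(f^{(1)})\cdots\widetilde{L}(f^{(k)})v\big)=\big(u,L(f^{(1)})\cdots L(f^{(k)})v\big),\qquad u,v\in\mathcal{V},$$
for all $k$ and all $f^{(i)}\in C^\infty(S^1)$. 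The right-hand product is well defined on the smooth vectors of $\overline{\mathcal{V}}$ by iterating the energy bound $\|L(f)w\|\le\sqrt{c/2}\,\|f\|_{3/2}\|(L_0+\mathbf 1)w\|$ together with $[L_0,L(f)]=L(g)$, $\hat g_j=-j\hat f_j$ (equivalently, by the absolute-convergence clause of Definition 7); this settles the SL-algebra part once the displayed limit is proved.

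First I would fix finite-energy $u,v\in\mathcal{V}$ and introduce a slowly growing mode cutoff $m=m(n):=\lfloor n^{1/(4k)}\rfloor$, splitting each factor as $\widetilde{L}(f^{(i)})=\widetilde{L}(f^{(i)})_{\le m}+\widetilde{L}(f^{(i)})_{>m}$. Stage 1 (tail truncation) compares $\prod_i\widetilde{L}(f^{(i)})$ with $\prod_i\widetilde{L}(f^{(i)})_{\le m}$. Telescoping into $k$ terms, each carrying exactly one factor $\widetilde{L}(f^{(i)})_{>m}$, and using only the crude uniform bounds $\|\widetilde{L}(f)\|\le O(n^2)|f|$ and $\|\widetilde{L}(f)_{>m}\|\le O(n^2)|f|^{>m}$ from (1)--(2) and Remark 7, each term has operator norm $\le O(n^{2k})\,|f^{(i)}|^{>m}$. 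Since the $\hat f^{(i)}_j$ decay faster than any polynomial, $|f^{(i)}|^{>m}\le C_K m^{-K}=C_K n^{-K/(4k)}$ for every $K$; choosing $K>8k^2$ makes this difference vanish in operator norm, hence against the fixed $u,v$.

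Stage 2 (low-energy replacement) compares the truncated products $\prod_i\widetilde{L}(f^{(i)})_{\le m}v$ and $\prod_i L(f^{(i)})_{\le m}v$. The key observation is that each truncated factor shifts energy by at most $m$ (Theorem 3.1, Conjecture 4.3), so every intermediate vector has energy $\le E_v+km=O(n^{1/(4k)})\ll\sqrt[4]{n}$ for large $n$. Consequently the remainders $R_n^j=\widetilde{L}_j|_{>\sqrt[4]{n}}$ never activate, and on this energy range $\widetilde{L}_j=L_j+O(1/n)$ uniformly in $|j|\le m$ by the $O(1/n)$-rate of Theorem 3.1. A telescoping of the two truncated products then yields $k$ terms, each with $k-1$ cofactors of norm $O(n^{1/(4k)})$ (bounded via the energy bound applied to vectors of energy $O(n^{1/(4k)})$) and one difference factor of norm $O(1/n)$; the total is $O\!\big(n^{(k-1)/(4k)-1}\big)=O(n^{-3/4})\to0$, so $\big\|\prod_i\widetilde{L}(f^{(i)})_{\le m}v-\prod_i L(f^{(i)})_{\le m}v\big\|\to0$.

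Stage 3 (VOA limit) uses $m(n)\to\infty$ to pass $\big(u,\prod_i L(f^{(i)})_{\le m}v\big)\to\big(u,\prod_i L(f^{(i)})v\big)$ by a final telescoping on the limit side, via $\|L(f)_{>m}w\|\le\sqrt{c/2}\,\|f\|_{3/2}^{>m}\|(L_0+\mathbf 1)w\|$ with $\|f\|_{3/2}^{>m}\to0$ as in (3), the intermediate vectors $L(f^{(i+1)})\cdots L(f^{(k)})v$ being fixed smooth vectors. Chaining the three stages gives the displayed limit. I expect the crux to be Stage 2's balancing act: the cutoff $m(n)$ must grow fast enough that Stage 1's polynomial blow-up $O(n^{2k})$ is beaten by the rapidly decaying Fourier tail, yet slowly enough ($m=o(\sqrt[4]{n})$) that all intermediate energies remain below the threshold where $\widetilde{L}_j\approx L_j$ holds with rate $O(1/n)$ and where the cofactor-norm growth $O(m)$ cannot overwhelm the $O(1/n)$ gain; the choice $m=n^{1/(4k)}$ is what threads both requirements simultaneously.
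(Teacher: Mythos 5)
Your argument is correct and rests on the same estimates as the paper's proof --- the crude bound $\|\widetilde{L}(f)\|\le O(n^2)|f|$ played against rapid Fourier decay for high modes, the energy-shift property and $O(1/n)$ rate of Theorem 3.1 for low modes, and energy bounds for the cofactors --- but it organizes them through a genuinely different decomposition. The paper performs a single telescoping $\sum_{t}x_t\bigl(L(f^{(t)})-\widetilde{L}(f^{(t)})\bigr)y_t$ in which the exact smeared fields are absorbed into a fixed bra $x_t=1^\dagger\prod_{j<t}L(f^{(j)})$ while the kets $y_t=\prod_{j>t}\widetilde{L}(f^{(j)})1$ carry full, untruncated $\widetilde{L}$'s; it must then split $y_t=y_t^1+y_t^2$ and track cross terms such as $R(f^{(t)})$ and $L(f^{(t)})_{\le\sqrt[10k]{n}}\big|_{>k\sqrt[10k]{n}}$ acting on $y_t^2$. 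Your three-stage scheme buys a cleaner middle step: after Stage 1 only modes $|j|\le m$ survive, so by the shift property every intermediate vector stays below energy $\sqrt[4]{n}$, the remainders $R_n^j$ vanish identically, and no cross-term bookkeeping arises. What the paper's arrangement buys is that unbounded products of truncated operators are never applied to $m$-dependent vectors, and this is exactly where your Stage 3 is incomplete as written: in the term $\prod_{i<t}L(f^{(i)})_{\le m}\,L(f^{(t)})_{>m}w_t$ the left product depends on $m$ while the vector it hits shrinks, so you need a bound on that product uniform in $m$ --- obtained, e.g., by moving the truncated factors onto $u$ as adjoints (their action on the fixed finite-energy vector $u$ converges as $m\to\infty$, hence is bounded in norm), or by iterated Weiner-type energy bounds, which are precisely the constants $p,q$ the paper invokes. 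This is routine but should be said. A last cosmetic point: your claim $E_v+km\ll\sqrt[4]{n}$ fails literally when $k=1$ (where it is also not needed, since there are no cofactors); taking $m=n^{1/(5k)}$, say, removes even that blemish.
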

\begin{proof}
For the vacuum $1$ and $1_n=(\rho_n)^{-1}1$, the statement implies 
$$(1_n, \prod_{j=1}^k \widetilde{L}(f^{(j)}) 1_n) \to (1,\prod_{j=1}^k L(f^{(j)})1).$$
Proving the above is enough as this can be done similarly for any two vectors $u,v \in \mathcal{V}$. The fact that the right side is defined is shown in \cite[Lemma 3.2.1]{weiner2007conformal}. We will prove the above by using triangle inequality after estimating the \textit{intermediate} terms
$$ |(1,  \prod_{j=1}^{t-1} L(f^{(j)}) (L(f^{(t)})-\widetilde{L}(f^{(t)}))\prod_{j=t+1}^k \widetilde{L}(f^{(j)})1)|, \ \   1 \le t \le k,$$
where the embedding $\rho_n$ is used implicitly. For each $1\le j \le t$, $\widetilde{L}(f^{(j)})=\widetilde{L}(f^{(j)})_{\le \sqrt[10k]{n}}+\widetilde{L}(f^{(j)})_{> \sqrt[10k]{n}}$. Denote $y_t=\prod_{j=t+1}^k \widetilde{L}(f^{(j)})1$ and let $y_t=y_t^1+y_t^2$, where the first vector is inside $\mathcal{V}^{(k-t)\sqrt[10k]{n}} \subset \mathcal{V}^{k\sqrt[10k]{n}}$ of vectors with energies at most $k\sqrt[10k]{n}$, defined as
$$y_t^1=\prod_{j=t+1}^{k} \widetilde{L}(f^{(j)})_{\le \sqrt[10k]{n}}\  1.$$
To estimate the norm of $||y_t||$ and $||y_t^i||$s, the norm of the two operators decomposing $\widetilde{L}(f^{(j)})$ has to be bounded from above. Equation \hyperref[eq2]{(2)} gives $||\widetilde{L}(f^{(j)})_{> \sqrt[10k]{n}}|| <O(n^{-(10k)^2+2})$. As for $||\widetilde{L}(f^{(j)})_{\le \sqrt[10k]{n}}||$, there are two different estimations. One will be used to find an upper bound for $||y_t^1||$ and the other to bound $||y_t^2||$.

For $y_t^1$, as the product is applied on the vacuum, consider the restriction of each of those operators to energy $\le k\sqrt[10k]{n}$. By energy bounds
\begin{align}\label{eq5}
||\widetilde{L}(f^{(j)})_{\le \sqrt[10k]{n}}\Big|_{k\sqrt[10k]{n}}|| = ||L(f^{(j)})_{\le \sqrt[10k]{n}}\Big|_{k\sqrt[10k]{n}} + O\Big(\frac{|f^{(j)}|_{\le \sqrt[10k]{n}}}{n} \Big) || 
\end{align}
$$ \le 2\mathcal{C}_\omega||f^{(j)}||_{\frac{3}{2}}^{\le \sqrt[10k]{n}}(k\sqrt[10k]{n}+1)=O(\sqrt[10k]{n}) \ , \ \text{for large enough } n. $$
The second estimate is coming from \hyperref[eq1]{(1)}
\begin{align}\label{eq6}
||\widetilde{L}(f^{(j)})_{\le \sqrt[10k]{n}}|| \le O(n^2).
\end{align}
By using \hyperref[eq5]{(5)},
\begin{align}\label{eq7}
||y_t^1|| \le O((\sqrt[10k]{n})^{k-t})<O(\sqrt[10]{n}).
\end{align}
To estimate $||y_t^2||$, let us take the expansion of 
$$\prod_{j=t+1}^{k} \widetilde{L}(f^{(j)})1=\prod_{j=t+1}^{k} (\widetilde{L}(f^{(j)})_{\le \sqrt[10k]{n}}+\widetilde{L}(f^{(j)})_{> \sqrt[10k]{n}})1$$
and consider those terms that have at least one $\widetilde{L}(f^{(j)})_{> \sqrt[10k]{n}}$ in them. Those will be the ones contributing to $y_t^2$. Hence, as there are $2^{k-t}-1$ such terms,
\begin{align}\label{eq8}
||y_t^2|| < (2^{k-t}-1) O(n^{-(10k)^2+2})  O((n^2)^{k-t})) \le O(n^{-(10k)^2+2k+2}).
\end{align}
The estimates for $||y_t^1||,||y_t^2||$ give
\begin{align}\label{eq9}
||y_t|| < 2||y_t^1|| < O(\sqrt[10]{n}).
\end{align}
Let
$$x_t:=1^\dagger\prod_{j=1}^{t-1} L(f^{(j)})\ \ \& \ \ \max\limits_{t=1,\ldots,k}||x_t||=p \ \ \& \ \ \max\limits_{t=1,\ldots,k}||(L_0+\textbf{1})x_t^\dagger||=q.$$
It can be shown that (\cite[Lemma 3.2.1]{weiner2007conformal})
$$p=\max\limits_{t=1,\ldots,k}||x_t|| \le r||(L_0^k+\textbf{1})1||, $$
where $r$ depends on $f^{(j)}$s. $p$ depends on $k$, $f^{(j)}$s, and the degree of vector $v$ (which is chosen to be the vacuum here). Obviously, there is no dependence on $n$. We can derive a bound on $q$ using the above. Let us approximate
$$ |x_t(\widetilde{L}(f^{(t)})-L(f^{(t)}))(y_t^1+y_t^2)|.$$
Decomposing $(\widetilde{L}(f^{(t)})-L(f^{(t)}))$ as in \hyperref[lem4.5]{\textbf{Lemma 4.5}}:
$$(\widetilde{L}(f^{(t)})_{\le \sqrt[10k]{n}}-L(f^{(t)})_{\le \sqrt[10k]{n}})  + (\widetilde{L}(f^{(t)})_{>\sqrt[10k]{n}}-L(f^{(t)})_{>\sqrt[10k]{n}}). $$
For the second part, using the estimates \hyperref[eq9]{(9)} for $||y_t||$, \hyperref[eq2]{(2)} on $||\widetilde{L}(f^{(t)})_{>\sqrt[10k]{n}}||$, and finally \hyperref[eq3]{(3)} for $||x_tL(f^{(t)})_{>\sqrt[10k]{n}}||=||L(f^{(t)})_{>\sqrt[10k]{n}}x_t^\dagger||$,
$$ |x_t(\widetilde{L}(f^{(t)})_{>\sqrt[10k]{n}}-L(f^{(t)})_{>\sqrt[10k]{n}})y_t|<$$
$$pO(n^{-(10k)^2+2})O(\sqrt[10]{n})+ qO(n^{-(10k)^2+1})O(\sqrt[10]{n}) \xrightarrow{n \to \infty} 0.$$
For the first part, considering the approximation of $\widetilde{L}$s for energies up to $k\sqrt[10k]{n}$,
\begin{align}\label{eq10}
\widetilde{L}(f^{(t)})_{\le \sqrt[10k]{n}}-L(f^{(t)})_{\le \sqrt[10k]{n}}=
\end{align}
$$-L(f^{(t)})_{\le \sqrt[10k]{n}}\Big|_{>k\sqrt[10k]{n}}+O\Big(\frac{|f^{(t)}|_{\le \sqrt[10k]{n}}}{n} \Big)+R(f^{(t)}),$$
The first term $L(f^{(t)})_{\le \sqrt[10k]{n}}|_{>k\sqrt[10k]{n}}$ annihilates $y_t^1$ as the vector is inside $\mathcal{V}^{k\sqrt[10k]{n}}$. As for its action on $y_t^2$, instead of taking the norm of that multiplication, one can apply the energy bound on the left multiplication by $x_t$ and due to the smallness of $||y_t^2||$, it is easy to see that it vanishes when $n \to \infty$.

The second term, which is the only term where our approximation gets somewhat tight, when acting on $y_t$, has to compete with its norm. The estimation \hyperref[eq7]{(7)} tells us that the result is bounded by $O(\frac{\sqrt[10]{n}}{n})$ which still goes to zero.

Finally, the last term is the higher energy term $\widetilde{L}(f^{(t)})_{\le \sqrt[10k]{n}}|_{> k\sqrt[10k]{n}}$. When acting on $y_t^1$, this will give zero. Then, one can use the bound on the norm of $\widetilde{L}(f^{(t)})_{\le \sqrt[10k]{n}}$ (recall that this is a bounded operator like $\widetilde{L}(f^{(t)})$ with norm $O(n^2)$). As $||y_t^2||$ is much smaller, this will vanish as well.
\end{proof}
We note that $\mathcal{A}^H$ contains more than just the strong SL-algebra above:
\begin{thm}\label{thm4.7}
We have $\{L(f) \ | \ ||f||_{\frac{3}{2}}<\infty \} \subset \mathcal{A}^H$ which contains $\{L(f) | f \in C^{\infty}(S^1)\}$ as a maximal strong SL-algebra.
\end{thm}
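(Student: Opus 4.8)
The statement has two parts: the inclusion $\{L(f)\mid \|f\|_{\frac32}<\infty\}\subseteq \mathcal A^H$, and the assertion that the smooth subfamily is a \emph{maximal} strong SL-algebra inside it. For the inclusion I would argue by approximation together with completeness of the scaling-limit topology. Fix a real $f$ with $\|f\|_{\frac32}<\infty$, so that $\hat f_{-j}=\overline{\hat f_j}$ and $L(f)$ is hermitian; the energy bound $\|L(f)b\|\le\sqrt{c/2}\,\|f\|_{\frac32}\,\|(L_0+\mathbf 1)b\|$ already shows that $L(f)$ is a well-defined hermitian sesquilinear form on $\mathcal V\times\mathcal V$ (indeed, against any homogeneous pair only a single Fourier mode survives). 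Let $f^{(k)}=\sum_{|j|\le k}\hat f_j e^{ij\theta}$ be the Fourier truncations; each has finite Fourier series, hence $L(f^{(k)})\in\mathcal A^H$ by \hyperref[thm4.2]{\textbf{Theorem 4.2}}. The plan is then to estimate each scaling-limit seminorm $\|\cdot\|_m$ of \hyperref[rmk3]{\textit{Remark 3}}: on the energy-$\le\lambda_m$ sector one gets $\|P^mL(f-f^{(k)})P^m\|\le\sqrt{c/2}\,\|f-f^{(k)}\|_{\frac32}\,(\lambda_m+1)$, whose right side tends to $0$ as $k\to\infty$ since $\|f-f^{(k)}\|_{\frac32}\to0$. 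Thus $L(f^{(k)})\to L(f)$ in $d_{\mathrm{SL}}$, and because $\mathcal A$ is complete for $d_{\mathrm{SL}}$ (\hyperref[rmk3]{\textit{Remark 3}}) and the hermitian operators form a $d_{\mathrm{SL}}$-closed real subspace, the limit $L(f)$ lies in $\mathcal A^H$.

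That $\{L(f)\mid f\in C^\infty(S^1)\}$ is a strong SL-algebra, with sequence $\widetilde L(f)$ attached to each $L(f)$, is exactly \hyperref[thm4.6]{\textbf{Theorem 4.6}}. What remains is maximality: no $L(g)$ with $\|g\|_{\frac32}<\infty$ but $g\notin C^\infty$ can be adjoined while preserving the strong SL-algebra property. I would argue by contradiction. If $L(g)$ belonged to such an algebra, then by \hyperref[dfn7]{\textbf{Definitions 7}}--\hyperref[dfn9]{\textbf{9}} every product of generators, in particular every power $L(g)^k$, must define an element of $\mathcal A$; concretely the formal sums $(u,L(g)^k v)$ must converge absolutely for all $u,v\in\mathcal V$ and all $k$. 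Taking $u=v=1$ and using hermiticity forces $\|L(g)^k 1\|^2=(1,L(g)^{2k}1)<\infty$ for every $k$, and the goal is to show that finiteness of \emph{all} these norms forces $g\in C^\infty$.

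To carry this out I would use the Sobolev characterization $C^\infty(S^1)=\bigcap_s H^s(S^1)$: non-smoothness of $g$ means $\sum_n|\hat g_n|^2(1+|n|)^{2s_0}=\infty$ for some finite $s_0$. On the vacuum the energy bound is essentially sharp, since $\|L_{-n}1\|^2=\tfrac{c}{12}n(n^2-1)$, and iterating it produces a two-sided estimate in which $\|L(g)^k 1\|^2$ is comparable to a weighted sum $\sum_n|\hat g_{-n}|^2(1+|n|)^{2\sigma_k}$ with exponents $\sigma_k\uparrow\infty$ as $k\to\infty$ (each additional factor of $L(g)$ consuming a fixed amount of Sobolev regularity, reflecting the exponent $\tfrac32$). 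Choosing $k$ with $\sigma_k>s_0$ then gives $\|L(g)^k 1\|=\infty$, contradicting the requirement above; hence $g$ must be smooth, and $\{L(f)\mid f\in C^\infty\}$ is maximal.

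The routine half of this program is the inclusion and the upper (convergence) estimates, which follow mechanically from the energy bound and \hyperref[rmk3]{\textit{Remark 3}}. The genuine obstacle is the final maximality step: the energy bound supplies only upper bounds, so establishing the matching lower bound $\|L(g)^k 1\|^2\gtrsim\sum_n|\hat g_{-n}|^2(1+|n|)^{2\sigma_k}$ with $\sigma_k\to\infty$ requires controlling the Gram matrix of the non-orthogonal descendants $L_{-m_1}\cdots L_{-m_k}1$ and showing that the positive diagonal contributions are not cancelled by cross terms. This is where the real work lies; the clean statement that $C^\infty$ is exactly the maximal admissible class is a manifestation of the energy-bound exponent $\tfrac32$ being sharp.
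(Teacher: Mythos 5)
Your inclusion argument is correct, and it takes a mildly different route from the paper: you approximate $L(f)$ by the Fourier truncations $L(f^{(k)})$ (each in $\mathcal{A}^H$ by \hyperref[thm4.2]{\textbf{Theorem 4.2}}), verify Cauchyness in the seminorms $\|P^mOP^m\|$ via the energy bound, and invoke the Fr\'echet completeness and diagonal argument of \hyperref[rmk3]{\textit{Remark 3}}. The paper instead exhibits a concrete sequence directly, $O_n=\widetilde{L}(f)_{\le\log(n)}\in\mathcal{A}_n^H$, and checks SL-convergence by an analysis easier than \hyperref[lem4.5]{\textbf{Lemma 4.5}}; your soft argument buys brevity at the cost of routing through the completeness machinery, but both are legitimate.

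The maximality half, however, has a genuine gap, and it is exactly the one you flagged yourself: your plan tests $L(g)$ against its own powers, so you must bound $\|L(g)^k1\|^2=(1,L(g)^{2k}1)$ from \emph{below} by $\sum_n|\hat g_{-n}|^2(1+|n|)^{2\sigma_k}$ with $\sigma_k\to\infty$. Because $L(g)^{2k}1$ expands into highly non-orthogonal descendants $L_{-m_1}\cdots L_{-m_j}1$ (plus commutator and central terms from the positive modes), controlling the Gram matrix and excluding cancellations is genuinely hard, and nothing in your sketch accomplishes it; as written, the contradiction is never reached. The missing idea is to test against powers of $L_0$ instead: since the constant function is smooth, $L_0=L(1)$ lies in any candidate enlarged algebra, so the products $L(g)L_0^{2k}L(g)$ must be defined at $(1,1)$. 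Now use that $L_n1=0$ for $n\ge-1$, so $L(g)1=\sum_{n\ge2}\hat g_{-n}L_{-n}1$ is a sum of \emph{mutually orthogonal} vectors with exactly known norms $\|L_{-n}1\|^2=\tfrac{c}{12}n(n^2-1)$, and $L_0^{2k}$ acts diagonally on them. Hence
\begin{equation*}
\|L_0^kL(g)1\|^2=(1,L(g)L_0^{2k}L(g)1)=\tfrac{c}{12}\sum_{n\ge2}n^{2k+1}(n^2-1)\,|\hat g_{-n}|^2,
\end{equation*}
with no cross terms whatsoever, and finiteness of this for every $k$ forces $|\hat g_{-n}|$ (and, by hermiticity $\hat g_n=\overline{\hat g_{-n}}$, all Fourier coefficients) to decay faster than any polynomial, i.e.\ $g\in C^\infty(S^1)$. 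This is precisely the paper's one-line estimate of $\|L_0^kL(f)1\|$; note also that what drives the result is this requirement that products with $L_0$ stay defined, not any sharpness of the exponent $\tfrac32$ in the energy bound, which is a red herring in your closing remark.
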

\begin{proof}
For the maximality part, one has only to estimate the norm of $L_0^kL(f)1$ for any $k \in \mathbb{N}$. This would imply that the Fourier series of $f$ must be rapidly decreasing and therefore $f \in C^{\infty}(S^1)$.

It is also clear, by some analysis easier than \hyperref[lem4.5]{\textbf{Lemma 4.5}}, that any $L(f)$, with $||f||_{\frac{3}{2}}<\infty $, is obtainable as a sequence by choosing (e.g.) $O_n=\widetilde{L}(f)_{\le \log(n)}$. The rest was done in the previous theorem.
\end{proof}
The next theorem generalizes to all fields. We list three facts \cite{carpi2015vertex}
\begin{itemize}
    \item In a UMM, the descendants of $\omega$ span the VOA.
    \item Due to the Virasoro algebra identities, all descendants of $\omega$ can be obtained only by applying operators $L_n(n\ge-2)$.
    \item In a UMM, all fields are energy bounded:
    $$ ||Y(a,f)v|| \le C_{a}||f||_{r_a} ||(L_0+\textbf{1})^{s_a}v||$$
\end{itemize}
\begin{thm}\label{thm4.8}
$\{Y(a,f) | \ a \in \mathcal{V}, f \in C^\infty(S^1)\} \subset \mathcal{A}$ generates an SL-algebra.
\end{thm}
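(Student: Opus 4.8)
The plan is to show that every element of the algebra generated by $\{Y(a,f)\}$ — that is, every finite linear combination of products $Y(a_1,f^{(1)})\cdots Y(a_k,f^{(k)})$ — is a well-defined sesquilinear form on $\mathcal{V}\times\mathcal{V}$ lying in $\mathcal{A}$. Well-definedness on $\mathcal{V}\times\mathcal{V}$ is immediate from the energy bound (third listed fact) together with the product estimate of \cite[Lemma 3.2.1]{weiner2007conformal}, exactly as in the proof of \hyperref[thm4.6]{\textbf{Theorem 4.6}}: applied to a finite-energy vector, each $Y(a_j,f^{(j)})$ outputs a vector of controlled norm, so the iterated product converges. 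By linearity of $Y(\cdot,f)$ in its first slot and multilinearity of the product, it suffices to treat $a_j$ drawn from a spanning set of $\mathcal{V}$; using the first two facts I take each $a_j$ to be a monomial descendant $L_{-n_1}\cdots L_{-n_r}\mathbf{1}$ with $n_i\ge 1$.

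The structural input is the Borcherds identity. Writing $L_m=\omega_{m+1}$ and applying the mode form of the identity repeatedly, every mode $\mathbf{y}(a)_m$ of such a monomial $a$ is a \emph{finite} combination of products $L_{j_1}\cdots L_{j_r}$ whose total weight shift equals that of $\mathbf{y}(a)_m$. Moreover, for any energy cutoff $M$ the truncation $P^M Y(a,f)P^M=\sum_{m}\hat{f}_m\,P^M\mathbf{y}(a)_mP^M$ is a \emph{finite} sum: since $\mathbf{y}(a)_m$ shifts weight by $-m$, only $|m|\le M$ can map level $\le M$ into level $\le M$. Hence $Y(a,f)$, cut off at any energy, is a finite noncommutative polynomial in the Virasoro modes $L_j$ with $|j|\le M$, with coefficients built from $\hat{f}$. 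This is precisely what lets me reduce to the operators already controlled by \hyperref[thm4.6]{\textbf{Theorem 4.6}}.

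I would then define the finite-stage approximants $\widetilde{Y}(a,f)_n$ by substituting $L_j\mapsto\widetilde{L}_j$ into this Borcherds polynomial, truncated at a slowly growing cutoff $M(n)=\sqrt[c]{n}$ for a suitable $c$ depending on $r$ and $k$, and prove $\widetilde{Y}(a,f)_n\xrightarrow{SL}Y(a,f)$ following the template of \hyperref[lem4.5]{\textbf{Lemma 4.5}}: split each factor into its $\le M(n)$ and $>M(n)$ energy pieces, discard the high-energy tails using the norm bound $\|\widetilde{L}_j\|\le O(n^2)$ (\hyperref[rmk7]{\textit{Remark 7}}) against the rapid decay of $\hat{f}$ (smoothness of $f$) and the energy bound, and on the low-energy piece replace each $\widetilde{L}_j$ by $L_j$ up to the $O(1/n)$ error supplied by \hyperref[thm3.1]{\textbf{Theorem 3.1}}. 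Membership of the single operators $Y(a,f)$ in $\mathcal{A}$ follows, and for the products one runs the telescoping triangle-inequality argument of \hyperref[thm4.6]{\textbf{Theorem 4.6}} verbatim, estimating the intermediate terms with the same $y_t^1,y_t^2$ decomposition.

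The main obstacle is the compounding of approximation errors. Substituting $\widetilde{L}_j$ into a degree-$r$ Borcherds polynomial multiplies together up to $r$ operators each of norm $O(n^2)$ that satisfy the Virasoro relations — and hence the Borcherds identity — only up to $O(1/n)$ corrections in the scaling limit. One must therefore verify that the accumulated relational errors, amplified by this polynomial norm growth and by the number of monomials produced when iterating Borcherds, are still dominated by the tail decay coming from the smoothness of $f$. This is the same low-high versus high-high energy-mixing balance discussed after \hyperref[dfn7]{\textbf{Definition 7}}, and it is the reason the conclusion is only that $\{Y(a,f)\}$ generates an SL-algebra rather than a \emph{strong} one: the associated sequences converge for each fixed product, but the uniform control required by \hyperref[dfn9]{\textbf{Definition 9}} is not available once descendants of arbitrary depth enter.
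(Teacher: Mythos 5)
Your plan is, in outline, the same as the paper's: well-definedness of products via the energy bounds and \cite[Lemma 3.2.1]{weiner2007conformal}, reduction to descendants of $\omega$, the Borcherds identity to express modes of descendants through Virasoro modes, substitution of the $\widetilde{L}_j$ at a slowly growing cutoff, tails killed by playing $\|\widetilde{L}_j\|\le O(n^2)$ against the rapid decay of $\hat{f}$, and a \hyperref[thm4.6]{\textbf{Theorem 4.6}}-style telescoping argument for the products; the paper organizes the middle steps as an induction on $a=L_{-2}b$, with $a=L_rb$, $r\ge-1$, as the easy cases.

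There is, however, one false step, and it is exactly the point the paper's extra bookkeeping exists to handle. For $a=L_{-2}b$ the mode form of the Borcherds identity ($p=0$, $q=-1$) is
$$\mathbf{y}(L_{-2}b)_m=\sum_{j=0}^{\infty}\bigl(L_{-2-j}\,\mathbf{y}(b)_{m+j+2}+\mathbf{y}(b)_{m-j+1}L_{j-1}\bigr),$$
an \emph{infinite} sum, so your claim that each mode $\mathbf{y}(a)_m$ is a finite combination of products of Virasoro modes is wrong precisely for the descendants $L_{-n_1}\cdots L_{-n_r}\mathbf{1}$ that you reduce to. The sum becomes finite only after composing with a projection $P^E$ (all terms vanish once $j>E-\max\{-1,m+2\}$, the paper's equation \hyperref[eq12]{(12)}); your separate observation that only $|m|\le M$ survives a two-sided truncation controls the sum over the mode index $m$, not this internal sum over $j$. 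The consequences are not cosmetic: the finite-stage approximant must carry an energy label $E$, its number of terms grows with $E$ at each level of descent, and, crucially, its energy shift is then only bounded by $K_a(E+|m|)$, i.e. it depends on the cutoff itself and not just on $m$. Since you truncate at $M(n)=\sqrt[c]{n}$ anyway, the gap is repairable by your own device, and the paper's induction hypotheses (QC-locality, shift bound $K_a(E+|m|)$, norm bound $n^{e_a}$, approximation up to $O(1/n^{g_a})$) are exactly the bookkeeping that repair produces. But note that it is this energy-dependence of the shift (\hyperref[rmk11]{\textit{Remark 11}}), rather than accumulation of relational errors or ``descendants of arbitrary depth,'' that forces the cutoffs $E_j$ to be tuned to each individual product and is the paper's actual reason for obtaining only an SL-algebra rather than a strong one.
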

\begin{rmk}\label{rmk10}
The algebra is also local but in the quantum computation sense of locality (QC-locality) where product of a constant number of $e_i$s far apart from each other still counts as local. But it must be observed that if a field is space local, then its derivative ($Y(L_{-1}a,z)=[L_{-1},Y(a,z)]$) is also space local.
\end{rmk}
\begin{proof} It is not hard to show using the same approach in \cite[Lemma 3.2.1]{weiner2007conformal} that
$$||Y(a_1,f^{(1)})\ldots Y(a_k,f^{(k)})v|| \le \alpha||(L_0+\textbf{1})^{\sum s_{a_i}}v||,$$
where $\alpha$ depends on $f^{(j)}$s and $k$. Thus the set $Y(a,f)$ has all products defined on $\mathcal{V}$.
We will proceed by induction. 

Choose a basis with descendants. Then, for each field $L_{i_r}\ldots L_{i_1}\omega=a$ an induction will be performed on $r$. Hence, assume hypotheses have been shown to hold for the field $b$ and we wish to prove the same for $a=L_{-2}b$; the Borcherds identity shows that this is the hardest case and $L_rb$ for $r\ge -1$ are easier and will be described later.

For the field $a$, we want to obtain operators $\widetilde{\textbf{y}}_E(a)_m$ for energy $E$ and mode $m$ with the following hypotheses:

There exist $d_a$ such that for all $E \le n^{d_a}$ there are operators $\widetilde{\textbf{y}}_E(a)_m$ satisfying
\begin{itemize}
    \item $\widetilde{\textbf{y}}_E(a)_m$ is generated by the $e_i$s QC-locally; i.e there is some constant $p_a$ such that $\widetilde{\textbf{y}}_E(a)_m$ is $p_a$ QC-local and $p_a$ is independent of $m$ and $E$.
    \item $\exists v_a<d_a$ such that for any $m$ and $n^{v_a} \ge |m|$, $\widetilde{\textbf{y}}_E(a)_m|_E$ provides an energy shift at most $K_a(E+|m|)$ for some constant $K_a \ge 1$.
    \item $\widetilde{\textbf{y}}_E(a)_m$ has norm at most $n^{e_a}$ where $e_a$ depends on $a$.
    \item There exist a constant $g_a>0$ such that 
    \begin{align}\label{eq11}
    \widetilde{\textbf{y}}_E(a)_m=\textbf{y}(a)_m|_{E}+O(\frac{1}{n^{g_a}})+R_{E,n}^{a,m},\text{ for }n^{v_a} \ge |m|
    \end{align}
        where $\textbf{y}(a)_m|_{E}$ is the restriction of $\textbf{y}(a)_m$ in the VOA to energy at most $n^{d_a}$ but acting on $\mathcal{W}_n$ via pushback. $O(\frac{1}{n^{g_a}})$ should be regarded as the error in the approximation of $\widetilde{\textbf{y}}_E(a)_m|_{E}$ by ${\textbf{y}}(a)_m|_{E}$, and it has norm at most $O(\frac{1}{n^{g_a}})$. Finally, the last term is $R_{E,n}^{a,m}=\widetilde{\textbf{y}}_E(a)_m(\textbf{1}-P^E)$.
\end{itemize}
Notice the last hypothesis implies the same for restriction of energy to any $E' \le E$ since projection to energy $E'$ has norm at most $1$ and the rest will mix with $R_{E,n}^{a,m}$. Further, the base of induction $\omega$ is essentially done. For $E \le n^\frac{1}{4}$ and any $\sqrt[4]{n} \ge |m|$, as $\widetilde{\textbf{y}}_E(\omega)_m=\widetilde{L}_m$ provides an energy shift of at most $|m|$ for any mode $m$, in other words, at most $1 \times (E+|m|)$.

Suppose the hypotheses are true for $b$ and $a=L_{-2}b$. In \hyperref[1.1]{Borcherds identity}, putting $p=0$ and $q=-1$, and some index shifting gives
$${\textbf{y}}(a)_m={\textbf{y}}(L_{-2}b)_m=\sum_{j=0}^{\infty}(L_{-2-j}{\textbf{y}}(b)_{m+j+2}+{\textbf{y}}(b)_{m-j+1}L_{j-1}),$$
which is an infinite sum but when restricted to energy $E\le n^{d_a}$, where $d_a$ will be determined, the summation above will be finite and summed to some $j$. Indeed, as shown in \cite{carpi2015vertex}, consider the projection ${\textbf{y}}(a)_mP^{E}$. Then the first term is always zero when $E-(m+j+2)<0$ and the second term is always zero when $E-(j-1)<0$. So both are zero when $E<j+\max\{-1,m+2\}$. Hence
$${\textbf{y}}(a)_m|_E= \sum_{j=0}^{E-\max\{-1,m+2\}}(L_{-2-j}{\textbf{y}}(b)_{m+j+2}P^{E}+{\textbf{y}}(b)_{m-j+1}L_{j-1}P^{E}).$$
Putting redundant projections in the middle of the operators leads to ${\textbf{y}}(a)_m|_E=$
\begin{gather}\label{eq12}
\scalebox{0.86}{$\sum\limits_{j=0}^{E-\max\{-1,m+2\}}(L_{-2-j}|_{E+K_b(E+|m+j+2|)}{\textbf{y}}(b)_{m+j+2}|_E+{\textbf{y}}(b)_{m-j+1}|_{E+|j-1|}L_{j-1}|_{E}).$}
\end{gather}
This will be important as the last induction hypothesis for $\omega$ and $b$ will be applied separately. Based on the above identities, our choice for $\widetilde{\textbf{y}}_E(a)_m$ will be 
\begin{align}\label{eq13}
\widetilde{\textbf{y}}_E(a)_m=   \sum_{j=0}^{E-\max\{-1,m+2\}}(\widetilde{L}_{-2-j}\widetilde{\textbf{y}}_{E}(b)_{m+j+2}+\widetilde{\textbf{y}}_{E+|j-1|}(b)_{m-j+1}\widetilde{L}_{j-1})
\end{align}
where we recall that $\widetilde{\textbf{y}}_E(\omega)_j=\widetilde{L}_j$ for all $E$. 

The first hypothesis obviously holds as $p_a\le p_w+p_b$ for all $m$. One can easily see why only QC-locality can be proved.

For the second hypothesis, assume $v_a<d_a$ and smaller than $v_b$ ($<d_b$). This allows us to apply the hypothesis on $\widetilde{\textbf{y}}_{E}(b)_{m+j+2}$, i.e. to have
$$n^{v_b} \ge (2n^{v_a}+n^{d_a}+4) \ge (|m|+|j|+2).$$
This will be one of the restrictions on $v_a$ and $d_a$. At the end of the argument, choosing $v_a<d_a << d_b,g_b,v_b$ will be shown to be enough. Given $E \le n^{d_a}$, the hypothesis for $b$ implies that $\widetilde{\textbf{y}}_E(b)_{m+j+2}$ and $\widetilde{\textbf{y}}_{E+|j-1|}(b)_{m-j+1}$ will provide an energy shift at most  $K_b(|m+j+2|+E)$ and $K_b(|m-j+1|+E+|j-1|))$ which added to the energy shift of the Virasoro operators is at most  $|j+2|+K_b(|m+j+2|+E)$ and $K_b(|m-j+1|+E+|j-1|)+|j-1|$. Since $|j| \le E+|m|+2$, there is a constant $K_a$ such that the energy shift is at most $K_a(E+|m|)$.

The third hypothesis is easy to check as this rough estimate for all $m$ holds
$$|\widetilde{\textbf{y}}_E(a)_m| < n^{d_a+v_a+3}n^{e_b+e_\omega}.$$ 
Implying that $e_a$ can be chosen $d_a+v_a+e_b+e_\omega+3$.

For the last hypothesis, from the equations \hyperref[eq12]{(12)} and \hyperref[eq13]{(13)}, the sum of the approximations given by the hypothesis for $\omega$ and $b$ is ${\textbf{y}}(a)_m|_E$. But what about the other terms? Consider $\widetilde{L}_{-2-j}\widetilde{\textbf{y}}_{E}(b)_{m+j+2}$ which is the product
$$(L_{-2-j}|_{E+K_b(E+|m+j+2|)}+O(\frac{1}{n^{g_\omega}})+R_{E+K_b(E+|m+j+2|),n}^\omega)$$
$$({\textbf{y}}(b)_{m+j+2}|_{E}+O(\frac{1}{n^{g_b}})+R_{E,n}^b),$$
where the superscript for $R$s indicating the mode is dropped as it is clear from the context. In each parenthesis, the first two terms are $\widetilde{L}_{-2-j}|_{E+K_b(E+|m+j+2|)}$ and $\widetilde{\textbf{y}}_{E}(b)_{m+j+2}|_{E}$ respectively. The term that contributes to ${\textbf{y}}(a)_m|_E$ is precisely
$$L_{-2-j}|_{E+K_b(E+|m+j+2|)}{\textbf{y}}(b)_{m+j+2}|_{E}.$$
What contributes to $R_{E,n}^{a,m}$ is also clear
$$(L_{-2-j}|_{E+K_b(E+|m+j+2|)}+O(\frac{1}{n^{g_\omega}})+R_{E+K_b(E+|m+j+2|),n}^\omega)R_{E,n}^b,$$
which is indeed an operator with restriction to energy higher than $E$. 
Further, due to energy restrictions
$$R_{E+K_b(E+|m+j+2|),n}^\omega ({\textbf{y}}(b)_{m+j+2}|_{E}+O(\frac{1}{n^{g_b}}))=0.$$
The only terms remaining should contribute to $O(\frac{1}{n^{g_a}})$:
\begin{itemize}
    \item $O(\frac{1}{n^{g_\omega}})O(\frac{1}{n^{g_b}})$
    \item $O(\frac{1}{n^{g_\omega}}){\textbf{y}}(b)_{m+j+2}|_{E}$
    \item $L_{-2-j}|_{E+K_b(E+|m+j+2|)}O(\frac{1}{n^{g_b}})$
\end{itemize}
We need to show that while choosing $g_a$ appropriately. Also, in addition to this analysis, one has to analyze the product $\widetilde{\textbf{y}}_{E+|j-1|}(b)_{m-j+1}\widetilde{L}_{j-1}$ for all $0 \le j \le E-\max\{-1,m+2\}$.  But $E \le n^{d_a}$ and further, $|m| \le n^{v_a}$. It is not hard to see that although there are so many terms contributing to what should be $O(\frac{1}{n^{g_a}})$, by choosing $v_a,d_a$ small enough compared to $g_b,g_\omega$, the approximation will be of the form $O(\frac{1}{n^{g_a}})$ with norm at most $O(\frac{1}{n^{g_a}})$.

The first term $O(\frac{1}{n^{g_\omega}})O(\frac{1}{n^{g_b}})$ is $O(\frac{1}{n^{g_a+g_\omega}})$ with the obvious bounded norm. As mentioned in the last paragraph, this will show up many times and so, there is one restriction here on $d_a,v_a$.

As for the second term, ${\textbf{y}}(b)_{m+j+2}|_{E}$ is a bounded operator with norm at most 
$$C_{b}(|m+2+j|+1)^{r_b}(E+1)^{s_b} \le C_{b}(n^{d_a}+2n^{v_a}+4+1)^{r_b}(n^{d_a}+1)^{s_b}.$$
In other words, the norm is bounded by
$$O(\frac{(n^{d_a}+2n^{v_a}+5)^{r_b}(n^{d_a}+1)^{s_b}}{n^{g_\omega}}).$$
Hence small enough $d_a,v_a$ can deliver the desired result.
The story for the third term is similar, energy bound is used for $L_{-2-j}|_{E+K_b(E+|m+j+2|)}$ 
$$C_{b}(|2+j|+1)^{\frac{3}{2}}(E+K_b(E+|m+j+2|)+1) \le$$
$$C_{b}(n^{d_a}+n^{v_a}+4+1)^{\frac{3}{2}}(n^{d_a}+K_b(2n^{d_a}+2n^{v_a}+4)+1).$$
The norm is bounded by
$$O(\frac{(n^{d_a}+n^{v_a}+5)^{\frac{3}{2}}(n^{d_a}+K_b(2n^{d_a}+2n^{v_a}+4)+1)}{n^{g_b}})$$
which is another restriction on how small $d_a,v_a$ have to be.

One can handle the product $\widetilde{\textbf{y}}_{E+|j-1|}(b)_{m-j+1}\widetilde{L}_{j-1}$ in a similar way. The induction is finished when $a=L_{-2}b$.

If $a=L_rb$ for any $r>-2$, then the Borcherds identity would be
$$(L_{r}b)_m=\sum\limits_{j=0}^{\infty} (-1)^j \binom{r+1}{j}\Big(L_{r-j}b_{m+j}-(-1)^{r+1}b_{r+1+m-j}L_{j-1}  \Big)$$
and the treatment of this case is easier since the summation is finite; for $j>r+1\ge0$ we have $\binom{r+1}{j}=0$. All the properties described in the induction can be proved here as well. Therefore, the induction is fully proved.

It remains to show that $\{Y(a,f) | a \in \mathcal{V}, f \in C^\infty(S^1)\}$ generates an SL-algebra. Using the properties in the induction hypotheses, it can be seen that the proof is nothing but a more involved version of \hyperref[thm4.6]{\textbf{Theorem 4.6}}.

To get the product $\prod_{j=1}^k Y(a_j,f^{(j)})$ in the scaling limit, the operators $$\widetilde{Y}_{E_j}(a_j,f^{(j)})=\sum_m \hat{f}^{(j)}_m\widetilde{\textbf{y}}_{E_j}(a_j)_m \in \mathcal{A}_n$$
have to be chosen where $E_j$s need to be determined carefully by taking into account the constants in the energy bound inequalities for all $a_j$s, and also all other constants, notably $d_{a_j}$s and $g_{a_j}$s, so that we can use the approximation provided by the last hypothesis. It is clear that the choice of $E_j$s will not be universal and depends on the product. They will also not be equal due to the second hypothesis and will be very small compared to all other constants.
\end{proof}

\begin{rmk}\label{rmk11}
The reason we could not obtain smeared fields as a \textit{strong} SL-algebra generating set is the dependence of the energy shift on the energy itself. If somehow all vectors were obtained by only applying $L_r, r > -2$ (because of the finite sum) or if we knew that the base of induction $\widetilde{L}_m$ shifts the energy \textbf{exactly} by $m$, this issue would not be present.
\end{rmk}
\begin{rmk}\label{rmk12}
One would wish to get the hermitian fields giving self-adjoint $Y(a,f)$, as a scaling limit of hermitian observables generated by the $e_i$s. Descendants of even degree of $\omega$ are hermitian if and only if they are quasi-primary. Also, as \cite{carpi2015vertex} demonstrates, quasi-primary hermitian fields \textit{generate} (not span) any unitary VOA. Further, we could not find any exact formula or general description of these fields. But \textbf{a generating set} of quasi-primary hermitian fields can exist which have a corresponding sequence coming from (a generating set formed by hermitian observables, i.e.) $\mathcal{A}_n^H$. For UMMs, that generating set is $\{\omega\}$, for which there is a corresponding sequence from $\mathcal{A}_n^H$.
\end{rmk}

\subsection{Local conformal nets observables}\label{4.2}

\hfill \break
In this section, bounded operators in the LCN framework are recovered. Recall that for UMMs, the observables algebra on an interval $I$ is given by $\{e^{iL(f)}| \ \text{supp}(f) \subset I \}''$ \cite{kawahigashi2004classification}.

From results of the previous section, the following is immediate
\begin{cor}\label{cor4.9} 
The sequence of observables below give a strong SL-algebra:
$$e^{i\widetilde{L}(f)} \xrightarrow{SL} e^{iL(f)}.$$
\end{cor}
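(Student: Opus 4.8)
The plan is to deduce this corollary from Lemma 4.5 and Theorem 4.6 by upgrading the strong convergence of generators $\widetilde L(f)\xrightarrow{SL}L(f)$ to convergence of their unitary exponentials, and then bootstrapping to products via the uniform bound $\|e^{i\widetilde L(f)}\|=1$. Throughout I take $f$ real-valued, as in the LCN setting, so that $\widetilde L(f)\in\mathcal A_n^H$ (Lemma 4.4) is hermitian and $e^{i\widetilde L(f)}$ is a genuine unitary on the finite-dimensional space $\mathcal W_n$; its push-forward (Remark 6) is a unitary on $\overline{\mathcal V}$, so the sequence $e^{i\widetilde L(f)}$ is uniformly bounded and lies in $\mathcal A_n$. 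Likewise $L(f)$ is self-adjoint and $e^{iL(f)}$ unitary.

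First I would establish the $k=1$ statement: strong operator convergence $e^{i\widetilde L(f)}\to e^{iL(f)}$. The push-forwards $\widetilde L(f)$ form a sequence of bounded self-adjoint operators on $\overline{\mathcal V}$ which, by the refined conclusion of Lemma 4.5 (namely $\|(\widetilde L(f)-L(f))v\|\to0$ for each $v\in\mathcal V$), converge on the finite-energy core $\mathcal V$ to $L(f)$. Since $L(f)$ is essentially self-adjoint on $\mathcal V$ for unitary minimal models --- this is part of the construction of the Virasoro nets, see \cite{weiner2007conformal,carpi2015vertex,kawahigashi2004classification} --- strong convergence of self-adjoint operators on a common core gives strong resolvent convergence, and the Trotter--Kato theorem then yields $e^{i\widetilde L(f)}v\to e^{iL(f)}v$ for all $v\in\mathcal V$, hence for all $v\in\overline{\mathcal V}$ by the uniform bound. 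In particular $(u,e^{i\widetilde L(f)}v)\to(u,e^{iL(f)}v)$, which is exactly $e^{i\widetilde L(f)}\xrightarrow{SL}e^{iL(f)}$.

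It remains to verify the strong SL-algebra condition of \hyperref[dfn9]{\textbf{Definition 9}}: for any $f^{(1)},\dots,f^{(k)}$ and $u,v\in\mathcal V$,
$$\lim_{n\to\infty}\Big(u,\prod_{j=1}^k e^{i\widetilde L(f^{(j)})}\,v\Big)=\Big(u,\prod_{j=1}^k e^{iL(f^{(j)})}\,v\Big).$$
I would prove $\prod_j e^{i\widetilde L(f^{(j)})}\to\prod_j e^{iL(f^{(j)})}$ strongly by induction on $k$, using the standard fact that a product of uniformly bounded, strongly convergent operator sequences converges strongly. Writing $U_n^{(j)}=e^{i\widetilde L(f^{(j)})}$, $U^{(j)}=e^{iL(f^{(j)})}$, $P_n=\prod_{j\ge2}U_n^{(j)}$, $P=\prod_{j\ge2}U^{(j)}$ and grouping off the first factor,
$$U_n^{(1)}P_n v-U^{(1)}Pv=U_n^{(1)}(P_n-P)v+(U_n^{(1)}-U^{(1)})Pv,$$
the first term vanishes by the induction hypothesis together with $\|U_n^{(1)}\|=1$, and the second by the $k=1$ case applied to the fixed vector $Pv\in\overline{\mathcal V}$. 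Taking inner products with $u$ gives the displayed limit, so $\{e^{iL(f)}\}$ with associated sequences $e^{i\widetilde L(f)}$ is a strong SL-algebra.

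The main obstacle is the functional-analytic step in the second paragraph: passing from $\widetilde L(f)\to L(f)$ on the core $\mathcal V$ to convergence of the exponentials. A naive Duhamel expansion fails because $e^{iL(f)}$ does not preserve finite-energy vectors, so the generator difference $\widetilde L(f)-L(f)$ cannot be applied directly; this is precisely why essential self-adjointness of $L(f)$ and the resolvent formulation (Trotter--Kato) are needed rather than an elementary estimate. All remaining steps are soft, relying only on the uniform unitarity bound and the results already proved in \hyperref[4.1]{Section 4.1}.
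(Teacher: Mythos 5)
Your proposal is correct and follows essentially the same route as the paper: the paper's proof is precisely a direct application of the Trotter--Kato approximation theorem to the generator convergence established in Section 4.1, with the strong SL-algebra property for products following from uniform boundedness of the unitaries involved. Your write-up merely makes explicit the inputs the paper leaves implicit (the strong convergence $\|(\widetilde{L}(f)-L(f))v\|\to 0$ from Lemma 4.5, essential self-adjointness of $L(f)$ on the finite-energy core, and the telescoping induction for products), which is a faithful elaboration rather than a different argument.
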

\begin{proof}
This is a direct application of the Trotter-Kato approximation theorem (see e.g. \cite{engel2006short}) on \hyperref[thm4.6]{\textbf{Theorem 4.6}}. The fact that the scaling limit is a strong SL-algebra is simply due to the uniform boundedness of the operators involved, all being unitary.
\end{proof}

\begin{cor}\label{cor4.10}
All operators in $\{e^{iL(f)}\}''$ are in $\mathcal{A}$ giving a strong SL-algebra.
\end{cor}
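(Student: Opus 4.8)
The plan is to bootstrap from \hyperref[cor4.9]{\textbf{Corollary 4.9}} --- which already places the unitaries $e^{iL(f)}$ in $\mathcal{A}$ and realizes them as a strong SL-algebra with approximating sequences $e^{i\widetilde{L}(f)}$ --- using the von Neumann double-commutant theorem and the closedness properties of $\mathcal{A}$ recorded in \hyperref[rmk3]{\textit{Remark 3}}. First I would observe that the unital $*$-algebra $\mathcal{P}$ generated by the $e^{iL(f)}$ (finite $\mathbb{C}$-linear combinations of finite products; note $(e^{iL(f)})^\dagger=e^{iL(-f)}\in\mathcal{P}$ for real $f$) lies in $\mathcal{A}$: the set is a vector space, and the \emph{strong} SL-algebra conclusion of \hyperref[cor4.9]{\textbf{Corollary 4.9}} shows each product is the scaling limit of the corresponding product of the sequences $e^{i\widetilde{L}(f)}$, these products converging precisely because every factor is unitary, hence uniformly bounded by $1$.

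Next I would invoke the double-commutant theorem to identify $\{e^{iL(f)}\}''$ with the weak-operator closure of $\mathcal{P}$ on the completion $\overline{\mathcal{V}}$, and transfer this closure to the Fréchet structure of $\mathcal{A}$. Since $\mathcal{V}$, hence $\overline{\mathcal{V}}$, is separable, the weak operator topology is metrizable on norm-bounded balls, so by Kaplansky's density theorem every $T\in\{e^{iL(f)}\}''$ is the weak-operator limit of a \emph{sequence} $T_k\in\mathcal{P}$ satisfying the uniform bound $\|T_k\|\le\|T\|$. For each fixed $n$ the compression $P^nT_kP^n$ lives on the finite-dimensional space $\mathcal{V}^{\lambda_n}$, where the weak operator topology coincides with the norm topology; combined with the uniform bound this makes $(T_k)_k$ Cauchy in every seminorm $\|\cdot\|_n$, hence Cauchy for $d_{\text{SL}}$. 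By the completeness of $\mathcal{A}$ from \hyperref[rmk3]{\textit{Remark 3}} the limit lies in $\mathcal{A}$, and comparing compressions identifies it with $T$. This yields $\{e^{iL(f)}\}''\subset\mathcal{A}$, the SL-algebra statement for the whole von Neumann algebra.

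For the \emph{strong} SL-algebra conclusion I would furnish, for each $T$ in the algebra, a genuine finite-stage sequence $T_n\in\mathcal{A}_n$ whose products converge. Here I would use the \emph{strong}-operator form of Kaplansky's theorem: by separability, $T$ is the strong-operator limit of a sequence $T_k\in\mathcal{P}$ with $\|T_k\|\le\|T\|$, and each $T_k$ carries its own finite-stage sequence built from the $e^{i\widetilde{L}(f)}$ --- all uniformly bounded by approximately $\|T\|$ and strongly convergent on $\mathcal{V}$ (the latter by the norm estimate of \hyperref[lem4.5]{\textbf{Lemma 4.5}} together with Trotter--Kato). A diagonal selection over $k$ and the stage index then produces $T_n\in\mathcal{A}_n$ with $T_nw\to Tw$ in norm for every $w\in\mathcal{V}$ and $\sup_n\|T_n\|<\infty$. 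Because strong convergence of the factors together with a uniform operator-norm bound forces convergence of products --- via $A_nB_nw=A_n(B_nw-Bw)+A_n(Bw)\to ABw$ --- the requirement of \hyperref[dfn9]{\textbf{Definition 9}} is met.

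The step I expect to be the main obstacle is the diagonal bookkeeping in the last paragraph: one must arrange the double approximation (first of $T$ by $T_k\in\mathcal{P}$, then of each $T_k$ by its finite-stage sequence) so that a single sequence $T_n$ simultaneously converges strongly to $T$ on a spanning set of $\mathcal{V}$ and keeps $\sup_n\|T_n\|$ finite. The uniform norm control supplied by Kaplansky's theorem is exactly what rescues this: it is the failure of such control that produced the pathological products $O_{n,1},O_{n,2}$ in \hyperref[2.2]{section 2.2}, and once the norms are bounded independently of $n$ the remainder is a uniformly-bounded, strongly-convergent variant of the estimates already carried out in \hyperref[thm4.6]{\textbf{Theorem 4.6}}.
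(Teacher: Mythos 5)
Your route is the same as the paper's --- double commutant, Kaplansky density, then a diagonal argument over the two levels of approximation --- and your first two paragraphs are sound: $\mathcal{P}\subset\mathcal{A}$ follows from \hyperref[cor4.9]{\textbf{Corollary 4.9}}, and passing to the weak closure via the Fr\'echet completeness of \hyperref[rmk3]{\textit{Remark 3}} does place all of $\{e^{iL(f)}\}''$ inside $\mathcal{A}$, giving the SL-algebra statement. The gap is in the strong SL-algebra step, at exactly the point you flagged as the main obstacle: your claim that the finite-stage sequences attached to the Kaplansky approximants $T_k$ are ``all uniformly bounded by approximately $\|T\|$'' is unjustified and in general false. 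Kaplansky's theorem controls the norms of the $T_k$ as operators on $\overline{\mathcal{V}}$; it says nothing about the operators $T^{(k)}_n\in\mathcal{A}_n$. The finite-stage sequence for $T_k$ is produced by substituting $e^{i\widetilde{L}(f)}$ for $e^{iL(f)}$ in some polynomial expression of $T_k$, and the only bound this substitution respects is the $\ell^1$-norm of the coefficients of that expression: cancellations that make $\|T_k\|$ small need not survive the substitution, because the high-energy action of the $e^{i\widetilde{L}(f)}$ is uncontrolled. That coefficient norm can exceed $\|T_k\|$ by an arbitrary factor and can grow with $k$, so the diagonal sequence $T_n$ you extract carries no uniform norm bound.

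Without that bound the product argument collapses: in $A_nB_nw=A_n(B_nw-Bw)+A_nBw$ you need $\sup_n\|A_n\|<\infty$ both for the first term and to upgrade strong convergence on $\mathcal{V}$ to strong convergence at the point $Bw$, which lies in $\overline{\mathcal{V}}$ rather than $\mathcal{V}$; this is precisely the low/high-energy imbalance pathology exhibited in \hyperref[2.2]{section 2.2}. The paper closes this hole by restricting throughout to self-adjoint operators and then applying the Kaplansky-type truncation a \emph{second} time, to the finite-stage sequences themselves: each self-adjoint finite-stage operator is replaced by $\varphi(T^{(k)}_n)$, where $\varphi$ is the continuous cut-off at $\pm\|T_k\|$; functional calculus keeps this operator in $\mathcal{A}_n$, preserves self-adjointness and strong convergence, and enforces the uniform bound $\|\varphi(T^{(k)}_n)\|\le\|T_k\|\le\|T\|$ that the diagonal argument needs. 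Since you work with general, non-self-adjoint $T_k$, this truncation is not available to you; to repair the proof you should either restrict to self-adjoint elements as the paper does (they generate the von Neumann algebra, so nothing is lost) or reproduce Kaplansky's $2\times 2$-matrix trick at every finite stage.
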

\begin{proof}
As all algebras here are generated by self-adjoint operators, we will be considering only self-adjoint operators (this will make applying Kaplansky's density theorem easier). Consider a sequence of self-adjoint operators $O^{(i)}$ in the algebra generated by $e^{iL(f)}$s with a strong limit to a self-adjoint bounded operator $O$. Each $O^{(i)}$ has a corresponding sequence of self-adjoint $(O^{(i)}_n)_n$ with scaling limit $O^{(i)}$ which can be thought of replacing any $e^{iL(f)}$ in $O^{(i)}$'s expression by $e^{i\widetilde{L}(f)}$. From these sequences, by a standard diagonal argument, one can get a sequence $O_n$ with scaling limit $O$.

As long as $O^{(i)}$s are uniformly norm bounded, there is the possibility of having a sequence $O_n$ that is uniformly bounded, giving a strong SL-algebra as in the previous theorem. This includes the case where $O$ is in the \textbf{norm-operator} closure of the algebra generated by $\{e^{iL(f)}\}$. So the $C^*$-algebra can be recovered. Then, Kaplansky's density theorem does the rest: one can apply it on the sequence $O^{(i)}$, such that it becomes uniformly bounded by $||O||$ and then apply the same theorem on each sequence associated to $O^{(i)}$ so that they become in turn uniformly bounded by $||O^{(i)}||$.
\end{proof}

Hence, all observables in LCN form a strong SL-algebra. The next question is whether there exist some definition of the algebra $\mathcal{A}_n(I)$ and how the \textit{bounded} scaling limit would compare to the LCN, called $\mathcal{A}_{lcn}(I)$. As we shall see, the anyons must be on the upper half-circle as in \hyperref[fig1]{Figure 1}.
\begin{dfn}\label{dfn12}
Consider the upper half-circle $S^1_+$ with its two points on the boundary. The set of intervals $\mathcal{I}_+$ are the connected sets in one of the following forms:
\begin{itemize}
    \item Open intervals $I$ inside $S^1_+$ for which $\partial I \cap \partial S^1_+=\emptyset$,
    \item Closed-open intervals $I$ where $|I \cap \partial S^1_+| = |\partial I \cap \partial S^1_+|= 1$,
    \item $S^1_+$.
\end{itemize}
\end{dfn}
On these sets, the following nets of observables are defined
\begin{dfn}\label{dfn13}
Given $I \in \mathcal{I}_+$, $\mathcal{A}_n(I)$ is generated by $e_j$s where $[\frac{j\pi}{2n+1},\frac{(j+1)\pi}{2n+1}] \in I$ and the identity. 
\end{dfn}

The definition can be seen to imply $[\mathcal{A}_n(I_1),\mathcal{A}_n(I_2)]=\{0\}$ which is locality. Isotony is obvious, i.e. $I_1 \subset I_2 \implies \mathcal{A}_n(I_1) \subset \mathcal{A}_n(I_2)$.
\begin{dfn}\label{dfn14}
Consider the set of self-adjoint \textit{bounded} linear operators $O$ in the scaling limit of the algebra of observables $\mathcal{A}_n(I)$ such that there exist a self-adjoint sequence $ O_n \in \mathcal{A}_n(I)$ with bounded norm and
$$\rho_n(O_n(\rho_n)^{-1}(u)) \rightarrow Ou,\ \ \ \forall u \in \mathcal{V},$$
i.e. there is sequence with \textit{strong SL} convergence to $O$ or the strong-operator convergence in $\mathcal{V}$. Define $\mathcal{A}_b(I)$ as the von Neumann algebra generated by the set.
\end{dfn}
Locality is the reason behind the above definition. Consider two sequences of operators $x_n \xrightarrow{SL} x$ and $y_n \xrightarrow{SL} y$ which are self-adjoint and commuting. In order to ensure $[x,y]=0$, it can be easily observed that the weak-limit offered by scaling limit is not enough and we need at least a strong type of that limit (which is the above definition). But that could not be enough as $x_ny_n\xi \to xy\xi$ for $\xi \in \mathcal{V}$ can \textit{not} be necessarily true yet:
$$(x_ny_n-xy)\xi=x_n(y_n-y)\xi+(x_n-x)y\xi.$$
The first and second part of the above summation are not guaranteed to go to zero unless $x_n$s are uniformly bounded and $x_n \to x$ in the strong-operator topology (of $\overline{\mathcal{V}}$ as $y\xi \in \overline{\mathcal{V}}$). It turns out that the strong SL convergence (which is strong-operator convergence in $\mathcal{V}$) \textbf{and} norm boundedness are in some way equivalent to convergence in the strong-operator topology (in $\overline{\mathcal{V}}$). One direction is clear and the other is the application of Kaplansky's density theorem to get such a sequence with norms uniformly bounded. The definition above imposes these properties and $\mathcal{A}_b(I)$ can be seen to satisfy locality and isotony. In fact similar to the procedure carried out in \hyperref[cor4.10]{\textbf{Corollary 4.10}}, it can be seen to be have a sequence associated to any of its elements which are norm bounded and converge strongly to that element. Therefore, it is a strong SL-algebra.

How does this ``net'' compare to $\mathcal{A}_{lcn}(I)$? Denote by $j(I)$ the reflection of the interval $I$ with respect to the $x$-axis where $j:z \rightarrow \bar{z}$.
\begin{thm}\label{thm4.11}
Given a function $f=\sum \hat{f}_me^{im\theta} \in C^\infty(S^1)$ with $\text{supp}(f) \subset I \cup j(I)$, and $\hat{f}_m=a_m+ib_m$, define $\widetilde{e}(f)=$
\begin{gather*}
\scalebox{0.9}{$\alpha_n^c\sum\limits_{j=1}^{2n-1} f_c\Big(\frac{\pi(j+\frac{1}{2})}{2n+1}\Big)e_j +i\alpha_n^s \sum\limits_{j=1}^{2n-2}f_s\Big(\frac{\pi(j+1)}{2n+1}\Big)[e_{j},e_{j+1}]+(\sum\limits_{m=-\infty}^\infty a_m\beta_n^{m,c}+b_m\beta_n^{m,s})\textbf{1}$}
\end{gather*}
which is inside $\mathcal{A}_n(I)$ (for large enough $n$), and
$$f_c(\theta)=\frac{f(\theta)+f(-\theta)}{2} \in C^\infty(S^1_+), \ \ \forall e^{i\theta} \in S^1_+,$$
$$f_s(\theta)=\frac{f(\theta)-f(-\theta)}{2} \in C^\infty(S^1_+), \ \ \forall e^{i\theta} \in S^1_+.$$
we have $\widetilde{e}(f) \xrightarrow{SL} L(f)$.
\end{thm}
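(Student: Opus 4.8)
\medskip

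The plan is to reduce the claim to \hyperref[lem4.5]{\textbf{Lemma 4.5}}, which already gives $\widetilde{L}(f)\xrightarrow{SL}L(f)$, by proving that the operator $\widetilde{e}(f)$ is \emph{equal}, as an operator on $\mathcal{W}_n$, to $\widetilde{L}(f)=\sum_m \hat{f}_m\widetilde{L}_m$. The scaling-limit statement then follows at once.

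First I would expand the trigonometric weights. Since $f_c$ is even, $f_s$ is odd, and both are smooth, their Fourier series converge pointwise; writing $\theta_j^c=\frac{(j+\frac12)\pi}{2n+1}$ and $\theta_j^s=\frac{(j+1)\pi}{2n+1}$, I substitute $f_c(\theta_j^c)=\sum_m \hat{f}_m\cos(m\theta_j^c)$ and $f_s(\theta_j^s)=i\sum_{m\ge1}(\hat{f}_m-\hat{f}_{-m})\sin(m\theta_j^s)$ into the definition of $\widetilde{e}(f)$. Interchanging the \emph{finite} sum over $j$ with the sum over $m$ is legitimate because $\hat{f}_m$ is rapidly decreasing while $\|e_j\|$ and $\|[e_j,e_{j+1}]\|$ are $O(1)$, so the double sum converges absolutely. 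Collecting terms mode by mode produces, for each $m$, precisely the operators $\alpha_n^c\sum_j \cos(m\theta_j^c)e_j$ and $i\alpha_n^s\sum_j\sin(m\theta_j^s)[e_j,e_{j+1}]$, which by the identities recorded after \hyperref[thm3.1]{\textbf{Theorem 3.1}} equal $\frac{\widetilde{L}_m^c+\widetilde{L}_{-m}^c}{2}-\beta_n^{m,c}\textbf{1}$ and $\frac{i(\widetilde{L}_m^s-\widetilde{L}_{-m}^s)}{2}-\beta_n^{m,s}\textbf{1}$ (with the signs and scalars dictated by those identities, inherited from $H_n=-\sum e_j$).

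Next I would dispose of the scalar corrections. Using the parities $\beta_n^{-m,c}=\beta_n^{m,c}$ and $\beta_n^{-m,s}=-\beta_n^{m,s}$ forced by the $\cos$/$\sin$ structure of $O_n^c,O_n^s$, together with $\hat{f}_m=a_m+ib_m$, the accumulated $\beta$ contributions sum to exactly $\sum_{m}(a_m\beta_n^{m,c}+b_m\beta_n^{m,s})\textbf{1}$, which is the multiple of the identity added back in the definition of $\widetilde{e}(f)$; the two cancel. What survives is a sum of the operator combinations $\frac{\widetilde{L}_m^c+\widetilde{L}_{-m}^c}{2}$ and $\frac{\widetilde{L}_m^s-\widetilde{L}_{-m}^s}{2}$, and regrouping the $m$ and $-m$ contributions through $\widetilde{L}_m=\frac{\widetilde{L}_m^c+\widetilde{L}_m^s}{2}+\frac{\widetilde{L}_{-m}^c-\widetilde{L}_{-m}^s}{2}$ (and $\widetilde{L}_0=\widetilde{L}_0^c$) from \hyperref[rmk4]{\textit{Remark 4}} reassembles them into $\sum_m \hat{f}_m\widetilde{L}_m=\widetilde{L}(f)$. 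The membership $\widetilde{e}(f)\in\mathcal{A}_n(I)$ comes for free from the support hypothesis: $\text{supp}(f)\subset I\cup j(I)$ forces $f_c,f_s$ to vanish on $S^1_+\setminus I$, so the only nonzero weights occur for those $j$ with $[\frac{j\pi}{2n+1},\frac{(j+1)\pi}{2n+1}]\subset I$, which is the generating condition of \hyperref[dfn13]{\textbf{Definition 13}}. Hence $\widetilde{e}(f)=\widetilde{L}(f)\xrightarrow{SL}L(f)$.

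The main obstacle is that, unlike \hyperref[thm4.1]{\textbf{Theorem 4.1}} and \hyperref[thm4.2]{\textbf{Theorem 4.2}}, here $f$ has an \emph{infinite} Fourier series, so two points need genuine care. The first is the interchange of the $m$-sum with the $j$-sum; this is safe by absolute convergence, but it leans on the uniform bound $\|\widetilde{L}_m\|\le O(n^2)$ from \hyperref[rmk7]{\textit{Remark 7}} and \hyperref[lem4.4]{\textbf{Lemma 4.4}}. The second, more conceptual, is that on the finite lattice high modes alias onto low ones; this turns out to be harmless precisely because $\widetilde{e}(f)$ is built by evaluating the honest smooth functions $f_c,f_s$ at the lattice points rather than by truncating a modal expansion, so the coefficient of $e_j$ is literally $\alpha_n^c f_c(\theta_j^c)$ and that of $[e_j,e_{j+1}]$ is $i\alpha_n^s f_s(\theta_j^s)$, independent of any choice of modal bookkeeping. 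The remaining parity/scalar accounting is then routine.
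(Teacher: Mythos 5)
Your proposal is correct and is essentially the paper's own proof: both arguments reduce the theorem to the exact operator identity $\widetilde{e}(f)=\widetilde{L}(f)$ and then invoke the already-established convergence $\widetilde{L}(f)\xrightarrow{SL}L(f)$ (Lemma 4.5, subsumed in Theorem 4.6). The only difference is the direction of the bookkeeping --- you expand $f_c,f_s$ in Fourier series at the lattice points and collect modes into the $\widetilde{L}_m^{c,s}$ identities, while the paper expands $\widetilde{L}(f)=\sum_m\hat{f}_m\widetilde{L}_m$ via those same identities and recognizes the coefficients of $e_j$ and $[e_j,e_{j+1}]$ as $f_c,f_s$ evaluated at the lattice points --- which is immaterial.
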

\begin{proof}
In fact, $\widetilde{e}(f)=\widetilde{L}(f)$ and this implies the theorem (using \hyperref[thm4.6]{\textbf{Theorem 4.6}}). To show that equality, the formula for $\widetilde{L}_m$ gives
$$\hat{f}_m\widetilde{L}_m+\hat{f}_{-m}\widetilde{L}_{-m}=a_m\widetilde{L}_m^c+ib_m\widetilde{L}_m^s+a_{-m}\widetilde{L}_{-m}^c+ib_{-m}\widetilde{L}_{-m}^s$$
where $\hat{f}_m=\overline{\hat{f}_{-m}}$. Next, the identities for $\widetilde{L}_m^{c,s}$ gives $\widetilde{L}(f)=$
$$\alpha_n^c\sum_{j=1}^{2n-1} c_je_j + i\alpha_n^s\sum_{j=1}^{2n-2}s_j[e_{j},e_{j+1}]+(\sum_{m=-\infty}^{\infty} a_m\beta_n^{m,c}+b_m\beta_n^{m,s})\textbf{1},$$
where 
$$c_j=\sum_{m=-\infty}^{\infty}a_m\cos(\frac{m(j+\frac{1}{2})\pi}{2n+1}), s_j=\sum_{m=-\infty}^{\infty}b_m\sin(\frac{m(j+1)\pi}{2n+1}).$$
But $f_c(\theta)$ and $f_s(\theta)$ are precisely the $\cos()$ and $\sin()$ part of the Fourier series of $f$. Therefore, the above is precisely $\widetilde{e}(f)$. 
\end{proof}
As a corollary, by definition,
\begin{cor} \label{cor4.12}
$\{e^{iL(f)}\}'' \subset \mathcal{A}_b(I)$ for $\text{supp}(f) \subset I \cup j(I)$.
\end{cor}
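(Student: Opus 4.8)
The plan is to exploit \hyperref[thm4.11]{\textbf{Theorem 4.11}}, which realizes $L(f)$ as the scaling limit of the self-adjoint operators $\widetilde{e}(f)=\widetilde{L}(f)\in\mathcal{A}_n(I)$ whenever $\mathrm{supp}(f)\subset I\cup j(I)$, together with the definition of $\mathcal{A}_b(I)$ in \hyperref[dfn14]{\textbf{Definition 14}}. The only thing that genuinely needs to be produced is a bounded self-adjoint witness in $\mathcal{A}_n(I)$ for each bounded function of $L(f)$; everything else is an appeal to von Neumann algebra structure. I will reduce the unitary $e^{iL(f)}$ to its self-adjoint real and imaginary parts, show those lie in $\mathcal{A}_b(I)$, and then close under the algebra operations.

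First I would pass from the unbounded relation $\widetilde{e}(f)\xrightarrow{SL}L(f)$ to bounded functional calculus exactly as in \hyperref[cor4.9]{\textbf{Corollary 4.9}}. Since $\mathcal{W}_n$ is finite dimensional, $\mathcal{A}_n(I)$ is a finite-dimensional (hence von Neumann) $*$-subalgebra of $\mathcal{B}(\mathcal{W}_n)$, so it is closed under continuous functional calculus. In particular $e^{i\widetilde{e}(f)}$, together with its real and imaginary parts $\cos\widetilde{e}(f)$ and $\sin\widetilde{e}(f)$, all lie in $\mathcal{A}_n(I)$; the latter two are self-adjoint with operator norm at most $1$. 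The Trotter--Kato approximation theorem, applied to the strong-resolvent convergence supplied by \hyperref[thm4.11]{\textbf{Theorem 4.11}}, then yields strong-operator convergence $\cos\widetilde{e}(f)\to\cos L(f)$ and $\sin\widetilde{e}(f)\to\sin L(f)$ on $\mathcal{V}$, with the uniform bound $\le 1$ retained throughout.

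Next I would invoke \hyperref[dfn14]{\textbf{Definition 14}} directly: the limit operators $\cos L(f)$ and $\sin L(f)$ are self-adjoint, bounded, and are strong-$SL$ limits of norm-bounded self-adjoint sequences in $\mathcal{A}_n(I)$, so both belong to $\mathcal{A}_b(I)$. Because $\mathcal{A}_b(I)$ is a complex von Neumann algebra, the combination $e^{iL(f)}=\cos L(f)+i\sin L(f)$ also lies in $\mathcal{A}_b(I)$ for every real $f\in C^\infty(S^1)$ with $\mathrm{supp}(f)\subset I\cup j(I)$. Finally, since $\mathcal{A}_b(I)$ is strong-operator closed, it contains the von Neumann algebra generated by all such unitaries, which is by definition $\{e^{iL(f)}\}''$; this is the asserted inclusion.

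The main obstacle I anticipate is not the algebraic bookkeeping but justifying that the \emph{weak} scaling-limit convergence $\widetilde{e}(f)\xrightarrow{SL}L(f)$ upgrades to the \emph{strong}-operator convergence of the bounded functions $\cos$ and $\sin$ demanded by \hyperref[dfn14]{\textbf{Definition 14}}. This is precisely the issue flagged in the discussion following \hyperref[dfn14]{\textbf{Definition 14}}, where strong $SL$ convergence together with uniform norm boundedness is shown to be equivalent to strong-operator convergence on $\overline{\mathcal{V}}$. Fortunately \hyperref[lem4.5]{\textbf{Lemma 4.5}} already establishes $\widetilde{L}(f)v\to L(f)v$ in norm for fixed $v\in\mathcal{V}$, i.e.\ strong convergence of the unbounded operators on the common core $\mathcal{V}$, which is exactly the hypothesis Trotter--Kato requires. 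Thus the upgrade is available, and the remaining care is only to confirm uniform boundedness --- automatic for unitaries and their real and imaginary parts --- so that the \hyperref[cor4.10]{\textbf{Corollary 4.10}}-style reasoning carries the conclusion.
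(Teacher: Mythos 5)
Your proposal is correct and follows essentially the same route the paper intends: the paper treats this corollary as immediate from \hyperref[thm4.11]{\textbf{Theorem 4.11}} (which puts $\widetilde{e}(f)=\widetilde{L}(f)$ inside $\mathcal{A}_n(I)$ with scaling limit $L(f)$), the Trotter--Kato argument of \hyperref[cor4.9]{\textbf{Corollary 4.9}}, and \hyperref[dfn14]{\textbf{Definition 14}}. Your real/imaginary-part decomposition of $e^{iL(f)}$ into $\cos L(f)$ and $\sin L(f)$ is just the natural fleshing-out of the self-adjointness requirement in that definition, so the argument matches the paper's in substance.
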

This hints to the relation between $\mathcal{A}_b(I)$ and $\mathcal{A}_{lcn}$. Assume $I$ touches the boundary of upper half-circle. Then, $I \cup j(I)$ is some connected interval in the circle and so $\{e^{iL(f)}|\text{supp}(f) \subset I \cup j(I)\}'' = \mathcal{A}_{lcn}(I \cup j(I))$. By the corollary above,
$$\mathcal{A}_{lcn}(I \cup j(I)) \subset \mathcal{A}_b(I).$$
But due to Haag duality for the conformal net $\mathcal{A}_{lcn}$ and locality for $\mathcal{A}_b$, for the complement of $I$, called $J$, in $S^1_+$,
$$\mathcal{A}_b(J) \subset \mathcal{A}_b(I)' \subset \mathcal{A}_{lcn}(I \cup j(I))'=\mathcal{A}_{lcn}(J \cup j(J)) \subset \mathcal{A}_b(J) .$$
Therefore, one recovers exactly, no more and no less, the LCN by taking the bounded scaling limit.
\begin{thm}\label{thm4.13}
$\mathcal{A}_b(I)=\mathcal{A}_{lcn}(I \cup j(I))$ for $I \in \mathcal{I}_+$ with $|I \cap \partial S^1_+|=1$.
\end{thm}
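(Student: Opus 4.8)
The plan is to prove the two inclusions by a squeeze argument resting on Haag duality, along the lines sketched just before the statement. Throughout, write $\widetilde{I}=I\cup j(I)$ and let $J$ denote the complement of $I$ in $S^1_+$; since $I$ meets $\partial S^1_+$ in exactly one point, $J$ is again an interval in $\mathcal{I}_+$ touching $\partial S^1_+$ at the other boundary point, and both $\widetilde{I}$ and $\widetilde{J}=J\cup j(J)$ are genuine connected proper intervals of $S^1$ with $\widetilde{J}=(\widetilde{I})^c$. This is the one place where the hypothesis $|I\cap\partial S^1_+|=1$ is essential: without it $\widetilde{I}$ splits into two disjoint arcs, and neither the defining formula for the net nor Haag duality applies to a single interval.

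First I would establish the easy inclusion $\mathcal{A}_{lcn}(\widetilde{I})\subset\mathcal{A}_b(I)$. By the definition of the conformal net on an interval recalled in Section 1.3, $\mathcal{A}_{lcn}(\widetilde{I})=\{e^{iL(f)} : \text{supp}(f)\subset \widetilde{I}\}''$, and Corollary 4.12 gives precisely $\{e^{iL(f)}\}''\subset\mathcal{A}_b(I)$ for every $f$ with $\text{supp}(f)\subset\widetilde{I}$; combining the two yields the inclusion. Applying the same inclusion to $J$ gives $\mathcal{A}_{lcn}(\widetilde{J})\subset\mathcal{A}_b(J)$, which I will feed into the squeeze.

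Next I would run the commutant chain. Locality of $\mathcal{A}_b$ (established in the discussion following Definition 14, where strong SL convergence together with uniform norm boundedness forces the limiting algebras on disjoint intervals to commute) gives $\mathcal{A}_b(J)\subset\mathcal{A}_b(I)'$. Taking commutants in the inclusion of the previous paragraph reverses it to $\mathcal{A}_b(I)'\subset\mathcal{A}_{lcn}(\widetilde{I})'$, and Haag duality for the conformal net $\mathcal{A}_{lcn}$ (from the classification in \cite{kawahigashi2004classification}) identifies $\mathcal{A}_{lcn}(\widetilde{I})'=\mathcal{A}_{lcn}((\widetilde{I})^c)=\mathcal{A}_{lcn}(\widetilde{J})$. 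Chaining these with $\mathcal{A}_{lcn}(\widetilde{J})\subset\mathcal{A}_b(J)$ produces
$$\mathcal{A}_b(J)\subset\mathcal{A}_b(I)'\subset\mathcal{A}_{lcn}(\widetilde{I})'=\mathcal{A}_{lcn}(\widetilde{J})\subset\mathcal{A}_b(J),$$
a closed loop of inclusions beginning and ending at $\mathcal{A}_b(J)$. Hence every inclusion is an equality; in particular $\mathcal{A}_b(I)'=\mathcal{A}_{lcn}(\widetilde{I})'$, and taking commutants once more (both sides being von Neumann algebras, hence equal to their double commutants) gives $\mathcal{A}_b(I)=\mathcal{A}_{lcn}(\widetilde{I})$, as claimed.

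I expect the genuine content — and the main obstacle — to lie in the two external facts the squeeze rests on, rather than in the diagram chase itself: namely Haag duality for the UMM conformal net, which must be invoked in the precise form $\mathcal{A}_{lcn}(\widetilde{I})'=\mathcal{A}_{lcn}((\widetilde{I})^c)$ and which requires $\widetilde{I}$ to be a single interval, and the verification that $\mathcal{A}_b$ is genuinely local in the von Neumann sense. The latter is the subtle analytic point, since scaling-limit convergence is \emph{a priori} only a weak/strong-operator statement; one must ensure (via Kaplansky's density theorem, exactly as in Corollary 4.10) that commuting finite-stage generators $e_j$ on disjoint intervals pass to commuting limit algebras, which is what the norm-boundedness built into Definition 14 secures.
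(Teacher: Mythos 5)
Your proposal is correct and follows essentially the same route as the paper: the inclusion $\mathcal{A}_{lcn}(I\cup j(I))\subset\mathcal{A}_b(I)$ from Corollary 4.12, then the closed chain $\mathcal{A}_b(J)\subset\mathcal{A}_b(I)'\subset\mathcal{A}_{lcn}(I\cup j(I))'=\mathcal{A}_{lcn}(J\cup j(J))\subset\mathcal{A}_b(J)$ via locality of $\mathcal{A}_b$ and Haag duality, forcing all inclusions to be equalities. Your version merely makes explicit two points the paper leaves implicit — the final double-commutant step yielding $\mathcal{A}_b(I)=\mathcal{A}_{lcn}(I\cup j(I))$, and the role of the hypothesis $|I\cap\partial S^1_+|=1$ in keeping $I\cup j(I)$ connected so that Haag duality applies.
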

The above theorem is true for all UMMs assuming \hyperref[cnj4.3]{\textbf{Conjecture 4.3}}.

\subsection{Vertex operators \texorpdfstring{$Y(a,z)$}{TEXT}}\label{4.3}

\hfill \break 
In \hyperref[thm4.8]{\textbf{Theorem 4.8}}, ${\textbf{y}}(a)_m$ was found to be in the scaling limit using QC-local operator. Therefore, $Y(a,z)$ should also be in the scaling limit as an almost linear operator. In fact, $Y(a,z)$ is the weak limit of a sequence $Y(a,f)$ where $f$ shrinks to the $\delta$ Dirac function. Then, according to \hyperref[rmk3]{\textit{Remark 3}}, $Y(a,z) \in \mathcal{A}$. Here, we wish to construct a concrete sequence for the observable.
\begin{thm} \label{thm4.14}
$Y(a,z) \in \mathcal{A}$ as an almost linear operator.
\end{thm}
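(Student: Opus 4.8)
The plan is to combine the mode-by-mode construction of \hyperref[thm4.8]{\textbf{Theorem 4.8}} with the weight grading that makes $Y(a,z)$ almost linear. First I would check that the sesquilinear form $(u,Y(a,z)v)$ is well defined on all of $\mathcal{V}\times\mathcal{V}$. For $z\in S^1$ and homogeneous $u\in\mathcal{V}_{\ell'}$, $v\in\mathcal{V}_{\ell}$,
$$(u,Y(a,z)v)=\sum_{m\in\mathbb{Z}}z^{-m-\text{wt }a}(u,\textbf{y}(a)_mv)=z^{-(\ell-\ell')-\text{wt }a}(u,\textbf{y}(a)_{\ell-\ell'}v),$$
since $\text{wt}(\textbf{y}(a)_mv)=\text{wt }v-m$ forces all but the mode $m_0=\ell-\ell'$ to be orthogonal to $u$. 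Thus on homogeneous vectors the form is a single term, on general $u,v\in\mathcal{V}$ a finite sum, and $Y(a,z)$ is a genuine almost linear operator in the sense of the discussion after \hyperref[dfn6]{\textbf{Definition 6}}. It remains to exhibit a sequence in $\mathcal{A}_n$ realizing it.

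Next I would assemble the finite-stage operators from the mode approximants $\widetilde{\textbf{y}}_{E}(a)_m$ of \hyperref[thm4.8]{\textbf{Theorem 4.8}}, which satisfy $\widetilde{\textbf{y}}_{E}(a)_m=\textbf{y}(a)_m|_E+O(1/n^{g_a})+R_{E,n}^{a,m}$ on the relevant window. Set
$$O_n=\sum_{|m|\le M_n}z^{-m-\text{wt }a}\,\widetilde{\textbf{y}}_{E_n}(a)_m\in\mathcal{A}_n,$$
where the cutoffs $M_n,E_n$ grow with $n$ but stay inside the regime $|m|\le n^{v_a}$, $E_n\le n^{d_a}$ in which the last hypothesis of \hyperref[thm4.8]{\textbf{Theorem 4.8}} holds, with $M_n=n^{v_a}$ and $v_a$ chosen small enough that $g_a>v_a$. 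Each $O_n$ is a finite sum of $p_a$-QC-local operators, hence lies in $\mathcal{A}_n$; its operator norm is immaterial since \hyperref[dfn5]{\textbf{Definition 5}} only tests matrix elements against fixed vectors.

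To verify $O_n\xrightarrow{SL}Y(a,z)$, fix homogeneous $u\in\mathcal{V}_{\ell'},v\in\mathcal{V}_{\ell}$ with pull-backs $u_n,v_n$. For $n$ large, $m_0=\ell-\ell'$ satisfies $|m_0|\le M_n$ and the energies of $u_n,v_n$ lie below $E_n$, so every remainder $R_{E_n,n}^{a,m}$ annihilates $v_n$ and
$$(u_n,O_nv_n)=\sum_{|m|\le M_n}z^{-m-\text{wt }a}\Big((u,\textbf{y}(a)_mv)+O(1/n^{g_a})\Big).$$
By the weight grading $(u,\textbf{y}(a)_mv)=0$ for $m\neq m_0$, so the main part collapses to $z^{-m_0-\text{wt }a}(u,\textbf{y}(a)_{m_0}v)=(u,Y(a,z)v)$, while the accumulated error is at most $M_n\cdot O(1/n^{g_a})=O(n^{v_a-g_a})\to0$. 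Extending bilinearly to general $u,v$ yields $(u_n,O_nv_n)\to(u,Y(a,z)v)$, i.e. $Y(a,z)\in\mathcal{A}$. As a soft alternative one may instead note $Y(a,f_k)\in\mathcal{A}$ for smooth $f_k$ by \hyperref[thm4.8]{\textbf{Theorem 4.8}}, that $Y(a,f_k)\to Y(a,z)$ weakly as $f_k\to\delta_z$, and invoke the weak-limit closure in \hyperref[rmk3]{\textit{Remark 3}}; this avoids the bookkeeping but produces no explicit sequence.

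The main obstacle is controlling the per-mode error $O(1/n^{g_a})$ across the growing window $|m|\le M_n$, which is exactly what forces the compatibility $v_a\ll g_a$ inherited from \hyperref[thm4.8]{\textbf{Theorem 4.8}}; since $|z^{-m-\text{wt }a}|=1$ on $S^1$, the coefficients give no decay to help sum the modes. I would emphasize that, while this delivers $Y(a,z)$ as a single element of $\mathcal{A}$, the same sequences are ill-suited to forming products: the energy-dependent energy shift of the finite $\widetilde{\textbf{y}}$ operators flagged in \hyperref[rmk11]{\textit{Remark 11}} obstructs the uniform estimates that made \hyperref[thm4.6]{\textbf{Theorem 4.6}} and \hyperref[thm4.8]{\textbf{Theorem 4.8}} work, which is precisely why one does not expect these point-like fields to close into an SL-algebra.
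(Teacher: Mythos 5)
Your construction has the same skeleton as the paper's own proof: both build $O_n$ out of the mode approximants $\widetilde{\textbf{y}}_{E}(a)_m$ from \hyperref[thm4.8]{\textbf{Theorem 4.8}}, reduce to homogeneous $u,v$ so that only the single mode $m_0=\ell-\ell'$ contributes to $(u,Y(a,z)v)$, kill the remainders $R^{a,m}_{E,n}$ by letting the energy cutoff exceed $\text{wt }v$, and then sum the per-mode errors $O(1/n^{g_a})$ over the mode window. Your bookkeeping on the unit circle is correct, and your ``soft alternative'' (weak limit of $Y(a,f_k)$ as $f_k\to\delta_z$ plus \hyperref[rmk3]{\textit{Remark 3}}) is precisely the observation the paper itself makes immediately before the theorem as motivation for wanting a concrete sequence.

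The substantive difference is the size of the mode window, and it changes what is actually proved. You take $M_n=n^{v_a}$ and lean on $|z^{-m-\text{wt }a}|=1$, which confines the argument to $z\in S^1$. The paper's proof is written for $z$ off the unit circle (it explicitly assumes $|z|>1$, with $|z|<1$ following from the symmetry $m\mapsto -m$): there the coefficients grow geometrically in $|m|$, and a polynomial window is fatal --- already the single mode $m=-M_n$ contributes an error of size $|z|^{n^{v_a}}\,O(1/n^{g_a})\to\infty$, so your $O_n$ cannot be shown to converge (and generically will not). The paper's cure is to shrink the window to logarithmic size, $|m|<\frac{g_a\log n}{2\log|z|}$, so that $|z|^{M_n}=n^{g_a/2}$ stays polynomial and is beaten by $n^{-g_a}$. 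So if the theorem is read, as the paper's proof intends, for arbitrary nonzero $z$, your argument has a genuine hole off the circle; the repair is exactly this logarithmic cutoff, which incidentally also works on $S^1$ (any window of size $o(n^{g_a})$ does). Conversely, the paper's explicit formula degenerates at $|z|=1$ (where $\log|z|=0$), so your computation is the one that covers the unimodular case actually used later (e.g.\ in \hyperref[thm4.15]{\textbf{Theorem 4.15}}); the two window choices together cover all $z\neq 0$. Your closing remark about why these sequences do not yield an SL-algebra of point-like fields is consistent with \hyperref[rmk11]{\textit{Remark 11}} and the paper's Conjecture 5.4.
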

\begin{proof}
Choose the sequence of observables
$$O_n=\sum_{|m| < \frac{g_a\log(n)}{2\log(|z|)}} \widetilde{\textbf{y}}_{\log(n)}(a)_m z^{-m-\text{wt }a}$$
where $\text{wt }a$ is the degree of $a$. Take $u,v \in \mathcal{V}$. Without loss of generality, assume $u,v$ are homogeneous with weight difference $-s$. For $(u,Y(a,z)v)$, only the term $(u,{\textbf{y}}(a)_sz^{-s-\text{wt }a}v)$ is nonzero. Thus, 
$$(u,\sum_{s< |m| < \frac{g_a\log(n)}{2\log(|z|)}} \widetilde{\textbf{y}}_{\log(n)}(a)_m z^{-m-\text{wt }a}v)$$
must go to zero. Once this is proved, the rest is the sum 
$$(u,\sum_{ |m| <s} \widetilde{\textbf{y}}_{\log(n)}(a)_m z^{-m-\text{wt }a}v)$$
is a finite sum of operators for which the scaling limit is known. By \hyperref[thm4.8]{\textbf{Theorem 4.8}}, we need to compute
$$\Big(u, ({\textbf{y}}(a)_m|_{\log(n)}+O(\frac{1}{n^{g_a}})+R_{\log(n),n}^a)z^{-m-\text{wt }a}v\Big).$$
For $n$ where $\log(n)>\text{wt }v$, the term $R_{\log(n),n}^av$ is zero. Similarly for $m \neq s$, the term $(u,{\textbf{y}}(a)_m|_{\log(n)}v)$ is zero. It remains to show
$$ \lim\limits_{n \to \infty} O(\frac{1}{n^{g_a}})(\sum_{s<|m|<\frac{g_a\log(n)}{2\log(|z|)}}z^{-m}) = 0.$$
Due to symmetry of the summation, assume $|z|> 1$, and the summation is not small only for positive powers. In that case, the summation has norm at most $\frac{|z|^{\frac{g_a\log(n)}{2\log(|z|)}}+1}{|z|-1}$ which vanishes when divided by $n^{g_a}$.
\end{proof}

We discussed the following intuition on the scaling limit of smeared field
$$\int f(e^{i\frac{\pi j}{n}})e_j \xrightarrow{SL} Y(\omega,f)=\oint Y(\omega,z)f(z)z^2 \frac{\text{d}z}{2\pi i z}.$$
Informally, one could think of this smooth function being a Gaussian distribution which goes to the $\delta$ Dirac function at some point corresponding to angle $\theta$. In that case, one would expect to get
$$e_\theta \xrightarrow{SL} Y(\omega,e^{i\theta}).$$
Of course, with $e_j$s, the ``$\cos()$'' part appears in the scaling limit. For the other part, the bracket $[e_j,e_{j+1}]$ must be used.

Finding some ultra local operator in $\mathcal{A}_n$ giving us the field operator in the scaling limit would be a ``proof'' that the field operator $Y(w,z)$ should not only be called a local observable, but an ultra local observable. Unfortunately, the \textit{natural} guess does not work.

\textit{Notations}.  $\widetilde{v}_x^c$ is the vector with entries $(\cos(\frac{m(x+\frac{1}{2})\pi}{2n+1}))_{0\le m\le 2n}$ and $\widetilde{v}_x^s=(\sin(\frac{m(x+1)\pi}{2n+1}))_{1\le m\le 2n}$. Define $\beta_n^c$ as the infinite vector with entries $\beta_n^{m,c}$ for all $m\ge 0$ and similarly define $\beta_n^s$. Extend $\widetilde{v}_x^c$ and $\widetilde{v}_x^s$ by zeros to have infinite entries for them as well.
\begin{thm}\label{thm4.15}
We do \textbf{not} have 
$$O_n=\alpha_n^c||\widetilde{v}_x^c||^2e_x+i\alpha_n^s||\widetilde{v}_x^s||^2[e_x,e_{x+1}]+(\beta_n^c.\widetilde{v}_x^c+\beta_n^s.\widetilde{v}_x^s)\textbf{1} \xrightarrow{SL} Y(\omega,z)z^2,$$
where $z=e^{i\theta}$ and we pick the unique $1 \le x\le 2n-1$ such that $\theta \in [\frac{x\pi}{2n+1},\frac{(x+1)\pi}{2n+1}]$.
\end{thm}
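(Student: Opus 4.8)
The plan is to show that, despite appearances, $O_n$ is really the finite version of the \emph{full} (non-chiral) energy density at the insertion point, and that its vacuum matrix element therefore fails to converge to the chiral value demanded by $Y(\omega,z)z^2=\sum_m L_m z^{-m}$.

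First I would invert the trigonometric interpolation of Section 4.1. The vectors $\widetilde{v}_x^c,\widetilde{v}_x^s$ and the normalizations $\|\widetilde{v}_x^c\|^2,\|\widetilde{v}_x^s\|^2$ are exactly the discrete cosine/sine inversion weights, dual to the forward transform $t_j=\alpha_n^c\sum_m a_m\cos(\tfrac{m(j+\frac12)\pi}{2n+1})$; and the scalar term $(\beta_n^c\!\cdot\!\widetilde{v}_x^c+\beta_n^s\!\cdot\!\widetilde{v}_x^s)\textbf{1}$ is chosen precisely to cancel the constants $\beta_n^{m,c},\beta_n^{m,s}$ produced when passing between $\{e_j,[e_j,e_{j+1}]\}$ and $\{\widetilde{L}_m^{c,s}\}$. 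Carrying out the inversion with the orthogonality of $\{\cos(\tfrac{m(j+\frac12)\pi}{2n+1})\}$ and $\{\sin(\tfrac{m(j+1)\pi}{2n+1})\}$ over the site index yields
\[ O_n=\sum_{m=0}^{2n}\cos\!\Big(\tfrac{m(x+\frac{1}{2})\pi}{2n+1}\Big)\frac{\widetilde{L}_m^c+\widetilde{L}_{-m}^c}{2}+\sum_{m}\sin\!\Big(\tfrac{m(x+1)\pi}{2n+1}\Big)\frac{i(\widetilde{L}_m^s-\widetilde{L}_{-m}^s)}{2}, \]
so that $O_n$ is literally the finite point field $\sum_m\widetilde{L}_m z^{-m}$ assembled from the scaling-limit generators, i.e. the natural finite analogue of $Y(\omega,z)z^2$.

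Next I would test this against the vacuum. If $O_n\xrightarrow{SL}Y(\omega,z)z^2$, then in particular $(1_n,O_n1_n)\to(1,Y(\omega,z)z^2\,1)$, and the right-hand side vanishes, since $L_m1=0$ for $m\ge-1$ and $(1,L_{-2}1)=(1,\omega)=0$, so every term of $\sum_m(1,L_m1)z^{-m}$ is zero. It therefore suffices to show that $(1_n,O_n1_n)$ does \emph{not} converge to $0$. I would compute $(1_n,\widetilde{L}_m^c1_n)$ from the explicit fermionic expression (equations (32)--(33)): the ``direct'' part $\Psi_{k+m}\Psi_k^\dagger$ lowers the weight by $m\neq0$ and so has vanishing vacuum expectation, while the ``wrap-around'' part $-\cos(\tfrac{(k+\frac m2)\pi}{2n+1})\,\Psi_{2(2n+1)-k-m}\Psi_k^\dagger$ contributes at $k=(2n+1)-\tfrac m2$: there $2(2n+1)-k-m=k$, the operator collapses to $\Psi_k\Psi_{-k}$ with $(1,\Psi_k\Psi_{-k}1)=1$, and the coefficient equals $-\cos\pi=1$, giving an $O(1)$ contribution. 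For a \emph{fixed} $m$ this term corresponds to an ever-higher mode and is exactly what the normalization (and the relabelling $\Psi_k\to\Psi_{n/2+1-k}$) removes in the limit, consistent with $\widetilde{L}_m\xrightarrow{SL}L_m$; but for the band of modes $m$ comparable to $2n$ it persists at order one. Summed against the oscillatory weights $\cos(\tfrac{m(x+\frac12)\pi}{2n+1})$ this high-mode band produces an $n$-dependent remainder of the form $\mathrm{Re}\,\tfrac{e^{i(2n+1)\theta}-1}{e^{i\theta}-1}$-type partial sums, which for generic $\theta$ oscillate and do not converge. Hence $(1_n,O_n1_n)$ has no limit, contradicting $O_n\xrightarrow{SL}Y(\omega,z)z^2$.

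The main obstacle is the bookkeeping in this last step: one must follow the wrap-around terms \emph{jointly} in $m$ and $n$ through the relabelling $\Psi_k\to\Psi_{n/2+1-k}$ and the exact role of $\beta_n^{m,c},\beta_n^{m,s}$, and show that the high-mode contributions genuinely fail to cancel rather than telescoping away — equivalently, that the double limit (summation over $m$ up to $2n$, then $n\to\infty$) does not commute with the fixed-$m$ convergence $\widetilde{L}_m\xrightarrow{SL}L_m$. If exhibiting non-convergence at the vacuum proves delicate, the same defect can instead be isolated at a single matrix element of slightly shifted weight, picking $u,v$ so that exactly one large-$m$ wrap term connects them with an $O(1)$ coefficient, as in the coefficient-extraction argument preceding \hyperref[thm4.1]{\textbf{Theorem 4.1}}. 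Conceptually the failure is forced: a single $e_x$ (together with $[e_x,e_{x+1}]$) limits to the \emph{local density of} $T+\bar{T}$, built from both chiralities via the reflection $j:z\mapsto\bar z$, and no ultra-local combination of $e_x,[e_x,e_{x+1}]$ can isolate the purely chiral $T(z)z^2=Y(\omega,z)z^2$ — which is precisely what the non-vanishing remainder detects.
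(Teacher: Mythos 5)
Your opening move --- inverting the interpolation to write $O_n=\sum_{m}\Big(\cos(\frac{m(x+\frac12)\pi}{2n+1})+i\sin(\frac{m(x+1)\pi}{2n+1})\Big)\widetilde{L}_m$ --- is exactly how the paper's proof begins, but your detection mechanism has a genuine gap: the vacuum matrix element cannot see the defect. Under the conventions that make \hyperref[thm3.1]{\textbf{Theorem 3.1}} true (which Theorem 4.15 presupposes), the diagonal wrap-around terms $\Psi_k\Psi_k^\dagger$ arising at $k=2n+1-\frac m2$ must be normal-ordered, their constants being absorbed into the shifts $\beta_n^{m,c},\beta_n^{m,s}$; after that subtraction \emph{every} bilinear occurring in $\widetilde{L}^{c}_{\pm m},\widetilde{L}^{s}_{\pm m}$ ($m\neq0$) has vanishing vacuum expectation, so $(1_n,O_n1_n)=0$ identically in $n$, which converges to the correct value $(1,Y(\omega,z)z^2\,1)=0$ --- no contradiction. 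If instead you keep the non-normal-ordered term, your ``$O(1)$ contribution'' is in fact $\frac{2n+1}{2\pi}\cdot O(1)$ (you dropped the $\alpha_n^c$ scaling inside $\widetilde{L}_m^c$), and it is present already at \emph{fixed} even $m$; it would then contradict the convergence $\widetilde{L}_m^c+\widetilde{L}_{-m}^c\xrightarrow{SL}L_m+L_{-m}$ that you are assuming. In neither consistent reading does a finite nonzero remainder ``persist only for $m$ comparable to $2n$,'' so the oscillating partial sums you want to exhibit at the vacuum do not exist.

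The actual obstruction --- and the paper's proof --- lives in the \emph{off-diagonal} wrap terms. Restricting from both sides to energy $\le M$ (two-sided restriction is essential, since an SL limit is only a sesquilinear form), the modes $M<|m|<2n+1-2M$ contribute nothing, while for $2n+1-2M\le|m|\le2n$ the expression \hyperref[eq32]{(32)} contains bilinears $\Psi_{k'}\Psi_k$ with \emph{both} modes in the low-energy window $n+1-M\le k,k'\le n+M$; these pairs are distinct for distinct $m$ (their index offset is $2n+1-m$), they survive the restriction, and in the cosine part they carry coefficients of size $\frac{2n+1}{2\pi}\cdot2|\cos(\text{angle near }\pi)|=O(n)$, so the restricted operator blows up in norm; in the sine part the coefficients stay bounded but are multiplied by the non-convergent weights $\sin(\frac{m(x+1)\pi}{2n+1})$. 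These terms are invisible in $(1_n,\cdot\,1_n)$ but are seen at, e.g., $(u_n,O_n1_n)$ with $u$ a low-energy two-particle state. Your fallback paragraph points in this direction and, carried out with the correct $O(n)$ coefficient size, would essentially reproduce the paper's argument; but as written it mislabels the coefficients as $O(1)$, and the analogy with the computation preceding \hyperref[thm4.1]{\textbf{Theorem 4.1}} does not transfer, since that argument probes the top-energy states $\Psi_{n-\frac12}\Psi_{n-\frac32}1$, whereas SL matrix elements must be taken against fixed low-energy vectors. Finally, your closing claim that the failure is ``forced'' by chirality overstates what is known: \hyperref[rmk13]{\textit{Remark 13}} shows the defect is an aliasing effect of the modes $|m|$ near $2n$ (removing them restores convergence), and the paper explicitly leaves open whether some ultra-local $O_n$ converges to $Y(\omega,z)z^2$.
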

Notice sometimes $x$ can only be chosen for large enough $n$ as $\theta$ may be close to the boundaries. $O_n$ is exactly the expression for $e_x$ and $[e_x,e_{x+1}]$ one obtains by considering the \hyperref[identities]{\textbf{$\widetilde{L}_m^{c,s}$ identities}} of \hyperref[thm3.1]{\textbf{Theorem 3.1}}. That is why we believe this should be the first candidate for convergence to $Y(\omega,z)z^2$. 
\begin{proof}
By the \hyperref[identities]{\textbf{$\widetilde{L}_m^{c,s}$ identities}},
$$O_n=\sum_{m=0}^{2n}\cos(\frac{m(x+\frac{1}{2})\pi}{2n+1})\frac{\widetilde{L}_m^c+\widetilde{L}_{-m}^c}{2}+i\sum_{m=1}^{2n}\sin(\frac{m(x+1)\pi}{2n+1})\frac{\widetilde{L}_m^s-\widetilde{L}_{-m}^s}{2}.$$
In other words, the following should not hold
$$O_n=\sum_{m=-2n}^{2n}\Big(\cos(\frac{m(x+\frac{1}{2})\pi}{2n+1})+i\sin(\frac{m(x+1)\pi}{2n+1})\Big)\widetilde{L}_m \xrightarrow{SL} \sum e^{im\theta}L_m.$$
It is clear that any finite sum up to some $M$ for $O_n$ goes to $\sum_{|m| \le M}e^{im\theta}L_m$. Restrict to some finite energy $M$ from right \textbf{and} left. Notice the scaling limit is not supposed to be a linear operator so restriction needs to be made from both sides. It will be shown that the approximations to high Virasoro modes give something other than zero. This will be shown for the $\cos()$ part (in other words, the real part) of the summation which is provided by the operators $\widetilde{L}_m^c+\widetilde{L}_{-m}^c$. The $\sin()$ part  $\widetilde{L}_m^s+\widetilde{L}_{-m}^s$ (complex part) can be done similarly.

One can easily observe from the formula of $\widetilde{L}_m^c+\widetilde{L}_{-m}^c$ (\hyperref[eq32]{(32)} and \hyperref[eq33]{(33)}), that for $M<m<2n+1-2M$, the restriction from left and right is exactly zero for large enough $n$. This is easy to observe by considering the picture of the half-circle having the momenta on it.

Now notice that for $2n+1-2M\le |m| \le 2n$, the formula for $\widetilde{L}_m^c+\widetilde{L}_{-m}^c$ gives many fermion pairs which are distinct for different $m$. Indeed, after the restrictions, any term $\Psi_{k'}\Psi_k$ should have both $n+1-M \le k,k' \le n+M$ and as $\Psi_k$ appears only once for each $k$, there are around $O(M)$ possibilities. Also, the difference between $k$ and $k'$ for each one of these pairs is exactly $2n+1-m$ implying that the pairs are not repeated by different $m$s. Each of these terms will have a coefficient of order $O(n)$; Indeed there is a scaling provided by the $\alpha_n^c$ and further, there is also the coefficient $2\cos(\frac{(k+\frac{m}{2})\pi}{2n+1})$ which is close to $2$ as $\frac{\frac{m}{2}\pi}{2n+1}$ is close to $\frac{\pi}{2}$ and $n+1-M \le k \le n+M$. Therefore, although the part $|m|\le M$ of the summation cause no problem, there are terms corresponding to $2n+1-2M\le |m| \le 2n$ that blow up in norm. Hence, the scaling limit is certainly not an almost linear operator which is defined on $\mathcal{V}\times \mathcal{V}$ and the $\cos()$ part of $Y(\omega,z)z^2$ is not the scaling limit.

For the $\sin()$ and $\widetilde{L}_m^s$ case, with a similar reasoning, the $2n+1-2M\le |m| \le 2n$ part of the summation does not blow up in norm but it is non-zero and due to the divergent coefficient $\sin(\frac{m(x+1)\pi}{2n+1})$, does not have a limit and the $\sin()$ part of $Y(\omega,z)z^2$ is not the scaling limit.
\end{proof}

\begin{rmk}\label{rmk13}
As all problems emerge from the $m$ close to the both ends of the summation, i.e. $-2n$ and $2n$, one could take the observable $O_n$ as follows with the desired scaling limit
$$\sum_{m=-2n+\log(n)}^{2n-\log(n)}\Big(\cos(\frac{m(x+\frac{1}{2})\pi}{2n+1})+i\sin(\frac{m(x+1)\pi}{2n+1})\Big)\widetilde{L}_m \xrightarrow{SL} \sum e^{im\theta}L_m.$$
In fact, any function $f(n) \xrightarrow{n \to \infty} \infty$ instead of $\log(n)$ would work to avoid the discussed issues. But this is not the nice ultra local expression in terms of just $e_x$ and $[e_x,e_{x+1}]$ we desired for $O_n$. It is unknown whether there \textit{exists} ultra local observable $O_n$ with $O_n \xrightarrow{SL} Y(\omega,z)z^2$.
\end{rmk}

\section{Conjectures and future directions}\label{5}

\subsection{On scaling limit of anyonic chains}\label{5.1}

\hfill \break
In this section, we provide a list of problems that need to be addressed for a clearer picture of the structures in the scaling limit relevant to CFT.

After \hyperref[dfn6]{\textbf{Definition 6}}, it was asked whether or not $\mathcal{A}^H$ \textit{generates} $\mathcal{A}$ and whether that in turn \textit{generates} $\overline{\mathcal{A}}$. If the scaling limit is a CFT, such a conjecture becomes reasonable:
\begin{cnj}\label{cnj5.1}
The observables $\overline{\mathcal{A}}$ are ``generated'' by some means, like closure with respect to some topology, from the set of observables of the form $Y(a,f)$.
\end{cnj}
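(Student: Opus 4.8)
The plan is to establish the conjecture in two stages, treating the Ising case $\mathcal{V}=\rchi_0+\rchi_\frac{1}{2}$ explicitly and leaving the general UMM case conditional on \hyperref[cnj4.3]{\textbf{Conjecture 4.3}}. The inclusion that $\mathcal{A}$ generates $\overline{\mathcal{A}}$ is already essentially recorded in \hyperref[rmk2]{\textit{Remark 2}}: since Ising admits an algebraic scaling limit, every $O\in\overline{\mathcal{A}}$ is a $d_{\text{SL}}$-limit of elements of $\mathcal{A}$, and $\mathcal{A}$ is $d_{\text{SL}}$-complete by \hyperref[rmk3]{\textit{Remark 3}}. So the weight of the conjecture rests on the first inclusion: that the weak ($d_{\text{SL}}$) closure of the $*$-algebra $\mathcal{B}$ generated by $\{Y(a,f):a\in\mathcal{V},\ f\in C^\infty(S^1)\}$ equals $\mathcal{A}$. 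That $\mathcal{B}\subset\mathcal{A}$ and that $\mathcal{B}$ is an SL-algebra are exactly \hyperref[thm4.8]{\textbf{Theorem 4.8}}, so only density remains to be shown.

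I would prove density energy-level by energy-level in the grading, exploiting that each seminorm of \hyperref[rmk3]{\textit{Remark 3}} is $\|O\|_k=\|P^kOP^k\|$ and sees only the finite matrix algebra on $\mathcal{V}^{\lambda_k}$. Because $\mathcal{A}_n$ is the even algebra generated by the Dirac modes, the compression $P^k\overline{\mathcal{A}}P^k$ is the full parity-preserving (block-diagonal) subalgebra of $\mathrm{End}(\mathcal{V}^{\lambda_k})$ respecting the splitting $\mathcal{V}^{\lambda_k}=P^k\rchi_0\oplus P^k\rchi_\frac{1}{2}$. The crucial claim is that the compressions of products of the $Y(a,f)$ already span this block algebra. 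This is a density statement for the action of the field modes: $\rchi_0$ and $\rchi_\frac{1}{2}$ are \emph{inequivalent irreducible} $\rchi_0$-modules, so the Jacobson density theorem forces the image of the mode algebra in $\mathrm{End}(\rchi_0\oplus\rchi_\frac{1}{2})$ to be, after compression to any finite energy, the full block algebra above; the inequivalence is what decouples the two blocks rather than forcing them equal. The modes $\textbf{y}(a)_m$ themselves are available in the scaling limit by \hyperref[thm4.8]{\textbf{Theorem 4.8}}, bootstrapped from $Y(\omega,f)=L(f)$ through the Borcherds identity.

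Granting the level-wise spanning, any $O\in\mathcal{A}$ is approximated in each seminorm $\|\cdot\|_k$ by an element $B_k\in\mathcal{B}$ with $P^kB_kP^k=P^kOP^k$, and a diagonal construction identical in spirit to the one in \hyperref[rmk3]{\textit{Remark 3}} and \hyperref[thm4.6]{\textbf{Theorem 4.6}} assembles the $B_k$ into a single $d_{\text{SL}}$-convergent sequence with limit $O$; completeness of $\mathcal{A}$ then places $O$ in the closure of $\mathcal{B}$. This is precisely ``generation by closure in a topology'' in the qualified sense the conjecture asks for. For a general UMM one replaces the explicit fermionic Jacobson input by the representation theory of the rational VOA, the existence of the finite operators $\widetilde{L}_m$ from \hyperref[cnj4.3]{\textbf{Conjecture 4.3}}, and the recovery of the higher modes via \hyperref[cnj5.5]{\textbf{Conjecture 5.5}}.

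The step I expect to be the genuine obstacle is \emph{not} the finite-level spanning, which is clean, but the control of the topology in gluing the levels together. The pathological sequences of Section \hyperref[2.2]{2.2} show that $d_{\text{SL}}$-limits of products can fail to be the product of the limits once there is significant high--high energy mixing, so one must choose the approximants $B_k$ with uniformly controlled off-diagonal (high-energy) tails --- exactly the kind of estimate carried out for $\widetilde{L}(f)$ in \hyperref[lem4.5]{\textbf{Lemma 4.5}} and propagated through \hyperref[thm4.6]{\textbf{Theorem 4.6}} and \hyperref[thm4.8]{\textbf{Theorem 4.8}}. Verifying that such tail estimates survive the diagonal limit, so that the closure is exactly $\mathcal{A}$ and neither smaller nor ill-defined, is where the real analytic work lies; absent this, one only obtains containment in a possibly larger weakly closed set.
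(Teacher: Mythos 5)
You are attempting to prove \hyperref[cnj5.1]{\textbf{Conjecture 5.1}}, which the paper lists among its open problems and does not prove; there is no proof of the authors to compare yours against, so the proposal must stand on its own. As written it does not: your closing paragraph concedes that, absent the tail estimates, one ``only obtains containment in a possibly larger weakly closed set,'' which is an admission that what you have is a strategy, not a proof.

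The substantive issue is that you have located the difficulty in the wrong place. The step you defer as ``the real analytic work'' is in fact vacuous: the seminorms of \hyperref[rmk3]{\textit{Remark 3}} are $\|O\|_j=\|P^jOP^j\|$ and the spectral projections are nested, $P^jP^k=P^j$ for $j\le k$. So once Jacobson density hands you $B_k$ in the algebra of modes --- available inside your $\mathcal{B}$ because $\textbf{y}(a)_m=Y(a,e^{im\theta})$ and finite products of smeared fields lie in $\mathcal{A}$ by \hyperref[thm4.8]{\textbf{Theorem 4.8}} --- with $B_kv=(P^kOP^k)v$ for all $v\in\mathcal{V}^{\lambda_k}$, you automatically get $P^jB_kP^j=P^jOP^j$ for every $j\le k$, hence $d_{\text{SL}}(B_k,O)\to 0$ with \emph{no} control on the high-energy tails of $B_k$; and completeness of $(\mathcal{A},d_{\text{SL}})$ from \hyperref[rmk3]{\textit{Remark 3}} forces any $d_{\text{SL}}$-limit of elements of $\mathcal{B}\subset\mathcal{A}$ back into $\mathcal{A}$, so the closure cannot be ``larger.'' The pathologies of \hyperref[2.2]{section 2.2} that you invoke obstruct convergence of \emph{products} (the SL-algebra property), not $d_{\text{SL}}$-convergence, and the conjecture asks only for generation by closure. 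Conversely, the steps you declare ``clean'' are the ones that carry the content and are only asserted: (i) block-diagonality of $P^kOP^k$ for every $O\in\mathcal{A}$ requires the \hyperref[A.1]{appendix}'s identification of the $(-1)^F$-sectors of $\mathcal{W}_n$ with $\rchi_0$ and $\rchi_\frac{1}{2}$, and is specific to the open chains --- for the periodic chain $\overline{\mathcal{A}}$ contains interchiral observables (\hyperref[5.2]{section 5.2}) that no algebra of chiral $Y(a,f)$'s generates, so both your argument and any honest reading of the conjecture must be restricted to the chiral cases; (ii) the passage from $\mathcal{A}$ to $\overline{\mathcal{A}}$ rests entirely on the informal \hyperref[rmk2]{\textit{Remark 2}}, and since $\overline{\mathcal{A}}$ consists of partially defined sesquilinear forms, ``closure with respect to some topology'' is being stretched there. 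With (i) verified and (ii) made precise, your outline would settle the Ising chiral case --- more easily than you expect; as submitted, it is a plausible program whose self-diagnosed gap is illusory and whose real gaps are left as assertions.
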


The smeared fields $Y(a,f)$ were obtained as QC-local operators.
\begin{cnj}\label{cnj5.2}
There is a spanning set $\mathcal{S}=\{a\}_{a \in \mathcal{V}}$ of the VOA such that for any $a \in \mathcal{S}$, the smeared field $Y(a,f)$ is space-local.
\end{cnj}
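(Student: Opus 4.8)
The plan is to exhibit an explicit spanning set whose smeared fields are \emph{manifestly} realized on the lattice as weighted sums of ultra-space-local operators, rather than relying on the QC-local finite versions $\widetilde{\textbf{y}}_E(a)_m$ of \hyperref[thm4.8]{\textbf{Theorem 4.8}} (which are products of full-lattice sums and hence only QC-local). First I would fix the base case: by \hyperref[thm4.11]{\textbf{Theorem 4.11}}, $Y(\omega,f)=L(f)$ is realized as $\widetilde{e}(f)=\alpha_n^c\sum_j f_c(\theta_j)e_j+i\alpha_n^s\sum_j f_s(\theta_j)[e_j,e_{j+1}]+(\cdots)\textbf{1}$, a weighted sum of the $3$-ultra-local operator $e_j$ and the $4$-ultra-local operator $[e_j,e_{j+1}]$, hence space-local in the sense of \hyperref[dfn11]{\textbf{Definition 11}}. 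By \hyperref[rmk10]{\textit{Remark 10}} the derivative fields $Y(L_{-1}^k\omega,f)$ are again space-local, their finite versions being weighted sums of iterated adjacent commutators, which remain ultra-local with a fixed window width independent of $n$.

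The main step is a normally-ordered-product construction. Since $M(4,3)$ is the even subalgebra of the free-fermion VOA, a convenient spanning set is $\mathcal{S}=\{:\partial^{k_1}\psi\cdots\partial^{k_{2r}}\psi:\}$, the normally ordered products of an even number of derivatives of the Fermi field $\psi$; equivalently $\mathcal{S}$ may be taken to be the iterated normally ordered products of $\omega$ and its derivatives, which span any UMM because the descendants of $\omega$ are generated by the modes $L_n$, $n\ge-2$ (the three facts preceding \hyperref[thm4.8]{\textbf{Theorem 4.8}}), and $\omega_{-1-j}=L_{-2-j}$ produces exactly the normally ordered product of $\partial^{j}$ of the conformal field with a given field. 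To each $a\in\mathcal{S}$ I would associate the ultra-local lattice operator $\widetilde{a}_j$ obtained by forming the corresponding normally ordered polynomial in the nearby Majorana bilinears $e_j,e_{j+1},\dots$ over a window of fixed width $w_a$, and subtracting its lattice self-contractions (the analogue of the scalar counterterms $\beta_n^{m,c}\textbf{1}$ of \hyperref[thm3.1]{\textbf{Theorem 3.1}} together with lower normally ordered terms). The candidate finite version of the smeared field is then $\sum_j f(\theta_j)\,\widetilde{a}_j$, which is space-local by construction since each $\widetilde{a}_j$ acts on $w_a$ adjacent particles with $w_a$ independent of $n$.

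The crux is to show this space-local sequence actually has scaling limit $Y(a,f)$. In the Ising case this is tractable because the $e_j$ are explicit Majorana bilinears (the even algebra generated by the Dirac operators of the \hyperref[A.1]{appendix}), so Wick's theorem reduces everything to sums of products of two-point lattice correlators; one then argues, exactly as in the estimates of \hyperref[lem4.5]{\textbf{Lemma 4.5}} and \hyperref[thm4.8]{\textbf{Theorem 4.8}}, that with the contraction counterterms removed the windowed lattice polynomial converges to the continuum field $:\partial^{k_1}\psi\cdots:$ smeared against $f$, with rate $O(\frac{1}{n})$ on bounded energy, while products of such fields stay defined on $\mathcal{V}$ via the energy bounds $\|Y(a,f)v\|\le C_a\|f\|_{r_a}\|(L_0+\textbf{1})^{s_a}v\|$. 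Closure of $\mathcal{S}$ under normally ordered product, together with the base case and \hyperref[rmk10]{\textit{Remark 10}}, then gives space-locality of $Y(a,f)$ for every $a\in\mathcal{S}$.

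The hard part will be controlling the normal-ordering subtractions uniformly in the window. A priori the product $\widetilde{a}_j\widetilde{b}_j$ of two ultra-local sequences carries a self-contraction that is a divergent $c$-number (and shorter-window) contribution, and one must show the lattice counterterm both renders the limit finite and preserves ultra-locality — the counterterm being a weighted sum of the identity and of operators on a strictly smaller window, hence still space-local. Establishing this cancellation with the correct convergence rate is essentially a lattice version of the point-splitting regularization of $:\!\psi\partial\psi\!:$, and is where the genuine work lies; note that \hyperref[thm4.15]{\textbf{Theorem 4.15}} shows the naive \emph{point}-supported version diverges, so both the smearing by $f$ and the constant-width window are essential. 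Beyond Ising, the argument would additionally require \hyperref[cnj4.3]{\textbf{Conjecture 4.3}} (and \hyperref[cnj5.5]{\textbf{Conjecture 5.5}} for the explicit higher modes) to supply the space-local finite versions $\widetilde{L}_m$ that play the role of the $e_j$-bilinears.
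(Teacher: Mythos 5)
There is no paper proof to compare against: the statement you were given is \hyperref[cnj5.2]{\textbf{Conjecture 5.2}}, which the paper leaves open in its final section on conjectures and future directions. The only guidance the authors offer is the remark immediately following it: the obstruction is the absence of commutators in the Borcherds identity, and if $Y(a,f)$ could be written via iterated commutators of $Y(\omega,f)$ with the Virasoro generators, space-locality would follow (a commutator of the full-lattice sum $\widetilde{L}(f)$ with an ultra-local operator is again ultra-local, because far-apart $e_j$'s commute away). So your proposal can only be judged on its own merits, not against an existing argument.

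On those merits, your strategy --- exploiting the free-fermion realization of Ising, taking $\mathcal{S}$ to be normally ordered products of derivatives of $\psi$, and realizing each smeared field as $\sum_j f(\theta_j)\,\widetilde{a}_j$ with $\widetilde{a}_j$ a constant-width, counterterm-subtracted window of Majorana bilinears --- is coherent and genuinely different from the commutator route the paper hints at. But it is not a proof, and the gap is exactly the step you yourself defer as ``where the genuine work lies.'' Defining the lattice normal-ordering counterterms and proving that the windowed composite operators converge, even weakly on bounded energy, to the continuum normally ordered fields is a lattice point-splitting statement that appears nowhere in the paper; the estimates of \hyperref[lem4.5]{\textbf{Lemma 4.5}} and \hyperref[thm4.8]{\textbf{Theorem 4.8}} that you invoke do not transfer, since they control full-lattice Fourier sums of the $\widetilde{L}_m$ and products of such sums, not constant-width products of $e_j$'s with divergent self-contractions removed. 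The paper itself warns that constructing finite versions of fields on a lattice is hard (end of \hyperref[5.2]{Section 5.2}, citing the parafermion literature), and \hyperref[thm4.15]{\textbf{Theorem 4.15}} shows that natural ultra-local candidates can simply fail to have the expected scaling limit. Until that cancellation-and-convergence step is carried out, your plan has the same status as the statement itself: a conjecture with a plausible attack, not a theorem.
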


One obstacle to space locality is the \textit{absence} of commutators in the Borcherds identity. Otherwise, the terms involving products of far apart $e_j$s would disappear. If $Y(a,f)$ can be expressed in terms of commutators of $Y(\omega,f)$ with the Virasoro generators, the above conjecture would be true. Therefore, the obstacle may just be some simple lemma that is missing.

Closely related to conformal invariance, is the scaling limit of the product of unitaries $e^{i\widetilde{L}(f)}$ and the smeared field operators $\widetilde{L}(f)$.
\begin{cnj}\label{cnj5.3}
Prove that the algebras in \hyperref[cor4.10]{\textbf{Corollary 4.10}} and \hyperref[thm4.6]{\textbf{Theorem 4.6}} together generate a (strong) SL-algebra.
\end{cnj}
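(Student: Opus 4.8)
The plan is to show that the $*$-algebra $\mathcal{M}$ generated by $\{L(f):f\in C^\infty(S^1)\}$ (\hyperref[thm4.6]{\textbf{Theorem 4.6}}) together with the von Neumann algebra $\mathcal{R}=\{e^{iL(g)}\}''$ (\hyperref[cor4.10]{\textbf{Corollary 4.10}}) consists of almost linear operators that are well-defined on $\mathcal{V}\times\mathcal{V}$ in the sense of \hyperref[dfn7]{\textbf{Definition 7}}, all lying in $\mathcal{A}$, and to attach to each a defining sequence realizing a strong SL-algebra as in \hyperref[dfn9]{\textbf{Definition 9}}. The candidate sequences are already available: to $L(f)$ attach $\widetilde{L}(f)$ (\hyperref[lem4.5]{\textbf{Lemma 4.5}}), to a bounded $B\in\mathcal{R}$ attach the uniformly norm-bounded, strongly convergent sequence $B_n\in\mathcal{A}_n$ produced from Kaplansky's theorem in \hyperref[cor4.10]{\textbf{Corollary 4.10}}, and to a word $A_1\cdots A_k$ attach the product of the corresponding sequences. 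At finite $n$ every factor is a genuine bounded operator on $\mathcal{W}_n$, so each finite-stage product exists unconditionally; the entire content is the convergence of the matrix elements $(u_n,A_{1,n}\cdots A_{k,n}v_n)\to(u,A_1\cdots A_kv)$ for fixed $u,v\in\mathcal{V}$, which is precisely the balance issue flagged in \hyperref[2.2]{section 2.2}.

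First I would dispose of the words of the shape $L^{\ast}\,B\,L^{\ast}$, that is, those with at most one internal bounded block, after merging consecutive bounded factors (legitimate since $\mathcal{R}$ is an algebra). For such a form one pushes the leftmost $L$-block onto $u$ by adjunction, using $L(f)^\dagger=L(\bar f)$ and $\widetilde{L}(f)^\dagger=\widetilde{L}(\bar f)$, and leaves the rightmost $L$-block acting on $v$; since $u,v$ are finite-energy, $\widetilde{L}(\bar f_1)\cdots u_n\to L(\bar f_1)\cdots u$ and $\widetilde{L}(f_k)\cdots v_n\to L(f_k)\cdots v$ in $\overline{\mathcal V}$ by \hyperref[lem4.5]{\textbf{Lemma 4.5}} and the intermediate-term estimates of \hyperref[thm4.6]{\textbf{Theorem 4.6}}, while the single bounded $B$ in the middle only contributes $B_n\to B$ strongly against a uniform norm bound. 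Concretely, for $L(f)BL(g)$ one writes $(u_n,\widetilde L(f)B_n\widetilde L(g)v_n)=(\widetilde L(\bar f)u_n,\,B_n\widetilde L(g)v_n)$, the left vector tends to $L(\bar f)u$, the right vector to $BL(g)v=\langle\,\cdot\,\rangle$, and the form converges to $(u,L(f)BL(g)v)$. This already establishes the SL-algebra structure on the sub-algebra generated by $\{L(f)\}$ and a single interval's bounded observables.

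The hard part is a word in which a smeared field is flanked by bounded operators on both sides, the prototype being $L(f_1)\,B_1\,L(f_2)\,B_2\,L(f_3)$. After pushing $L(f_1)$ onto $u$ one must control $\widetilde{L}(f_2)\,\eta_n$ with $\eta_n:=B_{2,n}\widetilde L(f_3)v_n$, and although $\eta_n\to B_2L(f_3)v\in\overline{\mathcal V}$, this limit need not lie in $\mathrm{Dom}(L_0)\subseteq\mathrm{Dom}(L(f_2))$, so the continuum product $L(f_1)B_1L(f_2)B_2L(f_3)$ is not manifestly defined and the sequence may fail to converge. The mechanism is exactly the imbalance of \hyperref[2.2]{section 2.2}: $\eta_n$ is only norm-bounded, so its high-energy content need not decay, whereas $\widetilde L(f)$ carries the factor $O(n^2)$ of \hyperref[rmk7]{\textit{Remark 7}} on high energies. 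In continuum language this is the statement that a generic element of the local von Neumann algebra $\mathcal{R}$ need not preserve the smooth domain $\mathcal{V}^\infty=\bigcap_{s}\mathrm{Dom}((L_0+\textbf{1})^{s})$; I expect this to be the genuine crux of the conjecture.

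To overcome it I would seek a uniform local energy bound, namely that the approximants satisfy $\|(\widetilde L_0+\textbf{1})^{s}B_nw_n\|\le C_{B,s}\|(\widetilde L_0+\textbf{1})^{s}w_n\|$ with $C_{B,s}$ independent of $n$, so that $\eta_n$ remains energy-concentrated and $\widetilde L(f_2)\eta_n$ is governed by the energy bound $\|L(f)w\|\le\sqrt{c/2}\,\|f\|_{3/2}\|(L_0+\textbf{1})w\|$ rather than by the crude $O(n^2)$ norm. For the generators $e^{iL(g)}$ this is the classical commutator/analytic-vector estimate showing that an energy-bounded self-adjoint field exponentiates to a smoothness-preserving unitary; the difficulty is propagating such a bound through products and then through the weak closure with a constant that does not degenerate. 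A cleaner fallback, which I would attempt first, is to prove the conjecture with $\mathcal{R}$ replaced by the $C^{\ast}$-algebra generated by the $e^{iL(g)}$, where the approximating polynomials preserve $\mathcal{V}^\infty$ by construction and a uniform-in-$n$ smoothness constant on a fixed generating set can plausibly be tracked; combined with the intermediate-term telescoping of \hyperref[thm4.6]{\textbf{Theorem 4.6}} this should yield a strong SL-algebra outright. Passing from the $C^{\ast}$-algebra to the full double commutant, where Kaplansky's theorem is invoked but smoothness control is lost, is where I expect the remaining and possibly irreducible analytic work to lie.
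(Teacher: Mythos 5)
You were asked to prove Conjecture 5.3 of the paper, which sits in the final section (``Conjectures and future directions''): the paper itself offers \emph{no} proof of this statement, so there is no author argument to compare yours against, and the relevant question is only whether your proposal settles the conjecture. It does not, by your own admission. Your treatment of words containing at most one bounded block is sound: merging consecutive bounded factors inside the von Neumann algebra, pushing the outer smeared fields onto $u$ and $v$ via $\widetilde{L}(f)^\dagger=\widetilde{L}(\bar f)$, and combining the norm convergence $\|(\widetilde{L}(f)-L(f))v\|\to 0$ of Lemma 4.5 with the uniform norm bounds and strong convergence of the $B_n$ supplied by Kaplansky's theorem (as in Corollary 4.10) does give convergence of those matrix elements. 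That is genuine partial progress, and your identification of the obstruction is exactly right.

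But the crux --- words such as $L(f_1)B_1L(f_2)B_2L(f_3)$, where an unbounded smeared field is flanked on both sides by bounded observables --- is precisely where your argument stops. The vector $B_2L(f_3)v$ lies only in $\overline{\mathcal{V}}$ and need not belong to the domain of $L(f_2)$; equivalently, a generic element of the local von Neumann algebra need not preserve the smooth domain, and the finite-stage approximants $\widetilde{L}(f_2)$ carry only the crude $O(n^2)$ norm bound of Remark 7 on high energies, so nothing rules out the divergence mechanism described in section 2.2 of the paper. Your proposed remedies --- a uniform-in-$n$ local energy bound of the form $\|(\widetilde{L}_0+\textbf{1})^{s}B_nw\|\le C_{B,s}\|(\widetilde{L}_0+\textbf{1})^{s}w\|$, or retreating to the $C^{*}$-algebra generated by the $e^{iL(g)}$ where smoothness of approximants can plausibly be tracked --- are reasonable lines of attack but are not carried out, and the passage from the $C^{*}$-algebra to the double commutant, where Kaplansky's theorem destroys all smoothness control, is exactly what makes the statement a conjecture rather than a theorem. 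In short: your diagnosis is accurate, your partial results appear correct, but the proposal is a program, not a proof, and the conjecture remains open after it.
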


As a remark on the emergence of conformal invariance in the scaling limit, notice that due to \hyperref[cor4.9]{\textbf{Corollary 4.9}}, for $e^{iL(g)} \in \mathcal{A}_{lcn}(I)$, we have
$$e^{i\widetilde{L}(f)}e^{i\widetilde{L}(g)}e^{-i\widetilde{L}(f)} \xrightarrow{SL} e^{iL(f)}e^{iL(g)}e^{-iL(f)}.$$
Further, the scaling limit above is itself in $\mathcal{A}_{lcn}(\exp{(f)}(I))$ and it is this fact that is the \textit{reason} for conformal covariance (expressed below) in the LCN as $e^{-iL(f)}$s generate the local algebras:
$$U(\gamma)\mathcal{A}_{lcn}(I)U(\gamma)^\dagger=\mathcal{A}_{lcn}(\gamma(I)), \ \ \gamma \in \text{Diff}_+(S^1).$$
But it is easy to show that $\mathcal{A}_n(I)$ can also be generated by $e^{i\widetilde{L}(f)}$s. Due to \hyperref[cor4.9]{\textbf{Corollary 4.9}} and \hyperref[thm4.13]{\textbf{Theorem 4.13}}, this essentially implies that, loosely speaking, the two sets $e^{i\widetilde{L}(f)}\mathcal{A}_n(I)e^{-i\widetilde{L}(f)}$ and $\mathcal{A}_n(\exp{(f)}(I))$ become the same in the scaling limit (at least for $I$ and $\exp{(f)}(I)$ satisfying condition of \hyperref[thm4.13]{\textbf{Theorem 4.13}}). Therefore, conformal invariance emerges in the scaling limit. This may not be satisfying as it is not clear whether the group of operators $e^{i\widetilde{L}(f)}$ is the \textit{natural} choice for the group (\textit{sequence} of groups acting on the anyonic chains) that should recover the action of $\text{Diff}_+(S^1)$ in the scaling limit (see \cite{jones16} for a different candidate, the Thompson's group).

We could not get all types of observables as algebra. The point-like field operators $Y(a,z)$ were just given as a vector space in the scaling limit. 

\begin{cnj}\label{cnj5.4}
Field operators $Y(a,z)$ form a (strong) SL-algebra whenever their product is defined on $\mathcal{V} \times \mathcal{V}$.
\end{cnj}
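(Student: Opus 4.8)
The plan is to produce, for each point-like field $Y(a_i,z_i)$, a single sequence of observables in $\mathcal{A}_n$ whose products realize the correlation functions of \hyperref[dfn7]{\textbf{Definition 7}}, and to verify convergence of these products in the sense of \hyperref[dfn9]{\textbf{Definition 9}}. I would start from the sequences already constructed in \hyperref[thm4.14]{\textbf{Theorem 4.14}}, but in the refined form of \hyperref[rmk13]{\textit{Remark 13}}: for each $i$ set
$$O_n^{(i)}=\sum_{|m|<M_i(n)}\widetilde{\textbf{y}}_{E_i(n)}(a_i)_m\,z_i^{-m-\text{wt }a_i},$$
where the energy cut-offs $E_i(n)$ are chosen to \emph{increase from right to left}, exactly as the $E_j$'s are tuned at the end of the proof of \hyperref[thm4.8]{\textbf{Theorem 4.8}}, so that each interior remainder $R^{a_i}_{E_i,n}$ still annihilates everything fed into it from the right. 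Since the product $Y(a_1,z_1)\cdots Y(a_k,z_k)$ is assumed defined on $\mathcal{V}\times\mathcal{V}$, it suffices by sesquilinearity and finiteness of the weight spaces to check convergence of $(u_n,O_n^{(1)}\cdots O_n^{(k)}v_n)$ for homogeneous $u,v$.

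Next I would expand each factor as $\widetilde{\textbf{y}}_{E_i}(a_i)_{m}=\textbf{y}(a_i)_m|_{E_i}+O(n^{-g_{a_i}})+R^{a_i}_{E_i,n}$ inside the product and separate the single \emph{leading} term, in which every factor is a genuine mode operator, from all error terms containing at least one $O(n^{-g})$ or $R$ factor. For fixed homogeneous $u,v$ and $n$ large the grading forces the leading term to reduce to the truncation at $|m_i|<M_i(n)$ of the formal sum of \hyperref[dfn7]{\textbf{Definition 7}}, and as $M_i(n)\to\infty$ this truncation converges to $(u,Y(a_1,z_1)\cdots Y(a_k,z_k)v)$ precisely by the absolute-convergence hypothesis. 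The $R$-terms vanish for large $n$ by the right-to-left choice of the $E_i(n)$, exactly as in \hyperref[thm4.14]{\textbf{Theorem 4.14}}.

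The hard part—and the reason the naive sequence is flagged as ``probably not the suitable one''—is the residual $O(n^{-g})$ error terms. For a \emph{single} operator the balance works: with $M_i(n)=\tfrac{g_{a_i}\log n}{2\log|z_i|}$ one has $|z_i|^{M_i(n)}=n^{g_{a_i}/2}$, so a lone error $O(n^{-g_{a_i}})$ summed against $z_i^{-m}$ gives $O(n^{-g_{a_i}/2})\to0$. In a product, however, an error at position $i$ is multiplied by the other factors, and bounding those by the crude operator norm $n^{e_{a_j}}$ of the third induction hypothesis produces a total of order $n^{-g_{a_i}/2}\prod_{j\neq i}n^{e_{a_j}+g_{a_j}/2}$, which \emph{diverges}. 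The factorized estimate discards exactly the radial-ordering decay, valid in the domain $|z_1|>\cdots>|z_k|$, that makes the true correlator converge at all. To close the argument one must estimate the error terms while keeping the correlation structure intact: replace the good factors not by their norm but by $\textbf{y}(a_j)_m|_{E_j}$, whose norm on the relevant $O(\log n)$-energy window is only polylogarithmic via the energy bound, and re-sum \emph{with} the genuine mode-matrix-element decay in place. Controlling this correlated cancellation uniformly in $k$ is the central obstacle; I expect it to demand either a sharper remainder estimate than $O(n^{-g})$ for the building blocks $\widetilde{\textbf{y}}_E(a)_m$, or a substitute sequence built from mollified smeared fields $\widetilde{Y}(a_i,f_n^{(i)})$ with $f_n^{(i)}\to\delta_{z_i}$ at an $n$-dependent rate, for which \hyperref[thm4.6]{\textbf{Theorem 4.6}} and \hyperref[thm4.8]{\textbf{Theorem 4.8}} already furnish convergent products—the new difficulty there being to reconcile the rate of $f_n^{(i)}\to\delta$ against the degradation of the norms $\|f_n^{(i)}\|_{3/2}$.

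Finally, even the weaker claim of an ordinary SL-algebra rather than a strong one (in the sense of \hyperref[dfn9]{\textbf{Definition 9}}) can be approached through \hyperref[rmk3]{\textit{Remark 3}}: each $Y(a_i,z_i)$ is a weak limit of smeared fields, $\mathcal{A}$ is closed under such weak limits, and a diagonal argument over the mollifiers would place each product form inside $\mathcal{A}$. This route yields membership without a single distinguished product-compatible sequence, so it does not by itself deliver the strong structure demanded there, and I therefore regard the correlated error control of the previous paragraph as the real content of the conjecture.
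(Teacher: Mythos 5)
You were asked to prove \textbf{Conjecture 5.4}, and the first thing to say is that the paper itself does not prove it: it is stated as an open problem in Section 5, and the surrounding text (Section 1.5 and Section 4.3) explains why. \textbf{Theorem 4.14} only places each individual $Y(a,z)$ in $\mathcal{A}$ as an almost linear operator, i.e.\ a vector-space statement; \textbf{Theorem 4.15} shows the natural ultra-local candidate sequence fails outright; the authors state that the sequence they do propose ``is probably not the suitable one''; and after the conjecture they point to the truncation techniques of \cite{konig2017matrix} as a possible route. So there is no proof in the paper to compare yours against, and any attempt must be judged on whether it closes the problem on its own.

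Yours does not, and you say so yourself, which is the honest and correct assessment. Your first two paragraphs faithfully reproduce the mechanics already available in the paper: the cutoff sequences of \textbf{Theorem 4.14} refined as in \textit{Remark 13}, the expansion of each factor into leading term plus $O(n^{-g})$ plus remainder $R$, the right-to-left tuning of energy cutoffs so the $R$-terms annihilate (as at the end of the proof of \textbf{Theorem 4.8}), and the reduction of the leading term to the truncated formal sum of \textbf{Definition 7}, converging by the absolute-convergence hypothesis. The genuine gap is exactly where you place it: once an $O(n^{-g_{a_i}})$ error sits inside a product, bounding the remaining factors by their crude operator norms $n^{e_{a_j}}$ destroys the radial-ordering decay and the estimate diverges, and your proposed repairs (sharper remainder estimates for $\widetilde{\textbf{y}}_E(a)_m$, or mollified smeared fields $\widetilde{Y}(a_i,f_n^{(i)})$ with $f_n^{(i)}\to\delta_{z_i}$ balanced against the growth of $\|f_n^{(i)}\|_{3/2}$) are left as expectations, not arguments. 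That unclosed correlated-error control is precisely the content of the conjecture; your diagnosis of the obstruction coincides with the paper's own reasons for leaving the statement open, but a diagnosis is not a proof. Your final observation is also correct as far as it goes: the weak-limit route through \textit{Remark 3} recovers only membership of each $Y(a,z)$ in $\mathcal{A}$ (i.e.\ \textbf{Theorem 4.14} again), not an SL-algebra structure on products, so it cannot substitute for the missing estimate.
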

The techniques used in \cite{konig2017matrix} may be useful to prove this as it involves some kind of truncations of the field operators. Finally, it would be useful for quantum simulation if one could obtain the field operators as ultra local operators because the product of ultra local hermitian operators can be simulated efficiently \cite[Claim 6.2]{aharonov2007polynomial} on a quantum computer.

Due to the numerical results on higher level ACs, it was conjectured (\hyperref[cnj4.3]{\textbf{4.3)}} that the theorems of \hyperref[4]{section 4} are true for higher level UMMs. Here, we emphasize the identities giving us the Virasoro algebra 
\begin{cnj}\label{cnj5.5}
The Hilbert space and the Virasoro algebra action of every chiral UMM with central charge $c=1-\frac{6}{(k+1)(k+2)}$ is obtainable as a scaling limit of some $SU(2)_k$ AC with some suitable boundary condition and the same theorems proved in \hyperref[4]{section 4} hold for them. Also denoting by $O_n^c$ and $O_n^s$ the following
$$O_n^c=-\sum_{j=1}^{2n-1}\cos(\frac{m(j+\frac{1}{2})\pi}{2n+1})e_j, \ \ O_n^s=i\sum_{j=1}^{2n-2}\sin(\frac{m(j+1)\pi}{2n+1})[e_j,e_{j+1}],$$
we have operators $\widetilde{L}_{\pm m}^{c},\widetilde{L}_{\pm m}^{s} \xrightarrow{SL} L_{\pm m}$ satisfying the properties in \hyperref[cnj4.3]{\textbf{4.3}} and,
$$\frac{\widetilde{L}_m^c+\widetilde{L}_{-m}^c}{2}=\alpha_n^cO_n^c+\beta_n^{m,c}\textbf{1} \xrightarrow{SL} \frac{L_m+L_{-m}}{2},$$
$$ \frac{i(\widetilde{L}_m^s-\widetilde{L}_{-m}^s)}{2}=\alpha_n^sO_n^c+\beta_n^{m,s}\textbf{1} \xrightarrow{SL} \frac{i(L_m-L_{-m})}{2},$$ 
where $\alpha_n^c,\alpha_n^s,\beta_n^{m,c},$ and $\beta_n^{m,s}$ are suitable scaling factors.
\end{cnj}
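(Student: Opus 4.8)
The plan is to reduce Conjecture 5.5 to the two ingredients that were established concretely for Ising in Theorem 3.1: an identification of the low-energy Hilbert space of the chain with a prescribed module $\mathcal{V}_{c,h}$, together with an explicit handle on the energy-local degrees of freedom (EL-DOFs) that lets one verify the quantitative package of Conjecture 4.3. First I would fix, for each primary $h_{r,s}$ in the Kac table of $M(k+2,k+1)$, the boundary condition $(a,b)$ of the $SU(2)_k$ chain whose admissible fusion paths reproduce the character of $\mathcal{V}_{c,h}$. This is a representation-theoretic/combinatorial matching: the $SU(2)_k$ anyonic chain is the Andrews-Baxter-Forrester RSOS height model in disguise, and its finite-size partition functions in the critical regime are known to yield Virasoro characters. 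I would make this rigorous by constructing the connecting maps $\phi_n^M$ and checking the properties (P) of Section 2.1 via the stabilization of the RSOS path combinatorics at fixed energy cut-off, so that the scaling limit of the Hilbert spaces is exactly the claimed module.

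Second --- and this is where the real work lies --- I would diagonalize $H_n=-\frac1d\sum_j e_j$ with $d=2\cos(\pi/(k+2))$. For $k=2$ the Temperley-Lieb representation is free-fermionic and the appendix uses Fendley's method to produce explicit Majorana EL-DOFs; for $k>2$ no such free structure exists, so I would instead invoke the Bethe-ansatz integrability of the RSOS/TL chain. The Bethe equations give the low-lying eigenvalues with the expected form $E=E_1 L+\frac{2\pi v}{L}(-\frac{c}{12}+h)+O(L^{-2})$, which fixes $c$ and $h$ and, after extracting the $1/L$ term, yields the scaled $\widetilde{L}_0^c\xrightarrow{SL}L_0$. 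The bulk of the argument is then the Koo-Saleur convergence of the Fourier transforms $O_n^c,O_n^s$ of $e_j$ and $[e_j,e_{j+1}]$ to $L_m+L_{-m}$ and $i(L_m-L_{-m})$. Concretely, I would compute matrix elements $(u_n,O_n^{c/s}v_n)$ between low-energy Bethe eigenstates, show they converge to the corresponding Virasoro matrix elements, and verify the full quantitative list demanded by Conjecture 4.3: the $|m|$-bounded energy shift, the approximation $\widetilde{L}_m=L_m|_{n^{d_\omega}}+O(n^{-g_\omega})+R_n^m$, the $O(n^{e_\omega})$ norm bound, and space locality of $\widetilde{L}_m$ as a finite Fourier sum of the $3$-ultra-local $e_j$. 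Once these hold, the proofs of Theorems 4.2, 4.6, 4.8 and 4.13 transcribe essentially verbatim, with the Ising energy-bound exponents replaced by the constants $d_\omega,g_\omega,e_\omega$, exactly as anticipated in Remark 8.

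The main obstacle is precisely the loss of the free-fermion structure at level $k>2$. Eigenvalues are accessible from the Bethe equations, but the Koo-Saleur claim is a statement about \emph{matrix elements} of the local densities $e_j$ between excited states, i.e. lattice form factors, which are not standard Bethe-ansatz output; controlling them with the required rate $O(1/n)$ uniformly across the low-energy sector is the crux. Two routes seem plausible: either develop algebraic-Bethe-ansatz form-factor formulas for the Temperley-Lieb densities and run a Euler-Maclaurin analysis of their finite-size behaviour, or exploit the coset realization $\frac{SU(2)_k\times SU(2)_1}{SU(2)_{k+1}}$, in which the $SU(2)_1$ factor is free-fermionic, to transport the Ising-type estimates through the GKO decomposition. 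I expect the coset route to be cleaner for the qualitative scaling limit of the Hilbert space, but the convergence-rate and norm-growth bounds of Conjecture 4.3 --- which are what actually feed the algebra theorems of Section 4 --- will still require a genuinely new finite-size estimate beyond what the level-$2$ free-fermion calculation supplies.
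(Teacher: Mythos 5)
You should first note that the statement you were asked to prove is \textbf{Conjecture 5.5}, which the paper itself does not prove: the only case established rigorously is $k=2$ (Ising), via the free-fermion diagonalization of Theorem 3.1 and the appendix, and the general-$k$ statement is explicitly left open in Section 5.1. So there is no paper proof to match against; the relevant question is whether your proposal closes the gap the authors left. It does not, and you concede this yourself. Your text is a research program, not a proof: the entire quantitative content of the conjecture --- the identities $\frac{\widetilde{L}_m^c+\widetilde{L}_{-m}^c}{2}=\alpha_n^cO_n^c+\beta_n^{m,c}\textbf{1}$, the $|m|$-bounded energy shift, the approximation $\widetilde{L}_m=L_m|_{n^{d_\omega}}+O(n^{-g_\omega})+R_n^m$, and the $O(n^{e_\omega})$ norm bound --- is deferred to ``algebraic-Bethe-ansatz form-factor formulas'' or a ``coset transport'' that you acknowledge would ``require a genuinely new finite-size estimate.'' Naming the crux is not the same as supplying it; as written, every property demanded by Conjecture 4.3 for $k>2$ remains unverified.

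Two concrete points sharpen why your proposed routes are harder than your sketch suggests. First, the paper's appendix (subsection 2) shows that the coefficients in $O_n^c$ are \emph{forced}: the identity (30) holds only for $\cos\bigl(\frac{m(j+\frac{1}{2})\pi}{2n+1}\bigr)$, and the superficially similar Koo--Saleur choice $\cos\bigl(\frac{mj\pi}{2n}\bigr)$ produces nonvanishing low-high energy mixing (explicit nonzero coefficients of terms like $\Psi_x\Psi_{-y}$ with $x$ small and $y\sim n$), which destroys energy locality and with it the strong SL-algebra structure feeding Theorems 4.2, 4.6, 4.8 and 4.13. Any Bethe-ansatz computation of matrix elements of $O_n^c,O_n^s$ for $k>2$ must reproduce exactly this cancellation mechanism, for which there is no known interacting analogue of identity (30). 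Second, the paper warns at the end of Section 5.1 that the natural fallback --- verifying Virasoro commutation relations of candidate $\widetilde{L}_m$'s --- fails for a structural reason: commutators generate terms with non-vanishing norm that should nonetheless vanish in the scaling limit, so even ``convergence of commutators'' does not yield the operator-level approximation with remainder $R_n^m$ that Conjecture 4.3 requires. Your coset suggestion faces the same issue in disguise: the GKO decomposition acts at the level of modules and characters, but the TL generators $e_j$ of the $SU(2)_k$ chain are not expressible as local operators in the $SU(2)_k\times SU(2)_1$ free-fermion variables, so Ising-type estimates do not transport through it in any obvious way.
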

\begin{rmk}\label{rmk14}
In \cite[(7.5)]{koo1994representations}, there is a very similar conjecture although different. These will be compared in \hyperref[A.2]{appendix subsection 2}.
\end{rmk}
\begin{rmk}\label{rmk15}
The size of the chain was assumed to be $2n-1$. The chain size depends on the boundary condition which needs to be adjusted accordingly. 
\end{rmk}
One could then define the operators $\widetilde{L}_m$s as done before. Proving the above by direct diagonalization (as done for non-interacting theories), seems to be hard. As mentioned in \hyperref[1.6]{section 1.6}, one could hope to consider the commutators of the above observables and show that they satisfy similar relations as the Virasoro algebra. But note that by taking commutators, terms appear with non-vanishing norm and yet, what should be, vanishing in the scaling limit. These terms make it harder to recover the Virasoro algebra relations.

\subsection{Intertwiners and full CFTs}\label{5.2}

\hfill \break
Consider a rational VOA $\mathcal{V}$ with irreducible modules $A,B,C$ and corresponding conformal weights $h_A,h_B,h_C$. An \textit{intertwiner of type} $\binom{C}{A \ B}$ is a map 
$$\mathscr{Y}(\cdot,z): A \to \text{End}(B,C)[[z,z^{-1}]], \ \ \mathscr{Y}(a,z)=\sum_{n \in \mathbb{Z}}a_mz^{-\tau-m},$$ 
where $\tau=h_A+h_B-h_C$. It has the following notation for homogeneous $a \in A_k$
$$\mathscr{Y}(a,z)=\sum_{n\in \mathbb{Z}}\textbf{y}(a)_nz^{-n-k-\tau},$$
and it satisfies similar axioms as the vertex operator. 

Intertwiners are part of the fundamental features of a CFT as they describe the fusion rules. As an example, the fusion rules in \hyperref[1.2]{section 1.2} for the Ising model correspond to three different free fermionic fields
$$\psi_\frac{1}{2}^0(z)=\sum_{n \in \mathbb{Z}} z^{-(n-1)}\Psi_{n-\frac{1}{2}} $$
$$\psi_\frac{1}{2}^0(z)^\dagger=\psi_0^\frac{1}{2}(z)=\sum_{n \in \mathbb{Z}} z^{-n}\Psi_{n-\frac{1}{2}}$$
$$\psi_\frac{1}{16}^\frac{1}{16}(z)=\sum_{n \in \mathbb{Z}} z^{-n}\Psi_n,$$
where $\psi_i^j : \rchi_i \to \rchi_j$. Therefore, the natural question is how they emerge in the scaling limit and what the right framework to discuss them in finite settings should be. As an example, we could define some set of maps from $\mathcal{W}_n^1 \xrightarrow{SL} \rchi_\frac{1}{2}$ to $\mathcal{W}_n^2 \xrightarrow{SL} \rchi_0$ as the real vector space of observables 
$$\mathcal{A}^{h,\frac{1}{2} \to 0}_n=\{\sum_{m  =-n+\frac{1}{2}}^{n-\frac{1}{2}} \hat{f}_{m-1}\Psi_{m-\frac{1}{2}} \ |  \ \overline{\hat{f}_m}=\hat{f}_{-m} \in \mathbb{C} \},$$
and then define $\mathcal{A}^{i \to j}_n$, the algebra generated by odd numbers of Majorana operators (as in \hyperref[A.1]{Appendix subsection 1}), and also its closure $\overline{\mathcal{A}^{i \to j}_n}$. Still, this example clearly requires a very good idea of the finite versions of the primary fields. This is a hard problem in general (see \cite{mong2014parafermionic} as an example).

Although we will not provide an answer for this question,  independent of what the framework should be, one can try to find the fermionic free fields in the scaling limit as a QC-local operator. There are three types of free fermionic field $\psi_i^j$ as described in \hyperref[1.2]{section 1.2}.  As an example, one could use the basis provided in \cite[section 5]{bravyi2002fermionic} to localize the modes of the field $\psi_\frac{1}{2}^0(z)+\psi_0^\frac{1}{2}(z)$ on Hilbert spaces $\mathcal{W}_n \xrightarrow{SL} \rchi_0+\rchi_\frac{1}{2}$. Then, using the Borcherds identity (for intertwiners),
$$\mathscr{Y}(a,f)=\oint \mathscr{Y}(a,f)z^{k}\frac{\text{d}z}{2\pi i z}, \text{ where it can be seen that }\tau=0, $$
can be obtained similar to what was done in \hyperref[thm4.8]{\textbf{Theorem 4.8}}. 

Another important subject we did not discuss, was finding the algebra of observables for full CFTs. First, note that the scaling limit of the periodic anyonic chain should be assumed to be the full CFT on the torus. This changes the definition of the expectation values that one needs to measure in order to claim that a set generates an SL-algebra. The correlation functions for the torus are traces taken over the whole Hilbert space \cite{konig2017matrix}. The second issue is the presence of \textit{interchiral} observables \cite[section 6]{gainutdinov2016associative} for which there is no counterpart in the VOA picture. These observables can be obtained by using $\sum \sin()e_j$ and $\sum \cos()[e_j,e_{j+1}]$ as shown in the \hyperref[A.2]{appendix}. 

Still, with similar techniques as in \hyperref[4]{section 4}, it can be shown that the trace of the observables in finite spaces corresponding to the conformal field and the interchiral observable converge to what we expect:
$$\text{Tr}_\mathcal{V}((\mathbb{L}(f^{(1)})+\overline{\mathbb{L}}(g^{(1)}))\ldots (\mathbb{L}(f^{(k)})+\overline{\mathbb{L}}(g^{(k)}))r^{\mathbb{L}_0}), \ \ 0<r<1,$$
where $r$ is the diameter of the torus, $f^{(i)},g^{(i)} \in C^\infty(S^1)$, $\mathbb{L}(f^{(i)})$ is the smeared field for the conformal field and $\overline{\mathbb{L}}(g^{(i)})$ is the smeared field for the interchiral observable.

\subsection{Simulation of CFTs by quantum computers}\label{5.3}

\hfill \break 
The motivation of our work was an efficient quantum simulation of CFTs. What insights have we gained from this work? The first step is to define the problems that we want to solve. 

In each case, there will be \textbf{local} observables for which we ask their expectation values to be efficiently computed in polynomial time with respect to the inputs. By computing, we always mean approximating in polynomial time up to an error inverse polynomial with respect to the inputs. 

Informally speaking, we could also say that we are simulating efficiently (some of) the local observables themselves. Therefore, it is important to find out what those \textit{efficiently simulatable} local observables are. 

For example, in quantum computation, in the context of many problems like the simulation of the unitary evolution \cite{lloyd1996universal}, the \textit{efficient} $k$-Local Hamiltonians are a sum of \textit{polynomially} many $k$-ultra local operators.

When it comes to locality, quantum computation has its own precise definition. A fundamental aspect of the definition of locality is that explicitly or implicitly, there is a \textbf{sequence} of operators $O_n$ that are the sum of \textit{ultra-}local operators acting on at most $O(1)$ many particles. In this way, one could distinguish between local and nonlocal (sequence of) operators. This idea can not be applied directly to CFTs, simply because a CFT is a \textbf{single} Hilbert space with no sequences attached naturally to any observable. Since we believe that CFTs must be efficiently simulatable by a quantum computer, there are two paths for defining locality in CFTs. The first is to declare a subset of observables in CFTs to be ultralocal depending on what kinds of problems one wants to solve, and then show that one can simulate them efficiently using a quantum computer. The other, which is less problem oriented, is to associate a sequence that ``quantifies'' locality (just like in finite dimensions where we have $2$-local, $3$-local, etc.) for each observable.

In this work, the second strategy has been followed since the beginning. This strategy, as shown in \hyperref[4]{section 4}, by using a rigorous definition of locality inspired by quantum computation, demonstrates how locality in finite dimensions translates into that of infinite dimension. It also reconciles to a great extent with what mathematical physicists and physicists think of the notion of local observables, although not being exactly the same.

But there is no guarantee that this is the right path for the problems ahead and in fact, we will also point out the disadvantages of the anyonic chain approach when it comes to tackling these problems.

\subsubsection{Unitary evolution of CFTs}

\hfill \break
The first problem is the unitary evolution of CFTs. In TQFTs, using the functorial approach, this problem has been shown to be in \textbf{BQP} \cite{simulation02}. The important observation made is that the unitary evolution is a representation of the mapping class group and the mapping class groups is generated by braids and Dehn twists. Those are operators for which one can have a local expression. We seek the same picture in CFTs.

In CFTs, using the functorial approach, the unitary evolution is guided by unitary maps called $U(\gamma)$ and simulating
$$|(1,U(\gamma)1)|$$
is the goal, where $1$ is the vacuum and $U$ is a positive-energy projective unitary representation of $\text{Diff}_+(S^1)$ with $\gamma$ a diffeomorphism in the Lie group. It is well-known (\cite{goodman1985projective}) that the representation $U$ corresponds to a unitary positive energy representation of the Virasoro algebra. By a result of \cite{goodman1985projective}, simulating the above quantity is the same as simulating
\begin{align}\label{eq14}
|(1,\prod_{j=1}^k e^{iL(f^{(j)})}1)|,
\end{align}
where $f^{(j)} \in C^\infty(S^1)$ and $\gamma=\exp{(f^{(1)})} \circ \cdots \circ \exp{(f^{(k)})}$. Loosely speaking, if in TQFT, the \textit{complexity} of an evolution (a cobordism in the functorial point of view) arises at ultra local locations where braids happen, in CFT due to the continuous picture, one has to look at the diffeomorphism $\gamma$ infinitesimally, hence the decomposition of $U(\gamma)$ to finite products of $e^{iL(f)}$s.

Theoretically, the above quantity is obviously computable as long as functions $f^{(j)}$s are computable. But when there is the issue of efficiency, one needs to make sure to ask the \textit{right} question. Ideally speaking, one needs to know what nature does efficiently and ask whether a quantum computer can do that efficiently as well. In other words, what is the set $\mathcal{F}$ of operators $e^{iL(f)}$ that can be considered in \hyperref[eq14]{(14)}? One important observation is that in the same problem for other theories (TQFTs or usual quantum computation), the analog of the set $\mathcal{F}$ has always been given by a ``local generating'' set. In this case, the natural candidates are the $L_n$s and the fact that they are scaling limit of sums of $e_i$s, which are themselves the generators in the similar TQFT problem, is another evidence.

For example, the operator $e^{if_0L_0}$ which is the evolution by the Hamiltonian corresponding to the constant function $f \equiv f_0$ is certainly one of the operators in $\mathcal{F}$. And in general the Virasoro operators $L_n$ are thought to be local and $e^{i(\hat{f}_nL_n+\hat{f}_{-n}L_{-n})}$ corresponding to the function $f=\hat{f}_ne^{in\theta}+\hat{f}_{-n}e^{-in\theta}$ must be in $\mathcal{F}$. Therefore, it is reasonable to ask a finite combination of these to be simulated efficiently. This means $e^{iL(f)} \in \mathcal{F}$ for $f$ having finite Fourier series. 

The next question is which functions with infinite Fourier series can also be considered for the simulation problem. An analogy in quantum computation, would be to think of a hermitian matrix $H$ that may be nonlocal and acts on certain qubits but the norm of its action has an exponential decay away from those qubits. This translates to a unitary operator which is nonlocal but has an action exponentially close to identity except in some \textit{centers} of action. The Fourier coefficients $\hat{f}_n$ are rapidly decaying 
$$ \forall k , \exists N_k \text{  such that   } \forall |n| \ge N_k \implies |\hat{f}_n| \le \frac{1}{n^k}$$
but the rate of this decay or equivalently, what the rate of growth of $N_k$ should be is unclear. Perhaps, an exponentially decaying $\hat{f}_n$ or a polynomial growth for $N_k$ is the answer. Finding the exact form of dependence of the rate of convergence of the scaling limit in \hyperref[cor4.10]{\textbf{Corollary 4.10}} on $N_k$ will help to answer this question.

So far, we can safely assume that the set $\mathcal{F}$ has all operators corresponding to functions with finite Fourier series. We have the following definition for the CFT unitary evolution problem.
\begin{dfn}\label{dfn15}
(\textit{CFT UNITARY EVOLUTION)} Consider functions $f^{(1)},\ldots,f^{(k)}$ with finite Fourier series and coefficients nonzero up to $n_1,n_2,\ldots,n_k$ all given as inputs, find an approximation up to given error $\epsilon$, of the following quantity
$$|(1,\prod_{j=1}^k e^{iL(f^{(j)})}1)|$$
\end{dfn}
The conjecture in the same spirit of TQFT, is
\begin{cnj}\label{cnj5.6}
CFT UNITARY EVOLUTION is in \textbf{BQP}, i.e. there is a polynomial time quantum algorithm with respect to the inputs, namely $\{n_j\}_j \cup \{(\hat{f^{(j)}})_l\}_{j,l} \cup \{k,\frac{1}{\epsilon}\}$. Generically, the problem is \textbf{BQP}-complete.
\end{cnj}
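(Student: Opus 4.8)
The plan is to treat the two halves of the conjecture separately: containment in \textbf{BQP} by pulling the continuum amplitude back to the finite Ising chains and then invoking efficient local Hamiltonian simulation, and \textbf{BQP}-hardness by a universality argument for the group generated by the conformal evolutions $e^{iL(f)}$. The containment direction is essentially a quantitative repackaging of the scaling-limit machinery of \hyperref[4]{section 4}, whereas the hardness direction requires genuinely new input.

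For containment I would first use \hyperref[cor4.9]{\textbf{Corollary 4.9}} together with the strong SL-algebra estimates of \hyperref[thm4.6]{\textbf{Theorem 4.6}} to reduce the continuum amplitude to a finite one: since $e^{i\widetilde{L}(f)} \xrightarrow{SL} e^{iL(f)}$ and these unitaries form a strong SL-algebra, one has
$$\Big(1_n, \prod_{j=1}^k e^{i\widetilde{L}(f^{(j)})} 1_n\Big) \longrightarrow \Big(1, \prod_{j=1}^k e^{iL(f^{(j)})} 1\Big),$$
where $1_n=(\rho_n)^{-1}1$ is the chain vacuum. The crucial step is to make this quantitative: combining the $O(1/n)$ convergence rate of \hyperref[thm3.1]{\textbf{Theorem 3.1}}, the energy bound $\|L(f)v\|\le\sqrt{c/2}\,\|f\|_{3/2}\|(L_0+\mathbf{1})v\|$ (which controls how far each $e^{iL(f^{(j)})}$ spreads the vacuum into high energy, so that truncating at energy $E$ costs a rapidly decaying error), and a Trotter--Kato propagation through the $k$-fold product, I would show that a chain size $n$ polynomial in $k$, in the $n_j$, and in $1/\epsilon$ suffices to bring the finite amplitude within $\epsilon/2$ of the target. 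The remaining step is then standard: the Ising chain $\mathcal{W}_n$ embeds into $O(n)$ qubits by its fermionic description, the vacuum $1_n$ is efficiently preparable, and each $\widetilde{L}(f^{(j)})=\sum_m \hat{f}^{(j)}_m\widetilde{L}_m$ is, by the identities of \hyperref[thm3.1]{\textbf{Theorem 3.1}}, a sum of polynomially many $3$-ultra space local terms $e_j$ and $[e_j,e_{j+1}]$. Hence each $e^{i\widetilde{L}(f^{(j)})}$ is the evolution of a local Hamiltonian, approximable to error $\epsilon/(2k)$ by a polynomial-size Trotter--Suzuki circuit, and the amplitude is extracted by a Hadamard test, placing the problem in \textbf{BQP}.

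For \textbf{BQP}-hardness I would mirror the argument that the Jones representation of the braid group is computationally universal \cite{simulation02}: one encodes an arbitrary polynomial-size quantum circuit $V$, acting on the relevant finite-energy truncation, as a product $\prod_j e^{iL(f^{(j)})}$ with functions $f^{(j)}$ of polynomial description, so that the vacuum-to-vacuum amplitude $(1,V1)$ reproduces the circuit's acceptance amplitude, with genericity referring to the choice of UMM for which such an encoding exists. The hard part, and the main obstacle, is precisely this universality input: one must establish a Virasoro/$\text{Diff}_+(S^1)$ analogue of the Freedman--Larsen--Wang density theorem, showing that finite products of $e^{iL(f)}$ with finitely supported Fourier data generate a dense (hence universal) subgroup on each energy truncation, together with an efficient gate-synthesis realizing a universal gate set from such evolutions. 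Neither ingredient follows from the machinery developed here, and both appear substantially harder than the containment direction; moreover, even the containment estimate ultimately rests on controlling the dependence of the scaling-limit convergence on the Fourier data, the open point flagged in \hyperref[5.3]{section 5.3}.
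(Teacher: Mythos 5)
The statement you are trying to prove is \hyperref[cnj5.6]{\textbf{Conjecture 5.6}}; the paper offers no proof of it, and your proposal does not close the gaps that make it a conjecture rather than a theorem. The most serious problem is in your containment argument, which you present as "essentially a quantitative repackaging" of \hyperref[4]{section 4}. It is not: the paper's convergence results are qualitative in exactly the places where you need them to be quantitative. \hyperref[cor4.9]{\textbf{Corollary 4.9}} is proved via the Trotter--Kato approximation theorem, which yields strong convergence of $e^{i\widetilde{L}(f)}$ to $e^{iL(f)}$ with \emph{no rate}; the $O(1/n)$ rate of \hyperref[thm3.1]{\textbf{Theorem 3.1}} holds only for energies up to $O(\sqrt[3]{n})$ (and $\sqrt[4]{n}$ for higher modes), and propagating an energy truncation through a $k$-fold product of unitaries with unbounded generators is precisely the step for which no polynomial bound is established anywhere in the paper. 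Indeed, \hyperref[5.3]{section 5.3} explicitly flags "finding the exact form of dependence of the rate of convergence of the scaling limit in \hyperref[cor4.10]{\textbf{Corollary 4.10}} on $N_k$" as an open problem, and your own last sentence concedes this; but that concession undercuts your claim that chain size polynomial in $k$, the $n_j$, and $1/\epsilon$ "suffices." Without that estimate, the reduction from the continuum amplitude to a finite quantum circuit simply does not go through. A second, smaller gap in the same direction: you assert the chain vacuum $1_n$ is "efficiently preparable," while the paper explicitly warns that vacuum preparation "could be a hard problem" even for Ising (citing \cite{murg2004adiabatic}).

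For the hardness half you are candid that the key ingredient — a $\mathrm{Diff}_+(S^1)$/Virasoro analogue of the Freedman--Larsen--Wang density theorem, plus efficient gate synthesis from evolutions $e^{iL(f)}$ — does not follow from anything developed here. That is correct, and it is the same obstacle the paper identifies ("it is hard to build specific unitary operators using these," since each $\widetilde{L}(f)$ is a weighted sum of \textbf{all} $e_i$s rather than a few localized generators). So your proposal is best read as a reasonable research program that matches the paper's own suggested strategy in outline, not as a proof: both halves rest on unproven inputs, and the containment half in particular misrepresents qualitative scaling-limit results as if they already carried the polynomial bounds the complexity claim requires.
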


It is worth mentioning that in contrast to TQFT, where unitary evolution on the vacuum is trivial (as it is a one dimensional space in the case of the sphere), in CFT due to the existence of descendents provided by $L_{-m}$ for $m>1$, we have a nontrivial problem in an infinite dimensional Hilbert space being the vacuum sector of the highest-weight representation with central charge $c$. Of course, one can generalize the above to other sectors and this is part of the next problem.

The AC approach provides evidence for which operators have to be in $\mathcal{F}$. It also provides insights as to what operators in the scaling limit can be called \textit{local}, which is a fundamental aspect of a theory from a computational point of view. Further, it gives a discretized picture of what unitary evolution looks like in a CFT. Consider many particles on a chain where it is allowed to have fusion between nearby particles with certain penalties for the undesired (nontrivial) fusion. The value of the penalties is what gives the function $f$. If we let the system evolve in this setting, the unitary evolution guided by those constraints is $e^{i\widetilde{L}(f)}$. If $f$ is the constant function, then it is the usual Hamiltonian.

But there is no guarantee that the AC approach is the right one for this question. In fact, in order to be able to approximate \hyperref[eq14]{(14)} using ACs, a proof for which AC gives the VOA in the scaling limit is needed. Furthermore, a bigger obstacle could be proving the \textbf{BQP}-completeness, as the expression for approximating the operators $e^{iL(f)}$ are exponentials of weighted sum of \textbf{all} $e_i$s and it is hard to build specific unitary operators using these. Lastly, one needs to prepare the vacuum which could be a hard problem (see \cite{murg2004adiabatic} for the case of Ising).

There are other possible approaches to this problem. First, one could look for combinatorial realization of the action of Virasoro generators. Path representations of the states of the Hilbert space and how the Virasoro algebra acts on these paths can be analyzed. We refer to \cite{kaufmann1995path} for nonunitary models minimal models $M(2,q)$ with $q$ odd where the action of every Virasoro generator is obtained, and \cite{feverati2003critical} for unitary minimal models where actions of higher Virasoro generators is not known yet. Another possible approach to prove that the problem is in \textbf{BQP}, would be to use large enough tensor power of the Fock space which contains many interacting models as subtheories (including all $\mathfrak{su}(2)_k$ WZW models and all minimal models; see \cite[section 4]{tener2016geometric} for a list). This free theory is essentially $(\rchi_0+\rchi_\frac{1}{2})^2$ and can be modelled using an ultra local realization of the Dirac operators $\Psi_k,\overline{\Psi}_k$s (as localized in \cite{bravyi2002fermionic}). We could then derive a local expression for the Virasoro generators of any subtheory using the Dirac operators. Then, by some energy truncation and taking the scaling limit, one should compute the rate of convergence and show that it is inverse polynomial with respect to the energy truncation. This could provide a faster convergence than AC; indeed, CFT has quantized energy but continuous spacetime. This approach focuses on the energy local degrees of freedom as the local basis for quantum computation, instead of the space local degrees of freedom as in AC. A third approach would be to first derive an exact expression for the quantity $|(1,U(\gamma)1)|$, and then try to simulate it. This exact expression can be obtained for all free models \cite[see Theorem 6.2.3 and section 7]{neretin1996categories} but no such \textit{closed formula} is known for higher minimal models. We hope to pursue these in future works.

\subsubsection{Correlation functions of CFT}
\hfill \break 
The second problem is in fact a generalization of the first. It is the simulation of the correlation functions of CFT which is very much like in TQFT and simulation of Jones polynomial \cite{tqc03} (which is the value (\cite{witten1989quantum}) of the TQFT correlation function); $2n$ fields are inserted, denoted by $n$ cups (inserting two fields dual to each other) as a TL diagram, a unitary evolution is applied and the probability of getting back to same state is measured (denoted by $n$ caps). The normalization in the case of Jones polynomial is also the norm of the state given by the $n$ cups. The direct analogy in CFT would be (not a chiral but) a full diagonal CFT where the cup inserting the dual pair is $\mathbb{Y}(a\otimes a',z,\overline{z})$ inserting the pair $a \in A$ and $a' \in A'$ (the contragredient module of $A$). But as will shown, it is not obvious that the similar quantity can be defined. Here, we make an attempt for the definition. The proof that the quantity is well-defined is not given. In future works, we will aim to fill the gaps in the arguments. Let us start with the chiral formulation.

We will assume a nice VOA: unitary, CFT-type, rational (including $C_2$-co-finite). The goal would be to approximate the following point-like chiral correlation function $C_{\text{chiral}}^{p}$ efficiently:
$$C_{\text{chiral}}^{p}=\frac{|\Big(\mathscr{Y}_n(a_n,\gamma(z_n))\ldots\mathscr{Y}_1(a_1,\gamma(z_1))1,U(\gamma)\mathscr{Y}_n(a_n,z_n)\ldots\mathscr{Y}_1(a_1,z_1)1\Big)|}{||\mathscr{Y}_n(a_n,z_n)\ldots\mathscr{Y}_1(a_1,z_1)1||. ||\mathscr{Y}_n(a_n,\gamma(z_n))\ldots\mathscr{Y}_1(a_1,\gamma(z_1))1||},$$
where $|z_i|=1$ are on the unit circle arranged as $0<\arg(z_1)<\ldots<\arg(z_n)<\pi$, and $a_i\in A_i$ are primary fields in the irreducible modules and $\mathscr{Y}_i$ are of type $\binom{B_{i-1}}{A_i \ B_{i}}$ with irreducible modules $B_i$. We note that fields inserted at $z_i$ move to $\gamma(z_i)$ by conformal covariance of primary fields and therefore, this is \textit{where} we should measure the amplitude of getting back the same configuration.

There are multiple \textit{issues} with this definition. First, it is not clear that the numerator or denominator exist. For the numerator to exist, it makes sense to impose the condition of intertwiners having energy bounds. Indeed, notice that one can first perform the evolution by $U(\gamma)$ and then insert the fields at $\gamma(z_i)$ (due to conformal covariance). As $U(\gamma)$ operates inside the common domain $\cap_{k=1}^\infty \mathcal{D}((L_0+1)^k)$ called \textit{smooth vectors} and denoted by $\mathcal{V}^\infty$ or $B^\infty$ for module $B$, the vector $U(\gamma)1$ would be inside $\mathcal{V}^\infty$ and generally not inside $\mathcal{V}$. Even with this condition one needs to prove that the expectation value for those insertion points exist which brings us to the second issue.

The insertion points have the same norm and they are \textbf{not} distinct. To see this, by taking the adjoint of the fields on the left side and using conformal covariance it can be shown that computing the numerator is the same as computing
$$|(U(\gamma)^\dagger1,\prod_{i=1}^n \mathscr{Y}_i(\eta_{A_i}(a_i),z_i)\prod_{i=n}^1\mathscr{Y}_i(a_i,z_i)1)|,$$
where $\eta_{A_i}$ is the anti-linear involution corresponding to the unitary structure of $A_i$ (in minimal models $\eta_{A_i}(a_i)=a_i$). It is known that correlation functions $(u,\prod \mathscr{Y}_i(b_i,w_i)v)$ for $u,v$ with finite energy, can be evaluated at distinct insertions points with the same norm by an analytic extension of the region $|w_1|<\ldots<|w_{2n}|$ to the configuration space of $\mathbb{C}^{2n}$; This is \cite[Theorem 3.5]{huang2005adifferential} for chiral and \cite[Proposition 2.8]{huang2005bdifferential} for full CFT. We first need to prove similar theorems for $u,v$ smooth vectors (this could be accomplished by proving that the corresponding correlation function satisfies the same ODE as in \cite[Theorem 1.6]{huang2005adifferential}). But another issue would remain, which is that the insertion point $z_i$ is repeated. There is a singularity when insertion points are the same which makes it \textit{impossible} to define the above quantity. 

But in $C_{\text{chiral}}^p$, we could see that the singularities cancel each other. Notice that for any correlation function of the form $(u,\prod \mathscr{Y}_i(b_i,w_i)v)$, the order of the singularity $w_i-w_j$ only depends on the fields $b_i,b_j$; see \cite[Proposition 3.5.1]{frenkel1993axiomatic} for $\mathscr{Y}_i=Y$ the vertex operator (this can be easily generalized to intertwiners). For the denominator, the first norm is: 
$$||\prod_{i=n}^1\mathscr{Y}_i(a_i,z_i)1||=|(1,\prod_{i=1}^n\mathscr{Y}_i(\eta_{A_i}(a_i),z_i)\prod_{i=n}^1\mathscr{Y}_i(a_i,z_i)1)|^{\frac{1}{2}}.$$
The above is not defined but if we consider the insertions generically at $z_i',z_i$, it should give us a meromorphic function 
$$\prod \frac{1}{(z_i'-z_i)^{s_{ii'}}}\prod \frac{1}{(z_i-z_j)^{s_{ij}}}\prod \frac{1}{(z_i'-z_j')^{s_{i'j'}}}F(z_1',\ldots,z_n',z_1,\ldots,z_n)$$ 
where $F$ is a polynomial in $z_i,z_i'$. For the second norm in the denominator, by conformal invariance, it can be evaluated at $z_i$ instead of $\gamma(z_i)$ and the same singularities will appear with the same order. Therefore, one can see that the order of $(z_i-z_i')$ is the same in the numerator and denominator. In the end,  $C_{\text{chiral}}^p$ would be of the form $\frac{F_\gamma}{F}$ for two functions in $z_i$ and the dependence of the normalization on $\gamma$, as mentioned, can be avoided by conformal invariance of vacuum to vacuum correlation functions.

Even if $C_{\text{chiral}}^p$ is defined, it is not obvious that $C_{\text{chiral}}^p \le 1$, which is crucial for quantum computation. We believe that one should be able to obtain $C_{\text{chiral}}^p$ as a limit of the smeared version of the problem called $C_{\text{chiral}}^s$, which will be at most one simply due to Cauchy inequality. Define the smeared correlation function as follows:
$$C_{\text{chiral}}^s=\frac{|\Big(\prod_{i=n}^1\mathscr{Y}_i(a_i,\beta_{d_a}(\gamma)(f_i))1,U(\gamma)\prod_{i=n}^1\mathscr{Y}_i(a_i,f_i)1\Big)|}{||\prod_{i=n}^1\mathscr{Y}_i(a_i,\beta_{d_a}(\gamma)(f_i))1||. ||\prod_{i=n}^1\mathscr{Y}_i(a_i,f_i)1||},$$
where $f_i$ are smooth functions on $S^1$ and $\beta_{d_a}(f)=\gamma'(\gamma^{-1}(z))^{\text{wt}\  a-1}f(\gamma^{-1})$ (this function appears in the conformal covariance of primary fields; see e.g. \cite[Proposition 6.4]{carpi2015vertex}). It can be shown that $C_{\text{chiral}}^s$ is defined due to energy-boundedness, see e.g. \cite[Proposition 3.9]{gui2017unitarity}. The next step is to take a limit by considering sequence of functions converging to the Dirac delta at $z_i$, i.e. $\delta_{z_i}$. To do so rigorously, a similar theorem to \cite[Proposition 3.12]{gui2017unitarity} when the expectation value is taken for smooth vectors could be helpful. This would relate the smeared and the point-like version:
\begin{gather*}
\scalebox{0.95}{$(u,\prod_{i=1}^n\mathscr{Y}_i(\eta_{A_i}(a_i),g_i)\prod_{i=n}^1\mathscr{Y}_i(a_i,f_i)v)$}
\end{gather*}
\begin{gather*}
\scalebox{0.95}{$=\int_{-\pi}^\pi \cdots \int_{-\pi}^\pi \Big(u,\prod_{i=1}^n\mathscr{Y}_i(\eta_{A_i}(a_i),e^{i\phi_i})\prod_{i=n}^1\mathscr{Y}_i(a_i,e^{i\theta_i})v\Big)\prod_{i=1}^n g_i(e^{i\phi_i}) $}
\end{gather*}
\begin{gather*}
\scalebox{0.95}{$\prod_{i=n}^1f_i(e^{i\theta_i})\prod_{i=1}^n \slashed{d}\phi_i \prod_{i=n}^1 \slashed{d}\theta_i$}
\end{gather*}
where $\slashed{d} \theta=\frac{e^{i\theta}}{2\pi}$, and $f_i,g_i$ can be thought to be distributions centered on $z_i,z_i'$ with $\int_{S^1} f_i=\int_{S^1} g_i=1$. These functions must have disjoint support (this is required to apply \cite[Proposition 3.12]{gui2017unitarity}). Then, one has to analyze the limit when $g_i, f_i \to \delta_{z_i'}=\delta_{z_i}$. This limit should be taken on the fraction $C_{\text{chiral}}^s$ as the terms in the numerator and denominator diverge individually. It is worthy to note that on distinct points the limit works well. Assuming continuity of
$$\Big(u,\prod_{i=1}^n\mathscr{Y}_i(\eta_{A_i}(a_i),e^{i\phi_i})\prod_{i=n}^1\mathscr{Y}_i(a_i,e^{i\theta_i})v\Big)$$
with respect to $\phi_i,\theta_i$ (this is already true if $u,v$ have finite energy but we need it for smooth vectors), it is not hard to show that if $g_i \to \delta_{z_i'},f_i\to \delta_{z_i}$ for distinct $z_i,z_i'$ we have 
\begin{gather*}
\scalebox{0.87}{$|(u,\prod_{i=1}^n\mathscr{Y}_i(\eta_{A_i}(a_i),g_i)\prod_{i=n}^1\mathscr{Y}_i(a_i,f_i)v)| \to |(u,\prod_{i=1}^n\mathscr{Y}_i(\eta_{A_i}(a_i),z_i')\prod_{i=n}^1\mathscr{Y}_i(a_i,z_i)v)|.$}
\end{gather*}

Recall that as mentioned at the beginning, the direct analog of the TQFT case is the full CFT point-like correlation function. Assuming that one can prove that $C_{\text{chiral}}^p$ is well-defined, it is not hard to show that the point-like and smeared version of the problem for the full CFT case can also be defined. In fact, for the smeared version, just like the chiral case, there is nothing to prove and it is already well-defined. For the point-like case, the full CFT intertwiner is a finite sum of pair of chiral intertwiners, hence the correlation function will be a finite sum of chiral correlation functions. We have the following analogous quantities for the full CFT:

\begin{gather*}
\scalebox{0.9999}{$C_{\text{full}}^p= \frac{|\Big(\prod_{i=n}^{1}\mathbb{Y}(a_i\otimes a_i',\gamma(z_i),\overline{\gamma({z_i})})1,U_L(\gamma) U_R(j \circ \gamma \circ j)\prod_{i=n}^{1}\mathbb{Y}(a_i\otimes a_i',z_i,\overline{z_i})1\Big)|}{||\prod_{i=n}^{1}\mathbb{Y}(a_i\otimes a_i',\gamma(z_i),\overline{\gamma({z_i})})1||.||\prod_{i=n}^{1}\mathbb{Y}(a_i\otimes a_i',z_i,\overline{z_i})1||},$}
\end{gather*}
\begin{gather*}
\scalebox{0.98}{$C_{\text{full}}^s= \frac{|\Big(\prod_{i=n}^{1}\mathbb{Y}(a_i\otimes a_i',\beta_{d_a}(\gamma)(f_i),\beta_{d_a}(\gamma)(f_i)\circ j)1,U_L(\gamma) U_R(j \circ \gamma \circ j)\prod_{i=n}^{1}\mathbb{Y}(a_i\otimes a_i',f_i,f_i\circ j)1\Big)|}{||\prod_{i=n}^{1}\mathbb{Y}(a_i\otimes a_i',\beta_{d_a}(\gamma)(f_i),\beta_{d_a}(\gamma)(f_i)\circ j)1||.||\prod_{i=n}^{1}\mathbb{Y}(a_i\otimes a_i',f_i,f_i\circ j)1||},$}
\end{gather*}
where $a_i' \in A_i'$ are primary fields from the contragredient module of $A_i$, $j$ is the conjugation map, and $U_L,U_R$ are the unitary evolution for the left and right moving part, respectively.

Finally, as we know how to take the unitary evolution as input (outlined in previous section), formulating the correlation function problem in all four versions is possible; the inputs are the primary fields from a nice fixed VOA, the insertion points $z_i$ (or Fourier coefficients of the smearing functions) and the decomposition of $\gamma$ as $\exp(f)$s. 

It is not entirely clear how the AC approach would help solve this question for general minimal models as it is hard to realize finite version of fields on a lattice or a spin chain as mentioned in \hyperref[5.2]{section 5.2}. Similar to unitary evolution, the correlation function problem is expected to be in \textbf{BQP} and typically \textbf{BQP}-complete.

\section*{Acknowledgements}

We would like to thank James Tener for many helpful discussions, in particular pointing out the relevance of Kaplansky's theorem for the proof of \hyperref[cor4.10]{\textbf{Corollary 4.10}}. We thank P. Fendley, V. Jones, R. Koenig, H. Saleur, and anonymous referees for helpful comments to improve our paper and their encouragements.  The second author is partially supported by NSF grant DMS-1411212.

\bibliographystyle{apa}
\bibliography{main}

\begin{thebibliography}{}

\bibitem[\protect\astroncite{Aasen et~al.}{2016}]{aasen2016topological}
Aasen, D., Mong, R.~S., and Fendley, P. (2016).
\newblock Topological defects on the lattice: I. the {Ising} model.
\newblock {\em Journal of Physics A: Mathematical and Theoretical},
  49(35):354001.

\bibitem[\protect\astroncite{Aharonov et~al.}{2007}]{aharonov2007polynomial}
Aharonov, D., Arad, I., Eban, E., and Landau, Z. (2007).
\newblock Polynomial quantum algorithms for additive approximations of the
  {Potts} model and other points of the {Tutte} plane.
\newblock {\em arXiv preprint quant-ph/0702008}.

\bibitem[\protect\astroncite{Baxter}{2016}]{baxter2016exactly}
Baxter, R.~J. (2016).
\newblock {\em Exactly solved models in statistical mechanics}.
\newblock Elsevier.

\bibitem[\protect\astroncite{Bondesan et~al.}{2015}]{bondesan2015chiral}
Bondesan, R., Dubail, J., Faribault, A., and Ikhlef, Y. (2015).
\newblock Chiral {$SU(2)_k$} currents as local operators in vertex models and
  spin chains.
\newblock {\em Journal of Physics A: Mathematical and Theoretical},
  48(6):065205.

\bibitem[\protect\astroncite{Bravyi and Kitaev}{2002}]{bravyi2002fermionic}
Bravyi, S.~B. and Kitaev, A.~Y. (2002).
\newblock Fermionic quantum computation.
\newblock {\em Annals of Physics}, 298(1):210--226.

\bibitem[\protect\astroncite{Carpi et~al.}{2015}]{carpi2015vertex}
Carpi, S., Kawahigashi, Y., Longo, R., and Weiner, M. (2015).
\newblock From vertex operator algebras to conformal nets and back.
\newblock {\em arXiv preprint arXiv:1503.01260}.

\bibitem[\protect\astroncite{Chelkak et~al.}{2012}]{chelkak2012conformal}
Chelkak, D., Hongler, C., and Izyurov, K. (2012).
\newblock Conformal invariance of spin correlations in the planar {Ising}
  model.
\newblock {\em arXiv preprint arXiv:1202.2838}.

\bibitem[\protect\astroncite{Engel and Nagel}{2006}]{engel2006short}
Engel, K.-J. and Nagel, R. (2006).
\newblock {\em A short course on operator semigroups}.
\newblock Springer Science \& Business Media.

\bibitem[\protect\astroncite{Feiguin et~al.}{2007}]{feiguin2007interacting}
Feiguin, A., Trebst, S., Ludwig, A.~W., Troyer, M., Kitaev, A., Wang, Z., and
  Freedman, M.~H. (2007).
\newblock Interacting anyons in topological quantum liquids: {The} golden
  chain.
\newblock {\em Physical review letters}, 98(16):160409.

\bibitem[\protect\astroncite{Fendley}{2014}]{fendley2014free}
Fendley, P. (2014).
\newblock Free parafermions.
\newblock {\em Journal of Physics A: Mathematical and Theoretical},
  47(7):075001.

\bibitem[\protect\astroncite{Feverati and Pearce}{2003}]{feverati2003critical}
Feverati, G. and Pearce, P.~A. (2003).
\newblock Critical {RSOS} and minimal models: fermionic paths, {Virasoro}
  algebra and fields.
\newblock {\em Nuclear Physics B}, 663(3):409--442.

\bibitem[\protect\astroncite{Fiedler et~al.}{2017}]{fiedler2017jones}
Fiedler, L., Naaijkens, P., and Osborne, T.~J. (2017).
\newblock Jones index, secret sharing and total quantum dimension.
\newblock {\em New Journal of Physics}, 19(2):023039.

\bibitem[\protect\astroncite{Francesco et~al.}{2012}]{francesco2012conformal}
Francesco, P., Mathieu, P., and S{\'e}n{\'e}chal, D. (2012).
\newblock {\em Conformal field theory}.
\newblock Springer Science \& Business Media.

\bibitem[\protect\astroncite{Fredenhagen and
  Hertel}{1981}]{fredenhagen1981local}
Fredenhagen, K. and Hertel, J. (1981).
\newblock Local algebras of observables and pointlike localized fields.
\newblock {\em Communications in Mathematical Physics}, 80(4):555--561.

\bibitem[\protect\astroncite{Freedman et~al.}{2003}]{tqc03}
Freedman, M., Kitaev, A., Larsen, M., and Wang, Z. (2003).
\newblock Topological quantum computation.
\newblock {\em Bulletin of the American Mathematical Society}, 40(1):31--38.

\bibitem[\protect\astroncite{Freedman et~al.}{2002}]{simulation02}
Freedman, M.~H., Kitaev, A., and Wang, Z. (2002).
\newblock Simulation of {Topological} {Field} {Theories} by {Quantum}
  {Computers}.
\newblock {\em Communications in Mathematical Physics}, 227(3):587--603.

\bibitem[\protect\astroncite{Frenkel et~al.}{1993}]{frenkel1993axiomatic}
Frenkel, I., Huang, Y.-Z., and Lepowsky, J. (1993).
\newblock {\em On axiomatic approaches to vertex operator algebras and
  modules}, volume 494.
\newblock American Mathematical Soc.

\bibitem[\protect\astroncite{Gainutdinov
  et~al.}{2013a}]{gainutdinov2013logarithmic}
Gainutdinov, A., Jacobsen, J., Read, N., Saleur, H., and Vasseur, R. (2013a).
\newblock Logarithmic conformal field theory: a lattice approach.
\newblock {\em Journal of Physics A: Mathematical and Theoretical},
  46(49):494012.

\bibitem[\protect\astroncite{Gainutdinov
  et~al.}{2013b}]{gainutdinov2013bimodule}
Gainutdinov, A., Read, N., and Saleur, H. (2013b).
\newblock Bimodule structure in the periodic {$\mathfrak{gl}(1|1)$} spin chain.
\newblock {\em Nuclear Physics B}, 871(2):289--329.

\bibitem[\protect\astroncite{Gainutdinov
  et~al.}{2013c}]{gainutdinov2013continuum}
Gainutdinov, A., Read, N., and Saleur, H. (2013c).
\newblock Continuum limit and symmetries of the periodic {$\mathfrak{gl}(1|1)$}
  spin chain.
\newblock {\em Nuclear Physics B}, 871(2):245--288.

\bibitem[\protect\astroncite{Gainutdinov
  et~al.}{2016}]{gainutdinov2016associative}
Gainutdinov, A., Read, N., and Saleur, H. (2016).
\newblock Associative {Algebraic} {Approach} to {Logarithmic} {CFT} in the
  {Bulk}: {The} {Continuum} {Limit} of the {$\mathfrak{gl}(1|1)$} {Periodic}
  {Spin} {Chain}, {Howe} {Duality} and the {Interchiral} {Algebra}.
\newblock {\em Communications in Mathematical Physics}, 341(1):35--103.

\bibitem[\protect\astroncite{Gainutdinov
  et~al.}{2015}]{gainutdinov2015periodic}
Gainutdinov, A., Read, N., Saleur, H., and Vasseur, R. (2015).
\newblock The periodic $s\ell(2|1)$ alternating spin chain and its continuum
  limit as a bulk logarithmic conformal field theory at {$c=0$}.
\newblock {\em Journal of high energy physics}, 2015(5):114.

\bibitem[\protect\astroncite{Gainutdinov et~al.}{2014}]{gainutdinov2014lattice}
Gainutdinov, A., Saleur, H., and Tipunin, I.~Y. (2014).
\newblock Lattice {W}-algebras and logarithmic {CFTs}.
\newblock {\em Journal of Physics A: Mathematical and Theoretical},
  47(49):495401.

\bibitem[\protect\astroncite{Gils et~al.}{2013}]{gils2013anyonic}
Gils, C., Ardonne, E., Trebst, S., Huse, D.~A., Ludwig, A.~W., Troyer, M., and
  Wang, Z. (2013).
\newblock Anyonic quantum spin chains: {Spin}-1 generalizations and topological
  stability.
\newblock {\em Physical Review B}, 87(23):235120.

\bibitem[\protect\astroncite{Ginsparg}{1988}]{ginsparg1988applied}
Ginsparg, P. (1988).
\newblock Applied conformal field theory.
\newblock {\em arXiv preprint hep-th/9108028}.

\bibitem[\protect\astroncite{Goodman and Wallach}{1985}]{goodman1985projective}
Goodman, R. and Wallach, N.~R. (1985).
\newblock Projective unitary positive-energy representations of
  {$\text{Diff}(S^1)$}.
\newblock {\em Journal of functional analysis}, 63(3):299--321.

\bibitem[\protect\astroncite{Gui}{2017}]{gui2017unitarity}
Gui, B. (2017).
\newblock Unitarity of the modular tensor categories associated to unitary
  vertex operator algebras, {I}.
\newblock {\em arXiv preprint arXiv:1711.02840}.

\bibitem[\protect\astroncite{Haag}{2012}]{haag2012local}
Haag, R. (2012).
\newblock {\em Local quantum physics: {Fields}, particles, algebras}.
\newblock Springer Science \& Business Media.

\bibitem[\protect\astroncite{Huang}{2005a}]{huang2005adifferential}
Huang, Y.-Z. (2005a).
\newblock Differential equations and intertwining operators.
\newblock {\em Communications in Contemporary Mathematics}, 7(03):375--400.

\bibitem[\protect\astroncite{Huang}{2005b}]{huang2005bdifferential}
Huang, Y.-Z. (2005b).
\newblock Differential equations, duality and modular invariance.
\newblock {\em Communications in Contemporary Mathematics}, 7(05):649--706.

\bibitem[\protect\astroncite{Jones}{2003}]{jones03}
Jones, V.~F. (2003).
\newblock In and around the origin of quantum groups.
\newblock {\em arXiv preprint math/0309199}.

\bibitem[\protect\astroncite{Jones}{2014}]{jones14}
Jones, V.~F. (2014).
\newblock Some unitary representations of {Thompson}'s groups {F} and {T}.
\newblock {\em arXiv preprint arXiv:1412.7740}.

\bibitem[\protect\astroncite{Jones}{2016}]{jones16}
Jones, V.~F. (2016).
\newblock A no-go theorem for the continuum limit of a periodic quantum spin
  chain.
\newblock {\em arXiv preprint arXiv:1607.08769}.

\bibitem[\protect\astroncite{Jones}{2017}]{jones17}
Jones, V.~F. (2017).
\newblock Scale invariant transfer matrices and {Hamiltionians}.
\newblock {\em arXiv preprint arXiv:1706.00515}.

\bibitem[\protect\astroncite{Jordan et~al.}{2012}]{jordan12}
Jordan, S.~P., Lee, K.~S., and Preskill, J. (2012).
\newblock Quantum algorithms for quantum field theories.
\newblock {\em Science}, 336(6085):1130--1133.

\bibitem[\protect\astroncite{Kaufmann}{1995}]{kaufmann1995path}
Kaufmann, R.~M. (1995).
\newblock Path space decompositions for the {Virasoro} algebra and its {Verma}
  modules.
\newblock {\em International Journal of Modern Physics A}, 10(07):943--961.

\bibitem[\protect\astroncite{Kawahigashi and
  Longo}{2004}]{kawahigashi2004classification}
Kawahigashi, Y. and Longo, R. (2004).
\newblock Classification of local conformal nets. {Case} {$c<1$}.
\newblock {\em Annals of mathematics}, pages 493--522.

\bibitem[\protect\astroncite{K{\"o}nig and Scholz}{2017}]{konig2017matrix}
K{\"o}nig, R. and Scholz, V.~B. (2017).
\newblock Matrix product approximations to conformal field theories.
\newblock {\em Nuclear Physics B}, 920:32--121.

\bibitem[\protect\astroncite{Koo and Saleur}{1994}]{koo1994representations}
Koo, W. and Saleur, H. (1994).
\newblock Representations of the {Virasoro} algebra from lattice models.
\newblock {\em Nuclear Physics B}, 426(3):459--504.

\bibitem[\protect\astroncite{Lloyd}{1996}]{lloyd1996universal}
Lloyd, S. (1996).
\newblock Universal quantum simulators.
\newblock {\em Science}, 273(5278):1073.

\bibitem[\protect\astroncite{Milsted and Vidal}{2017}]{milsted2017extraction}
Milsted, A. and Vidal, G. (2017).
\newblock Extraction of conformal data in critical quantum spin chains using
  the {Koo-Saleur} formula.
\newblock {\em Physical Review B}, 96(24):245105.

\bibitem[\protect\astroncite{Mong et~al.}{2014}]{mong2014parafermionic}
Mong, R.~S., Clarke, D.~J., Alicea, J., Lindner, N.~H., and Fendley, P. (2014).
\newblock Parafermionic conformal field theory on the lattice.
\newblock {\em Journal of Physics A: Mathematical and Theoretical},
  47(45):452001.

\bibitem[\protect\astroncite{Murg and Cirac}{2004}]{murg2004adiabatic}
Murg, V. and Cirac, J.~I. (2004).
\newblock Adiabatic time evolution in spin systems.
\newblock {\em Physical Review A}, 69(4):042320.

\bibitem[\protect\astroncite{Neretin}{1996}]{neretin1996categories}
Neretin, Y.~A. (1996).
\newblock {\em Categories of symmetries and infinite-dimensional groups}.
\newblock Number~16. Oxford University Press.

\bibitem[\protect\astroncite{Pasquier and Saleur}{1990}]{saleur90}
Pasquier, V. and Saleur, H. (1990).
\newblock Common structures between finite systems and conformal field theories
  through quantum groups.
\newblock {\em Nuclear Physics B}, 330(2-3):523--556.

\bibitem[\protect\astroncite{Read and Saleur}{2007}]{read2007enlarged}
Read, N. and Saleur, H. (2007).
\newblock Enlarged symmetry algebras of spin chains, loop models, and
  {S}-matrices.
\newblock {\em Nuclear Physics B}, 777(3):263--315.

\bibitem[\protect\astroncite{Rowell and Wang}{2012}]{RW12}
Rowell, E.~C. and Wang, Z. (2012).
\newblock Localization of unitary braid group representations.
\newblock {\em Communications in Mathematical Physics}, 311(3):595--615.

\bibitem[\protect\astroncite{Rowell and Wang}{2017}]{RW17}
Rowell, E.~C. and Wang, Z. (2017).
\newblock Mathematics of {Topological} {Quantum} {Computing}.
\newblock {\em arXiv preprint arXiv:1705.06206}.

\bibitem[\protect\astroncite{Schweigert et~al.}{2006}]{frs06}
Schweigert, C., Fuchs, J., and Runkel, I. (2006).
\newblock Categorification and correlation functions in conformal field theory.
\newblock In {\em Proceedings of the ICM}, pages 443--458. European
  Mathematical Society Z{\"u}rich.

\bibitem[\protect\astroncite{Seiberg}{2015}]{seibergQFT}
Seiberg, N. (2015).
\newblock What is a quantum field theory?
\newblock {\em Simons center talk:
  \url{http://scgp.stonybrook.edu/video_portal/video.php?id=389}}.

\bibitem[\protect\astroncite{Smirnov}{2007}]{smirnov07}
Smirnov, S. (2007).
\newblock Towards conformal invariance of {2D} lattice models.
\newblock In {\em Proceedings of the International Congress of Mathematicians
  Madrid, August 22--30, 2006}, pages 1421--1451.

\bibitem[\protect\astroncite{Tener}{2016}]{tener2016geometric}
Tener, J.~E. (2016).
\newblock Geometric realization of algebraic conformal field theories.
\newblock {\em arXiv preprint arXiv:1611.01176}.

\bibitem[\protect\astroncite{Tener and Wang}{2016}]{tener16}
Tener, J.~E. and Wang, Z. (2016).
\newblock On classification of extremal non-holomorphic conformal field
  theories.
\newblock {\em arXiv preprint arXiv:1611.04071}.

\bibitem[\protect\astroncite{Weiner}{2007}]{weiner2007conformal}
Weiner, M. (2007).
\newblock Conformal covariance and related properties of chiral {QFT}.
\newblock {\em arXiv preprint math/0703336}.

\bibitem[\protect\astroncite{Witten}{1989}]{witten1989quantum}
Witten, E. (1989).
\newblock Quantum field theory and the jones polynomial.
\newblock {\em Communications in Mathematical Physics}, 121(3):351--399.

\end{thebibliography}

\address{\textsuperscript{1\label{1}}Dept of Mathematics, University of California,
Santa Barbara, CA 93106-6105, U.S.A.}
\address{\textsuperscript{2\label{2}}Microsoft Station Q and Dept of Mathematics, University of California,
Santa Barbara, CA 93106-6105, U.S.A.} 
\addresseshere

\appendix
\section*{Appendix: Scaling limit of Ising anyonic chains}\label{A}

\subsection*{1. Obtaining Virasoro representations and their actions}\label{A.1}

\hfill \break 
In each case, we start with some operator that is supposed to become the \textit{desired} one converging to $L_m$, and it will undergo some changes (all being some scalings) before becoming the desired operator. As an example, for the Hamiltonian, we will always start with $-\sum e_j$ but during the process, it will change by some scaling which can be easily obtained by following the procedure until it produces the actual Hamiltonian $\widetilde{L}_0^c$ that converges to $L_0$. These scalings are the scaling factors mentioned in \hyperref[identities]{$\widetilde{L}_m^{c,s}$ identities} called $\alpha_n^c,\alpha_n^s,\beta_n^{m,c},$ and $\beta_n^{m,s}$. 

\subsubsection*{\textbf{Case 1(a): $(\frac{1}{2},\frac{1}{2}) \xrightarrow{SL} \rchi_0+\rchi_{\frac{1}{2}}$}}
\hfill \break
The method and the notations used in \cite{fendley2014free} will be followed closely and we will apply it case by case on Ising ACs to obtain the Virasoro modes throughout this section. It is therefore necessary to review the general procedure described for the Hamiltonian diagonalization of 1(a) in \cite{fendley2014free}. 

Consider the operator $-\sum_{j=1}^{2n-1} t_je_j$ which after a scaling due to the equalities
$$e_{2j}=\frac{1}{\sqrt{2}}(1+\sigma_{j}^z\sigma_{j+1}^z), \ \ e_{2j-1}=\frac{1}{\sqrt{2}}(1+\sigma_j^x),$$
becomes
$$H= -\sum_{j=1}^{n} t_{2j-1} \sigma_j^{x}-\sum_{j=1}^{n-1} t_{2j} \sigma_{j}^z \sigma_{j+1}^z,$$
where the coefficients $t_j$ are fixed. With this expression of $H$, it is easy to see the famous $\mathbb{Z}_2$ symmetry provided by the spin-flip operator, called
$$(-1)^F:= \prod_{j=1}^{2n} \sigma_j^x.$$
As detailed in \cite{fendley2014free}, in order to diagonalize this Hamiltonian, the \textbf{Majorana} operators should be defined as
$$ \psi_{2j-1}=\Big( \prod_{k=1}^{j-1} \sigma_k^x  \Big) \sigma_j^z, \ \ \ \ \psi_{2j}= i\Big(\prod_{k=1}^{j} \sigma_k^x  \Big) \sigma_j^z $$
which satisfy the ACR (Anticommutative Canonical Relations):
$$\{\psi_a,\psi_b\}=2\delta_{ab} , \ \  \forall a,b=1, \ldots,2n.$$
It is a well-known fact that these operators and their monomials are linearly independent and this representation of the Clifford algebra is faithful. By using
$$e_a=\frac{1}{\sqrt{2}}(1+i\psi_{a}\psi_{a+1}),$$
we rewrite the Hamiltonian 
$$H=i \sum_{a=1}^{2n-1} t_a \psi_{a+1}\psi_a.$$
Next, raising (creation) and lowering (annihilation) operators are introduced, i.e. \textbf{Dirac} operators for which
$$[H,\Psi]=2\epsilon \Psi.$$
Notice that for any operator linear in the Majorana operators, the commutator with $H$ is also linear in the Majorana operators. Let us choose the following form for $\Psi$
$$\Psi = \sum_b i^b \mu_b \psi_b,$$
where $\mu_b$ are numbers that will turn out to be real. The $i^b$'s factor will ensure that the matrix in \hyperref[eq15]{(15)} is hermitian and not skew-hermitian, thus making the computations easier. Computing $\mu_a'$s, 
$$\Psi'=[H,\Psi]= \sum_{a} i^a\mu_a' \psi_a,$$
is same as the following matrix equation
\begin{align}\label{eq15}
   \begin{pmatrix}
           \mu_1' \\
           \mu_{2}' \\
           \vdots \\
            \\
           \mu_{2n}'
         \end{pmatrix}
         = 2\begin{pmatrix}
         0 & t_1 & 0 & \ldots &  \\
         t_1 & 0 & t_2 & &  \\
         0 & t_2 & 0 & &   \\
         \vdots &  &  & & t_{2n-1} \\
         &  & & t_{2n-1} & 0 
         \end{pmatrix}
         \begin{pmatrix}
         \mu_1 \\
         \mu_{2} \\
         \vdots \\
            \\
         \mu_{2n}
         \end{pmatrix}.
\end{align}
This hermitian matrix has determinant $(-1)^n\prod_{j=1}^{n}t_{2j-1}^2$. The eigenvectors of this matrix give the Dirac operators and each corresponding eigenvalue is the energy that is raised or lowered. Specializing the values of $t_j$s will give the different boundary conditions. $(\frac{1}{2},\frac{1}{2})$ can be seen to correspond to the case $t_j=1$ for all $j$. Therefore, we will work with the matrix \hyperref[eq15]{\hyperref[eq15]{(15)}} assuming $t_j=1$. 

\textit{Notation}. for $n \in \mathbb{N}$, set $[n]:=\{1,\ldots,n\}$. E.g. $[2n]-[n]=\{n+1,\ldots,2n\}$. Similarly define $[-n]:=\{-1,\ldots,-n\}$.

The Dirac operators $\Psi_{k}$ for $k \in [2n]$, are given by the eigenvectors $\mu_{a,k}=\sin(\frac{ak\pi}{2n+1})$ with corresponding energy $\epsilon_k=4\cos(\frac{k\pi}{2n+1})$, satisfying (\cite{fendley2014free})
\begin{align}\label{eq16}
[H,\Psi_{\pm k}]=2 \epsilon_{\pm k} \Psi_{\pm k}, \ \{\Psi_{\pm k}, \Psi_{\pm k'} \}=0, \ \{\Psi_{\pm k}, \Psi_{\mp k'} \} = N_k \delta_{k,k'}\textbf{1},
\end{align}
where $\Psi_{-k}:=\Psi_{2n+1-k}$, and $N_k=2 \sum_{a} \abs{\mu_{a,k}}^2$. The relations are obtained using the identities
$$\{\Psi,\rchi\}=\sum_{a,b} i^{a+b} \mu_a \nu_b \{\psi_a,\psi_b\}=2\sum_a (-1)^a \mu_a\nu_a,$$
for any two linear Majorana forms $\Psi = \sum_b i^b \mu_b \psi_b,\rchi=\sum_b i^b \nu_b \psi_b$. As a hermitian matrix has orthogonal eigenvectors, and for any eigenvector $(\mu_{a,k})_a$ giving eigenvalue $\epsilon_k$, there is a corresponding eigenvector  $((-1)^{a+1}\mu_{a,k})_a$ giving eigenvalue $\epsilon_{-k}:= -\epsilon_k$, equations \hyperref[eq16]{(16)} follow including the fact that $\Psi_{k}^{\dagger}=\Psi_{-k}$. We will always work with the normalization of $\Psi_k$ by $\sqrt{N_k}$, hence $\{\Psi_{\pm k}, \Psi_{\mp k'} \} = \delta_{k,k'}\textbf{1}$.

From now on, the Dirac operators $\Psi_{k}$ for $k \in [n]$ will be called the \textit{raising or creation} operators and the Dirac operators $\Psi_{k}$ for $k \in [2n]-[n]$ will be called the \textit{lowering or annihilation} operators. This terminology will similarly apply for future cases. Further, at the end of each case, there will be a renumbering of the operators indices which will make the creation operators have negative index while the annihilation operators will have positive index.

Therefore, $\Psi_{k}$s satisfy the ACR while the dimension of $\mathcal{W}_n$ (the Hilbert space) is $2^n$. This implies the existence of an orthonormal basis of $\mathcal{W}_n$ given by
$$\prod\limits_{i \in S} \Psi_{i} 1_n, \ \forall S \subset [n],$$
all of which will turn out to be eigenvectors of $H$, where $1_n$ is the \textit{vacuum} or \textit{ground state} annihilated by the annihilation operators. As mentioned in \cite{fendley2014free}, the energy symmetry of $H$ and well-known properties of the representations of the algebra generated by the $\Psi_k$s, can be used to prove this. Let us recall these general facts.

\textit{Notation.} Denote by $\mathcal{F}_n$ the algebra generated by the $\Psi_k$s and $\mathcal{F}_n^+$ the sub-algebra generated by the creation operators. Similarly define $\mathcal{F}_n^-$. We will use $S$ as any subset of the indices of creation operators.

\begin{fct}\label{fct1}
Let $\mathcal{W}$ be a representation of $\mathcal{F}_n$ which is a Hilbert space with $\dim \mathcal{W}=2^s$  where $s \ge n$ and $\Psi_k^\dagger=\Psi_{-k}$ with respect to the inner product of $\mathcal{W}$. Consider the image $\mathcal{W}_0$ of the product of all annihilation operators. For any vector $v \in \mathcal{W}_0$, by definition of $\mathcal{W}_0$ and ACR relations, in particular $\Psi_k^2=0$, we get $\mathcal{F}_n^-(v)=\{0\}$. Further, the space $\mathcal{W}_v=\mathcal{F}_n^+(v)$ generated by the creation operators acting on $v$ has dimension $2^n$ with, assuming $v$ is a unit vector, an orthonormal basis $\{\prod\limits_{i \in S} \Psi_{i} v| \ \forall S\}$. The fact that this is an orthonormal basis can also be checked directly by computing the inner products using ACR, $\Psi_k^\dagger=\Psi_{-k}$ and that $\mathcal{F}_n^-(v)=\{0\}$. Finally, with the same direct calculations, for any two orthonormal vectors $u,v \in \mathcal{W}_0$, we have $\mathcal{W}_v \perp \mathcal{W}_u$. This implies that for any chosen orthonormal basis for $\mathcal{W}_0$, a direct sum of irreducible representations with dimension $2^n$ of $\mathcal{F}_n$ is obtained. We claim that this decomposition exhausts $\mathcal{W}$, or equivalently $\dim \mathcal{W}_0=2^{s-n}$. Assume $\exists v \in \mathcal{W}$ which is orthogonal to the decomposition. One needs to find a sequence of $\Psi_k$s acting on $v$ such that $v$ is sent to a vector in $\mathcal{W}_0$ and we will reach a contradiction. This is done by noticing that for any non-zero vector $u$, if $\Psi_k u=0$ for $\Psi_k \in \mathcal{F}_n^{+}$ then $\Psi_{k}\Psi_{k}^\dagger u=u \neq 0$. Therefore, we can start by acting on $v$ by the annihilation operators in increasing order of indices ($n+1,\ldots,2n$) and whenever the result is zero when acted by $\Psi_k$, acting by the creation operator $\Psi_{k}^\dagger$ and then $\Psi_{k}$ resolves this issue. By using this procedure and ACR, there is a reordering of the action by $\Psi_k$s such that the end result is $\Psi_{2n} \ldots \Psi_{n+1}(\prod_{i \in S} \Psi_{i}v)=v_0 \neq 0$ where $S$ is some subset from the creation indices. $v_0$ is in the image of the    product of all annihilation operators, i.e. $v_0 \in \mathcal{W}_0$. But $v$ was assumed to be orthogonal to the decomposition, implying that
$$(v_0,v_0)=(v, (\prod_{i \in S} \Psi_{i})^\dagger \Psi_{n} \ldots \Psi_{1}v_0)=0 \implies v_0=0,$$
which is a contradiction.
\end{fct}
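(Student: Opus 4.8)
The plan is to prove the structural statement by exhibiting $\mathcal{W}$ as an orthogonal direct sum of copies of the Fock (irreducible) representation of $\mathcal{F}_n$, with multiplicity space exactly the vacuum space $\mathcal{W}_0$. Conceptually this is the familiar fact that the complex Clifford algebra on $2n$ generators, equivalently the CAR algebra with $n$ creation--annihilation pairs, is a full matrix algebra $M_{2^n}(\mathbb{C})$ whose unique irreducible module has dimension $2^n$; every finite-dimensional $*$-representation is then a direct sum of copies of it, and the number of copies is $\dim\mathcal{W}/2^n=2^{s-n}$. Rather than invoke this abstractly, I would build the decomposition by hand so that it stays inside the $\Psi_k$-formalism used in the rest of the appendix.

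First I would record the features of the vacuum space. Writing $\mathcal{W}_0$ for the joint kernel $\{v:\Psi_k v=0 \text{ for every annihilation operator }\Psi_k\}$, one checks from $\Psi_k^2=0$ and the anticommutation relations that $\mathcal{W}_0$ contains the image of the product $P$ of all annihilation operators (equality will follow from the decomposition below): any annihilation operator applied to $Pu$ duplicates a factor and hence vanishes. Next, for a unit vector $v\in\mathcal{W}_0$, a direct computation with the CAR and $\Psi_k^\dagger=\Psi_{-k}$—moving annihilation operators to the right where they kill $v$ and pairing them against creation operators via $\{\Psi_{-i},\Psi_j\}=\delta_{ij}$—shows that the Gram matrix of the family $\{\prod_{i\in S}\Psi_i v\}_{S\subseteq[n]}$ is the identity. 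Hence these $2^n$ vectors are orthonormal, $\mathcal{W}_v:=\mathcal{F}_n^+(v)$ has dimension $2^n$, and the same bookkeeping with an extra factor $(u,v)$ gives $\mathcal{W}_u\perp\mathcal{W}_v$ whenever $u,v\in\mathcal{W}_0$ are orthogonal. Fixing an orthonormal basis $B$ of $\mathcal{W}_0$ thus yields an orthogonal family of $2^n$-dimensional invariant subspaces $\{\mathcal{W}_v\}_{v\in B}$.

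The real content is exhaustion, $\mathcal{W}=\bigoplus_{v\in B}\mathcal{W}_v$. Here I would use that $\Psi_k^\dagger=\Psi_{-k}$ makes $\mathcal{F}_n$ a $*$-algebra, so the orthogonal complement $\mathcal{U}:=\big(\bigoplus_{v\in B}\mathcal{W}_v\big)^\perp$ is again $\mathcal{F}_n$-invariant. Since $\mathcal{W}_0\subseteq\bigoplus_{v\in B}\mathcal{W}_v$ (each $v$ is the $S=\emptyset$ vector of $\mathcal{W}_v$), we have $\mathcal{W}_0\cap\mathcal{U}=\{0\}$. Suppose $\mathcal{U}\neq\{0\}$ and take $0\neq u\in\mathcal{U}$. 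Processing the annihilation operators $\Psi_{n+1},\dots,\Psi_{2n}$ one at a time, I replace the current vector $u'$ by $\Psi_k u'$ when $\Psi_k u'\neq0$ and leave it unchanged when $\Psi_k u'=0$; in either case the result is nonzero, is annihilated by $\Psi_k$ (using $\Psi_k^2=0$), and stays annihilated by the previously processed operators because distinct annihilation operators anticommute. After all $n$ steps I obtain a nonzero vector lying in $\mathcal{W}_0$ and, by invariance, still in $\mathcal{U}$, contradicting $\mathcal{W}_0\cap\mathcal{U}=\{0\}$. (If one prefers the resurrection device of the statement, the identity $\Psi_k\Psi_k^\dagger u=u$ whenever $\Psi_k u=0$, which follows from $\{\Psi_k,\Psi_k^\dagger\}=\mathbf{1}$, serves the same purpose by guaranteeing that no vector is prematurely destroyed.) Hence $\mathcal{U}=\{0\}$.

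Finally, counting dimensions in $\mathcal{W}=\bigoplus_{v\in B}\mathcal{W}_v$ gives $2^s=\dim\mathcal{W}=|B|\cdot 2^n=(\dim\mathcal{W}_0)\,2^n$, so $\dim\mathcal{W}_0=2^{s-n}$, as claimed. The step I expect to be the main obstacle is the exhaustion argument: everything else is a routine CAR computation, but closing the representation requires the $*$-invariance of $\mathcal{U}$ together with the annihilation-to-vacuum descent, and the only delicate point is ensuring that this descent never annihilates the vector, which is precisely what the identity $\Psi_k\Psi_k^\dagger u=u$ (or the skip-if-zero variant) guarantees.
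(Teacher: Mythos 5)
Your proposal is correct, and its skeleton matches the paper's: build $\mathcal{W}_v=\mathcal{F}_n^+(v)$ over an orthonormal basis of the vacuum space, check orthonormality of $\{\prod_{i\in S}\Psi_i v\}$ and mutual orthogonality of the $\mathcal{W}_v$'s by CAR bookkeeping, then prove exhaustion by descending an alleged leftover vector to a nonzero vacuum vector via annihilation operators. Where you genuinely diverge is in the mechanics of the exhaustion step, and the divergence is coherent. First, you take $\mathcal{W}_0$ to be the joint kernel of the annihilation operators rather than (as in the paper) the image of their full product; this is forced by your skip-if-zero descent, whose output $\prod_{k\in K}\Psi_k u$ for some subset $K$ of annihilation indices is manifestly killed by all annihilation operators but not manifestly in the image of the full product. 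The paper's resurrection device $\Psi_k\Psi_k^\dagger u=u$ exists precisely so that every annihilation operator gets applied, making the output land in the image of the product, which is the paper's definition of $\mathcal{W}_0$. Since the Fact as stated concerns the image of the product, you owe the one-line reconciliation you defer: given the decomposition, each $v\in B$ equals $\pm\Psi_{2n}\cdots\Psi_{n+1}\prod_{i\in[n]}\Psi_i v$, so kernel and image coincide. Second, you close the argument by observing that the orthogonal complement $\mathcal{U}$ of the decomposition is invariant because $\mathcal{F}_n$ is $*$-closed, so the descent stays in $\mathcal{U}$ and collides with $\mathcal{W}_0\cap\mathcal{U}=\{0\}$; the paper instead descends from the orthogonal vector $v$ itself and computes $(v_0,v_0)=(v,(\prod_{i\in S}\Psi_i)^\dagger\Psi_n\cdots\Psi_1 v_0)=0$ directly. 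These are two packagings of one idea --- your invariance remark is exactly what the paper's adjoint manipulation implements by hand --- but yours avoids both the explicit reordering of operators (the paper's ``there is a reordering of the action by $\Psi_k$s'') and the final inner-product computation, at the mild cost of needing that each $\mathcal{W}_v$ is invariant under the whole algebra, annihilation operators included, which you assert but, like the paper, do not spell out.
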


\begin{fct}\label{fct2}
With the same settings of \hyperref[fct1]{\textit{Fact 1}}, consider a matrix $D$ satisfying $[D,\Psi_k]=0$ for all $\Psi_k \in \mathcal{F}_n$. It follows that $D$ preserves $\mathcal{W}_0$ and it is uniquely determined based on how it acts on $\mathcal{W}_0$. In particular, if there is a decomposition of $\mathcal{W}$ into $2^{s-n}$ irreducible representations where $D$ preserves the corresponding vacuums, then $D$ acts as a scalar on each one of them. This will be always the case in the proofs.
\end{fct}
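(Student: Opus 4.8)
The plan is to establish the three assertions of the statement in order, each following quickly from the structural description of $\mathcal{W}$ furnished by \hyperref[fct1]{\textit{Fact 1}}. First I would identify $\mathcal{W}_0$ with the common kernel of the annihilation operators. Indeed, any $w\in\mathcal{W}_0$ is of the form $w=\Psi_{2n}\cdots\Psi_{n+1}u$, and since the annihilation operators anticommute and square to zero by the ACR, every $\Psi_k\in\mathcal{F}_n^-$ kills $w$; conversely, the common kernel has dimension $2^{s-n}=\dim\mathcal{W}_0$ by the decomposition in \hyperref[fct1]{\textit{Fact 1}}, so the two spaces coincide. With this identification the first claim is immediate: for $w\in\mathcal{W}_0$ and any annihilation operator $\Psi_k$, commutativity gives $\Psi_k Dw=D\Psi_k w=0$, so $Dw$ again lies in the common kernel, i.e. $Dw\in\mathcal{W}_0$.

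For the second assertion I would use that $\mathcal{W}$ is spanned by the vectors $\prod_{i\in S}\Psi_i w$ with $w$ ranging over a basis of $\mathcal{W}_0$ and $S$ over subsets of the creation indices. Since $D$ commutes with each creation operator $\Psi_i\in\mathcal{F}_n^+$, one has $D\big(\prod_{i\in S}\Psi_i w\big)=\prod_{i\in S}\Psi_i(Dw)$, so the action of $D$ on all of $\mathcal{W}$ is completely prescribed by its restriction to $\mathcal{W}_0$. Conceptually, this is exactly the statement that $D$ lies in the commutant of $\mathcal{F}_n$, which (as $\mathcal{W}$ is a multiple of a single irreducible) is the endomorphism algebra of the multiplicity space $\mathcal{W}_0$.

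Finally, suppose we are given a decomposition $\mathcal{W}=\bigoplus_j \mathcal{W}_{v_j}$ into $2^{s-n}$ irreducibles with vacuums $v_j\in\mathcal{W}_0$, and that $D$ preserves each vacuum line, i.e. $Dv_j=\lambda_j v_j$. Propagating this through the creation operators exactly as above yields $D\big(\prod_{i\in S}\Psi_i v_j\big)=\prod_{i\in S}\Psi_i(\lambda_j v_j)=\lambda_j\prod_{i\in S}\Psi_i v_j$, so $D$ acts as the single scalar $\lambda_j$ on the whole of $\mathcal{W}_{v_j}$, which is the desired conclusion.

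I expect no serious obstacle here; the only point requiring a little care is the identification of $\mathcal{W}_0$ with the full common kernel of the annihilation operators, since the definition of $\mathcal{W}_0$ as an image gives one inclusion directly while the reverse inclusion relies on the dimension count $\dim\mathcal{W}_0=2^{s-n}$ supplied by \hyperref[fct1]{\textit{Fact 1}}. Once that is in place, each of the three assertions is a one-line consequence of $[D,\Psi_k]=0$ together with the generation of $\mathcal{W}$ from $\mathcal{W}_0$ by the creation operators.
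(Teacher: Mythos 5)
Your proof is correct and fills in, in essentially the way the paper intends, the argument left implicit after \textit{Fact 1}: commutation of $D$ with the creation operators propagates its action on $\mathcal{W}_0$ to all of $\mathcal{W}$, giving uniqueness and the scalar action on each irreducible summand whose vacuum line is preserved. The only stylistic difference is your detour through identifying $\mathcal{W}_0$ with the common kernel of the annihilation operators; since the paper defines $\mathcal{W}_0$ as the image of the product of all annihilation operators, the preservation $D\mathcal{W}_0\subseteq\mathcal{W}_0$ also follows in one line from $D\,\Psi_{2n}\cdots\Psi_{n+1}=\Psi_{2n}\cdots\Psi_{n+1}\,D$, with no dimension count needed.
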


\begin{fct}\label{fct3}
In addition to the spin-flip symmetry $(-1)^F$, the matrix $H$ has charge conjugation symmetry (which will also be called energy symmetry) provided by $C=\prod_i \sigma_i^z \prod_i (\sigma_i^x)^i$ which satisfies $CH=-HC$ implying each energy has one corresponding opposite energy. This is a necessary property which helps us to show that some non-zero scalar from the previous fact for $H$ can not happen as that would break the symmetry.
\end{fct}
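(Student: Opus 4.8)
The plan is to verify the anticommutation $CH=-HC$ by working entirely with the Majorana presentation $H=i\sum_{a=1}^{2n-1}t_a\psi_{a+1}\psi_a$, where $H$ is manifestly quadratic. Since every summand is a product of exactly two Majorana operators, the whole statement reduces to a single conjugation rule: if I can show
\[
C\psi_a C^{-1}=(-1)^a\psi_a\qquad(1\le a\le 2n),
\]
then each term transforms as $C\,\psi_{a+1}\psi_a\,C^{-1}=(-1)^{a+1}(-1)^a\psi_{a+1}\psi_a=-\psi_{a+1}\psi_a$, whence $CHC^{-1}=-H$, i.e. $CH=-HC$. So the entire first assertion is the sign rule above.

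To establish it I would factor $C=C_1C_2$ with $C_1=\prod_{i}\sigma_i^z$ and, using $(\sigma_i^x)^2=\mathbf 1$, $C_2=\prod_{i\ \mathrm{odd}}\sigma_i^x$. Conjugation by $C_1$ sends $\sigma_k^x\mapsto-\sigma_k^x$ (since $\sigma_k^z\sigma_k^x\sigma_k^z=-\sigma_k^x$) and fixes each $\sigma_j^z$, so on the strings $\psi_{2j-1}=(\prod_{k=1}^{j-1}\sigma_k^x)\sigma_j^z$ and $\psi_{2j}=i(\prod_{k=1}^{j}\sigma_k^x)\sigma_j^z$ it contributes the signs $(-1)^{j-1}$ and $(-1)^{j}$, counting the number of $\sigma^x$ factors crossed. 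Conjugation by $C_2$ commutes with every $\sigma_k^x$ and sends $\sigma_j^z\mapsto(-1)^j\sigma_j^z$, the sign being $-1$ exactly when $j$ is odd (then $\sigma_j^x\in C_2$ and $\sigma_j^x\sigma_j^z\sigma_j^x=-\sigma_j^z$). Multiplying the two contributions gives $C\psi_{2j-1}C^{-1}=(-1)^{j}(-1)^{j-1}\psi_{2j-1}=-\psi_{2j-1}$ and $C\psi_{2j}C^{-1}=(-1)^{j}(-1)^{j}\psi_{2j}=\psi_{2j}$, which is exactly $C\psi_aC^{-1}=(-1)^a\psi_a$.

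With $CHC^{-1}=-H$ in hand the spectral consequence is immediate. As a product of Pauli operators $C$ is unitary, so if $Hv=\lambda v$ then $H(Cv)=-CHv=-\lambda(Cv)$, and $C$ restricts to a linear isomorphism from the $\lambda$-eigenspace onto the $(-\lambda)$-eigenspace. Hence the spectrum of $H$ is symmetric about $0$ and every energy has an equal-multiplicity partner of opposite sign. As a consistency check, the complementary spin-flip operator $(-1)^F=\prod_j\sigma_j^x$ instead \emph{commutes} with $H$, since it anticommutes with both $\sigma_j^z$ and $\sigma_{j+1}^z$ in each bond term and with nothing in the field term.

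The only genuinely delicate step is the sign bookkeeping in the middle paragraph: the Majoranas are nonlocal Jordan--Wigner strings, so one must track both the parity of the number of $\sigma^x$ factors that $C_1$ crosses and the $C_2$-flip at the single $\sigma^z$ endpoint, and it is the interplay of these two parities that collapses to the clean alternating sign $(-1)^a$. Everything else is formal. Finally, for the closing remark: the relation $CHC^{-1}=-H$ forces the spectrum to be symmetric about zero, and in particular forbids adding any nonzero uniform scalar to $H$, since such a shift would recenter the spectrum away from $0$. This is precisely what rules out the spurious nonzero constant that could otherwise arise when \hyperref[fct2]{\textit{Fact 2}} is applied to pin $H$ on the vacuum states, and it is what determines the ground-state energy (hence the scaling constant $\beta_n^{0,c}$) rather than leaving it free.
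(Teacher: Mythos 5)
Your proof is correct. The paper states this as a \emph{Fact} without any verification, so your argument supplies a proof the paper omits; the sign bookkeeping is right: with $C_1=\prod_i\sigma_i^z$ and $C_2=\prod_{i\ \mathrm{odd}}\sigma_i^x$ one gets $C\psi_{2j-1}C^{-1}=(-1)^{j-1}(-1)^j\psi_{2j-1}=-\psi_{2j-1}$ and $C\psi_{2j}C^{-1}=(-1)^j(-1)^j\psi_{2j}=+\psi_{2j}$, hence $C\psi_aC^{-1}=(-1)^a\psi_a$, each bilinear $\psi_{a+1}\psi_a$ flips sign, and $CHC^{-1}=-H$ with the spectral symmetry following from unitarity of $C$. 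Your closing paragraph also correctly identifies how the fact is used downstream: $H'$ has a spectrum symmetric about zero by construction, so $H=H'+\alpha\mathbf{1}$ together with the $C$-symmetry of $H$ forces $\alpha=0$ (a finite symmetric spectrum cannot remain symmetric after a nonzero shift). One remark for economy: since the anticommutation is basis-independent, you could avoid the Jordan--Wigner strings altogether by checking it on the Pauli form $H=-\sum_j t_{2j-1}\sigma_j^x-\sum_j t_{2j}\sigma_j^z\sigma_{j+1}^z$ directly --- $C_1$ flips every field term $\sigma_j^x$ and commutes with the bond terms, while $C_2$ flips every bond term $\sigma_j^z\sigma_{j+1}^z$ (exactly one of $j,j+1$ is odd) and commutes with the field terms. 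Your Majorana route is equally valid and has the advantage of matching the presentation of $H$ actually used in the diagonalization.
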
 
From \hyperref[fct1]{\textit{Fact 1}}, $(\mathcal{W}_n)_0$ is one dimensional from which a unit vector $1_n$ is chosen. Define
$$H':= \sum_{k\in [n]} \epsilon_k(\Psi_{+k}\Psi_{-k}-\Psi_{-k}\Psi_{+k}).$$
$H'$s eigenvectors are $\{\prod_{i \in S} \Psi_{i}1_n|\ \forall S\}$, each with the corresponding eigenvalue $\sum_{i \in S} \epsilon_i - \sum_{j \not \in S} \epsilon_j$. So $H'$ has $C$-symmetry. Further, one can easily see that $[H',\Psi_k]=2\epsilon_k\Psi_k$ and so, for  $D=H-H'$, $[D,\Psi_k]=0$. As  $(\mathcal{W}_n)_0$ is one dimensional, $D=\alpha\textbf{1}$. But $H'$ shifted by any $\alpha$ does not satisfy the energy symmetry. Therefore, $\alpha=0$ and $H'=H$. Taking the shift $H \rightarrow H+ \sum \epsilon_k$ and using $\{\Psi_{+k},\Psi_{-k}\}=\textbf{1}$,
$$H=\sum_{k \in [n]} 2\epsilon_k\Psi_{+k}\Psi_{-k}.$$
The final change to $H$ is $H \rightarrow \frac{2n+1}{8\pi}H$ and the desired Hamiltonian $\widetilde{L}_0^c$ is given by:
\begin{align}\label{eq17}
\widetilde{L}_0^c=\frac{2n+1}{\pi}\sum_{k \in [n]} \cos(\frac{k\pi}{2n+1})\Psi_{+k}\Psi_{-k}.
\end{align}
Defining the scaling limit requires defining the connecting maps. Before doing so, a renumbering $k \rightarrow k-\frac{1}{2}-n$ is performed to get the creation operators indices as $\{-\frac{1}{2},\ldots,-(n-\frac{1}{2})\}$. 

\textit{Notation}. $[(n+\frac{1}{2})]:=\{\frac{1}{2},\ldots,(n-\frac{1}{2})\}$ and $[-(n+\frac{1}{2})]:=\{-\frac{1}{2},\ldots,-(n-\frac{1}{2})\}$.

This will also change the coefficients from $\cos(\frac{k\pi}{2n+1})=-\sin(\frac{(k-\frac{1}{2}-n)\pi}{2n+1})$ to $\sin(\frac{-k\pi}{2n+1})$ and we will have
$$\widetilde{L}_0^c=\frac{2n+1}{\pi}\sum_{k \in [(n+\frac{1}{2})]} \sin(\frac{k\pi}{2n+1})\Psi_{-k}\Psi_{k}.$$
Next, we define 
$$\phi_n:\mathcal{W}_n \hookrightarrow \mathcal{W}_{n+1}, \ \ \text{where }\forall S \text{ we have } \phi_n(\prod_{i \in S} \Psi_{i}1_n)=\prod_{i \in S} \Psi_{i}1_{n+1}.$$
This is consistent with an embedding of $\mathcal{F}_n \hookrightarrow \mathcal{F}_{n+1}$ where $\Psi_i \hookrightarrow \Psi_i$ giving us in the limit the algebra of Dirac fermion operator $\mathcal{F}$. We will prove that there is a \textbf{strong} scaling limit (see \hyperref[dfn4]{\textbf{Definition 4}}), where the scaling limit space $\mathcal{V}$ can be constructed as the algebraic colimit of the sequence coming with the natural embedding maps $\rho_n: \mathcal{W}_n \hookrightarrow \mathcal{V}$. The connecting maps will turn out to be the restriction of $\phi_n$ to energy $M$ as it is required in \hyperref[dfn4]{\textbf{Definition 4}}. The natural orthonormal spanning set is 
$$\{\prod_{i \in S} \Psi_i1| \ \forall S \subset \mathbb{Z}_{<0}+\frac{1}{2}\}$$
for $\mathcal{V}$ where $1=\rho_n(1_n)$ is the \textit{vacuum} vector. We need to make sure that this is consistent with the definition of scaling limit obtained through the double colimit construction in \hyperref[dfn3]{\textbf{Definition 3}}.

Restricting to energy at most $M$, one has to check that $\phi_n$ gives isometries $\phi_n^M : \mathcal{W}_n^M \rightarrow \mathcal{W}_{n+1}^M$ for large enough $n$. It is not hard to see that any eigenvector $\prod_{-k \in S} \Psi_{-k} 1_n$ with energy $\frac{(2n+1)}{\pi}(\sum_{-k \in S} \sin(\frac{k\pi}{2n+1})) < M$ has the energy $\frac{(2n+3)}{\pi}(\sum_{-k \in S} \sin(\frac{k\pi}{2n+3}))$ given by $H_{n+1}$ also smaller than $M$ for large enough $n$. Indeed, by using the Taylor expansion we obtain
\begin{align}\label{eq18}
\frac{(2n+1)}{\pi}(\sum\limits_{-k \in S} \sin(\frac{k\pi}{2n+1}))=\sum\limits_{k \in S}k - \sum\limits_{-k \in S}\frac{k^3 \pi^2}{6(2n+1)^2}+\ldots
\end{align}
$$=\sum\limits_{-k \in S}k+O(\frac{1}{n}).$$
This also shows that at the scaling limit we have the energy $\sum_{-k \in S} k$ for $\prod_{-k \in S} \Psi_{-k} 1$. It is then easy to check that $\widetilde{L}_0^c \xrightarrow{SL} L_0$, where
$$L_0=\sum_{k \in \mathbb{N}-\frac{1}{2}}k\Psi_{-k}\Psi_k.$$
This gives the character 
$$\prod_{k=1}^{\infty} (1+q^{k-\frac{1}{2}})$$
which agrees with the character of $\rchi_0+\rchi_\frac{1}{2}$ (\hyperref[1.2]{section 1.2}). 

Now consider the natural action of $\widetilde{L}_0^c$ on $\mathcal{V}$ obtained through the embedding of $\mathcal{F}_n \hookrightarrow \mathcal{F}$. We could alternatively take the ``less'' natural action by extending $\widetilde{L}_0^c$ by zero on the orthogonal complement of $\mathcal{W}_n$ in $\mathcal{V}$ and no result on the convergence rate will be lost. By this embedding, the restriction of both $\widetilde{L}_0^c,L_0$ to subspace with energy at most $\sqrt[3]{n}$ denoted by $\widetilde{L}_0^c|_{\sqrt[3]{n}},L_0|_{\sqrt[3]{n}}$, can be compared. In order to finish the proof of 1(a), one needs to prove
$$\widetilde{L}_0^c|_{\sqrt[3]{n}}=L_0|_{\sqrt[3]{n}}+O(\frac{1}{n})$$
This is a stronger result than restriction to some finite energy $M$. This equation demands $M$ to be changing according to $n$ and yet have a convergence. If $L_0$ gives an energy smaller than $\sqrt[3]{n}$ to some eigenvector $\prod_{-k \in S} \Psi_{-k} 1$, then it must be shown that vector is inside $\mathcal{W}_n$. This means that for large enough $n$, we have $S \subset [-(n-\frac{1}{2})]$; that needs to be checked due to the less natural embedding used in the proofs of theorems in \hyperref[4.1]{section 4.1}. This is easy to show as if $\sum k < \sqrt[3]{n}$, then obviously there is no $k>n-1/2$ for large enough $n$. Further, we should show that $\widetilde{L}_0^c$ gives the same energy up to an error of $O(\frac{1}{n})$. That would imply that the error has norm at most $O(\frac{1}{n})$ as $\widetilde{L}_0^c|_{\sqrt[3]{n}}$ and $L_0|_{\sqrt[3]{n}}$ share the same eigenvectors in $\mathcal{V}$. Let us therefore estimate the difference 
$$|\frac{(2n+1)}{\pi}(\sum_{-k \in S} \sin(\frac{k\pi}{2n+1}))-\sum_{-k \in S} k|,$$
assuming $\sum_{-k \in S}k < \sqrt[3]{n}$. Using \hyperref[eq18]{(18)},
$$=|-\sum_{-k \in S}\frac{k^3 \pi^2}{6(2n+1)^2}+\text{h.o.t}| \le |\sum_{-k \in S}\frac{k^3 \pi^2}{6(2n+1)^2}|+|\text{h.o.t}|.$$
In general, if the sum $\sum_k x_k=t$ of non-negative numbers $x_k$ is a fixed value $t$, then $\sum_k x_k^j \le t^j$ with equality if and only if one of the numbers is $t$ and the others are zero. This implies that in the above, the maximum happens when $S=\{\sqrt[3]{n}\}$. The $\text{h.o.t}$ is (as a rough estimate) at most $O(\frac{1}{n^2})$ and the first term is exactly $O(\frac{1}{n})$. This finishes the proof of case 1(a).

Before moving to the next case, we need to investigate what ``separates'' the two irreducible modules $\rchi_0$ and $\rchi_\frac{1}{2}$ at the level of the finite spaces $\mathcal{W}_n$. The answer to this question will give some interesting identities that will be of use elsewhere.

$(-1)^F$ commutes with $\widetilde{L}_0^c$ and therefore preserves the vacuum as the eigenspace of the vacuum is one dimensional with energy zero. Further, it is easy to see that $\{(-1)^F,\psi_k\}=0$ and so $\{(-1)^F,\Psi_a\}=0$. Therefore any product of even number of creation operators (giving a vector inside $\rchi_0$) commutes with $(-1)^F$ and any product of odd number of creation operators (giving a vector inside $\rchi_\frac{1}{2}$), anti-commutes with $(-1)^F$. It remains to determine 
in which $\pm 1$ \textit{sector} of $(-1)^F$ the vacuum is. Going back to the previous labelling of creation and annihilation operators by integers $[2n]$, we have the following identities involving the annihilation operators
\begin{align}\label{eq19}
(-1)^F=i^n\prod_{j=1}^{2n} \psi_j\ \ , \ \ i^n \prod_{j=1}^{2n} \psi_j \prod_{k=1}^n \Psi_{-k}=\prod_{k=1}^n \Psi_{-k} .
\end{align}
The first identity  directly from the definition of the Majorana operators in terms of the Pauli operators. 
For the second identity, as $(-1)^F$ preserves the one dimensional image of the product (vacuum), we deduce $(-1)^F \prod_{k=1}^n \Psi_{-k}= \alpha \prod_{k=1}^n \Psi_{-k}$ for some $\alpha \in \{\pm 1\}$. According to the second identity, $\alpha$ must be $1$. To prove it, expand both sides of 
$$i^n \prod_{j=1}^{2n} \psi_j \prod_{k=1}^n \Psi_{-k}= \alpha \prod_{k=1}^n \Psi_{-k}$$
in terms of $\psi_i$s. The monomials in $\psi_i$s are linearly independent. On the RHS, all monomials have at most $n$ terms. Suppose that the coefficient of a term with \textit{less} than $n$ Majorana operators is nonzero. This gives a monomial with \textit{more} than $n$ terms on the left side because of the product $\prod_{j=1}^{2n} \psi_j$, so in the expansion of $\prod_{k=1}^n \Psi_{-k}$, only $n$-monomials will appear. To find $\alpha$, one needs to compare the coefficient of $\psi_1 \ldots \psi_n$ on both sides. This means the coefficients of $\psi_1\ldots \psi_n$ and $\psi_{n+1}\ldots\psi_{2n}$ in the RHS. 

The coefficient of any of the $n$-monomials is the determinant of some matrix. For the first one, it is the determinant of (notice $\Psi_{-k}=\Psi_{2n+1-k}$)

$$\begin{pmatrix}
         i\mu_{1,2n} & i^2 \mu_{2,2n} & i^3 \mu_{3,2n} & \ldots & i^n\mu_{n,2n} \\
         i\mu_{1,2n-1} & i^2 \mu_{2,2n-1} & i^3 \mu_{3,2n-1} & & i^n\mu_{n,2n-1} \\
         i\mu_{1,2n-2} & i^2 \mu_{2,2n-2} & i^3 \mu_{3,2n-2} & &   \\
         \vdots &  &  & & i^n\mu_{n,n+2} \\
         &  & & i^{n-1}\mu_{n-1,n+1} &  i^{n}\mu_{n,n+1}
\end{pmatrix}_{n \times n},$$

which has to be compared to the coefficient of $\psi_{n+1} \ldots \psi_{2n}$, the determinant of
\begin{gather*}
\scalebox{0.89}{$\begin{pmatrix}
         i^{n+1}\mu_{n+1,2n} & i^{n+2} \mu_{n+2,2n} & i^{n+3} \mu_{n+3,2n} & \ldots & i^{2n}\mu_{2n,2n} \\
         i^{n+1}\mu_{n+1,2n-1} & i^{n+2} \mu_{n+2,2n-1} & i^{n+3} \mu_{n+3,2n-1} & & i^{2n}\mu_{2n,2n-1} \\
         i^{n+1}\mu_{n+1,2n-2} & i^{n+2} \mu_{n+2,2n-2} & i^{n+3} \mu_{n+3,2n-2} & &   \\
         \vdots &  &  & & i^{2n}\mu_{2n,n+2} \\
         &  & & i^{2n-1}\mu_{2n-1,n+1} &  i^{2n}\mu_{2n,n+1}
\end{pmatrix}_{n \times n},$}
\end{gather*}
and by using $\mu_{2n+1-t,k}=(-1)^{k+1}\mu_{t,k}$, we obtain $\alpha=1$. 

Hence, the vacuum is in the $+1$ sector of $(-1)^F$. Therefore, for all $n$, $\rchi_0$ is in the $+1$ sector and $\rchi_\frac{1}{2}$ is in the $-1$ sector.

The procedure of taking scaling limit after the diagonalization and estimating the rate of convergence of the Hamiltonian to $L_0$ in all future cases will be similar and we will refer to this case. The focus will be only on the parts that have a different idea/explanation.

\subsubsection*{\textbf{Case 1(b) \& 1(c): $(0,0) \& (1,1) \xrightarrow{SL} \rchi_0$ and $(1,0) \& (0,1) \xrightarrow{SL}\rchi_{\frac{1}{2}}$}}
\hfill \break

Projection under $(-1)^F$ is non local. We would like to have exactly $\rchi_0$ and $\rchi_{\frac{1}{2}}$ in the scaling limit by at most a local projection. The boundary conditions involving $0$ and $1$-spin will provide that.

As in previous case, consider the ``would-be'' Hamiltonian $H=-\sum_{j=1}^{2n-1} t_je_j$ acting on the same Hilbert space, with $t_1=t_{2n-1}=0$ and all other $t_j$s being $1$. The Hilbert space $(\frac{1}{2},\frac{1}{2})$ is the sum of four subspaces given by anyonic chains starting and ending with the following spins: $(0,0),(1,0),(0,1),(1,1)$. It is easy to see that $H$ with $t_1=t_{2n-1}=0$ preserves each of the four subspaces and so, restricted to any of those subspaces, it gives the operator derived in \hyperref[1.4]{section 1.4} for the four boundary conditions given by $0$ and $1$. The matrix \hyperref[eq15]{(15)} corresponding to $H$ is
\begin{align}\label{eq20}
2\begin{pmatrix}
         0 & 0 & 0 & \ldots &  \\
         0 & 0 & 1 & & & \\
         0 & 1 & 0 & & &  \\
        \vdots & & & & & \\
        & & & 0 & 1 & 0 \\
         & &  & 1 & 0 & 0 \\
         & &  & & 0 & 0 
\end{pmatrix}_{2n \times 2n}
\end{align}
from which $n-1$ creation operators and annihilation operators $\Psi_k$s are derived corresponding to the same matrix of the case 1(a) for $n \rightarrow n-1$. 

These creation and annihilation operators change the boundary conditions as each operator is a linear combination of $\psi_{j}$ for $2 \le j \le 2n-1$ and, according to their definition, all have the Pauli operator $\sigma_1^x$ and do not have $\sigma_{n}^x$. Therefore, they flip the left boundary condition and pair the condition $(0,1)$ with $(1,1)$ and $(1,0)$ with $(0,0)$. 

Take the image of the product of all annihilation operators. As there are $n-1$ operators acting on a Hilbert space with dimension $2^{n}$, by \hyperref[fct1]{\textit{Fact 1}}, a two dimensional vacuum space exists. As all Dirac operators preserve the spaces $(0,1) \oplus (1,1)$ and $(1,0) \oplus (0,0)$, both $2^{n-1}$ dimensional, one can pick a unit vector in each space being the vacuum. Therefore, diagonalizing the Hamiltonian is done in exactly the same way by replacing $n$ with $n-1$ in case 1(a), defining $H'$ and showing $D=H-H'=0$ (proving the rate of convergence is also done similarly). The only difference is where the vacuum space $(\mathcal{W}_n)_0$ has $\dim (\mathcal{W}_n)_0=2$ and, as noted in \hyperref[fct2]{\textit{Fact 2}}, $D$ has to be shown to preserve two irreducible representations. This is clear as both $H$ and $H'$ preserve the spaces $(0,1) \oplus (1,1)$ and $(1,0) \oplus (0,0)$. Assume $D=\alpha\textbf{1}$ on the first and $D=\beta \textbf{1}$ one the second subspace. But $H'$ has the same symmetric spectrum on each of these subspaces and any shift $\alpha$ should be paired with another shift $\beta=-\alpha$ to preserve the energy symmetry. Therefore $D=-\alpha\sigma_n^z=-\alpha i^{n-1} \prod_{j=1}^{2n-1}\psi_j$. This is not possible, unless $\alpha=0$, as the monomials in Majorana operators $\psi_j$s are linearly independent and we only have binomials or identity in the expansion of either $H$ or $H'$.

The Hamiltonian preserves the boundary conditions. As the two boundary conditions $(1,1)$ and $(0,0)$ are supposed to give $\rchi_0$, it is reasonable to expect that the two vacuum vectors are inside those spaces. Similar to identity \hyperref[eq19]{(19)}, to show that the image of the product of annihilation operators is in the kernel of $\sigma_1^z-\sigma_n^z$:
$$(\sigma_1^z-\sigma_n^z)\prod_{\text{annihilation}} \Psi_k=0 \Leftrightarrow \sigma_1^z\prod_{\text{annihilation}} \Psi_k = \sigma_n^z \prod_{\text{annihilation}} \Psi_k.$$
As $\sigma_1^z=\psi_1$ and $i^{n-1} \prod_{j=1}^{2n-1}\psi_j=\sigma_n^z$,
$$\Leftrightarrow \prod_{\text{annihilation}} \Psi_k = i^{n-1} \prod_{j=2}^{2n-1}\psi_j \prod_{\text{annihilation}} \Psi_k.$$
The argument used for proving \hyperref[eq19]{(19)} applies in this case as well by simply replacing $n$ with $n-1$. As a result, taking the vacuum in, e.g. $(0,0)$, and acting on it with even number of creation operators gives a vector inside $(0,0)$ and with odd operators gives a vector inside $(1,0)$. This implies $\rchi_0$ is the scaling limit corresponding to $(0,0)$ (and $(1,1)$), and $\rchi_{\frac{1}{2}}$ corresponds to $(1,0)$ (and $(0,1)$). 

\subsubsection*{\textbf{Case 1(d) $(\frac{1}{2},0) \oplus (\frac{1}{2},1) \xrightarrow{SL} 2\rchi_\frac{1}{16}$}}

\hfill \break

To get $\rchi_\frac{1}{16}$ in the scaling limit, we specialize to the case $t_j=1$ for all $j$ except $t_{2n-1}=0$. This corresponds to the AC with boundary condition $\mathcal{W}_n=(\frac{1}{2},0) \oplus (\frac{1}{2},1)$, which are denoted by $\mathcal{W}_n^0$ and $\mathcal{W}_n^1$ respectively, and each will give a copy of $\rchi_\frac{1}{16}$. In this case, the matrix equation \hyperref[eq15]{(15)} becomes
\begin{align}\label{eq21}
 \begin{pmatrix}
           \mu_1' \\
           \mu_{2}' \\
           \vdots \\
            \\
           \mu_{2n-1}'\\    
           \mu_{2n}'
         \end{pmatrix}
= 2\begin{pmatrix}
     0 & 1 & 0 & \ldots &  \\
     1 & 0 & 1 & & & \\
     0 & 1 & 0 & & &  \\
     \vdots & & & & & \\
     & & & 0 & 1 & 0 \\
     & &  & 1 & 0 & 0 \\
     & &  & & 0 & 0 
\end{pmatrix}_{2n \times 2n}
         \begin{pmatrix}
         \mu_1 \\
         \mu_{2} \\
         \vdots \\
            \\
         \mu_{2n-1}\\
         \mu_{2n}
         \end{pmatrix}.
\end{align}

The matrix has $2n-1$ eigenvalues $\epsilon_k=4\cos(\frac{k\pi}{2n})$ for $k =1,\ldots,n,\ldots,2n-1$ with corresponding eigenvectors $(\mu_{b,k})_b=(\sin(\frac{bk\pi}{2n}))_b$. In addition to those, there is the eigenvalue $0$ with eigenvector $(\delta_{b,2n})_b$.

This gives a set of (normalized) raising and lowering operators $\Psi_{\pm k}$ for $k=1,\ldots,n-1$ (the creation operators) satisfying the usual relations. The $\epsilon_{n}$ will give $\Psi_n$ which also anticommutes with all other operators except its conjugate, which is easily checked (by looking at its corresponding eigenvector), to be exactly $-\Psi_n$,
$$\{\Psi_n,\Psi_n^\dagger\}=\textbf{1} \implies \Psi_n^2=-\frac{1}{2}.$$
Finally, the operator $i^{2n}\psi_{2n}$ corresponding to $(\delta_{b,2n})_b$ (and the zero eigenvalue) anticommutes with all other operators.

The algebra $\mathcal{F}_{n-1}$ acts on a $2^n$ dimensional space. By \hyperref[fct1]{\textit{Fact 1}}, the vacuum space $(\mathcal{W}_n)_0$ created by the product of all annihilation operators is two dimensional. Further, $\mathcal{W}_n^i$ ($i=0,1$) are preserved by the Hamiltonian. In fact, all the $2n-1$ Dirac operators $\Psi_k$ also preserve $\mathcal{W}_n^i$ as the coefficient for the only term containing $\sigma_n^x$ in their linear expansion in terms of the $\psi_j$s, i.e. $\psi_{2n}$, is $\sin(\frac{(2n)k \pi}{2n})=0$. This implies $(\mathcal{W}_n)_0$ splits into two one-dimensional subspaces of $\mathcal{W}_n^0$ and $\mathcal{W}_n^1$.

Hence, by restricting $H$ to $\mathcal{W}_n^i$, one can apply an argument similar to the case 1(a). Let us define $H'$ which also preserves $\mathcal{W}_n^i$s

$$H'=\sum_{k=1}^{n-1} \epsilon_k(\Psi_{+k}\Psi_{-k}-\Psi_{-k}\Psi_{+k}).$$

As $[H',\Psi_k]=2\epsilon_k\Psi_k$, similar to $H$, we conclude that $D=H-H'$ satisfies $[D,\Psi_k]=0$. \hyperref[fct2]{\textit{Fact 2}} implies that $D$ must be a scalar restricted to each $\mathcal{W}_n^i$s as they are both generated by a vacuum vector and $H,H'$ both preserve $\mathcal{W}_n^i$s. If the $C-$symmetry argument is applied as usual, as $C=\prod_i \sigma_i^z \prod_i (\sigma_i^x)^i$, only for even $n$, $C$ preserves $\mathcal{W}_n^i$. Hence, for even $n$, $H$ would have an energy symmetry and so $H=H'$ on each $\mathcal{W}_n^i$ and therefore on the whole $\mathcal{W}_n$. Even if $n$ is odd, a more involved argument is possible but as similar circumstances appear in the periodic chain case, an argument based on the $(-1)^F$ symmetry will be proposed.

As $(-1)^F$ can be easily seen to commute (or anti-commute based on the parity of $n$) with the product of all annihilation operators, $(\mathcal{W}_n)_0$ is preserved by $(-1)^F$. It is similarly preserved by $\Psi_n$. But $(-1)^F$ and $\Psi_n$ anticommutes. Therefore, any eigenvector of $(-1)^F$ in $(\mathcal{W}_n)_0$, by the action of $\Psi_n$, will go to another \textit{nonzero} eigenvector (since $\Psi_n^2=\frac{-1}{2}$) with the opposite eigenvalue.

The two unit eigenvectors $1_{\pm} \in (\mathcal{W}_n)_0$ with corresponding eigenvalue $\pm 1$ of $(-1)^F$ are sent to a scalar multiple of each other by $\Psi_n$. Then, defining $H'$ as before, and noticing that $[H',(-1)^F]=0$, $D$ preserves the sectors. By \hyperref[fct2]{\textit{Fact 2}}, $H$ restricted to any of the $\pm1$ sector of $(-1)^F$ is equal to $H'$ after some shift $\beta_{\pm}$ in each $\pm1$ sector. Hence, $D|_{+1}=\beta_+\textbf{1}$ and $D|_{-1}=\beta_-\textbf{1}$.

Also, it is important to note that $H'$s spectrum in both sector is the same. Indeed, given any index subset $S$ of the creation operators, starting with either $1_{\pm}$ based on the parity of $|S|$ ensures that the product gives a vector in the desired sector with the energy $\sum_{k \in S} \epsilon_k-\sum_{k \not \in S} \epsilon_k$.

Now suppose $\beta_++\beta_-<0$, then the lowest energy $x+\beta_+$ for $H=H'+\beta_+\textbf{1}$ does not have its opposite in $H'+\beta_-\textbf{1}$ in the other sector; if not, then $\exists E$ such that $-x-\beta_+=E+\beta_- \implies -\beta_+-\beta_-=x+E \le 0$ as $x$ is the lowest energy of $H'$. Similarly, $\beta_++\beta_->0$ is ruled out. So $\beta_++\beta_-=0$ and $H=H'+\beta_+(-1)^F$. But then looking at the expansion of $H'$ and $H$ in terms of Majorana operators, both have at most bilinear terms while $(-1)^F=i^{n} \prod \psi_j$ has $2n$ terms. Due to the linear independence of the Majorana monomials, $H=H'$ is the only possibility.

As was mentioned before, $H$ and the creation operators preserve $\mathcal{W}_n^i$ and therefore, one can pick the vacuum vectors $1_n^i \in \mathcal{W}_n^i$. Then, similar to 1(a), after a suitable shift and scaling, the Hamiltonian $\widetilde{L}_0^c$ is constructed
\begin{align}\label{eq22}
\widetilde{L}_0^c=\frac{2n}{\pi}\sum_{k=1}^{n-1} \cos(\frac{k\pi}{2n})\Psi_{+k}\Psi_{-k}+\frac{1}{16}\textbf{1},
\end{align}
the restriction of which to each $\mathcal{W}_n^i$ has eigenvectors $\{ \prod_{k \in S} \Psi_k 1_n^i| \ \forall S \subset [n-1]\}$. The shift $\frac{1}{16}\textbf{1}$ is not the \textit{natural} one, but for computational issues, it is better to have the exact shift. After the renumbering $\Psi_k \rightarrow \Psi_{k-n}$ for $k\neq n$ and $\Psi_n \to i\Psi_0$, one defines the scaling limit vector space in each boundary condition spanned by the orthogonal vectors $\{ \prod_{k \in S} \Psi_k 1_n| \ \forall S \subset \mathbb{N}\}$. Also, similar to 1(a), the scaling limit of $\widetilde{L}_0^c$ is obtained using the Taylor series of the coefficients in 
$$\widetilde{L}_0^c = \frac{2n}{\pi}\sum_{k \in [(n-1)]} \sin(\frac{k\pi}{2n})\Psi_{-k}\Psi_{k}+\frac{1}{16}\textbf{1},$$
which leads to
$$L_0=\sum_{k \in \mathbb{N}} k\Psi_{-k}\Psi_k+\frac{1}{16}\textbf{1}$$
with the desired rate of convergence. This gives the character
$$\prod_{k=1}^{\infty} (1+q^{k})$$
for the scaling limit which is that of $\rchi_\frac{1}{16}$. This finishes the proof for this case.

\subsubsection*{\textbf{Case 1(e) periodic and full CFT}}

\hfill \break

The periodic case, as mentioned in \hyperref[1.6]{section 1.6}, will be diagonalized differently from \cite{koo1994representations}, which involves taking the usual Fourier transform of the Majorana operators to get the Dirac operators. Here, we will continue applying the method in \cite{fendley2014free} and have $\cos()$ and $\sin()$ transform of the Majorana operators.

In the periodic chain with $2n$ TL operators acting on, there will be operators of the form
$$H= -\sum_{j=1}^{n}t_{2j-1}\sigma_j^x-\sum_{j=1}^n t_{2j}\sigma_j^z \sigma_{j+1}^z,$$
where the case 1(a) Pauli-TL relation are used. Rewriting this in the language of the Majorana operators gives
$$H=i \Big(\sum_{a=1}^{2n-1} t_a \psi_{a+1}\psi_a - t_{2n}\psi_1\psi_{2n}(-1)^F \Big).$$
$H$ has the $(-1)^F$ symmetry and divides the spectrum in two $\pm 1$ sectors which we can analyze separately. The method pursued is to first restrict $H$ to one of those sectors so that the sign of $(-1)^F$ is determined, and then extend $H$ in the obvious way to both sectors (as the Majorana operators can be extended). So effectively, two matrices each of which equal to $H$ in one of the $\pm1$ sector will be diagonalized. Then the spectrum at the scaling limit will be easy to find.

We will show that if $n$ is even, the scaling limit is the diagonal full CFT $\rchi_0\overline{\rchi}_0+\rchi_{\frac{1}{2}}\overline{\rchi}_{\frac{1}{2}}+\rchi_{\frac{1}{16}}\overline{\rchi}_{\frac{1}{16}}$ and if $n$ is odd, it is $\rchi_0\overline{\rchi}_{\frac{1}{2}}+\rchi_{\frac{1}{2}}\overline{\rchi}_0+\rchi_{\frac{1}{16}}\overline{\rchi}_{\frac{1}{16}}$. In this case, similar to how \hyperref[eq15]{(15)} was derived, the matrix is
\begin{align}\label{eq23}
   \begin{pmatrix}
           \mu_1' \\
           \mu_{2}' \\
           \vdots \\
            \\
           \mu_{2n}'
         \end{pmatrix}
         = 2\begin{pmatrix}
         0 & t_1 & 0 & \ldots & (-1)^{F+(n+1)}t_{2n} \\
         t_1 & 0 & t_2 & &  \\
         0 & t_2 & 0 & &   \\
         \vdots &  &  &  & t_{2n-1} \\
         (-1)^{F+(n+1)}t_{2n} &  & & t_{2n-1} & 0 
         \end{pmatrix}
         \begin{pmatrix}
         \mu_1 \\
         \mu_{2} \\
         \vdots \\
            \\
         \mu_{2n}
         \end{pmatrix},
\end{align}

where by $(-1)^F$ in the entries, the \textbf{sign} of the operator $(-1)^F$ when restricted to $\pm 1$ sector is considered. From now on, we will specialize to $t_j=1$ for all $j$. There are two cases based on the parity of $n$.
\hfill \break

\textit{Even $n$.}

Restricting to $+1$ sector, the matrix \hyperref[eq23]{(23)} becomes
\begin{align}\label{eq24}
2\begin{pmatrix}
         0 & 1 & 0 & \ldots & -1 \\
         1 & 0 & 1 & &  \\
         0 & 1 & 0 & &   \\
         \vdots &  &  &  & 1 \\
          -1 &  & & 1 & 0 
\end{pmatrix}_{2n\times 2n},
\end{align}
with the corresponding operator being
$$H^{(+1)}=i \Big(\sum_{a=1}^{2n-1} t_a \psi_{a+1}\psi_a - t_{2n}\psi_1\psi_{2n}\Big).$$
As mentioned before, we should think of $H^{(+1)}$ as an operator acting on the $2^n$ dimensional Hilbert space (on both $\pm 1$ sectors), and once the spectrum of $H^{(+1)}$ is found, we will restrict to the $+1$ sector.

The matrix \hyperref[eq24]{(24)} has eigenvalues $\epsilon_k=4\cos(\frac{(2k-1)\pi}{2n})$ for $k \in [2n]$ and there are repetitions. The corresponding eigenvectors are $(\mu_{b,k})_b=(\cos(\frac{(2k-1)b\pi}{2n}))_b$ for $k=1,\ldots,n$ and $(\mu'_{b,k})_b=(\sin(\frac{(2k-1)b\pi}{2n}))_b$ for $k=n+1,\ldots,2n$, where $(\mu_{b,k})_b$ and $(\mu'_{b,2n+1-k})_b$ correspond to the same eigenvalue $$\epsilon_k=4\cos(\frac{(2k-1)\pi}{2n})=4\cos(\frac{(2(2n+1-k)-1)\pi}{2n})=\epsilon_{2n+1-k}.$$

These in turn will give orthogonal eigenvectors constructing an algebra $\mathcal{F}_n$ of (normalized) creation operators $\mathcal{F}_n^+$ for $k = 1,\ldots,\frac{n}{2},\frac{3n}{2}+1,\ldots,2n$ and (normalized) annihilation operators $\mathcal{F}_n^-$ for $k = \frac{n}{2}+1,\ldots,\frac{3n}{2}$ where
$$\Psi_{-k}:=\Psi_{k}^\dagger=\Psi_{n+1-k} \text{ for } 1 \le k \le n,$$
and
$$\Psi_{-k}:=\Psi_{k}^\dagger=\Psi_{2n+1-k} \text{ for } n+1 \le k \le 2n.$$
One can see that the adjoint of operators corresponding to first quadrant ($k \le n/2$) are the ones in the second quadrant with the opposite eigenvalue and for those in the fourth quadrant ($\frac{3n}{2}+1 \le k \le 2n$), the adjoint is the one with the opposite eigenvalue in the third quadrant. 

Next, define
$$H'^{(+1)}= \sum_{k=1}^{\frac{n}{2}} \epsilon_k(\Psi_{+k}\Psi_{-k}-\Psi_{-k}\Psi_{+k}) + \sum_{k=\frac{3n}{2}+1}^{2n} \epsilon_k(\Psi_{+k}\Psi_{-k}-\Psi_{-k}\Psi_{+k}),$$
and as $[D,\Psi_{k}]=0$ for $D=H'^{(+1)}-H^{(+1)}$ for all $k$, $D=\alpha \textbf{1}$ is a scalar by \hyperref[fct2]{\textit{Fact 2}} as there are $n$ Dirac operators acting on $2^n$ dimensional Hilbert space, therefore the vacuum space is one dimensional. The charge conjugation symmetry $C$ always satisfies $C(-1)^F=(-1)^n(-1)^FC$. As $n$ is even, the charge conjugation symmetry applies on the $+1$ sector. Therefore $D|_{+1}=0 \implies \alpha=0 \implies H^{(+1)}=H'^{(+1)}$ on both sectors. Applying a suitable scaling gives the Hamiltonian
\begin{align}\label{eq25}
\widetilde{\mathbb{L}}_0^{c,(+1)}=\frac{n}{4\pi}\Big(\sum_{k=1}^{\frac{n}{2}} \epsilon_k\Psi_{+k}\Psi_{-k}+\sum_{k=\frac{3n}{2}+1}^{2n} \epsilon_k\Psi_{+k}\Psi_{-k}\Big).
\end{align}
Let us restrict to the $+1$ sector. Again, $(-1)^F$ preserves the vacuum but whether the vacuum itself is in the $+1$ sector or $-1$ is important. One has to prove a similar identity like \hyperref[eq19]{(19)} where the product of annihilation operators $\prod_{\frac{L}{2}+1}^{\frac{3L}{2}} \Psi_{k}$ is one side of the equation:
\begin{align}\label{eq26}
i^{2n}(-1)^F \prod_{\text{annihilation}} \Psi_{k}=\prod_{\text{annihilation}} \Psi_{k},
\end{align}
We will have to compute similar determinant of matrices while using the equalities:
$$(-1)^k\mu_{s,k}=\mu'_{s+n,2n+1-k}  \ \ \text{for} \ \  1\le s \le n, \frac{n}{2}+1 \le k \le n,$$  $$(-1)^{k+1}\mu_{s,k}=\mu'_{s-n,2n+1-k} \ \ \text{for} \ \ n+1\le s \le 2n, \frac{n}{2}+1 \le k \le n,$$
which can be compactly presented as
$$(-1)^k\cos(\frac{(2k-1)s\pi}{2n})=\sin(\frac{(n+s)(2(2n+1-k)-1)\pi}{2n})=$$
$$\sin(\frac{(n-s)(2(2n+1-k)-1)\pi}{2n}).$$
This means that for the two matrices, $i$-th row from one matrix will be equal to the \textbf{opposite} $(n+1-i)$-row on the other matrix up to a $(-1)^k$ factor. 

Since $i^{2n}=(-1)^n$, for $n$ even, the vacuum is in the $+1$ sector. As the eigenvectors of $\widetilde{\mathbb{L}}_0^{c}$ are $\{\prod_{k\in S}\Psi_k1_n| \ \forall S \subset \text{indices of creation operators}\}$ with $1_n$ the vacuum, and $(-1)^F$ anticommutes with all creation operators, the eigenvectors of interest are those with even number of creation operators. 

We shall call all operators with index $1\le k\le n$ the left-moving (LM) operators and $n+1\le k\le 2n$ the right-moving (RM) operators. If one takes odd number of operators from the LM part (giving us some energy in $\mathbb{N}-\frac{1}{2}$), then odd number of operators from the RM part should be taken as well and the same for even. Therefore the character of the scaling limit will be $\rchi_0\overline{\rchi}_0+\rchi_{\frac{1}{2}}\overline{\rchi}_{\frac{1}{2}}$. Scaling limit can be derived using the Fourier series and the rate of convergence can be proved similar to 1(a). Clearly, the relabelling will be $\Psi_k \to \Psi_{k-\frac{n+1}{2}}$ for the LM and $\Psi_k \to \overline{\Psi}_{\frac{3n+1}{2}-k}$ for the RM part, giving us
$$\widetilde{\mathbb{L}}_0^{c,(+1)}=\frac{n}{\pi}\Big(\sum_{k \in [(\frac{n}{2}-\frac{1}{2})]} \sin(\frac{k\pi}{n})\Psi_{-k}\Psi_{k}+\sum_{k\in [(\frac{n}{2}-\frac{1}{2})]} \sin(\frac{k\pi}{n})\overline{\Psi}_{-k}\overline{\Psi}_{k}\Big),$$
with scaling limit
$$\mathbb{L}_0|_{+1}=L_0|_{+1}+\overline{L}_0|_{+1}=(\sum_{k \in \mathbb{N}-\frac{1}{2}} k\Psi_{-k}\Psi_{k}+\sum_{k\in \mathbb{N}-\frac{1}{2}} k\overline{\Psi}_{-k}\overline{\Psi}_{k}\Big)$$

This finishes the proof for the sector $+1$ and even $n$.

For the $-1$ sector, we have the matrix
\begin{align}\label{eq27}
\begin{pmatrix}
         0 & 1 & 0 & \ldots & 1 \\
         1 & 0 & 1 & &  \\
         0 & 1 & 0 & &   \\
         \vdots &  &  &  & 1 \\
          1 &  & & 1 & 0 
\end{pmatrix},
\end{align}
with the corresponding operator, extended to act on both sectors, given by
$$H^{(-1)}=i \Big(\sum_{a=1}^{2n-1} t_a \psi_{a+1}\psi_a + t_{2n}\psi_1\psi_{2n}\Big).$$
Matrix \hyperref[eq27]{(27)} has eigenvalues $\epsilon_k=4\cos(\frac{2k\pi}{2n})$ for $k=1,\ldots,2n$ or equivalently $k=0,\le,2n-1$, where the eigenvalues corresponding to $k=1,\ldots,\frac{n}{2}$ are repeated twice and the one corresponding to $k=n,0$ are repeated once. The corresponding eigenvectors are $(\mu_{b,k})_b=(\cos(\frac{2kb\pi}{2n}))_b$ for $k=0, \ldots, n$ and another set of eigenvectors $(\mu'_{b,k})_b=(\sin(\frac{2kb\pi}{2n}))_b$ for $k=n+1,\ldots,2n-1$. Note that $(\mu_{b,k})_b$ and $(\mu'_{b,2n-k})_b$ are eigenvectors for the same eigenvalue as long as $k \neq n,0$. The corresponding (normalized) Dirac operators are
$$\Psi_{-k}:=\Psi_{k}^\dagger=\Psi_{n-k} \text{ for } 0 \le k \le n \ \& \ k\neq \frac{n}{2},$$
$$\Psi_{-k}:=\Psi_{k}^\dagger=\Psi_{2n-k} \text{ for } n+1 \le k \le 2n-1 \ \& \ k\neq \frac{3n}{2}.$$
Similar to the previous case, LM creation operators are in the first quadrant ($0 \le k \le \frac{n}{2}-1$), and the LM annihilation operators (adjoint to the first quadrant) in the second quadrant. The RM creation operators belong to the fourth quadrant ($\frac{3n}{2}+1 \le k \le 2n-1$) with their adjoint in the third quadrant. Also similar to the case of $\rchi_{\frac{1}{16}}$, we have the (``would-be'' zero-mode) operators $\Psi_{\frac{n}{2}},\Psi_{\frac{3n}{2}}$ corresponding to $\epsilon_{\frac{n}{2}}=\epsilon_{\frac{3n}{2}}=0$ with their adjoint being $\Psi_{\frac{n}{2}},-\Psi_{\frac{3n}{2}}$ respectively. All operators anticommute except with their adjoint, with which they give the identity. Hence, $\Psi_{\frac{n}{2}}^2=\frac{1}{2}=-\Psi_{\frac{3n}{2}}^2$.

Summing up, there are $\frac{n}{2}+(\frac{n}{2}-1)=n-1$ creation operators and the situation is same as $\rchi_\frac{1}{16}$. We have two vectors in the image of the product of all $n-1$ annihilation operators. As $(-1)^F$ preserve that image (since it anti commutes with all Dirac operators) but also since it anticommutes with $\Psi_{\frac{n}{2}}$ (which also preserves the vacuum space for the same reason), there are eigenvectors $1_{n}^\pm$ in each $\pm1$ sector in the vacuum space. Next, defining
$$H'^{(-1)}=\sum_{k=1}^{\frac{n}{2}-1} \epsilon_k(\Psi_{+k}\Psi_{-k}-\Psi_{-k}\Psi_{+k})+\sum_{k=\frac{3n}{2}+1}^{2n} \epsilon_k(\Psi_{+k}\Psi_{-k}-\Psi_{-k}\Psi_{+k}),$$
it can be shown that $[H^{(-1)}-H'^{(-1)},\Psi_k]=0$. Further, both operators commute with $(-1)^F$, so $D=H^{(-1)}-H'^{(-1)}$ preserves both sectors and acts as a scalar on each. As $n$ is even, we have charge conjugacy in each sector, therefore $D=0$. It then becomes clear that one must define $\widetilde{\mathbb{L}}_0^{c,(-1)}$ as
\begin{align}\label{eq28}
\widetilde{\mathbb{L}}_0^{c,(-1)}=\frac{n}{4\pi}\Big(\sum_{k=0}^{\frac{n}{2}-1} \epsilon_k\Psi_{+k}\Psi_{-k}+\sum_{k=\frac{3n}{2}+1}^{2n-1} \epsilon_k\Psi_{+k}\Psi_{-k}\Big),
\end{align}
where we note that a shift by some scalar (which will be $\frac{1}{8}$ in the limit) is not included yet and will be discussed later. As we are interested in the $-1$ sector, all combinations of the form
\begin{itemize}
    \item (odd LM) (odd RM) $1_{n}^-$
    \item (even LM) (even RM) $1_{n}^-$
    \item (odd LM) (even RM) $1_{n}^+$
    \item (even LM) (odd RM) $1_{n}^+$
\end{itemize}
are in the subspace the scaling limit should be taken. To do so, one needs to first apply the renumbering $\Psi_{k} \to \Psi_{k-\frac{n}{2}}$ for $0 \le k \le n$ and $\Psi_{k} \to \overline{\Psi}_{\frac{3n}{2}-k}$ for $n+1 \le k \le 2n-1$ except for $\Psi_\frac{3n}{2} \to i\overline{\Psi}_0$, and accordingly
$$\widetilde{\mathbb{L}}_0^{c,(-1)}=\frac{n}{\pi}\Big(\sum_{k \in [\frac{n}{2}]} \sin(\frac{k\pi}{n})\Psi_{-k}\Psi_{k}+\sum_{k \in [\frac{n}{2}]} \sin(\frac{k\pi}{n})\overline{\Psi}_{-k}\overline{\Psi}_{k}\Big).$$

In order to build the scaling limit vector space $\mathcal{V}$, the four possibilities outlined above need to be considered. The embeddings $\phi_n : \mathcal{W}_n^{(-1)} \hookrightarrow \mathcal{W}_{n+2}^{(-1)}$, done by mapping each vector to its obvious corresponding vector (also consistent with the embeddings of the Dirac operators algebra), gives already the character of $\overline{\mathbb{L}}_0^{c,(-1)}$ in the scaling limit as that of $\rchi_\frac{1}{16}\overline{\rchi}_\frac{1}{16}$ but we need to identify the scaling limit with the \textbf{space}  $\rchi_\frac{1}{16}\overline{\rchi}_\frac{1}{16}$. This is also clear as every configuration above can be identified with its counterpart in $\rchi_\frac{1}{16}\overline{\rchi}_\frac{1}{16}$
\begin{itemize}
    \item (odd LM) (odd RM) $\ket{\frac{1}{16}}\otimes\ket{\frac{1}{16}}$
    \item (even LM) (even RM) $\ket{\frac{1}{16}}\otimes\ket{\frac{1}{16}}$
    \item (odd LM) (even RM) $\ket{\frac{1}{16}}\otimes\ket{\frac{1}{16}}$
    \item (even LM) (odd RM) $\ket{\frac{1}{16}}\otimes\ket{\frac{1}{16}}$
\end{itemize}
This map is unitary and ``character''-preserving. Notice that only $1^-$ (the scaling limit of $1_{n}^-$) is identified with $\ket{\frac{1}{16}}\otimes\ket{\frac{1}{16}}$. The vector $1^+$ (the scaling limit of $1_{n}^+$) is not identified with anything inside $\rchi_\frac{1}{16}\overline{\rchi}_\frac{1}{16}$ as it is not even present in the finite spaces $\mathcal{W}_n^{(-1)}$. What we observe, is a merging of two copies of $\rchi_\frac{1}{16}\overline{\rchi}_\frac{1}{16}$ (generated by $1^+$ and $1^-$)  and selection of a subspace of both, which together form a copy of $\rchi_\frac{1}{16}\overline{\rchi}_\frac{1}{16}$. Finally, the scaling limit Hamiltonian is
$$\mathbb{L}_0|_{-1}=L_0|_{-1}+\overline{L}_0|_{-1}=(\sum_{k \in \mathbb{N}} k\Psi_{-k}\Psi_{k}+\sum_{k\in \mathbb{N}} k\overline{\Psi}_{-k}\overline{\Psi}_{k}\Big).$$
Let us discuss the issue of the scalings done to $H$ in both sectors. They can be seen to be clearly different. In fact, after the restriction to each sector and proving that $H=H'$, there was a scaling 
$$H \rightarrow \frac{n}{8\pi}(H+\sum_{k=1}^{\frac{n}{2}}\epsilon_k + \sum_{k=\frac{3n}{2}+1}^{2n}\epsilon_k)$$ 
for the $+1$ sector and the other being 
$$H \rightarrow \frac{n}{8\pi}(H+\sum_{k=0}^{\frac{n}{2}-1}\epsilon_k + \sum_{k=\frac{3n}{2}+1}^{2n-1}\epsilon_k).$$ 
There can only be one scaling to $H$. So the different scalings should differ by some scalar which is $\frac{1}{8}$ in the scaling limit as the ground state in the $(-1)^F=-1$ sector is $\ket{\frac{1}{16}}\otimes \ket{\frac{1}{16}}$. Indeed, by some trigonometric calculation, the two scalings differ by $\frac{n}{8\pi}(4\tan(\frac{\pi}{4n}))=\frac{1}{8}+O(\frac{1}{n^2})$ which in the limit $n\rightarrow \infty$ is $\frac{1}{8}$. Hence, the scaling limit for even $n$ is the diagonal full CFT with the Hamiltonian $\mathbb{L}_0$. 
\hfill \break

\textit{Odd $n$.}

This case will not be analyzed as it is not a full diagonal CFT and not used in any of the main results. Nevertheless, the proof can be seen to be similar to the $n$ even case, with the difference that the $+1$ sector gives $\rchi_\frac{1}{16}\overline{\rchi}_\frac{1}{16}$ and the $-1$ sector gives $\rchi_0\overline{\rchi}_\frac{1}{2}+\rchi_{\frac{1}{2}}\overline{\rchi}_{0}$. The reason the $-1$ sector is not diagonal is the fact that odd(even) number of LM operators have to act with even(odd) numbers of RM operators to take the vacuum from the $+1$ sector to $-1$ sector.

\subsubsection*{\textbf{Case 2; The higher Virasoro modes $L_m$s}}

\hfill \break

Changing the coefficients $t_j$ to a $\cos()$ and $\sin()$ tranform of the $e_j$s is necessary to obtain the higher Virasoro modes $L_m$s.

We will prove the case 2(a) ($\rchi_0+\rchi_{\frac{1}{2}}$) in \hyperref[thm3.1]{\textbf{Theorem 3.1}} with the rate of convergence. All other cases, including the periodic case, have a similar proof although there will be some comments for the periodic chain.
\hfill \break

\textit{The $\cos()$ transform and $\widetilde{L}_m^c+\widetilde{L}_{-m}^c$.}

\hfill \break
Let us fix $m \in \mathbb{N}$. The operator $L_{m}+L_{-m}$ is given by (see e.g. \cite{francesco2012conformal})
\begin{align}\label{eq29}
\sum_{k \ge \frac{m+1}{2},k \in \mathbb{Z}+\frac{1}{2}} (k-\frac{m}{2}) \Psi_{-k+m}\Psi_{k}+\sum_{k \ge \frac{-m+1}{2},k \in \mathbb{Z}+\frac{1}{2}} (k+\frac{m}{2}) \Psi_{-k-m}\Psi_{k},
\end{align}
which we want to obtain in the scaling limit. To understand what the observable $O=i\sum t_j(m)\psi_{j+1}\psi_{j}$ will be in terms of $\Psi_k$'s, one has to use the matrix equation \hyperref[eq15]{(15)}. We need to build another observable $O'$ using $\Psi_k$'s which has scaling limit $L_{m}+L_{-m}$, and that also satisfies $[O-O',\Psi_k]=0$. Then, going through the usual arguments, after some suitable scaling, the sequence $O_n \xrightarrow{SL} L_m+L_{-m}$ will be constructed.

Notice that $[L_m+L_{-m},\Psi_k]$ is the sum of exactly two Dirac operators with indices differing by $m$ from $k$. $t_j(m)$ should be such that the same result for $[O,\Psi_k]$ happens, with coefficients going to $k\pm \frac{m}{2}$ in the scaling limit. Using the indices \textit{before} the renumbering, i.e. $k \in [2n]$, a natural candidate for the coefficients would be $\cos(\frac{(k\mp \frac{m}{2})\pi}{2n+1})$. Hence, computing $[O,\Psi_k]$ using \hyperref[eq15]{(15)}, the following must hold
\begin{align}\label{eq30}
t_j(m)\mu_{k,j+1}+t_{j-1}(m)\mu_{k,j-1}=
\end{align}
$$\cos \Big(\frac{(k+\frac{m}{2})\pi}{2n+1}\Big)\mu_{k+m,j}+\cos \Big(\frac{(k-\frac{m}{2})\pi}{2n+1}\Big)\mu_{k-m,j}.$$
From simple trigonometric identities, the right side is equal to
\begin{align}\label{eq31}
\cos \Big(\frac{(k+\frac{m}{2})\pi}{2n+1}\Big)\sin \Big(\frac{(k+m)j\pi}{2n+1} \Big)+\cos \Big(\frac{(k-\frac{m}{2})\pi}{2n+1}\Big)\sin \Big(\frac{(k-m)j\pi}{2n+1} \Big) 
\end{align}
$$=\cos(\frac{m(j+\frac{1}{2})\pi}{2n+1})\sin \Big(\frac{k(j+1)\pi}{2n+1}\Big)+\cos(\frac{m(j-\frac{1}{2})\pi}{2n+1})\sin \Big(\frac{k(j-1)\pi}{2n+1} \Big),$$
which is in fact
$$\cos(\frac{m(j+\frac{1}{2})\pi}{2n+1})\mu_{k,j+1}+\cos(\frac{m(j-\frac{1}{2})\pi}{2n+1})\mu_{k,j-1}.$$
So $t_j(m)$ are \textit{forced} to be $\cos(\frac{m(j+\frac{1}{2})\pi}{2n+1})$. However, the coefficients $\cos(\frac{mj\pi}{2n})$, as used in the conjecture \cite[(7.5)]{koo1994representations}, do not satisfy the identity \hyperref[eq30]{(30)}. The implications will be discussed more in the second subsection. As the actual identity for the matrix \hyperref[eq15]{(15)} involves a two factor, $O$ is changed to $O/2$ to cancel this factor. This is important to compute the scaling factors until $O$ becomes the desired operator.

Although the identities above determine what $O'$ should be, what happens at the \textit{boundaries} when $k+m>2n$ or $k-m<1$ must be examined. In these cases, one has to consider $\sin(\frac{(k+m)j\pi}{2n+1})=-\sin(\frac{(2(2n+1)-k-n)j\pi}{2n+1})=-\mu_{2(2n+1)-(k+m),j}$ if $k+m>2n$ and $\sin(\frac{(k-m)j\pi}{2n+1})=-\sin(\frac{(m-k)j\pi}{2n+1})=-\mu_{m-k,j}$ if $k-m<1$. Therefore, $O'$ is defined as
\begin{gather}\label{eq32}
\scalebox{0.93}{$
\Big(\sum\limits_{k+m \le 2n} \cos\Big(\frac{(k+\frac{m}{2})\pi}{2n+1}\Big) \Psi_{k+m}\Psi_{k}^\dagger-\sum\limits_{k+m>2n} \cos\Big(\frac{(k+\frac{m}{2})\pi}{2n+1}\Big)\Psi_{2(2n+1)-k-m}\Psi_{k}^\dagger \Big)$}
\end{gather}
\begin{gather*}
\scalebox{0.93}{$+\Big(\sum\limits_{k-m \ge 1} \cos\Big(\frac{(k-\frac{m}{2})\pi}{2n+1}\Big)\Psi_{k-m}\Psi_{k}^\dagger-\sum\limits_{k-m<1} \cos\Big(\frac{(k-\frac{m}{2})\pi}{2n+1}\Big)\Psi_{m-k}\Psi_{k}^\dagger \Big).$}
\end{gather*}
For $k$ close to the boundaries, i.e. $1$ or $2n$ (which corresponds to high energy creation/annihilation), another operator of high energy is associated. Therefore, there are no low-high energy mix in the above formula. There are also no terms like $\Psi_k^\dagger \Psi_k$ as $m \neq 0$. Further, similar to the case of $L_0$, where $H'$ had terms like $(\Psi_k\Psi_k^\dagger-\Psi_k^\dagger \Psi_k)$ for each $\Psi_k$ with $1 \le k \le n$, there are also two terms in \textit{each} parenthesis in the summation above which have $\Psi_k^\dagger$. Indeed, one is $\Psi_{k+m}\Psi_k^\dagger$ and the other is $\Psi_{2n+1-k}\Psi_{2n+1-k-m}^\dagger=\Psi_{k}^\dagger\Psi_{k+m}$, and we have
\begin{gather*}
\scalebox{0.99}{$\cos(\frac{(k+\frac{m}{2})\pi}{2n+1})\Psi_{k+m}\Psi_k^\dagger=\cos(\frac{((2n+1-k-m)+\frac{m}{2})\pi}{2n+1})\Psi_{2n+1-k}\Psi_{2n+1-k-m}^\dagger.$}
\end{gather*}
Let us prove $O=O'$. The equation for $O'$ implies $[O-O',\Psi_k]=0$, meaning that $O-O'$ is a scalar. To prove that the shift is zero, we use the fact that monomials in Majorana operators are linear independent. The left side in $O=O'+\alpha\textbf{1}$ is bilinear, but the right side might have scalar terms coming from the multiplication of $\psi_j\psi_j=1$ when for $O'$ is expanded in terms of the $\psi_j$s. Those terms have to cancel each other so that a non-zero shift $\alpha$ will not be needed. In fact, it can be shown that the scalar from \textit{each} of $\Psi_{k+n}\Psi_k^\dagger$ is zero. Indeed, calculating the contribution from $\Psi_{k+n}\Psi_k^\dagger$, 
$$\sum_{j=1}^{2n} i^j\sin(\frac{(k+m)j\pi}{2n+1}).i^j\sin(\frac{(2n+1-k)j\pi}{2n+1})=$$
$$\sum_{j=1}^{2n} i^{2j}\sin(\frac{(k+m)j\pi}{2n+1}).(-1)^{j+1}\sin(\frac{kj\pi}{2n+1})=$$
$$-\sum_{j=1}^{2n} \sin(\frac{(k+m)j\pi}{2n+1}).\sin(\frac{kj\pi}{2n+1})=0,$$
the last equality holds as they are eigenvectors of different eigenvalues of matrix \hyperref[eq15]{(15)} for $t_j=1$. Hence, $O=O'$. To find the scaling limit, taking note of the fact that each term is repeated twice, we have the following operator
\begin{align}\label{eq33}
\widetilde{L}_m^c+\widetilde{L}_{-m}^c=\frac{2n+1}{2\pi}O'
\end{align}
where $O'$ is defined in \hyperref[eq32]{(32)} and $\widetilde{L}_m^c$ is the first, and $\widetilde{L}_{-m}^c$ corresponds to the second parenthesis. Further the factor $2$ in the denominator is due to the repetition of each term. As the proof for the convergence rate is similar, only $\widetilde{L}_{m}^c$ will be done. \hyperref[thm3.1]{\textbf{Theorem 3.1}} states that
$$\widetilde{L}_{m}^c|_{\sqrt[4]{n}}=L_{m}|_{\sqrt[4]{n}}+O(\frac{1}{n}),$$
where $m \le \sqrt[4]{n}$ can also depend on $n$. By applying the energy restriction up to $\sqrt[4]{n}$, for large enough $n$, it is not hard to observe that all the bilinear terms of Dirac operators in $\widetilde{L}_{m}^c|_{\sqrt[4]{n}}$ and $L_{m}|_{\sqrt[4]{n}}$ which will be nonzero operators after the energy restriction are the same. Indeed, most of the bilinear Dirac terms in both of the operators, are composed of an annihilation and a creation operator. If the annihilation operator annihilates energy more than $\sqrt[4]{n}$ then its restriction to vectors with energy at most $\sqrt[4]{n}$ is clearly zero. Lastly, for large enough $n$, as $2\sqrt[4]{n} < < n$, there are no terms close to the \textit{boundaries} (with index close to $1$ or $2n$) and so, all terms inside of $L_{m}|_{\sqrt[4]{n}}$ are also present inside $\widetilde{L}_{m}^c|_{\sqrt[4]{n}}$ and vice-versa, although with a different coefficient which their difference shall be estimated. After the renumbering $\Psi_k \to \Psi_{k-n-\frac{1}{2}}$, the term $\cos(\frac{(k+\frac{m}{2})\pi}{2n+1})\Psi_{k+m}\Psi_{k}^\dagger$ becomes 
$$\sin(\frac{(k-\frac{m}{2})\pi}{2n+1})\Psi_{-k+m}\Psi_{k}.$$
By taking the difference in the coefficient, we have
$$|\frac{2n+1}{\pi}\sin(\frac{(k-\frac{m}{2})\pi}{2n+1})-(k-\frac{m}{2})|\le O(\frac{(\sqrt[4]{n})^3}{(2n+1)^2}),$$
as $k,m$ are of order $\sqrt[4]{n}$ due to the discussion in the previous paragraph. Next, since there are $O(\sqrt[4]{n})$ of these differences (as $k$ can vary), and each Dirac bilinear term has norm at most one,
$$||\widetilde{L}_{m}^c|_{\sqrt[4]{n}}-L_{m}|_{\sqrt[4]{n}}|| \le O(\frac{(\sqrt[4]{n})^3.\sqrt[4]{n}}{(2n+1)^2})=O(\frac{1}{n}),$$
finishing the proof of the convergence rate. This obviously implies $\widetilde{L}_m^c \xrightarrow{SL} L_m$. The very similar reasoning can be made for all the other ACs. But we would like to note that when changing the boundary condition, the denominators in the fractions inside the trigonometric functions must be changed (e.g. in the $\rchi_\frac{1}{16}$ case, one needs $2n$ instead of $2n+1$). The changes happen the most in the periodic case and we  will have to apply the following
\begin{itemize}
    \item replace $2n+1$ by $2n$,
    \item replace $m$ in $t_j(m)$ by $2m$,
    \item replace $k$ by $2k$ (or $2k-1$),
\end{itemize}

where the change to even numbers $2k$ is for $\rchi_{\frac{1}{16}}\overline{\rchi}_{\frac{1}{16}}$ and odd number for $\rchi_0\overline{\rchi}_0+\rchi_{\frac{1}{2}}\overline{\rchi}_{\frac{1}{2}}$. One also has to check that the trigonometric identities like \hyperref[eq31]{(31)} holds for all the eigenvectors of the two different matrices (they are of two types: $\mu_i$ and $\mu'_i$ corresponding to $\cos()$ and $\sin()$). As an example, for the LM part (with $(\mu)_b=(\cos())_b$ eigenvectors) of $\rchi_0\overline{\rchi}_0+\rchi_{\frac{1}{2}}\overline{\rchi}_{\frac{1}{2}}$, the corresponding identity is:
\begin{gather*}
\scalebox{0.97}{$\cos\Big(\frac{(2m)(j+\frac{1}{2})\pi}{2n}\Big)\cos \Big(\frac{(2k-1)(j+1)\pi}{2n}\Big)+\cos\Big(\frac{(2m)(j-\frac{1}{2})\pi}{2n}\Big)\cos \Big(\frac{(2k-1)(j-1)\pi}{2n} \Big)= $}\\
\scalebox{0.97}{$\cos \Big(\frac{(2k-1+m)\pi}{2n}\Big)\cos \Big(\frac{(2k-1+2m)j\pi}{2n} \Big)+\cos \Big(\frac{(2k-1-m)\pi}{2n}\Big)\cos \Big(\frac{(2k-1-2m)j\pi}{2n} \Big)$}
\end{gather*}
and there is a similar identity when the second $\cos()$ terms in each product are replaced with $\sin()$ for the RM operators. This gives the operator $\widetilde{\mathbb{L}}_m^c+\widetilde{\mathbb{L}}_{-m}^c$ and the convergence to $\mathbb{L}_m+\mathbb{L}_{-m}$ can be proved in a similar way. 
\hfill \break

\textit{The $\sin()$ transform and $i(\widetilde{L}_m^s-\widetilde{L}_{-m}^s)$.}

\hfill \break
Having found the sum of the higher Virasoro modes, their difference is required to recover an operator $\widetilde{L}_m \xrightarrow{SL} L_m$, as shown in the remark after \hyperref[thm3.1]{\textbf{Theorem 3.1}}. Again, this case will be demonstrated for $\rchi_0+\rchi_\frac{1}{2}$. As the proof to the convergence and scaling limit is very similar to the previous $\cos()$ transform, we will refer to the previous arguments for that part of the problem.
 
We need the $\sin()$ transform of the form $O= -i\sum_{j=1}^{2n-2} t_j(n)[e_j,e_{j+1}]$ which is same as $O=i\sum_j t_j(n)\psi_{j}\psi_{j+2}$. The corresponding matrix for $[O,\Psi]$ where $\Psi=\sum i^b\mu_b \psi_b$ can be found as follows:
$$[O,\Psi]=\Psi'=i(\sum_b i^b\mu'_b\psi_b)$$
where the $i$ factor is needed to obtain $i(L_m-L_{-m})$, and we have
$$\mu'_b=2(t_b(m)\mu_{b+2}-t_{b-2}(m)\mu_{b-2})$$
except at the boundaries where the formula will be different. In the case of the non-periodic chain $(\frac{1}{2},\frac{1}{2})$, this can be turned into the corresponding matrix
\begin{align}\label{eq34}
\begin{pmatrix}
         0 & 0 & t_1(m) & & & 0 \\
         0 & 0 & 0 & t_2(m) & & \\
         -t_1(m) & 0 & 0 & & \ddots &  \\
         & -t_2(m) & & & & \\
         & & & & & \\
         & & \ddots & 0 & 0 & t_{2n-2}(m) \\
         & &  & 0 & 0 & 0 \\
        0 & &  & -t_{2n-2}(m) & 0 & 0 
\end{pmatrix}.
\end{align}
The matrix corresponding to the periodic case can be recovered similarly and as expected, it will have entries $(-1)^{n+F}t_{2n-1}(m),(-1)^{n+F}t_{2n}(m)$ and their opposite in the diagonals in the corners of the matrix.

First, an identity similar to \hyperref[eq31]{(31)} only with $\sin()$ functions is needed to \textit{force} the values for $t_j(m)$. Since the coefficients $(k\pm \frac{m}{2})$ needs to be recovered in the limit using $\sin()$ functions, the natural candidate for the coefficients of the bilinear Dirac terms are $\sin(\frac{(2k+m)\pi}{2n+1})$. This can be seen to only work when $t_j(m)=\sin(\frac{m(j+1)\pi}{2n+1})$, satisfying the following identity
$$t_j(m)\mu_{k,j+2}-t_{j-2}(m)\mu_{k,j-2}=$$
\begin{gather}\label{eq35}
\scalebox{0.999}{$
\sin\Big(\frac{m(j+1)\pi}{2n+1}\Big)\sin \Big(\frac{k(j+2)\pi}{2n+1}\Big)-\sin\Big(\frac{m(j-1)\pi}{2n+1}\Big)\sin\Big(\frac{k(j-2)\pi}{2n+1} \Big)=$}
\end{gather}
\begin{gather*}
\scalebox{0.999}{$\sin \Big(\frac{(2k+m)\pi}{2n+1}\Big)\sin \Big(\frac{(k+m)j\pi}{2n+1} \Big)-\sin \Big(\frac{(2k-m)\pi}{2n+1}\Big)\sin \Big(\frac{(k-m)j\pi}{2n+1} \Big).$}
\end{gather*}
It can be checked that at the boundaries $j=1,2,2n-1,2n$ the first equality above still holds since $\sin(0)=\sin(k\pi)=\sin(m\pi)=0$. Repeating the argument in the previous case, we set $-iO'=$
\begin{gather}\label{eq36}
\scalebox{0.93}{$
\Big(\sum\limits_{k+m \le 2n} \sin\Big(\frac{(2k+m)\pi}{2n+1}\Big) \Psi_{k+m}\Psi_{k}^\dagger-\sum\limits_{k+m>2n} \sin\Big(\frac{(2k+m)\pi}{2n+1}\Big)\Psi_{2(2n+1)-k-m}\Psi_{k}^\dagger \Big)$}
\end{gather}
\begin{gather*}
\scalebox{0.93}{$-\Big(\sum_{k-m \ge 1} \sin\Big(\frac{(2k-m)\pi}{2n+1}\Big)\Psi_{k-m}\Psi_{k}^\dagger-\sum_{k-m<1} \sin\Big(\frac{(2k-m)\pi}{2n+1}\Big)\Psi_{m-k}\Psi_{k}^\dagger \Big),$}
\end{gather*}
And the argument for $O/2=O'$ is exactly the same as before (where the factor $2$ is due to the matrix equation \hyperref[eq15]{(15)} as in the previous case). Therefore, let us set
\begin{align}\label{eq37}
i(\widetilde{L}_{m}^s-\widetilde{L}_{-m}^s)=\frac{2n+1}{2\pi}O',
\end{align}
where $O'$ is defined in \hyperref[eq36]{(36)} and $\widetilde{L}_m^s$ is the first, and $\widetilde{L}_{-m}^s$ corresponds to the second parenthesis. For the coefficients, after the renumbering $\Psi_{k} \to \Psi_{k-n-\frac{1}{2}}$, $\sin(\frac{(k-\frac{m}{2})\pi}{n+\frac{1}{2}})$ is the coefficient of $\Psi_{-k+m}\Psi_k$ in $\widetilde{L}_m^s$,
which means that in the scaling limit, the coefficient for the Dirac terms is as desired. The proof of convergence and its rate, are similar to the $\cos()$ transform. 

As previously, the denominators need to change for a change in boundary condition (e.g. $2n$ instead of $2n+1$ for $\rchi_\frac{1}{16}$). For the periodic chain, the following changes need to performed:
\begin{itemize}
    \item replace $2n+1$ by $2n$,
    \item replace $m$ in $t_j(m)$ by $2m$,
    \item replace $k$ by $2k$ (or $2k-1$),  
\end{itemize}
where the change to even numbers $2k$ is for $\rchi_{\frac{1}{16}}\overline{\rchi}_{\frac{1}{16}}$ and odd number for $\rchi_0\overline{\rchi}_0+\rchi_{\frac{1}{2}}\overline{\rchi}_{\frac{1}{2}}$. It can be checked that the new identity involving the eigenvectors of type $\cos()$ for the $+1$ sector $\rchi_0\overline{\rchi}_0+\rchi_{\frac{1}{2}}\overline{\rchi}_{\frac{1}{2}}$ is
\begin{gather*}
\scalebox{0.9}{$\sin\Big(\frac{(2m)(j+1)\pi}{2n}\Big)\cos \Big(\frac{(2k-1)(j+2)\pi}{2n}\Big)-\sin\Big(\frac{(2m)(j-1)\pi}{2n}\Big)\cos \Big(\frac{(2k-1)(j-2)\pi}{2n} \Big)= $}\\
\scalebox{0.9}{$\sin \Big(\frac{(2(2k-1)+2m)\pi}{2n}\Big)\cos \Big(\frac{((2k-1)+2m)j\pi}{2n} \Big)-\sin \Big(\frac{(2(2k-1)-2m)\pi}{2n}\Big)\cos \Big(\frac{((2k-1)-2m)j\pi}{2n} \Big),$}
\end{gather*}
which will also hold if $\cos()$s are replaced with $\sin()$, giving the identity for the $\sin()$ type eigenvectors. This finishes the proof of \hyperref[thm3.1]{\textbf{Theorem 3.1}}.
\hfill \break
\subsection*{2.\texorpdfstring{ 
$-\sum \sin(n\theta) e_\theta$}{TEXT} or \texorpdfstring{$-i\sum \cos(n\theta) [e_\theta,e_{\theta+1}]$}{TEXT}}\label{A.2}

\hfill \break
It was mentioned in \hyperref[1.6]{section 1.6}, and in the remark after \hyperref[cnj5.5]{\textbf{Conjecture 5.5}}, that the operators involved in a similar conjecture in \cite[(7.5)]{koo1994representations} have undesirable low-high energy mix. These claims and why the ``reasonable'' guess that the $\sin()$ transform of the $e_j$ must give $i(L_m-L_{-m})$ is not true, will be demonstrated by direct calculations. Let us start with the latter.

We start with the nonperiodic chains. As before, the example $\rchi_0+\rchi_{\frac{1}{2}}$ will be used to show the argument which can be applied similarly to other nonperiodic cases. In order to obtain $-\sum_{j=1}^{2n-1} \sin(\frac{m(j+\frac{1}{2})\pi}{2n+1})e_j$  or in other words $O_s=i\sum_{j=1}^{2n-1} \sin(\frac{m(j+\frac{1}{2})\pi}{2n+1})\psi_{j+1}\psi_j$ in terms of the creation and annihilation operators, $\psi_j$ should be expressed in terms of the $\Psi_k$s (the inverse of the transformation given by the eigenvectors of the matrix \hyperref[eq15]{(15)}). In \cite[(24)]{fendley2014free}, it is shown that
$$\psi_{2j}=(-1)^j\sum_{k=1}^n \frac{i}{N_k} \widetilde{Q}_{2j-1}(\epsilon_k^2)(\Psi_{+k}+\Psi_{-k}),$$$$\psi_{2j+1}=(-1)^j\sum_{k=1}^n \frac{1}{N_k} \widetilde{Q}_{2j}(\epsilon_k^2)(\Psi_{+k}-\Psi_{-k}),$$
where un-normalized Dirac operators are considered, and $\widetilde{Q}_{a-1}(\epsilon_{\pm k}^2)=\sin(\frac{ak\pi}{2n+1})$. Using the above, the coefficient of a term $\Psi_x\Psi_{-y}$ in $O_s$ for any $1 \le x,y\le n$ is
\begin{gather*}
\scalebox{0.999}{$\frac{1}{N_xN_y}\sum_{j=1}^{2n-1} \sin(\frac{m(j+\frac{1}{2})\pi}{2n+1})\Big( \sin(\frac{jx\pi}{2n+1})\sin(\frac{(j+1)y\pi}{2n+1})+\sin(\frac{(j+1)x\pi}{2n+1})\sin(\frac{jy\pi}{2n+1})\Big )$}
\end{gather*}
This can be computed by direct computation (substituting $\sin(\theta)=\frac{e^{i\theta}-e^{-i\theta}}{2i}$) and it can be shown that it is zero if and only if $m+x+y \equiv 0 \pmod{2}$. This implies that (even after \textbf{any} scaling of $O_s$), e.g. for $m=4$, there are terms like $\Psi_{1}\Psi_{-(n-1)}$ for $n$ odd and $\Psi_{1}\Psi_{-(n-2)}$ for $n$ even with nonzero coefficient. Therefore, this is not an energy local operator and further, it can not be a candidate for the sequence associated to $L(f)$ to obtain a strong SL-algebra in \hyperref[thm4.6]{\textbf{Theorem 4.6}}. 

As another example with chain size even and with scaling limit $\rchi_{\frac{1}{16}}$, we have the following
\begin{gather*}
\scalebox{0.999}{$\frac{1}{N_xN_y}\sum_{j=1}^{2n-2} \sin(\frac{m(j+\frac{1}{2})\pi}{2n})\Big( \sin(\frac{jx\pi}{2n})\sin(\frac{(j+1)y\pi}{2n})+\sin(\frac{(j+1)x\pi}{2n})\sin(\frac{jy\pi}{2n})\Big )$}
\end{gather*}
which can be computed and it is zero $\Leftrightarrow m+x+y \equiv 0 \pmod{2}$ and the same conclusions follow. There is a similar argument for the $\sin()$ transform of $[e_j,e_{j+1}]$.

The conjecture \cite[(7.5)]{koo1994representations} asserts that $l_m+l_{-m} \xrightarrow{SL} L_m+L_{-m}$ where  $l_m+l_{-m}$ is some scaling of the sum of the terms $\cos(\frac{mj\pi}{2n})e_j$. The difference with \hyperref[cnj5.5]{\textbf{Conjecture 5.5}} is the factor $j$ instead of $\frac{m(j+\frac{1}{2})\pi}{2n+1}$ and the denominator. We recall that we interpret the point $\frac{(j+\frac{1}{2})\pi}{2n+1}$ as the ``center'' of the action of $e_j$ in the half-circle in \hyperref[fig1]{\textit{Figure 1}}. Also, we recall that the identity \hyperref[eq30]{(30)} forced the coefficient $\cos(\frac{m(j+\frac{1}{2})\pi}{2n+1})$, therefore any other coefficient (including $\cos(\frac{mj\pi}{2n})$) should have some \textit{undesirable effect}, as it does not satisfy the identity, although these effects could vanish in the scaling limit.  We can compute the coefficient of a term $\Psi_x\Psi_{-y}$ in $O_s'=i\sum_{j=1}^{2n-1} \cos(\frac{mj\pi}{2n})\psi_{j+1}\psi_j$ for any $1 \le x,y\le n$:
\begin{gather*}
\scalebox{0.999}{$\frac{1}{N_xN_y}\sum_{j=1}^{2n-1} \cos(\frac{mj\pi}{2n})\Big( \sin(\frac{jx\pi}{2n+1})\sin(\frac{(j+1)y\pi}{2n+1})+\sin(\frac{(j+1)x\pi}{2n+1})\sin(\frac{jy\pi}{2n+1})\Big ),$}
\end{gather*}
which can be observed to have the similar property of being zero if and only if $m+x+y \not \equiv 0 \pmod{2}$. As an example, for $m=9,x=14,n=52,y=49=n-3$, the above gives approximately $-0.0256625\neq 0$. This suggests that the conjecture \cite[(7.5)]{koo1994representations} does not provide the \textit{right} candidates if a strong SL-algebra is desired as there are low-high energy mix. These terms could make even the convergence of simple products such as the convergence of commutators to the commutators of scaling limit impossible. Of course, they should vanish at the scaling limit, along with all those terms with non zero coefficient and energy shift other than $m$ and a numerical simulation shows that happening but with a slower rate (as in \cite[table 19]{koo1994representations}). Hence, the rate of convergence \textit{could} also be another reason to consider the operators in  \hyperref[cnj5.5]{\textbf{Conjecture 5.5}} for obtaining the higher Virasoro modes.
\hfill \break

\textit{Interchiral observables $\overline{\mathbb{L}}_m$.}

\hfill \break
The summations considered above for the chiral CFTs need to be also analyzed for the full CFT. This time, an \textbf{interchiral} observable in the limit is obtained which could be thought of as:
$$\colon \partial_{\overline{z}}\overline{\psi}(\overline{z}) \psi(z) \colon+\colon \partial_z\psi(z) \overline{\psi}(\overline{z}) \colon.$$
Note that the stress energy tensor is 
$$\colon \partial_{\overline{z}}\overline{\psi}(\overline{z}) \overline{\psi}(\overline{z}) \colon+\colon \partial_z\psi(z) \psi(z) \colon.$$
Therefore the LM and RM part will mix due to the identities
\begin{gather*}
\scalebox{0.98}{$\sin\Big(\frac{2m(j+\frac{1}{2})\pi}{2n}\Big)\sin \Big(\frac{(2k-1)(j+1)\pi}{2n}\Big)+\sin\Big(\frac{2m(j-\frac{1}{2})\pi}{2n}\Big)\sin\Big(\frac{(2k-1)(j-1)\pi}{2n} \Big)=$} \\
\scalebox{0.98}{$\cos \Big(\frac{(2k-1-m)\pi}{2n}\Big)\cos \Big(\frac{(2k-1-2m)j\pi}{2n} \Big)-\cos \Big(\frac{(2k-1+m)\pi}{2n}\Big)\cos \Big(\frac{(2k-1+2m)j\pi}{2n} \Big),$}   
\end{gather*}
for the summation $-\sum_{j=1}^{2n} \sin(\frac{2m(j+\frac{1}{2})\pi}{2n})e_j$ in the $+1$ sector. The RM part ($\sin()$ eigenvectors) pairs up with the LM part ($\cos()$ eigenvectors).
 
As expected, one needs a similar identity for the LM part to RM part
\begin{gather*}
\scalebox{0.98}{$
\sin\Big(\frac{2m(j+\frac{1}{2})\pi}{2n}\Big)\cos \Big(\frac{(2k-1)(j+1)\pi}{2n}\Big)+\sin\Big(\frac{2m(j-\frac{1}{2})\pi}{2n}\Big)\cos\Big(\frac{(2k-1)(j-1)\pi}{2n} \Big)=
$} \\
\scalebox{0.98}{$\cos \Big(\frac{(2k-1+m)\pi}{2n}\Big)\sin \Big(\frac{(2k-1+2m)j\pi}{2n} \Big)-\cos \Big(\frac{(2k-1-m)\pi}{2n}\Big)\sin \Big(\frac{(2k-1-2m)j\pi}{2n} \Big).$}
\end{gather*}
Both identities above can be modified (replacing $2k-1$ with $2k$) to make them work for the $(-1)^F=-1$ sector.

There are similar equations for $-i\sum_{j=1}^{2n} \cos(\frac{2m(j+1)\pi}{2n})[e_j,e_{j+1}]$. Thus, as in the case of $\mathbb{L}_m+\mathbb{L}_{-m}$, we get an operator $\widetilde{\overline{\mathbb{L}}}_m$ converging to $\overline{\mathbb{L}}_m$ in the scaling limit which contains bilinear terms $\overline{\Psi}_{-k+m}\Psi_k$ instead of $\Psi_{-k+m}\Psi_k$. Finally, all results on the convergence (rate) for $\mathbb{L}_m$ applies to its interchiral \textit{counterpart} $\overline{\mathbb{L}}_m$.

The above illustrates why the similar conjecture \cite[(4.24) and (4.25)]{koo1994representations} likely involves a different diagonalization (as noted in \hyperref[1.6]{section 1.6}), in order to be true.
\end{document}